\newtheorem{theorem}{Theorem}
\newtheorem{lemma}{Lemma}
\newtheorem{Corollary}{Corollary}
\newtheorem{definition}{Definition}
\newtheorem{proposition}{Proposition}
\newtheorem{remark}{Remark}
\def\ind{\ensuremath{\operatorname{ind}}}
\def\rank{\ensuremath{\operatorname{rank}}}
\def\transf#1{\ensuremath{\mathcal{#1}}}
\def\tH{\transf H}
\def\tI{\transf I}
\def\tE{\transf E}
\def\tF{\transf J}
\def\tJ{\transf E ^\dag}
\def\tK{\transf K}
\def\tP{\transf P}
\def\tS{\transf S}
\def\tT{\transf T}
\def\tU{\transf U}
\def\tV{\transf E}
\def\tor{\ensuremath{\bigtimes_{t=1}^s\mathbb Z_{m_t}}}
\def\Tred{(\tT^N)_r}
\def\Tredmat{(U_{\tT^N})_r}
\def\Tmat{(U_\tT)}
\def\Tnmat{U_{\tT^N}}
\def\Aplocc{\ensuremath{\mathsf{A}^{\Pi}(\mathbb{L})}}
\def\Aredloc{\ensuremath{\mathsf A'(\mathbb{L'})}}
\mathchardef\minus="002D
\def\<{\langle}
\def\>{\rangle}
 \def\ket#1{| #1 \rangle}
\def\bra#1{\langle #1 |}
\def\ketbra#1#2{| #1 \rangle\!\langle#2 |}
\def\braket#1#2{\langle #1 | #2 \rangle}
\def\vc#1{\boldsymbol{\mathrm #1}}
\def\Z{\mathbb Z}
\def\N{\mathbb N}
\def\L{\ensuremath{\mathbb{L}}}
\def\Mod{\operatorname{mod}}
\def\gt{\ensuremath{\boxtimes}}
\def\L{\ensuremath{\mathbb{L}}}
\def\<{\langle}
\def\>{\rangle}
\def\acomm#1#2{\ensuremath{\{#1,#2\}}}
\def\gc#1#2{\ensuremath{\{[#1,#2]\}}}
\def\Z{\mathbb Z}
\def\C{\mathbb C}
\def\N{\mathbb N}
\def\R{\mathbb R}
\def\A{\mathsf A}
\def\G{\mathbb G}
\def\M{\mathsf M}
\def\K{\mathbb K}
\def\P{\mathcal{P}}
\def\Mod{\operatorname{mod}}
\newcommand{\Eq}{Eq.~}
\newcommand\Cl{\ensuremath{\mathcal{C}\ell}}
\def\m#1{\ensuremath{\mathsf{#1}}}
\def\M#1#2{\ensuremath{\operatorname{Mat}(\C^{#1\lvert#2})}}
\def\dim{\operatorname{dim}}
\def\s{\shortmid}
\DeclareMathOperator*{\bigboxtimes}{\scalerel*{\boxtimes}{\sum}}
\DeclareMathOperator*{\bigboxplus}{\scalerel*{\boxplus}{\sum}}
\newcommand\Item[1][]{%
  \ifx\relax#1\relax  \item \else \item[#1] \fi
  \abovedisplayskip=0pt\abovedisplayshortskip=0pt~\vspace*{-\baselineskip}}
\def\Tr{\operatorname{Tr}}
\definecolor{myblue1}{HTML}{3A9831}
\definecolor{myblue2}{HTML}{08543A}
\definecolor{mycolor}{HTML}{FFC56E}
\definecolor{cifcolor}{HTML}{D50032}
\definecolor{Mswapcolor}{HTML}{FF571F}
\definecolor{Ucolor}{HTML}{C4D600}
\definecolor{phasecolor}{HTML}{FE5000}
\definecolor{phasecolorbis}{HTML}{C8102E}
\definecolor{phasecolortris}{HTML}{EAAA00}
\definecolor{tassellationcol}{HTML}{E4CC00}
\definecolor{important}{HTML}{E55121}
\definecolor{bluenonren}{rgb}{0.21,0.48,0.80}
\definecolor{blueren}{RGB}{166, 216, 236}
\newcommand\bitsize{0.6 cm}
\newcommand\projspace{0.12cm}
\def\LR{\hspace{\projspace}$\m{L}$\hspace{\projspace} \nodepart{two}\hspace{\projspace}$\m{R}$\hspace{\projspace}}
\def\LRbar{\hspace{\projspace}$\bar{\m{L}}$\hspace{\projspace} \nodepart{two}\hspace{\projspace}$\bar{ \m{R}}$\hspace{\projspace}}
\def\LtilR{\hspace{\projspace}$\tilde{\m{L}}$\hspace{\projspace} \nodepart{two}\hspace{\projspace}$\m{R}$\hspace{\projspace}}
\def\LRtilde{\hspace{\projspace}${\m{L}}$\hspace{\projspace} \nodepart{two}\hspace{\projspace}$\tilde{\m{R}}$\hspace{\projspace}}
\newcommand\bitdistance{1 cm}
\newcommand\transfborder{0.1cm}
\newcommand\transfsize{\bitsize+\bitdistance-\transfborder}
\newcommand\hdist{\bitsize+\bitdistance+\halfsize}
\newcommand\nhdist{-\bitsize-\bitdistance-\halfsize}
\newcommand\halfleft{-\bitsize-\transfoffset + \transfsize*0.5-\transfborder}
\tikzset{
  local/.style 2 args={
    rectangle, 
    rounded corners=10pt, 
    minimum height={#1},
    minimum width={#2},
    align=center, 
    line width=1pt,
    draw=white,
    font=\color{black}\sffamily, 
    fill=myblue1
  },
  local/.default={0.6 cm}{0.6 cm},
  M1/.style 2 args={
 	rectangle,
 	rounded corners=10pt ,
 	align=center,
 	minimum height={#1},
   	 minimum width={#2},
 	line width =1pt,
 	draw=black,
    font=\color{black}\sffamily, 
    fill=myblue1,
  },
  M1/.default={\bitsize}{\transfsize},
 M2/.style 2 args={
 	rectangle,
 	rounded corners=10pt ,
 	align=center,
 	minimum height={#1},
   	 minimum width={#2},
 	line width =1pt,
 	draw=black,
    font=\color{black}\sffamily, 
    fill=myblue2
  },
  M2/.default={\bitsize}{\transfsize},
 G/.style 2 args={
 	rectangle,
 	rounded corners=10pt ,
 	align=center,
 	minimum height={#1},
   	 minimum width={#2},
 	line width =1pt,
 	draw=black,
    font=\color{black}\sffamily, 
    fill=mycolor
  },
  G/.default={\bitsize}{\transfsize},
  bit/.style 2 args= {  
    rectangle, 
    rounded corners=10pt, 
    minimum height={#1},
    minimum width={#2},
    align=center, 
    line width=1pt,
    draw=black,
    font=\color{black}\sffamily, 
    fill=White
   },
 bit/.default={\bitsize}{\bitsize},
 bittil/.style 2 args= {  
    rectangle, 
    rounded corners=10pt,
    drop shadow, 
    minimum height={#1},
    minimum width={#1},
    align=center, 
    line width=1pt,
    draw=black,
    font=\color{black}\sffamily, 
    fill={#2}
   },
 tassel/.style 2 args= {  
    rectangle, 
    rounded corners=10pt,
    minimum height={#1},
    minimum width={#1},
    align=center, 
    line width=1pt,
    opacity={#2},
    draw=black,
    font=\color{black}\sffamily, 
    fill=tassellationcol
   },
  U/.style= {  
    rectangle, 
    rounded corners=7pt, 
    minimum height={\bitsize},
    minimum width={\bitsize},
    align=center, 
    line width=1pt,
    draw=black,
    fill=Ucolor,
    font=\color{black}\sffamily, 
   },
  LRbit/.style = {  
    rectangle, 
    rounded corners=10pt, 
    minimum height={\bitsize},
    minimum width={\transfsize},
    align=center, 
    line width=1pt,
    draw=black,
    fill=gray,
    font=\color{black}\sffamily, 
   },
   pi/.style 2 args={
 	rectangle,
 	rounded corners=10pt ,
 	align=center,
 	minimum height={#1},
   	 minimum width={#2},
 	line width =1pt,
 	draw=black,
	opacity=1,
    font=\color{black}\sffamily, 
    fill=yellow,
  },
  pi/.default={\bitsize}{\transfsize},
LR/.style ={
 	rectangle split,
	minimum height=0.6 cm,
   	minimum width =7.5cm,
	rectangle split horizontal,
	rectangle split parts=2,
 	rounded corners=10pt ,
 	align=center,
 	line width =1pt,
 	draw=black,
	opacity=1,
    font=\color{black}\sffamily, 
    fill=yellow,
 	text centered,
  },
  swap/.pic ={
	\draw (-\bitdistance,-1)--(-\bitdistance,-0.75)--(\bitdistance,0.75)--(\bitdistance,1);
	\draw (\bitdistance,-1)--(\bitdistance,-0.75)--(-\bitdistance,0.75)--(-\bitdistance,1);  
  },
Cif/.style 2 args={
 	rectangle,
 	rounded corners=10pt ,
 	align=center,
 	minimum height={#1},
   	 minimum width={#2},
 	line width =1pt,
 	draw=black,
    font=\color{black}\sffamily, 
    fill=cifcolor
  },
Cif/.default={\bitsize}{\transfsize},
Mswap/.style 2 args={
 	rectangle,
 	rounded corners=10pt ,
 	align=center,
 	minimum height={#1},
   	 minimum width={#2},
 	line width =1pt,
 	draw=black,
    font=\color{black}\sffamily, 
    fill=Mswapcolor
  },
  Mswap/.default={\bitsize}{\transfsize},
phase/.style 2 args={
 	rectangle,
 	rounded corners=10pt ,
 	align=center,
 	minimum height={\bitsize},
   	 minimum width={#1},
 	line width =1pt,
 	draw=black,
    font=\color{black}\sffamily, 
    fill={#2}
  },
Op/.style 2 args={
	circle,
	draw=black,
	line width=1pt,
	font=\color{black}\sffamily ,
	minimum height={#1},
   	 minimum width={#2},
   	 fill=White,
 },
 Op/.default={\bitsize*0.25}{\bitsize*0.25},
 J/.pic={
  \draw (-\bitdistance,\nhdist)--(0,\nhdist*0.5)--(\bitdistance,\nhdist),},
 }
\tikzset{
  bit/.style={
  rectangle,
  minimum height=0.4cm,
  minimum width=0.4cm,
  draw=black,
  line width=1pt
  }
}
\tikzstyle{new edge style 0}=[-, fill=white, draw={rgb,255: red,249; green,20; blue,0}]
\tikzstyle{new edge style 1}=[-, fill=black]
\tikzstyle{new edge style 2}=[-, fill={rgb,255: red,255; green,247; blue,0}, draw={rgb,255: red,162; green,164; blue,60}]
\tikzstyle{new edge style 3}=[-, fill={rgb,255: red,131; green,131; blue,131}]
\tikzstyle{new edge style 4}=[-, fill={rgb,255: red,0; green,209; blue,255}, draw={rgb,255: red,49; green,57; blue,208}]
\begin{document}
\title{Renormalisation of Fermionic Cellular Automata}

\author{Lorenzo Siro
  \surname{Trezzini}} \email[]{lorenzosiro.trezzini01@universitadipavia.it}
\affiliation{QUIT Group, Dipartimento di Fisica, Universit\`a di Pavia, via
  Bassi 6, 27100 Pavia} \affiliation{Istituto Nazionale di Fisica
  Nucleare, Gruppo IV, via Bassi 6, 27100 Pavia} 
  \author{Andrea
  \surname{Pizzamiglio}} \email[]{andrea.pizzamiglio01@universitadipavia.it}
\affiliation{QUIT Group, Dipartimento di Fisica, Universit\`a di Pavia, via
  Bassi 6, 27100 Pavia} \affiliation{Istituto Nazionale di Fisica
  Nucleare, Gruppo IV, via Bassi 6, 27100 Pavia}
\author{Alessandro 
  \surname{Bisio}} \email[]{alessandro.bisio@unipv.it}
\affiliation{QUIT Group, Dipartimento di Fisica, Universit\`a di Pavia, via
  Bassi 6, 27100 Pavia} \affiliation{Istituto Nazionale di Fisica
  Nucleare, Gruppo IV, via Bassi 6, 27100 Pavia} 
\author{Paolo
  \surname{Perinotti}} \email[]{paolo.perinotti@unipv.it}
\affiliation{QUIT Group, Dipartimento di Fisica, Universit\`a di Pavia, via
  Bassi 6, 27100 Pavia} \affiliation{Istituto Nazionale di Fisica
  Nucleare, Gruppo IV, via Bassi 6, 27100 Pavia} 

\begin{abstract} 
We present an exact renormalisation scheme for fermionic cellular automata on hypercubic lattices. By grouping neighbouring cells into tiles and selecting subspaces within them, multiple evolution steps on the original system correspond to a single step of an effective automaton acting on the subspaces. We derive a necessary and sufficient condition for renormalisability and fully characterise the renormalisation flow for two-cell tiles and two time steps of nearest-neighbour fermionic automata on a chain of spinless modes, identifying all fixed points.
\end{abstract} 

\maketitle
\section{Introduction}

Fermionic cellular automata (FCA) represent local, discrete-time, and reversible dynamics of fermionic modes on a lattice. FCA naturally arise in a variety of foundational and applied contexts---from digital formulations of quantum field theory \cite{bisio2015free, bisio2021scattering, bisio2025perturbative, bakircioglu2025fermion, arrighi2014dirac, brun2025, gupta2025dirac}  to the engineering of quantum phases and simulation platforms \cite{fidkowski2019interacting, stephen2019subsystem, ballarin2024digital, jotzu2014experimental, farrelly2020review}. Their fully discrete nature makes them ideally suited for physically implementable, digital realisations of quantum many-body dynamics.

A natural step forward is to understand how the dynamics of an FCA transform across scales. Renormalisation theory has long addressed the problem of multiscale dynamics \cite{10.1093/acprof:oso/9780199227198.001.0001,PhysRevB.4.3174, PhysRevB.4.3184, PhysicsPhysiqueFizika.2.263, PhysRevLett.99.220405}. Crucially, we are interested not in flowing to a continuum theory, but in mapping an FCA into another effective FCA. This choice has several advantages. First, it preserves the discrete, local features of the dynamics, ensuring that the coarse-grained theory remains physically implementable on digital quantum hardware. Second, it turns renormalisation into an algebraic operation on the space of FCA, allowing large-scale behavior to be characterised through finite, computable data rather than asymptotic limits. Finally, the resulting coarse-grained update rules provide a direct route to simplified simulations and cost-efficient control of complex fermionic dynamics. This perspective motivates the construction of an exact blocking-and-projection scheme that remains fully within the FCA framework.

Specifically, in this work, we extend the renormalisation scheme for quantum cellular automata introduced in \cite{trezzini2024renormalisationquantumcellularautomata} to fermionic systems. The prescription---when applies---maps a microscopic FCA into an effective coarse-grained FCA, defined on a blocked space-time lattice and acting on a reduced fermionic algebra. 
This, however, comes at a cost. According to our scheme, for the finer automaton to be renormalisable into a coarser one, it must preserve---on the chosen spacetime blocking---the subspace of degrees of freedom on which the effective macroscopic automaton acts. Consequently, both the renormalisability of the original FCA and the resulting renormalised automaton (if any) depend sensitively on the space-time blocking scale and on the specific choice of the subspace of degrees of freedom. When the choice is consistent with our prescription, one obtains a renormalisation flow in the space of FCA.

A key result of our construction is that renormalisability of an FCA is governed by a single finite algebraic condition:  renormalisability reduces to a matrix equation linking the local evolution operator of the FCA to the projection selecting the macroscopic degrees of freedom. This provides a sharp and computationally tractable criterion for identifying renormalisable automata and their corresponding renormalisation flows, turning  a structural property of the dynamics into a concrete algebraic test.

We analyse in detail the renormalisation flow for the case of one-dimensional chains of single (spinless) fermionic modes per site, a minimal setting that still supports rich dynamical behavior, as non-trivial topological phases. Spinless fermions offer a computationally efficient effective description for capturing the universal features of fermionic systems where spin is not dynamically relevant, as suppressed or decoupled by strong external fields, symmetry constraints, or energy scale separation. This scenario arises, for instance, in one-dimensional systems subject to strong Zeeman splitting, spin-selective interactions, or strong spin-orbit coupling, where only a single spin-polarised subband remains near the Fermi level. Under these conditions, the low-energy physics is accurately captured by projecting out the spin sector, resulting in an effective model of spinless fermions. Such a description underlies the paradigmatic models of topological superconductivity, such as the Kitaev chain and spinless p-wave superconductors  (for a comprehensive discussion, see Ref.\cite{alicea2012new} and references therein).

\vspace{0.3cm}
This framework opens up several concrete applications, outlined below and reserved for future investigation.

First, in the context of quantum simulation, many platforms---such as cold atoms in optical lattices or trapped ions---naturally realise Floquet evolutions that can be modeled as FCA \cite{jotzu2014experimental, rudner2020band}. Our coarse-graining procedure may provide a systematic way to derive effective stroboscopic dynamics over larger temporal and spatial scales for a restricted set of degrees of freedom. This can inform the design of Floquet dynamics that isolate and reproduce only the selected emergent behavior. For example, in periodically driven optical lattices, where fermionic modes experience engineered hopping and interaction terms, the microscopic control pulses can be complex, but their collective effect over multiple time steps may be approximated by a simpler effective FCA. 

Second, in digital quantum simulation, fermionic dynamics such as the Fermi-Hubbard model or lattice gauge theories are encoded into quantum circuits, often using Jordan-Wigner~\cite{Jordan:1928aa} or Bravyi-Kitaev~\cite{BRAVYI2002210} transformations, Suzuki-Trotter decomposition, and rishonic representation \cite{PhysRevX.10.041040, ballarin2024digital}. The resulting circuits are resource-intensive, both in depth and in fermionic parity management. By applying our renormalisation scheme to these  simulations, one can identify coarse-grained FCA that approximate the long-time behavior of the system while acting on fewer effective modes. This can lead to reduced circuit depth and gate count, making the simulation more feasible on near-term quantum hardware. 

Third, our approach may be relevant for studying dynamical (Floquet) topological phases in fermionic systems. Some FCA are known to be characterised by nontrivial topological indices (as winding numbers),  support protected edge modes, and exhibit chiral transport \cite{fidkowski2019interacting, PhysRevB.82.235114}. 
Since our procedure maps FCA to FCA, it preserves the algebraic structure needed to track such topological invariants across scales.
Our renormalisation scheme can alter the topological index for suitable choices of coarse-graining, thus in general it may not be appropriate for identifying scale-invariant phases (e.g. as those discussed in \cite{haah2013latticequantumcodesexotic,RevModPhys.93.045003,PhysRevB.82.155138}). Nonetheless, this very feature can be exploited constructively: by design, the scheme can synthesise specific phases as effective collective dynamics of microscopic ones, e.g., generating nontrivial topological indices from shallow circuits \footnote{
Shallow circuits realising invertible topological order---e.g., by applying an FCA and its inverse in parallel to two stacked copies of a system---can yield nontrivial topological indices once selected stacked degrees of freedom are projected out. Despite this common goal, the two index-distillation schemes constitute distinct protocols.}\cite{PhysRevB.109.075116, PhysRevLett.127.220503, PhysRevLett.97.050401}.

Finally, the renormalisation of FCA provides a foundation for multi-scale algorithm design in fermionic quantum computation.  Our coarse-graining procedure yields a systematic way to build hierarchical quantum circuits that encode fermionic dynamics at different resolutions. This could be exploited to design hybrid analog-digital simulation protocols, where fine-grained dynamics (as short-range interactions) is executed in hardware, while coarse-grained behavior (as collective modes) is captured algorithmically through the renormalised circuit layers.
This separation enables more scalable and resource-efficient simulations, allowing large or strongly correlated fermionic systems to be studied with significantly reduced quantum hardware requirements.

Importantly, because our renormalisation flow remains within the space of FCA, the resulting effective dynamics are always local and unitary and allows for direct implementation in quantum systems with finite control resolution. This distinguishes our approach from continuum renormalisation group schemes, which may produce non-unitary or non-local effective theories.

\vspace{0.3cm}
The paper is organised as follows. In Sec. \ref{sec:qca}, we introduce fermionic cellular automata, defined as quantum cellular automata in the sense of Ref. \cite{schumacher2004reversiblequantumcellularautomata}, acting on fermionic systems. We begin by recalling the basic definitions and properties of FCA and then summarise the classification of nearest-neighbour FCA of spinless fermionic modes on a one-dimensional chain \cite{trezzini2025fermioniccellularautomatadimension}. We further rigorously present the wrapping lemma \cite{schumacher2004reversiblequantumcellularautomata}---a standard technique that allows the evolution of an automaton on an infinite lattice to be represented on a finite one. We provide this lemma at this stage as it plays a crucial role in formulating precise statements about renormalisation.
In Sec. \ref{sec:CG}, we first define the coarse-graining of lattice algebras and the renormalisation equation of FCA, and then we derive the corresponding renormalisation conditions adjusting Ref.\cite{trezzini2024renormalisationquantumcellularautomata} to the fermionic framework. In Sec. \ref{sec:FDQC}, we specialise to one-dimensional fermionic circuits, deriving stronger constraints and structural insights from the general renormalisation equation of Sec. \ref{sec:CG} for specific choices of coarse-graining. Finally, in Sec. \ref{sec:Qubits}, using the same coarse-graining scheme of the former section, we solve the renormalisability equation for nearest-neighbour FCA of spinless fermions in one spatial dimension and compute the resulting renormalisation flow in FCA space.

\section{Fermionic Cellular Automata}\label{sec:qca}
A Fermionic Cellular Automaton (FCA) is a discrete-time, reversible, local, and translation-invariant evolution of a lattice $\mathbb G$ of cells, each hosting a finite number of fermionic modes.
Locality condition requires that, at each step of
the evolution, a cell $x \in \mathbb G$ interacts
with a finite set $\mathcal{N}_x$ of
cells, called the \emph{neighourhood} of $x$. Thus, the speed of information propagation in a FCA is finite,
defining past and future cones of causal influence.  
Translation invariance means  that every cell is updated with the same rule, hence $\mathcal N_x=\mathcal N$~\footnote{FCA can be defined 
without the translation-invariance property (see e.g.~Ref.~\cite{Index}).}.

In this paper we focus on the case where the 
lattice $\mathbb G$ is a hypercubic lattice~\footnote{Typically, one considers the lattice as a graph, the edges corresponding to neighbourhood relations. One can prove that the translation-invariance requirement on an FCA makes the graph
of causal connections the \emph{Cayley graph} of some finitely presented group $G$~\cite{arrighi:hal-01785458,hadeler2017cellular,PhysRevA.90.062106,DAriano:2016aa,Perinotti2020cellularautomatain}, that in 
the present case is $\mathbb G$ imagined as an abelian  group.}
\begin{align}
\mathbb G \coloneqq \tor, \quad  m_t \in \mathbb{N} \cup \infty, \quad s < \infty, \label{eq:latticedefinition}
\end{align}
where $\mathbb Z_{m_t}$ denotes the additive group
modulo $m_t$, and $\mathbb Z_\infty \coloneqq \mathbb Z$. Each cell of the lattice
is labeled by a vector $x \in \tor $ of
integer numbers.
This notation allows for a unified treatment of
$\mathbb G$, regardless of whether it is a finite or infinite lattice.

Since we will deal with lattices of infinitely many fermionic systems, it is convenient to introduce FCA in the Heisenberg picture~\cite{schumacher2004reversiblequantumcellularautomata}, defining the evolution by its action on the algebra of operators rather than on states. The quasi-local algebra, which is the norm completion of the union of all operator algebras on finite lattice regions, provides a natural and general setting; whereas states realise inequivalent representations of the algebra of operators in the infinite-system limit~\cite{bratteli1987operator, haag1992local}.

We associate to each cell $x\in \mathbb G$ a $C^*$-algebra $\m A_x$, with $\norm{\cdot}$ the operator norm. When $\Lambda \subset \mathbb G$ is a finite subset, we denote by $\mathsf{A}(\Lambda)=\bigotimes_{x\in\Lambda}\mathsf{A}_x$ the algebra of $\Lambda$, where $\otimes$ is a suitable algebraic tensor product. If
we consider two different finite subsets $\Lambda_1,\Lambda_2\subset\G$, we can regard
$\mathsf{A}(\Lambda_1)$ as a subalgebra of $\mathsf{A}(\Lambda_1\cup\Lambda_2)$ by tensoring
each element $O\in\mathsf{A}(\Lambda_1)$ with the identity over $\Lambda_2\setminus \Lambda_1$, namely $O\otimes I_{\Lambda_2 \setminus \Lambda_1}\in\mathsf{A}(\Lambda_1\cup\Lambda_2)$. In
this way the product $O_1O_2$ with $O_i\in\mathsf{A}(\Lambda_i)$ is a well defined
element of $\mathsf{A}(\Lambda_1\cup\Lambda_2)$.   The algebra $\mathsf{A}^\mathrm{loc}(\G)$ of all elements that are non-trivial over any finite set of cells is called \textit{local algebra}
and is naturally endowed with the norm $\|O\|_\Lambda$ for $O\in\mathsf A(\Lambda)$. 
Its completion in the operator norm (uniform) topology is called \emph{quasi-local} algebra $\mathsf{A}(\G) \coloneqq \overline{\bigcup_{\Lambda \subset \G}\m A(\Lambda)}^{\norm{\cdot}}$ \footnote{
The local algebra may be completed with respect to different topologies.  Norm topology is a natural and representation-independent choice, as it does not rely on specifying an underlying Hilbert space. Physically, the operator norm is related to the maximum success probability in the discrimination of the two processes represented by the operators. In any case, because the coarse-graining map is continuous and the renormalisability condition Eq.~\eqref{prp:thecoarsegrainingisfinite} can be verified directly on the local algebra, our renormalisation scheme remains independent of the chosen completion.}.  Clearly, for finite lattices $\mathsf{A}(\G)=\mathsf{A}^\mathrm{loc}(\G)$ is a finite dimensional algebra.

In this paper the algebra of a cell will be a finite-dimensional $\Z_2$-graded complex $C^*$-algebra $\mathsf A_x\coloneqq(\mathsf A^0_x, \mathsf A_x^1)$, and the tensor product will be the graded tensor product $\gt$.  We shall refer to as subalgebras those subalgebras that maintain $C^*$ and $\Z_2$-grading properties.
Both the local and quasi-local algebras are $\Z_2$-graded $C^*$-algebras \cite{bratteli1987operator}. In this framework the standard commutator is replaced by a graded commutator $\gc{\cdot}{\cdot}$. We denote by $g(O)=p$ the parity (grade) of an operator $O\in \m A^p_\Lambda$.  We refer the reader to Appendix~\ref{app:z2algebras} for a detailed discussion of $\Z_2$-graded and $C^*$ algebras.

The algebra $\mathsf A_x$ can be seen as the CAR algebra generated by $d$ fermionic modes $\{b_{x,k},\,b_{x,k}^{\dagger} \;|\; k=1,\ldots,d\}$  satisfying the canonical anticommutation relations
\begin{gather*}
  \{b_{x,j},b_{y,k}^{\dagger}\} =\delta_{xy} \delta_{jk} I_{x,k}\quad,\quad \{b_{x,j},b_{y,k}\}=0\,,
\end{gather*}
with $x,y \in\G$ and $I_{x,k}$ the identity on the subalgebra generated by the $k$-th fermionic mode in the cell $x$. By definition $\A_x$ is identified by a pair $\mathsf A_x\coloneqq(\mathsf A^0_x, \mathsf A_x^1)$, where $\mathsf A^0_x$ ($\mathsf A^1_x$) is the span of even (odd) monomials in $b_{x,k},b_{x,j}^{ \dagger}$, and $O \in \m A_x$ means $O \in \m A^p_x$ for $p=0$ (in which case we say that $O$ is even) or $p=1$ (in which case we say that $O$ is odd).

It is convenient to introduce the following linear combinations
\begin{gather}\label{eq:Fpauli}
	X_{x,k} \coloneqq b_{x,k}^{\dagger}+b_{x,k}, \quad\quad Y_{x,k} \coloneqq i(b^{\,\dag}_{x,k}-b_{x,k}),\\
    Z_{x,k} \coloneqq -i X_{x,k} Y_{x,k}  =  b_{x,k} b_{x,k}^{\dagger}-b_{x,k}^{\dagger} b_{x,k}\,,
\end{gather}
which are called \emph{Majorana mode generators}, $X_{x,k}, Y_{x,k} \in \A^1_x$ and $Z_{x,k} \in \A^0_x$. Although the latter can be represented through Pauli matrices in the Jordan-Wigner representation~\cite{Jordan:1928aa}, they should not be mistaken for conventional qubit Pauli operators, which commute on distinct lattice sites. In contrast, Majorana generators on different sites graded-commute
\begin{align*}
    M^J_{x,a}M^K_{y,b}=&\,\delta_{xy}\delta_{ab}\left[\delta_{JK} I_{x,a} + i \varepsilon_{JKL}\,M^L_{x,a}\right]\,+\\
    &+(-)^{g(M^J)g(M^K)}(1-\delta_{xy}\delta_{ab})M^K_{y,a} M^J_{x,b}\,,
\end{align*}
where $x,y \in \mathbb G$, $a,b\in\{1,\ldots d\}$, $J,K,L\in\{1,2,3\}$ label the components of $\mathbf{M}_{x,a}=(X_{x,a},Y_{x,a},Z_{x,a})$, and $\varepsilon_{JKL}$ is the Levi-Civita symbol. Any set of $n$ odd Majorana generators $M$ generates a complex Clifford algebra (see Appendix \ref{app:z2algebras}) that we denote by $\Cl^M_n(\C)$, and we have the following isomorphisms \[\A_x \cong \Cl_{2d}(\C)\cong \bigboxtimes_{k=1}^d \Cl^{X_{x,k}}_1(\C) \gt \Cl^{Y_{x,k}}_1(\C)\,.\]
We now introduce two standard operators for later use.
We denote by \[n_{x,k} \coloneqq b_{x,k}^{\dagger}b_{x,k}=\frac{1}{2}\left(I_{x,k} - Z_{x,k} \right)\] the number operator of mode $k$ at site $x$, and by
\begin{align*}
    Q_x \coloneqq (-)^{\sum_{k=1}^d n_{x,k}}= \bigboxtimes_k Z_{x,k}
\end{align*}
the parity operator of a cell, which acts as $Q_xOQ_x=(-)^{g(O)}O$ for any $O \in \A_x$. We denote by $Q_\Lambda= \bigboxtimes_{x \in \Lambda} Q_x$ the parity operator of the algebra $\m A(\Lambda)$.

By defining
\begin{gather*}
    \ketbra{0}{0}\coloneqq \frac{1}{2} (I+Z)\,,\quad \ketbra{0}{1}  \coloneqq \frac{1}{2} (X+iY)\,,\\
    \ketbra{1}{0}\coloneqq \frac{1}{2} (X-iY)\,,\quad \ketbra{1}{1} \coloneqq  \frac{1}{2} (I-Z)\,,\\
    \ketbra{a}{b}\ketbra{c}{d}=\delta_{bc}\ketbra{a}{d}\,,
\end{gather*}
 any $O \in \m A_x$ can be written as
\begin{gather*}
    O= \sum_{\mathbf a,\mathbf b\in\{0,1\}^d} c_{\mathbf a\mathbf b} \ketbra{\mathbf{a}}{\mathbf b}\,,\\
    \ketbra{\mathbf a}{\mathbf b} \coloneqq \bigboxtimes_{k=1}^d \ketbra{a_k}{b_k}\,,
\end{gather*}
with $c_{\mathbf a\mathbf b} \in \C$, $a_k,b_k \in \{0,1\}$, and $g(\ketbra{a_k}{b_k})=a_k\oplus b_k$. 
 If the coefficients $c_{\mathbf a\mathbf b}$ of the expansion of $O$ are factorised as $c_{\mathbf a\mathbf b}=f_\mathbf ag_\mathbf b$, we will write
\begin{align*}
O=\ketbra{f}{g}.
\end{align*}    
It is worth noting that the operators 
$\ketbra{{f}}{{g}}$ obey the standard ket–bra algebra, i.e.
\begin{align*}
    \ketbra{{f}}{{g}}\ketbra{{h}}{{k}}=\left(\sum_{\mathbf a\in\{0,1\}^d}g_\mathbf ah_\mathbf a\right)\ketbra{f}{k},
\end{align*}
and we can even define
\begin{align*}
    \braket{f}{g}=\sum_{\mathbf a\in\{0,1\}^d}g_\mathbf ah_\mathbf a,
\end{align*}
so that
\begin{align*}
    \ketbra{{f}}{{g}}\ketbra{{h}}{{k}}=\braket{g}{h}\ketbra{f}{k},
\end{align*}
although the individual bra and ket elements are not defined and carry no individual meaning within this formalism \footnote{Although any finite-dimensional $C^*$-algebra can be represented (uniquely up to isomorphisms) as bounded operators on a finite-dimensional Hilbert space, it is often preferable to treat the algebra abstractly. The abstract description makes all constructions intrinsic to the algebra itself, without committing to a particular Hilbert space. In particular, for $\Z_2$-graded algebras, the grading is naturally encoded as an internal automorphism of the algebra rather than as a decomposition of a specific Hilbert space. Introducing abstract elements such as $\ketbra{f}{g}$ that satisfy the usual algebraic relations captures the full operator structure without requiring the individual kets or bras to have meaning as vectors in a Hilbert space. This viewpoint defines the algebraic content while avoiding arbitrary representational choices, extending seamlessly to more general (e.g. infinite-dimensional or categorical) settings.}.

\vspace{0.4cm}
In order to define a discrete-time, translation-invariant dynamics on $\A(\G)$, we first define the map that implements lattice translations.
\begin{definition}[Shift]\label{def:shift} The shift by $x\in \G$ $\tau_x:\A(\mathbb G)\to \A(\mathbb G)$ is defined by its action on Majorana mode generators as
\begin{align}
    \tau_x(M^J_{y,a})\coloneqq M^J_{x+y,a}.
\end{align}
\end{definition}
If $O \in \mathsf{A}_\Lambda$ is
supported on the region $\Lambda$, then
$\tau_x(O) \in \mathsf{A}_{\Lambda + x}$ is supported on the region
$\Lambda + x := \{y+x\; |\; y \in
\Lambda \}$.
We are now in position to define a Fermonic Cellular Automaton. 
 \begin{definition}\label{def:QCA} 
   A $\emph{Fermionic Cellular Automaton}$ (FCA)
   over a lattice $\G$ with finite neighbourhood
   scheme $\mathcal{N} \subset\G$ is an
  automorphism of the quasi-local
   algebra
   $\tT:\mathsf{A}(\G) \longrightarrow
   \mathsf{A}(\G)$  such that
   \begin{itemize}
   \item  $\tT(\mathsf{A}(\Lambda))\subset
   \mathsf{A}(\Lambda+\mathcal{N}) \quad
   \forall \Lambda\subset \G$ (locality),
   \item $\tT\circ\tau_x=\tau_x\circ\tT \quad \forall x\in \G$ (translation-invariance),
 \end{itemize}
 where $\Lambda+\mathcal{N} := \{ y+x\; |\; y \in
\Lambda,\, x \in \mathcal{N}  \}$, and $\circ$ denotes the composition of two algebra homomorphisms.
   The homomorphism $\tT_0:\mathsf{A}_0\longrightarrow
   \mathsf{A}(\mathcal{N})$ given by the
   restriction of the FCA to the cell $x=0$ is called the
   $\emph{transition rule}$.
\end{definition}
Notice that $N$ iterations of the FCA $\tT$ define an FCA $\tT^N$ whose neighbourhood scheme is possibly $N$ times larger $ \sum_{t=1}^N\mathcal N$. In this paper we assume FCA to be translation-invariant; in general one can drop this assumption.

Owing to the counterpart of the Wedderburn-Artin theorem for $\mathbb{Z}_2$-graded algebras \cite{Brunshidle01112012}, the structure theorem of Ref. \cite{schumacher2004reversiblequantumcellularautomata} for Quantum Cellular Automata (QCA) extends to fermionic cellular automata, establishing a one-to-one correspondence between the local transition rule $\tT_0$ and the global evolution $\tT$.
\begin{proposition}
	\label{thm:locglo}
\hspace*{1pt}
\begin{enumerate}
	\item The FCA $\tT$ is uniquely determined
	by the transition rule $\tT_0$ as
    \begin{equation*}
    	\tT\left(\bigboxtimes_{x\in \mathbb G}\A_x\right)=\prod_{x\in \mathbb G}\tT_x(\A_x)\,,
    	\label{eq:autom}
\end{equation*}
where
$\tT_x(\mathsf{A}_x) \coloneqq {\tau}_x\tT_0{\tau}_{x}^{-1}(\mathsf{A}_x)$, and the same (immaterial) ordering of 
$\A_x$ is implicit in both products.
	\item A graded $*$-homomorphism $\tT_0 : \mathsf{A}_{0} \longrightarrow \mathsf{A}(\mathcal{N})$ is the transition
	rule of an FCA if and only if
    \begin{align*}
        \gc{\tT_{0}(\mathsf{A}_{0})}{\tau_x \tT_{0}\tau_x^{-1}(\mathsf{A}_{x})}=0 
    \end{align*}
    for all $x \in
	\mathbb G$ such that $\mathcal{N}\cap(\mathcal{N}+x)\neq \emptyset$.
\end{enumerate} 
\end{proposition}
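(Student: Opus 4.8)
The plan is to transcribe the Schumacher--Werner structure theorem for QCA~\cite{schumacher2004reversiblequantumcellularautomata}, systematically replacing ordinary commutators by the graded commutator $\gc{\cdot}{\cdot}$, ordinary tensor products by the graded tensor product $\gt$, and the Wedderburn--Artin theorem by its $\Z_2$-graded counterpart~\cite{Brunshidle01112012} (which, since each cell carries an even number of Majorana generators, ensures that $\A_x$ and every $\mathsf A(\Lambda)$ are full matrix algebras, in particular simple). Item~1 is the easy direction: an FCA is a norm-continuous automorphism, and $\mathsf{A}^{\mathrm{loc}}(\G)=\bigboxtimes_{x\in\G}\A_x$ is dense in $\mathsf A(\G)$, so $\tT$ is fixed by the restrictions $\tT|_{\A_x}$; translation invariance $\tT\tau_x=\tau_x\tT$ together with $\tau_x(\A_0)=\A_x$ forces $\tT|_{\A_x}=\tau_x\tT_0\tau_x^{-1}\eqqcolon\tT_x$, so $\tT$ is determined by $\tT_0$ alone. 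Writing a generic element of $\mathsf A(\Lambda)$ as a linear combination of ordered products $\prod_{x\in\Lambda}O_x$, $O_x\in\A_x$, and using that $\tT$ is an algebra homomorphism gives $\tT(\prod_x O_x)=\prod_x\tT_x(O_x)$ with the same ordering on both sides; norm continuity then extends this to $\mathsf A(\G)$ and yields the displayed identity.

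For the ``only if'' half of item~2, cells at distinct sites $0$ and $x$ graded-commute, $\gc{\A_0}{\A_x}=0$, and a graded $*$-homomorphism preserves graded commutators, so $\gc{\tT_0(\A_0)}{\tT_x(\A_x)}=\gc{\tT(\A_0)}{\tT(\A_x)}=0$; when $\mathcal N\cap(\mathcal N+x)=\emptyset$ the images sit on disjoint regions and this is automatic, so the relation is a genuine constraint only when $\mathcal N\cap(\mathcal N+x)\ne\emptyset$. The substantive direction is ``if''. Given a (necessarily unital) graded $*$-homomorphism $\tT_0:\A_0\to\mathsf A(\mathcal N)$ obeying the stated relations, put $\tT_x\coloneqq\tau_x\tT_0\tau_x^{-1}$ and define $\tT$ on $\mathsf{A}^{\mathrm{loc}}(\G)$ by the product formula. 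One first checks that this is a well-defined unital graded $*$-homomorphism: since $\mathsf A(\Lambda)$ is the graded tensor product of the $\A_x$, it is generated by graded-isomorphic copies of the $\A_x$ subject only to the internal $*$-relations of each cell and the graded-commutation relations between distinct cells; the $\tT_x$ respect the former (each being a graded $*$-homomorphism) and, by hypothesis, have graded-commuting images, so the universal property of the graded tensor product assembles $\{\tT_x\}$ into a graded $*$-homomorphism of $\mathsf{A}^{\mathrm{loc}}(\G)$, which extends to $\mathsf A(\G)$ by continuity (a $*$-homomorphism of $C^*$-algebras being norm non-increasing); locality and translation invariance are built in.

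It then remains to show that $\tT$ is bijective. Injectivity is automatic: $\mathsf A(\G)$, an inductive limit of full matrix algebras, is simple, so $\ker\tT=0$. Surjectivity is handled by wrapping, as in Schumacher--Werner: on a periodic lattice $(\Z_L)^{\times s}$ with $L$ large enough that only ``bulk'' pairs of cells have overlapping (wrapped) neighbourhoods, the product formula defines a graded $*$-endomorphism $\tT^{(L)}$ of the finite-dimensional algebra $\mathsf A((\Z_L)^{\times s})$---the commutation relations required being exactly the hypothesis---and, being injective, $\tT^{(L)}$ is bijective by dimension count; one then checks that its inverse has bounded range independent of $L$, assembles these into a bounded-range inverse rule, hence a global inverse FCA, and concludes that $\tT$ is onto.

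I expect the crux to be the ``if'' direction, on two points: (i) making rigorous that the graded tensor-product structure of $\mathsf{A}^{\mathrm{loc}}(\G)$ together with the graded-commutation hypothesis is precisely what promotes $\{\tT_x\}$ to a homomorphism---here the $\Z_2$-graded Wedderburn--Artin normal form genuinely replaces the ungraded input, as one cannot simply invoke ordinary tensor factorisations---and (ii) the wrapping step for surjectivity, which requires care that the wrapped rule stays consistent and its inverse local uniformly in $L$. The remaining parts are routine transcriptions of the QCA argument.
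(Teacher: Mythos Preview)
Your proposal is correct and follows essentially the same route as the paper: transcribe the Schumacher--Werner argument with the ungraded algebraic inputs replaced by their $\Z_2$-graded counterparts. The paper's own treatment is a one-line sketch that singles out as the key graded ingredient the unitary implementation of the local rule, $\tT_0(O)=U^\dag(I_{\mathcal N\setminus\{0\}}\gt O)U$ (Eq.~\eqref{eq:ev}), via the graded Skolem--Noether/Wedderburn--Artin theorems of Appendix~\ref{app:z2algebras}; your universal-property framing of the assembly step and this unitary-conjugation framing are two sides of the same structural fact.
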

The proof (analogous to \cite{schumacher2004reversiblequantumcellularautomata}) uses that, in the finite-dimensional case, any graded $*$-homomorphism from a simple  $\Z_2$-graded $C^*$-subalgebra to the full simple $\Z_2$-graded $C^*$-algebra is an isomorphic embedding through unitary conjugation (see Appendix \ref{app:z2algebras})
\begin{equation}
    \tT_0(O)=U^\dag(I_{\mathcal N\setminus \{0\}}\gt O)U\quad\forall O\in\mathsf A_0\,,
    \label{eq:ev}
\end{equation} 
for a suitable unitary element $U \in \mathsf A({\mathcal
  N\cup\{0\}})$.

 We now define the notion of a fermionic gate that features in the most important cases of FCA.
\begin{definition}[Fermionic Gate]
A \textit{Fermionic Gate} $\mathcal F$ acting over a local algebra $\m A({\Lambda})$ is a inner $*$-automorphism of $\m A({\Lambda})$:
\begin{align*}
\mathcal F(O) = F^\dag O F\,,\quad \forall\,O\in \m A({\Lambda})\,,
\end{align*}
where $F\in \m A({\Lambda})$ is a unitary operator.
\end{definition}
\begin{remark}\label{rmk:Kraus}
Notice that fermionic gates have a local action. This might seem false if, e.g.,  one considers the bipartite operator $O = O_1 \gt I \in \m A({\Lambda_1}) \gt \m A_{x}$ with $O_1$ an odd operator, and the unitary operator  $F=I \gt X\in \m A({\Lambda_1}) \gt \m A_{x}$, for which we have
\begin{align*}
&\mathcal F(O)  = ( I\gt {X})(O_1\gt I )(I \gt {X}) =\\
&= - O_1 \gt {X}^2= - O_1\gt  I.
\end{align*}
However, due to $\Z_2$-grading of the algebra---which corresponds to a superselection rule for physical states---the minus sign, as a global phase, is physically irrelevant. In fact, operators, viewed as Kraus operators of a completely positive map, are unaffected by any overall phase factor.
\end{remark}

The identity $\tI(O_x)=O_x$ and the shifts $\tau_x$ are trivial
examples of FCA. Another example of FCA is any (Finite-Depth) Fermionic Circuit (FDFC), whose action is decomposed in a sequence of local fermionic gates as follows
\begin{equation*}
\resizebox{0.85\hsize}{!}{
\tikzfig{QCAexample}}.
\end{equation*}
Wires represent systems, and the boxes represent unitary completely positive maps $\mathcal U_l(O)\coloneqq U_l^\dag O U_l$ for any $O \in \A(\Lambda)$, with $U_l \in \A(\Lambda)$ unitary. In the literature, the term \emph{finite-depth}---or \emph{constant-depth}, or \emph{shallow}---refers to circuit depth that is independent of lattice size and cell dimension.

\begin{remark} \label{rem:abstran}  
 The algebra $\mathsf A_x$ is isomorphic to the $\Z_2$-graded $C^*$-algebra $\M{2^{d-1}}{2^{d-1}}$, say via \[\transf H_\mathbb G:\mathsf A_x\to \M{2^{d-1}}{2^{d-1}}.\] 
 In the same way $\mathsf A(\mathcal N)$ is isomorphic to $\M{2^{d|\mathcal N|-1}}{2^{d|\mathcal N|-1}}$, via \[\tK_\mathbb G:\mathsf A(\mathcal N)\to\M{2^{d|\mathcal N|-1}}{2^{d|\mathcal N|-1}}.\] Thus, the transition rule of an FCA $\tT$ can be equivalently identified by the graded $*$-homomorphism $\tT_*\coloneqq \tK_\mathbb G\circ\tT_0\circ\tH^{-1}_\mathbb G$
\begin{align*}
\tT_* :\M{2^{d-1}}{2^{d-1}}\to\M{2^{d|\mathcal N|-1}}{2^{d|\mathcal N|-1}}.
\end{align*}
In this way, the restriction of $\tK_\mathbb G$ to $\tT_0(\mathsf A_0)$ is a graded $*$-isomorphism to $\tT_*(\M{2^{d-1}}{2^{d-1}})$.

\end{remark}

  In the following we use Eq.~\eqref{eq:ev} both for $\tT_0$ and $\tT_*$, leaving to the context the specification as to which is the case.

The algebraic approach avoids the need to specify the representation of the algebra of operators. For finitely many systems, i.e. when $\mathbb G$ is a finite lattice, this offers no benefits, as all representations are unitarily equivalent: $\tT $  acts as conjugation by some unitary operator with definite parity. However, when $\G$ is an
infinite lattice, $\mathsf{A}(\mathbb G)$ becomes an infinite-dimensional $C^*$-algebra that admits many inequivalent representations. 
In this case, working in a representation-independent framework is highly advantageous, as it allows us to prove results without being tied to a particular Hilbert space.

 \subsection{One Dimensional Nearest-Neighbour spinless Fermionic Cellular Automata}\label{NNFCA}
  In this section, we briefly review the classification of nearest-neighbour  (i.e. $\mathcal{N}=\{-1,0,1\}$) 
 FCA  on a line $\mathbb{Z}$, with cells containing a single (spinless) fermionic mode $\A_x\cong \M{1}{1}\cong \Cl_{2}(\C)$~\cite{trezzini2025fermioniccellularautomatadimension}. Those FCA can be fully classified by a topological invariant called \emph{index} \cite{fidkowski2019interacting}, which measures the information flux rate produced by the evolution $\tT$.
FCA with index $\ind( \tT ) = 1$ are finite-depth fermionic circuits (FDFCs) with one of the two following forms
 \begin{align}
   \label{eq:sw}
\tT=\tikzfig{Qubitev}\,,
\\
 \label{eq:frk}
\tT=\tikzfig{Forking}\,,
\end{align}
where $\mathcal U(O_x)\coloneqq U^\dag (O_x) U$ for a unitary operator $U\in \m A_x$, a cell-wise unitary gate, $\mathcal C_\phi(O_xO_{x+1})\coloneqq C_\phi^\dag (O_xO_{x+1}) C_\phi$, $\phi \in \R$, a \emph{controlled-phase} gate
\begin{align*}
    C_{\phi} \coloneqq& e^{i\phi\,n_x\gt\, n_{x+1}}=\\
    =&\frac{1}{2}\left[(I_x+Z_x)\gt I_{x+1}+(I_x-Z_x)\gt e^{\frac{\phi}{2}(I_{x+1}-Z_{x+1})}\right]\,,
\end{align*}
and $\mathcal{S}({X},{Y})(O_{\{x,{x+1}\}})\coloneqq S_{{X},{Y}}^\dag (O_{\{x,{x+1}\}}) S_{{X},{Y}}$ a \emph{Majorana Swap} gate
\begin{align}\label{eq:majo_swap}
S_{{X},{Y}}\coloneqq e^{\frac{\pi}{4}{X_x}\gt {Y_{x+1}}}\,,
\end{align}
with $O_{\{x,x+1\}}$ any operator supported on sites $x$ and $x+1$.
Notice that, while the first family corresponds to that of nearest-neighbour qubit QCA on $\Z$ with index equal to $1$, the second family is genuinely fermionic. We will refer to the former as the \textit{Schumacher-Werner FCA} \cite{schumacher2004reversiblequantumcellularautomata} and to the latter as the \textit{Forking automata} \cite{trezzini2025fermioniccellularautomatadimension}. 

On the other hand, FCA with $\ind\neq 1$ are either (modulo cell-wise unitaries) i) the \emph{shifts} $\tau_\pm$ with $\ind(\tau_\pm)=2^{\pm 1}$---consisting of the trivial embedding of each cell algebra into its left or right neighbouring cell---e.g. a right shift
\begin{equation}
\tau_+=\tikzfig{Leftsh};
\end{equation}
or ii) the \textit{Majorana shifts} $\sigma_\pm$ with $\ind(\sigma_\pm)=2^{\pm 1/2}$---which act as
\begin{gather}
\begin{gathered}
\sigma_\pm(  X  _x) =   Y  _{x}\,,\\
\sigma_\pm(  Y  _{x}) =  X  _{x\pm 1}\,.
\end{gathered}
\end{gather}
Note that, as the algebra of each cell is generated by two Majorana mode generators, providing the action of an FCA on the generators, e.g. $X_x$ and $Y_x$, completely specifies its action on the whole algebra $\m A_x$.

\subsection{The Wrapping Lemma}
The Wrapping Lemma (see Ref. \cite{Perinotti2020cellularautomatain}) relates  
FCA over infinite and finite lattices establishing a bijective correspondence of transition rules. By Proposition \ref{thm:locglo} an FCA is uniquely determined  by a finite set of graded commutators identified by neighbourhoods overlaps. One ``wraps'' a sufficiently large finite sub-lattice of an FCA on $\Z^s$ by imposing suitably periodic boundary conditions, obtaining an FCA on a discrete torus $\mathbb L$. 
The ``unwrapped'' and ``wrapped'' FCA  have the same transition rule if no new overlap between neighbourhoods is introduced after wrapping the lattice.
We denote the smallest wrapping that meets this condition by $\mathbb L_0$ and we will say that $\mathbb L_0$ and all larger wrappings are \emph{regular} wrappings for $\tT$.

Since we will heavily exploit the Wrapping Lemma in the next sections, we now provide its formal statement. A reader who is already familiar with this concept can skip this section.

\begin{definition}[Wrapping]
A \emph{wrapping map} on the additive group $\mathbb Z^s$ is a homomorphism 
\begin{gather*}
\varphi:\mathbb Z^s\to\mathbb L\coloneqq\bigtimes_{t=1}^s\mathbb Z_{m_t}, \hspace{5pt} m_t<\infty \hspace{5pt} \forall t\,,\\
x=(x_1,...,x_s) \mapsto \varphi(x)=((x_1\Mod{m_1}) ,..., (x_s\Mod{m_s}))\,,
\end{gather*}
where $\L$ is called wrapping.
\end{definition}
The following definition formalises the idea of ``sufficently large'' wrapping, that is, a wrapping such that the neighbourhoods of cells $x$ and $y$, after wrapping, have the same intersection as before wrapping, so that the commutation relations $\gc{\tT_y(A)}{\tT_x(B)}=0$ with $x,y\in\mathbb L$, $A,B\in \m A(\L)$, and $\mathcal{N}_y\cap\mathcal{N}_x\neq\emptyset$ are the same as in the infinite case.
\begin{definition}[Regular wrapping]\label{def:regnei}
Let $\tT:\A(\G)\to\A(\G)$ be an FCA with neighbourhood scheme $\mathcal N\subset\mathbb Z^s$. We say that $\varphi(\mathbb G)=\L$ is a \emph{regular wrapping} for $\tT$  if 
\begin{align*}
&\{\varphi^{-1}(\mathcal N_{\varphi(x)}\cap \mathcal N_{\varphi(y)})\}\cap
\{\mathcal N_x\cup \mathcal N_y\}=\mathcal N_x\cap \mathcal N_y 
\end{align*}
for every pair $x,y\in\mathbb G$ such that $\mathcal N_x\cap\mathcal N_y\neq\emptyset$.
\end{definition}
The intersection with $\mathcal N_x\cup \mathcal N_y$ on the rhs~in the definition is required because the inverse image of $\mathcal N_{\varphi(x)}\cap \mathcal N_{\varphi(y)}$ contains infinitely many intersections, due to the periodicity of the set $\varphi^{-1}(x)$. For a  nearest-neighbour automaton, any  lattice $\L$ comprising at least $5$ cells in each coordinate direction, i.e., $m_t \geq 5$ for all $t$, gives a regular wrapping. Given the specific features of the automaton, smaller lattices may also be considered \cite{trezzini2024renormalisationquantumcellularautomata}. Depending on the parity of $m_t$ an FCA might append a minus sign to odd operators, which is physically irrelevant (see Remark \ref{rmk:Kraus}).

Given a wrapping $\varphi:\mathbb Z^s\to\mathbb L$, we consider the quasi-local algebras on $\mathbb Z^s$ and $\mathbb L$, and, accordingly, we introduce the following isomorphisms, referring to those introduced in Remark~\eqref{rem:abstran}:
\begin{align*}
&\mathcal{U}_\varphi\coloneqq\tH^{-1}_{\mathbb Z^s}\circ\tH_\mathbb L:\mathsf{A}_0^{(w)}\rightarrow\mathsf{A}_0 \\
&\mathcal{W}_\varphi\coloneqq\tK^{-1}_\mathbb L\circ\tK_{\mathbb Z^s}:\mathsf{A}({\mathcal{N}})\rightarrow\mathsf{A}^{(w)}({\mathcal{N}})\,,
\end{align*}
where $\mathcal{U}_\varphi$ maps the single cell algebra $\mathsf{A}^{(w)}_0$ of the  wrapping $\L$ into the single cell algebra $\mathsf{A}_0$ of the infinite lattice, while the map $\mathcal{W}_\varphi$ maps the algebra of the neighbourhood $\mathsf{A}({\mathcal{N}})$ of the infinite lattice into that of the wrapping $\mathsf{A}^{(w)}({\mathcal{N}})$. 
The wrapping regularity requirement in Def.~\eqref{def:regnei} ensures that the application of 
$\varphi$ to $\mathcal{N}_x$ does not change the neighbourhood structure. 

If we now consider an FCA $\tT$ over $\mathbb{Z}^s$, with transition rule $\tT_0$, the following remarkable result holds
\begin{lemma}[Wrapping Lemma]\label{lem:wrap}
There is a bijective correspondence between nearest-neighbour FCA $\tT:\A(\Z^s)\to\A(\Z^s)$ on $\mathbb Z^s$ and nearest-neighbour FCA $\tT^{(w)}:\A(\L)\to\A(\L)$ on  any regular wrapping $\L=\varphi(\Z^s)$, given by 
\begin{align*}
\tK_\mathbb L\circ\tT^{(w)}_x\circ\tH_{\mathbb L}^{-1}=\tK_{\mathbb Z^s}\circ\tT_x\circ\tH_{\mathbb Z^s}^{-1}=\tT_*\,,
\end{align*}
which holds irrespective of $x\in \L$.
\end{lemma}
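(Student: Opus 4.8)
\Proof
The plan is to reduce the statement to the local--global correspondence of Proposition~\ref{thm:locglo} combined with the matrix-algebra presentation of a transition rule from Remark~\ref{rem:abstran}, so that the claimed bijection on FCA becomes, at the level of the abstract data $\tT_*$, an \emph{identity}. First I would record the common domain: on $\Z^s$ an FCA $\tT$ is, by Proposition~\ref{thm:locglo}(1), completely determined by $\tT_0$, and by Remark~\ref{rem:abstran} $\tT_0$ is faithfully encoded by $\tT_*=\tK_{\Z^s}\circ\tT_0\circ\tH_{\Z^s}^{-1}:\M{2^{d-1}}{2^{d-1}}\to\M{2^{d|\mathcal N|-1}}{2^{d|\mathcal N|-1}}$, this being a graded $*$-isomorphism onto its image. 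On a regular wrapping $\L=\varphi(\Z^s)$ the single-cell algebra and the neighbourhood algebra are abstractly the same (regularity of $\varphi$ leaves the set $\mathcal N$, hence $|\mathcal N|$, unchanged), so a putative transition rule $\tT^{(w)}_0$ on $\L$ lies between the \emph{same} two matrix algebras and is encoded by $\tT^{(w)}_*=\tK_{\L}\circ\tT^{(w)}_0\circ\tH_{\L}^{-1}$. I would then \emph{define} the correspondence by imposing $\tT^{(w)}_*=\tT_*$, i.e. $\tT^{(w)}_0:=\tK_{\L}^{-1}\circ\tT_*\circ\tH_{\L}$, which in terms of the intertwiners $\mathcal U_\varphi,\mathcal W_\varphi$ introduced above reads $\tT^{(w)}_0=\mathcal W_\varphi\circ\tT_0\circ\mathcal U_\varphi$.

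The substance is the well-definedness step: one must show that $\tT^{(w)}_0$ so defined actually is a transition rule on $\L$. That it is a graded $*$-homomorphism $\A_0^{(w)}\to\A^{(w)}(\mathcal N)$ is clear, since $\tT_0$ is and $\mathcal U_\varphi,\mathcal W_\varphi$ are graded $*$-isomorphisms; the real content is that it satisfies the finite family of equations $\gc{\tT^{(w)}_0(\A_0)}{\tau_y\,\tT^{(w)}_0\,\tau_y^{-1}(\A_y)}=0$ required by Proposition~\ref{thm:locglo}(2) \emph{on $\L$}. Here Def.~\ref{def:regnei} enters essentially: it is engineered exactly so that, for cells of $\L$, the set of pairs with overlapping neighbourhoods and the operator content of each overlap are in constraint-preserving bijection with their counterparts on $\Z^s$ --- no spurious ``wrap-around'' overlap is created, and each surviving overlap $\mathcal N_{\varphi(x)}\cap\mathcal N_{\varphi(y)}$ pulls back under $\varphi$ to precisely $\mathcal N_x\cap\mathcal N_y$. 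Hence, after transport by $\mathcal U_\varphi$ and $\mathcal W_\varphi$, the system of graded-commutation equations to be checked on $\L$ is literally the one that holds on $\Z^s$ by hypothesis. Proposition~\ref{thm:locglo}(1) then extends $\tT^{(w)}_0$ uniquely to an FCA $\tT^{(w)}$ on $\L$ --- the product formula is unambiguous because $\L$ is finite, so $\A(\L)=\A^{\mathrm{loc}}(\L)$ --- and the fact that the product formula updates every cell by the conjugate rule $\tau_x\tT^{(w)}_0\tau_x^{-1}$ is exactly translation invariance on $\L$, i.e. $\tK_\L\circ\tT^{(w)}_x\circ\tH_\L^{-1}=\tT_*$ for every $x\in\L$, as claimed.

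Bijectivity then comes for free, because the construction is symmetric: the inverse assignment takes an FCA $\tT^{(w)}$ on $\L$, reads off $\tT^{(w)}_*$, sets $\tT_0:=\tK_{\Z^s}^{-1}\circ\tT^{(w)}_*\circ\tH_{\Z^s}$, and extends by Proposition~\ref{thm:locglo}(1); one must check that the (now translation-generated, but only finitely many inequivalent) graded-commutation constraints on $\Z^s$ follow from the finitely many already verified on $\L$, which is again Def.~\ref{def:regnei} read in the opposite direction. The two assignments are mutually inverse since in both cases the intermediate datum $\tT_*$ coincides and, by Proposition~\ref{thm:locglo}(1), determines the FCA on each lattice uniquely; surjectivity in each direction is likewise immediate because \emph{any} graded $*$-homomorphism $\tT_*$ passing the (common) commutation test yields a legitimate FCA on both lattices.

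I expect the main obstacle to be precisely the bookkeeping in the well-definedness step: making rigorous the claim that regularity equates the two constraint systems, including verifying that $\tH,\tK$ (equivalently $\mathcal U_\varphi,\mathcal W_\varphi$) intertwine the shift $\tau_x$ on $\Z^s$ with the shift on $\L$ on the finite regions that occur. A secondary subtlety worth flagging is that wrapping can conjugate odd operators by an overall $-1$ whose appearance depends on the $m_t$; this is physically irrelevant by Remark~\ref{rmk:Kraus}, and phrasing the correspondence through the matrix representatives $\tT_*$ --- equivalently, identifying FCA up to that sign on odd elements --- makes it harmless, consistently with Eq.~\eqref{eq:ev}.
\qed
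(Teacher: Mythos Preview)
The paper does not supply a proof of the Wrapping Lemma; it is stated with a citation to Ref.~\cite{Perinotti2020cellularautomatain} and then used. Your argument is the standard one and is correct in outline: reduce to the transition-rule data $\tT_*$ via Proposition~\ref{thm:locglo} and Remark~\ref{rem:abstran}, then invoke Definition~\ref{def:regnei} to identify the graded-commutation constraint systems on $\Z^s$ and on $\L$, so that Proposition~\ref{thm:locglo}(2) is satisfied on one lattice iff it is on the other. The bijectivity argument through the common intermediate datum $\tT_*$ is clean, and you correctly flag the parity-dependent sign as physically irrelevant.

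One point you gloss over but should make precise in a full write-up: the claim that the constraint systems are ``literally the same'' requires that $\mathcal W_\varphi$ intertwine not just the neighbourhood algebras but also the translated images $\tau_x\tT_0\tau_x^{-1}(\A_x)$ sitting inside $\A(\mathcal N\cap(\mathcal N+x))$ with their wrapped counterparts. This amounts to checking that the restriction of $\tK_{\Z^s}$ to each overlap subalgebra $\A(\mathcal N\cap(\mathcal N+x))$ matches $\tK_{\L}$ on $\A^{(w)}(\mathcal N_{\varphi(0)}\cap\mathcal N_{\varphi(x)})$, which is exactly what the set-theoretic condition in Definition~\ref{def:regnei} guarantees. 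This is the ``bookkeeping'' you anticipate; it is routine but not vacuous, since the isomorphisms $\tK$ are defined on full neighbourhoods and one must verify compatibility on the intersections.
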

The above result then inspires the definition of the following equivalence relation.
\begin{definition}
Two FCA on wrappings $\L,\L'$ \emph{have the same transition rule} if 
\begin{align*}
\tK_\mathbb L\circ\tT^{(w)}_0\circ\tH_{\mathbb L}^{-1}=\tK_{\mathbb L'}\circ\tT^{(w')}_0\circ\tH_{\mathbb L'}^{-1}=\tT_*\,.
\end{align*}
\end{definition}
Considering a cellular automaton $\tT$ on $\mathbb Z^s$ we can finally define its wrappings as follows.
\begin{definition}[Wrapped FCA]\label{def:wrapped} The FCA $\tT^{(w)}: \A(\L)\to \A(\L)$ on $\mathbb{L}=\varphi(\mathbb Z^s)$ is a \emph{wrapping} of an FCA $\tT$ over $\mathbb{Z}^s$ if $\tT^{(w)}$ has the same transition rule as $\tT$:
\begin{align} \label{eq:wrapping}
\tT^{(w)}_0=\mathcal{W}_\varphi \circ\tT_0\circ\mathcal{U}_\varphi\,.
\end{align}
\end{definition} 

One of the advantages of working with wrapped FCA is that, for a finite lattice $\mathbb L$,
the *-automorphism $\tT^{(w)}$ corresponds with conjugation by a unitary element $U\in\m A(\mathbb L)$ (see Theorem~\ref{thm:skolem} in Appendix~\ref{app:z2algebras}), i.e.
\begin{align}
    &\tT^{(w)}(X)=UXU^\dag,\\
    &U^\dag U=UU^\dag =I_{\m A(\mathbb L)}.
\end{align}

\section{FCA renormalisation}\label{sec:CG}
In this section, we extend the study of renormalisation of quantum cellular automata in Ref.~\cite{trezzini2024renormalisationquantumcellularautomata} to the case of fermionic cellular automata. 
Then, we will study the explicit form of renormalisable FCA with cells made of a single fermionic mode, namely the simplest one-dimensional FCA---in many respects these represent the fermionic counterpart of quantum cellular automata of qubits. 

We define a procedure that yields a coarse-grained dynamics of a fermionic cellular automaton, ensuring that the resulting large-scale dynamics remains an FCA.
To establish this rigorously, we first formulate the problem for finite lattices and then apply the Wrapping Lemma to extend the results to infinite lattices. This approach provides the straightest way for a rigorous and systematic framework for FCA renormalisation~\cite{trezzini2024renormalisationquantumcellularautomata} (see remark \ref{rmk:wrapp}).
We begin by considering degrees of freedom on finite lattices $\mathbb{L}$ renormalised into coarser lattices $\mathbb{L}'$ and deriving a necessary and sufficient condition for an FCA on the lattice $\mathbb{L}$ to admit a renormalisation on $\mathbb{L}'$. We then prove that this condition depends solely on the local update rule and is independent of the lattice choice, reducing the problem to a single renormalisation equation defined on a minimal lattice $\mathbb{L}_0$. Finally, we extend the construction to infinite lattices by formulating the renormalisation condition purely in terms of the transition rule.

We say that an FCA that admits a solution to the renormalisation equation is \emph{renormalisable}. The
coarse-graining of FCA determines a displacement in the space of FCA as specified by 
their defining parameters, that represents a \emph{renormalisation flow}, whose fixed 
points play a distinguished role as in any renormalisation scenario. We shall return to this point in the Conclusions.

\vspace*{0.4cm}

We consider a homogeneous
tessellation  of $\mathbb Z^s$ such that each tile 
is a $s$-dimensional hypercube of side $N \in \N$, i.e.
\begin{align*}
&\bigcup_{{x}\in \mathbb{Z}^s}\Lambda_{x}= \mathbb{Z}^s,\\
&\Lambda_0\coloneqq [0,N-1]^s,&&\Lambda_{{x}}\coloneqq\Lambda_0+Nx,
\end{align*}
where $[a,b]$ denotes the set
$\{a,a+1,\ldots,b-1,b\}\subset\mathbb Z$. 
\begin{figure}
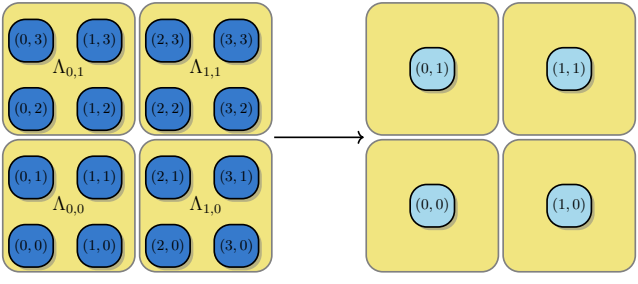

\centering

 \resizebox{0.4\paperwidth}{!}{
   \tikzfig{Latticeprimbis}}
        \caption{ The coarse-grained lattice $\mathbb{L'}$ of $\L$ is given by the identification $\Lambda_{\vc{x}}\rightarrow \vc{x}=(x,y)$.}
\label{fig:Tess}
\end{figure}
The hypercubes
$\{\Lambda_{{x}}\}$ are related to each other by translations.  
Consider now the wrapped FCA  $\tT_w$ of the FCA $\mathcal T$
on a regular wrapping for $\tT$
\begin{align}\label{eq:2}
\mathbb{L} = \mathbb{Z}_{Nm}^s   
\end{align}
where $m \in \N$ is large enough to guarantee regularity of the wrapping. Clearly, the lattice $\mathbb{L}$ admits a tessellation in terms of
the same $s$-dimensional hypercubes. Then we define a coarse-grained lattice (Fig. \ref{fig:Tess})
\begin{align}
  \label{eq:1}
  \mathbb{L}' :=
  \mathbb{Z}_{m}^s\quad \text{such that}\quad \bigcup_{{x}\in \mathbb{L}'}\Lambda_{x}=
  \mathbb{L}\,.
\end{align}

\begin{definition}
The \emph{projections} of the algebra of a tile $\A(\Lambda_x)$ are elements of the form 
\begin{align}
    \Pi_{{{\Lambda_x}}}=\sum_{{{g}\in S}} \dyad{{{g}}} ,\quad     \braket{g}{g'}=\delta_{g,g'}.
\end{align}
The cardinality $|S|$ of the index set $S$ is the \emph{rank} of the projection $\Pi_x$, denoted as $\operatorname{rank}(\Pi_x)$.    
\end{definition}

We define the coarse-grained algebra $\mathsf A^\Pi(\mathbb{L}) \subseteq \m A(\L)$  over $\mathbb L$  as
\begin{align*}
  \begin{aligned}
&\mathsf
A^{\Pi}(\mathbb{L})\coloneqq
\{ O \in \mathsf{A}(\mathbb{L}) \mbox{ s.t. }
\Pi O \Pi = O\} , \\
&\Pi\coloneqq\bigboxtimes_{x\in \mathbb{L'}}\Pi_{\Lambda_x}\,,  
  \end{aligned}
\end{align*}
where $\Pi_{\Lambda_x}$ are the same for all $x$. We set $\A'(\mathbb L')\cong\A^\Pi(\mathbb L)$,  
and $\dim(\m A'_x)=\operatorname{rank}(\Pi_x)^2$, with the cell $x\in \mathbb L'$ corresponding to the tile $\Lambda_x\in\mathbb L$ for any $x$.
Let us explicitly construct such algebra isomorphisms.
Let 

$\Pi_{{{\Lambda_x}}}=\sum_{{g}\in S} \dyad{g}$ with $\text{rank}(\Pi_x) \leq 2^{d\abs{\Lambda_x}}$.

We define isometry $\tV$ and  coisometry $\tV^\dag$ (Fig.\ref{fig:isom})
\begin{align}
  \begin{aligned}\label{eq:Jiso3}
&\tV:\Aredloc \to \mathsf{A}(\mathbb{L})
               ,\quad \tJ: \mathsf{A}(\mathbb{L}) \to\Aredloc\,, \\
&\tV(O') := \sum_{f,g\in S} \tF_{\psi_ff,\psi_gg}(O')\,,  \\
 & \tJ(O):= \sum_{f,g\in S} \tF^\dag_{\psi_ff,\psi_gg} (O)\,,
  \end{aligned}
\end{align}
where
\begin{equation*}
\tF^\dag_{\psi_ff,\psi_gg}(O)\coloneqq \ketbra{\psi_f}{{f}}O\ketbra{g}{\psi_{g}} 
\end{equation*}
(see Appendix \ref{app:z2algebras} for the details). It is useful to define also
\begin{align*}
       \mathcal{P} (O):= \Pi O \Pi  .
\end{align*}
One easily obtains
\begin{align*}
&\tE(I_{\m A'({\mathbb L'})})=\Pi,\quad \tE^\dag(I_{\m A(\mathbb L)})=I_{\m A'(\mathbb L')},\\
&\tJ\circ\tV=\tI, \quad
\tV\circ\tJ=\tP\,,
\end{align*}
i.e., $\tV$ is an
isomorphism between
$\Aredloc$ and its image $\Aplocc$, and  the
restriction of $\tJ$ to
$\Aplocc$ is the inverse isomorphism.

\begin{figure}
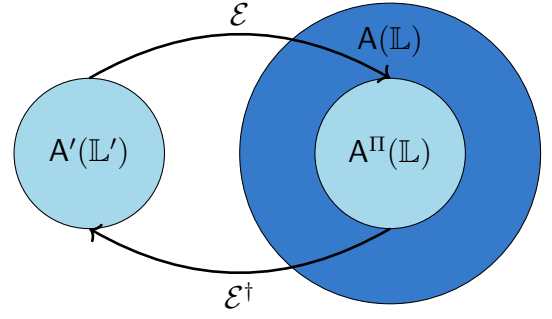

\tikzfig{J}
\caption{Representation of the maps $\tJ, \tV$ between the algebra $\m A(\L)$ on the lattice $\L$ and its coarse-grained counterpart $\m A'(\L')\cong \m A^\Pi(\L)$ on the coarse-grained lattice $\L'$.}
\label{fig:isom}
\end{figure}

\begin{remark}
    While $\tP$ preserves the parity of the algebra, $\tE^\dag$ may alter it. This can be seen by noticing that the coarse-grained parity operator is \[Q^r_x=\tE^\dag(Q_{\Lambda_x}),\]
    where the action of $\tE^\dag$ selects a subset of eigenvalues of $Q_{\Lambda_x}$.
\end{remark}

\begin{remark}\label{rmk:wrapp}
   Notice that while $\Pi \in \A(\mathbb L) \subset \A^{\textrm{loc}}(\Z^s)$, the projection $\bigboxtimes_{x \in \Z^s} \Pi_{\Lambda_x}$ is not an element of the quasi-local algebra 
   $\A(\Z^s)$. For this reason, the renormalisability condition will be formulated on the wrapped FCA and then related to the corresponding original unwrapped FCA employing the Wrapping Lemma, thereby implementing a well-posed representation-independent definition of coarse-grained FCA on a coarse-grained quasi-local algebra (see Appendix \ref{sec:appendiceThm1}). 
\end{remark}

We are now ready to define renormalisable FCA (Fig.\ref{fig:ren_eq}, \ref{fig:CG}). The following definition is modelled on the analogous one
from~\cite{PhysRevE.73.026203,trezzini2024renormalisationquantumcellularautomata}.
\begin{definition}[Renormalisable FCA] Let $\tT$ and $\tS$ be two FCA on $\mathbb{Z}^s$,
  $N\in \N$,
  and $\tE^\dag$ a coisometry from $\m A(\Lambda_0)$ to $\m A'_0$ with $\tE (I_{\m A'_0})=
  \Pi_{\Lambda_0} \in \mathsf{A}(\Lambda_0)$ a
  projection of the algebra located on $\Lambda_0 =[0,N-1]^s \subset \Z^s$. 
  We say that $\tS$ is a $(N,\Pi_{\Lambda_0})$-renormalisation of $\tT$   if for any regular wrapping  $\mathbb{L} = \mathbb{Z}_{Nm}^s$ for $\tT^N$
\begin{equation}
\tT^N_{w}\circ \tV=\tV\circ\tS_w\,.
\label{eq:cg}
\end{equation}
 We then say that $\tT$ is $(N,\Pi_{\Lambda_0})$-\emph{renormalisable} (to $\tS$).
\end{definition}
 
  \begin{figure}[!]
  \begin{tikzpicture}[remember picture]
  \node (A) at (-2,2.25) {\includegraphics[scale=0.2]{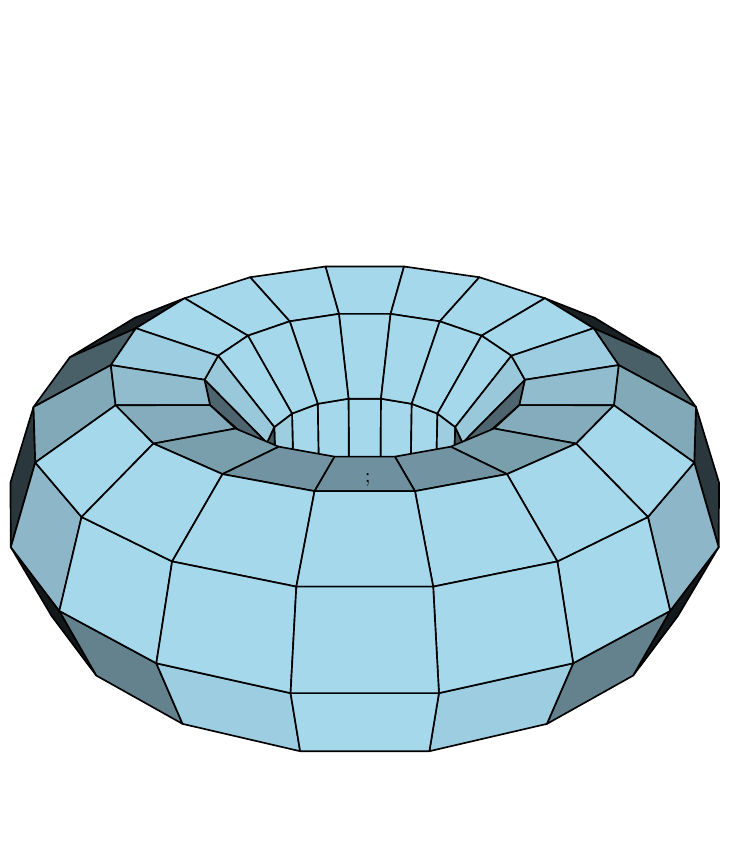}};
  \node (B) at (-2,-2.25) {\includegraphics[scale=0.2]{toruscg}};
  \draw[->,line width=2 pt] ($(A)-(0,1.25)$) to node[anchor=east] {\huge $\tS_w$} ($(B)+(0,1)$) ;
  \node (C) at (3.5, 2.25 ) {\includegraphics[scale=0.2]{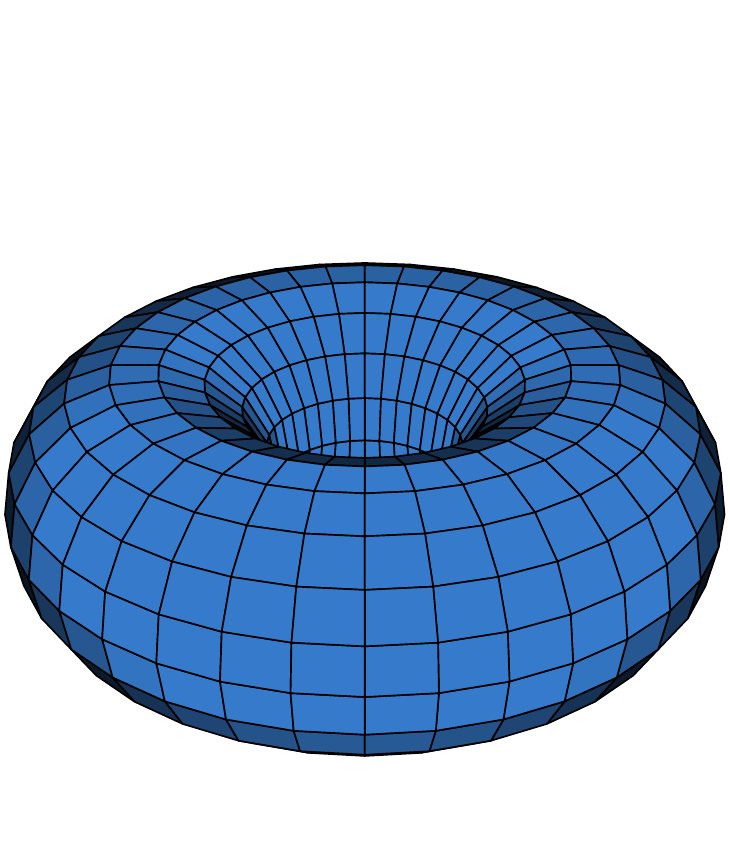}};
  \node (D) at (3.5,-2.25) {\includegraphics[scale=0.2]{torus}};
  \draw[->,line width=2 pt] ($(C)-(0,1.25)$) to node[anchor=west] {\huge $\tT_w^N$} ($(D)+(0,1)$) ;
  \draw[->,line width=2 pt]  ($(A)+(1.75,-0.25)$) to node[anchor=south] {\huge $\tV$}  ($(C)-(1.75,0.25)$) ;
  \draw[->,line width=2 pt]  ($(B)+(1.75,-0.25)$) to node[anchor=north] {\huge $\tV$} ($(D)-(1.75,0.25)$);
  \end{tikzpicture}
  \caption{ The renormalisation equation \eqref{eq:cg} amounts to ask the commutativity of the above diagram. If the FCA $\tS$ is a renormalisation of the FCA $\tT$, then evolving the coarse-grained algebra by one step by $\tS_w$ ($\tS$)  and subsequently embedding this algebra into the original one via $\tV$ is the same than first embedding the coarse-grained algebra in the original one and then evolving for $N$ steps by $\tT_w$ ($\tT$). In this way, a single step of $\tS$ reproduces the action of $\tT^N$ when restricted to the chosen degrees of freedom. The renormalisation equation is expressed on the wrapped FCA for technical convenience, but it applies to the unwrapped automata (see Appendix \ref{sec:appendiceThm1}).}
  \label{fig:ren_eq}
  \end{figure}

  \begin{figure}
\centering
(a)\resizebox{0.35\paperwidth}{!}{
\begin{tikzpicture}
  \node (A) at (-3,0)  {\resizebox{0.5\hsize}{!}{\tikzfig{LocUnit}}};
  \node (B) at (2,0) {\resizebox{0.4\hsize}{!}{\tikzfig{LocUnitprim}}};
  \draw[->,line width=2pt] ($(A)+(2.2,0)$)--($(B)-(1.8,0)$);
\end{tikzpicture}}
(b)\resizebox{0.35\paperwidth}{!}{\begin{tikzpicture}
  \node (A) at (-3,0)  {\resizebox{0.5\hsize}{!}{\tikzfig{Leftshtwo}}};
  \node (B) at (2.1,-0.1) {\resizebox{0.4\hsize}{!}{\tikzfig{Leftshmod}}};
  \draw[->,line width=2pt] ($(A)+(2.3,0.2)$)--($(B)-(1.9,-0.3)$);
\end{tikzpicture}}
        \caption{Trivial examples of $(2,\Pi_{\Lambda_0})$-renormalisation over a one-dimensional lattice $\Z$. Here we have $\Lambda_x=\{2x,2x+1\}$. (a) Renormalisation of cell-wise transformations. Two steps of gates $\mathcal U$ over two cells are mapped in a single step of new cell-wise gates $\mathcal U'$. (b) Renormalisation of a right shift. After two steps, all the content of $\Lambda_x$ is moved to $\Lambda_{x+1}$. The renormalised evolution is a right shift $\A'_x\mapsto \A'_{x+1}$ over the coarse-grained lattice $\L'$.}
\label{fig:CG}
\end{figure}
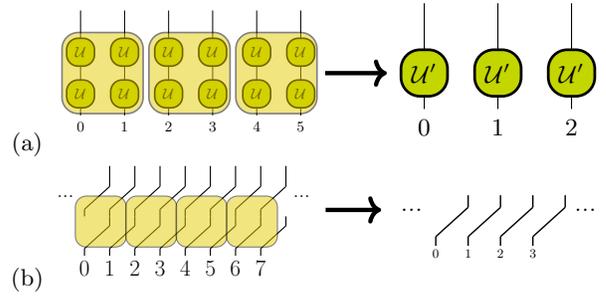

 One could regard the specification $(N,\Pi_{\Lambda_0})$ as identifying a resolution, or energy scale, of the renormalisation scheme: how degrees of freedom are coarse-grained while coarse-graining the space-time lattice. We choose a uniform coarse-graining of time steps and cells so that renormalised FCA of nearest-neighbour FCA remain nearest-neighbour. Clearly, one could conceive an uneven coarse-graining of time and space.  

 Note that the choice of $\Pi$ fixes the preserved degrees of freedom in the coarse-graining, while the choice of the basis in which we represent the renormalised FCA is arbitrary as $\bigboxtimes_{x \in \L'} \tU_x \circ\tE_x$ realises the same projection as $\bigboxtimes_{x \in \L'} \tE_x$ for any unitary map $\tU_x \in\m A'_x$ (Appendix \ref{sec:appendiceThm1}). 
 
 The main result of this section is the following.
 
\begin{theorem}
  \label{prp:thecoarsegrainingisfinite}
  An FCA $\tT$ on $\Z^s$ is $(N,\Pi_{\Lambda_0})$-\emph{renormalisable} to an FCA $\tS$ if and only if there exists a regular wrapping 
  $\mathbb{L} = \mathbb{Z}^s_{Nm}$ for $\tT^N$ such that 
  \[\mathcal P \circ \mathcal \tT_w^N=\tT_w^N \circ\mathcal P\,,\]
  or equivalently
  \begin{align}\label{eq:cons_proj}
      \tT_w^N(\Pi)=\Pi\quad \iff \quad[\Tnmat,\Pi]=0\,,
  \end{align}
  where $\Tnmat$ is the unitary operator defining the wrapped FCA $\tT^N_w$ and $\Pi=\bigboxtimes_{x\in \L'} \Pi_{\Lambda_x}$ with $\L'$ the coarse-grained lattice. In this case $\tS$ is the unique extension of $\tS_w=\tE^\dag \circ \tT_w^N \circ \tE$ to the quasi-local algebra $\A'(\Z^s)$.

\end{theorem}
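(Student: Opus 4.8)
The plan is to reduce the renormalisation equation~\eqref{eq:cg} to the invariance $\tT^N_w(\Pi)=\Pi$, proving it first on a fixed finite wrapping and then transporting the conclusion to $\Z^s$ via the Wrapping Lemma. The backbone is the collection of identities recorded after Eq.~\eqref{eq:Jiso3}: $\tJ\circ\tV=\tI$, $\tV\circ\tJ=\mathcal P$, $\tV(I_{\A'})=\Pi$, together with the fact that $\tV$ restricts to a graded $*$-isomorphism of $\Aredloc$ onto $\Aplocc=\{O:\Pi O\Pi=O\}$ with inverse $\tJ|_{\Aplocc}$. Two of the three equivalences in the statement are then essentially free: since $\tT^N_w(\cdot)=\Tnmat(\cdot)\Tnmat^\dagger$, we have $\tT^N_w(\Pi)=\Pi\iff\Tnmat\Pi=\Pi\Tnmat\iff[\Tnmat,\Pi]=0$; and, because $\tT^N_w$ is a unital $*$-homomorphism and $\Pi=\mathcal P(I)$, one gets $\mathcal P\circ\tT^N_w=\tT^N_w\circ\mathcal P\iff\tT^N_w(\Pi)=\Pi$ (apply both sides to $I$ for one direction; use $\mathcal P(O)=\Pi O\Pi$ and multiplicativity for the other). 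So the whole content is: $\tT$ is renormalisable if and only if $\tT^N_w(\Pi)=\Pi$ on some regular wrapping.

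For necessity I would start from $\tT^N_w\circ\tV=\tV\circ\tS_w$ on a regular wrapping. Composing on the left with $\tJ$ and using $\tJ\circ\tV=\tI$ forces $\tS_w=\tJ\circ\tT^N_w\circ\tV$; evaluating the equation at $I_{\A'}$ and using $\tV(I_{\A'})=\Pi$ and $\tS_w(I_{\A'})=I_{\A'}$ (an automorphism is unital) gives $\tT^N_w(\Pi)=\tV(I_{\A'})=\Pi$.

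For sufficiency, assume $\tT^N_w(\Pi)=\Pi$ on a regular wrapping $\mathbb L=\Z^s_{Nm}$. As $\tT^N_w$ is a $*$-automorphism of $\A(\mathbb L)$ fixing the projection $\Pi$, multiplicativity yields $\tT^N_w(\Aplocc)\subseteq\Aplocc$ (and the same for $\tT^{-N}_w$, which also fixes $\Pi$, so $\tT^N_w$ restricts to a $*$-automorphism of $\Aplocc$). Conjugating by $\tV$, the map $\tS_w:=\tJ\circ\tT^N_w\circ\tV$ is a $*$-automorphism of $\Aredloc$, and
\begin{align*}
\tV\circ\tS_w=\tV\circ\tJ\circ\tT^N_w\circ\tV=\mathcal P\circ\tT^N_w\circ\tV=\tT^N_w\circ\tV,
\end{align*}
the last equality holding because $\tV$ lands in $\Aplocc$, which $\tT^N_w$ preserves. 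One subtlety to dispatch here is that $\tS_w$ respects the $\Z_2$-grading of $\Aredloc$; this holds because $\tT^N_w$ is a graded automorphism of $\A(\mathbb L)$ and $\tV,\tJ$ intertwine the grading of $\Aredloc$ — implemented by $Q^r_x=\tJ(Q_{\Lambda_x})$ — with that of $\Aplocc\subseteq\A(\mathbb L)$, so at worst $\tS_w$ appends the physically irrelevant sign on odd operators.

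The hard part will be promoting $\tS_w$ from an automorphism of a finite-lattice algebra to a \emph{wrapped} FCA, so that Proposition~\ref{thm:locglo} and the Wrapping Lemma (Lemma~\ref{lem:wrap}) deliver a unique FCA $\tS$ on $\Z^s$ independent of the auxiliary size $m$. I would check: \emph{(i) locality} — $\tT^N$ has neighbourhood inside $[-N,N]^s$, and for the tessellation $\Lambda_x=[0,N-1]^s+Nx$ one has $\Lambda_x+[-N,N]^s\subseteq\bigcup_{|y-x|_\infty\le 1}\Lambda_y$, so $\tT^N_w$ maps a tile algebra into its nearest-neighbour tiles and $\tS_w$ is nearest-neighbour on $\mathbb L'$; \emph{(ii) translation invariance} — the tessellation is covariant under tile shifts and $\Pi=\bigboxtimes_x\Pi_{\Lambda_x}$ has equal factors, so $\tV$ intertwines the coarse shift with $\tau_{Nx}$, which combined with translation invariance of $\tT^N_w$ gives that of $\tS_w$; \emph{(iii) $m$-independence} — the transition rule of $\tS_w$ depends only on the transition rule of $\tT^N$ (equivalently of $\tT$) and on $\Pi_{\Lambda_0}$, and likewise the condition $\tT^N_w(\Pi)=\Pi$ unpacks, by translation invariance and bounded overlaps, into finitely many support/commutation constraints on these same data that are invariant under regular re-wrapping, so it holds on \emph{every} regular wrapping as soon as it holds on one. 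Feeding this transition rule into Proposition~\ref{thm:locglo} and Lemma~\ref{lem:wrap} produces a unique nearest-neighbour FCA $\tS$ on $\Z^s$ whose wrappings on all regular $\Z^s_{Nm}$ equal the $\tS_w$ above (regularity of $\Z^s_m$ for $\tS$ being inherited from that of $\Z^s_{Nm}$ for $\tT^N$), so~\eqref{eq:cg} holds on all of them and $\tT$ is renormalisable to $\tS$. Finally, any other renormalisation $\tS'$ would force $\tS'_w=\tJ\circ\tT^N_w\circ\tV$ on $\Z^s_m$ by the necessity argument, hence share the transition rule, hence equal $\tS$ by the bijection of the Wrapping Lemma; continuity then pins down $\tS$ as the unique extension of $\tS_w=\tJ\circ\tT^N_w\circ\tV$ to the quasi-local algebra $\A'(\Z^s)$.
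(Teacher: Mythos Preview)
Your proposal is correct and follows essentially the same architecture as the paper's proof in Appendix~\ref{sec:appendiceThm1}: reduce the renormalisation equation to $\tT^N_w(\Pi)=\Pi$ on a fixed wrapping, then transport via the Wrapping Lemma. Two points are worth flagging. First, your necessity argument is cleaner than the paper's: you evaluate $\tT^N_w\circ\tV=\tV\circ\tS_w$ directly at $I_{\A'}$ and use $\tV(I_{\A'})=\Pi$, whereas the paper (Lemma~\ref{lem:comm}) first derives $\tP\circ\tT^N_w\circ\tP=\tT^N_w\circ\tP$ and then argues via a projection-rank comparison $\Pi\Pi'\Pi=\Pi'$; your sufficiency on a fixed wrapping (restricting $\tT^N_w$ to $\Aplocc$ and conjugating by $\tV$) is likewise more direct than the paper's route through unitarity of $\Tredmat$. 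Second, your step~(iii) is where the paper does real work that your sketch does not quite capture: the claim that ``$\tT^N_w(\Pi)=\Pi$ unpacks into finitely many local constraints invariant under regular re-wrapping'' is not automatic, because $\Tnmat$ is a \emph{global} unitary that depends on the wrapping. The paper makes this rigorous by first proving the decomposition $\Tnmat=(I_{\Lambda_0}\gt V)(U\gt I_{\mathcal W})$ with $U$ wrapping-independent, and then using the mismatched factorisations of both sides of the commutation relation to isolate a local, wrapping-independent identity; the same decomposition also yields the $m$-independence of the transition rule of $\tS_w$. Your outline is compatible with this, but you would need that decomposition (or an equivalent localisation argument) to close the gap.
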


This proves that the renormalisability equation \eqref{eq:cg}, reduces to a finite dimensional equation in terms of the unitary operator defining the FCA on a regular wrapping. In particular, the renormalisability condition can be checked by considering the smallest regular wrapping. 

The following corollary is an immediate consequence of Eq.\eqref{eq:cons_proj} and Hahn-Banach theorem \cite{bratteli1987operator}.

\begin{Corollary}
     An FCA $\tT$ on $\Z^s$ is $(N,\Pi_{\Lambda_0})$-\emph{renormalisable} if and only if there exists a regular wrapping 
  $\mathbb{L} = \mathbb{Z}^s_{Nm}$ for $\tT^N$ and an invariant product state $\omega_{\Pi}$ 
  \[ \omega_{\Pi} \circ \tT^N= \omega_{\Pi}\,,\]
  with $\omega_{\Pi}(O)\coloneqq\Tr_{\A(\L)}(\rho_{\Pi} O)$, $\rho_{\Pi}=\frac{1}{\Tr_{\A(\L)} \Pi}\bigboxtimes_{x \in \L'} \Pi_{\Lambda_x}$, and $\Tr$ the tracial state. The state $\omega_{\Pi}$ has a unique extension to $\A (\Z^s)$ and its coarse-grained state is the tracial state on $\A' (\Z^s)$.
\end{Corollary}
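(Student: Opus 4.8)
The plan is to restate the renormalisability criterion of Theorem~\ref{prp:thecoarsegrainingisfinite} in the language of invariant states. That theorem gives that $\tT$ is $(N,\Pi_{\Lambda_0})$-renormalisable exactly when some regular wrapping $\mathbb{L}=\Z^s_{Nm}$ for $\tT^N$ satisfies $\tT_w^N(\Pi)=\Pi$; so it suffices to show, on such a finite lattice, that $\tT_w^N(\Pi)=\Pi$ is equivalent to invariance of $\omega_{\Pi}$ under $\tT_w^N$, and then to transport the statement to $\Z^s$ via the Wrapping Lemma. First I would fix such a wrapping, on which $\tT_w^N$ is conjugation by the unitary $\Tnmat$ (cf.\ Theorem~\ref{prp:thecoarsegrainingisfinite}) and $\rho_{\Pi}=\Pi/\Tr_{\A(\mathbb{L})}\Pi$ is a genuine density operator (positive because $\Pi$ is a projection, normalised by construction), so $\omega_{\Pi}$ is a state. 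By cyclicity of the operator trace, for all $O\in\A(\mathbb{L})$
\[
\omega_{\Pi}\!\big(\tT_w^N(O)\big)=\tfrac{1}{\Tr\Pi}\Tr\!\big(\Pi\,\Tnmat\, O\,\Tnmat^\dag\big)=\tfrac{1}{\Tr\Pi}\Tr\!\big(\big(\Tnmat^\dag\Pi\,\Tnmat\big)\,O\big)=\Tr\!\big((\tT_w^N)^{-1}(\rho_{\Pi})\,O\big),
\]
and non-degeneracy of the pairing $(A,B)\mapsto\Tr(AB)$ shows that $\omega_{\Pi}\circ\tT_w^N=\omega_{\Pi}$ iff $(\tT_w^N)^{-1}(\Pi)=\Pi$, i.e.\ $[\Tnmat,\Pi]=0$, i.e.\ $\tT_w^N(\Pi)=\Pi$.

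Next I would lift this equivalence to the infinite lattice. The functional $\omega_{\Pi}$ is the product $\bigboxtimes_{x\in\L'}\!\big(\Pi_{\Lambda_x}/\Tr\Pi_{\Lambda_x}\big)$ over the tessellation $\{\Lambda_x\}$; hence it is consistently defined on $\A^{\mathrm{loc}}(\Z^s)$, invariant under $N$-shifts, and, being a state, extends---by the Hahn--Banach theorem together with norm density of $\A^{\mathrm{loc}}(\Z^s)$ in $\A(\Z^s)$---to a unique state on the quasi-local algebra. Any local observable is supported on finitely many tiles, so one can take a regular wrapping $\mathbb{L}=\Z^s_{Nm}$ for $\tT^N$ large enough to contain its support together with its $N$-step causal neighbourhood; there $\tT^N$ acts as $\tT_w^N$ (Lemma~\ref{lem:wrap}) and $\omega_{\Pi}$ restricts to the finite product state above. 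Hence $\omega_{\Pi}\circ\tT^N=\omega_{\Pi}$ on $\A^{\mathrm{loc}}(\Z^s)$---equivalently, by continuity, on $\A(\Z^s)$---iff $\omega_{\Pi}\circ\tT_w^N=\omega_{\Pi}$ on some regular wrapping, which by the previous step is iff $\tT$ is $(N,\Pi_{\Lambda_0})$-renormalisable.

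For the final sentence I would identify the coarse-grained state $\omega_{\Pi}\circ\tV$. Since $\tV$ is a $*$-isomorphism of $\Aredloc$ onto the corner $\Pi\A(\mathbb{L})\Pi$ with $\tV(I_{\Aredloc})=\Pi$, the local factor $\Pi_{\Lambda_x}/\Tr\Pi_{\Lambda_x}$ of $\rho_{\Pi}$ corresponds, under the inverse isomorphism, to the maximally mixed state on $\A'_x$---that is, to the normalised trace of $\A'_x$; taking the product over tiles identifies $\omega_{\Pi}\circ\tV$ with the tracial state of $\Aredloc$ (equivalently, $O'\mapsto\Tr_{\A(\mathbb{L})}(\tV(O'))$ is a positive trace on the simple algebra $\Aredloc$, hence proportional to its canonical trace, with the constant fixed by $\omega_{\Pi}(\tV(I_{\Aredloc}))=1$). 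Its unique extension, again a product of normalised cell traces, is the tracial state on $\A'(\Z^s)$.

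The only point needing care, rather than a genuine obstacle, is bookkeeping: one must work with the ordinary (ungraded) operator trace---so that it is cyclic, the Heisenberg dual of an inner automorphism equals its inverse, and the trace pairing is non-degenerate---and one must check that the dictionary ``$\tT_w^N$ preserves $\Pi$'' $\leftrightarrow$ ``the maximally-mixed-on-$\Pi$ product state is invariant'' is intrinsic to the $\Z_2$-graded algebras and compatible with the finite-to-infinite transfer supplied by the Wrapping Lemma.
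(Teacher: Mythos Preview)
Your argument is correct and follows precisely the approach the paper indicates: reduce to Theorem~\ref{prp:thecoarsegrainingisfinite} (i.e.\ $\tT_w^N(\Pi)=\Pi$), observe via trace cyclicity that this is equivalent to invariance of the state $\omega_\Pi$, and invoke Hahn--Banach for the unique extension to $\A(\Z^s)$. The paper itself states only that the corollary is ``an immediate consequence of Eq.~\eqref{eq:cons_proj} and Hahn--Banach theorem'', so your write-up is simply a fuller version of the same reasoning; the one cosmetic point is that your intermediate display uses the convention $\tT_w^N(O)=\Tnmat O\,\Tnmat^\dag$ whereas the paper takes $\tT_w^N(O)=\Tnmat^\dag O\,\Tnmat$, but since the conclusion $[\Tnmat,\Pi]=0$ is symmetric this does not affect the argument.
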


 The detailed proof of Theorem \ref{prp:thecoarsegrainingisfinite} and the definition of state can be found in Appendix \ref{sec:appendiceThm1} and Appendix \ref{app:z2algebras}, respectively.  

In the following sections, we specify Theorem \ref{prp:thecoarsegrainingisfinite} firstly for the case of fermionic circuits on a linear graph $\Z$ and then for FCA on a linear graph $\Z$ with only one fermionic mode per cell. In both cases, the renormalisation condition can be further simplified, and we explicitly compute the renormalised FCA for those FCA that are renormalisable.

\section{Renormalisation of
  one-dimensional fermionic circuits}\label{sec:FDQC}
In this section we specify our analysis to FCA $\tT$ on $\Z$ that can be realised as finite-depth fermionic circuits with
a Margolus partitioning scheme:
\begin{equation}
\tT =\tikzfig{MargInd1}\,,
\label{eq:marg}
\end{equation}
where $\mathcal M_i$ are fermionic gates acting on two cells.
We remark that, up to grouping neighbouring cells, any FDFC on $\Z$ can be cast in this Margolus form. However, in general, renormalising the grouped FCA is different from renormalising the original one, as the allowed projections would be different in the two cases.

We consider coarse-graining of two cells and two time steps into one cell and one time step.  Thanks to Theorem \ref{prp:thecoarsegrainingisfinite}, without loss of generality, the analysis can be carried out assuming $\tT$ being defined on a
\emph{finite} lattice $\mathbb L$, whose coarse-grained lattice will be denoted by $\mathbb L'$. The following Lemma specifies Theorem \ref{prp:thecoarsegrainingisfinite} to one-dimensional FDFCs.

\begin{lemma}
  Let $\tT$ be an FDFC on $\Z$.  $\tT$ is  $(2,\Pi_{\Lambda_0})$-renormalisable if and only if
\begin{equation}
\resizebox{0.8\hsize}{!}{
\tikzfig{SepevConj}=\tikzfig{RLChain}}
\label{eq:RenIndex}
\end{equation}
where we define the projection
\begin{align}
  \label{eq:4}
  P_{\Lambda_x} = \tikzfig{Piseparated}\coloneqq\tikzfig{Pisep}.
\end{align}
\end{lemma}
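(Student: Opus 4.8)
The plan is to specialise Theorem~\ref{prp:thecoarsegrainingisfinite} to the Margolus-form FDFC~\eqref{eq:marg} and unfold the diagrammatic identity~\eqref{eq:RenIndex} from the renormalisability condition $\tT_w^2(\Pi)=\Pi$. First I would write $\tT^2$ explicitly on a finite regular wrapping: since each layer of~\eqref{eq:marg} is a product of disjoint two-cell gates $\mathcal M_i$, two time steps of $\tT$ compose into the unitary $U_{\tT^2}$ that is the product of the first layer, the second layer, the first layer, the second layer (four brick-layers of $\mathcal M$-gates, shifted by one cell between odd and even layers). With the coarse-graining $N=2$, the tile $\Lambda_x=\{2x,2x+1\}$ sits across exactly one $\mathcal M$-gate from each layer, so the gates naturally organise into those internal to a tile and those straddling two adjacent tiles.

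Next I would insert $\Pi=\bigboxtimes_{x}\Pi_{\Lambda_x}$ into $\tT_w^2(\Pi)=\Pi$ and use the tensor-product structure to argue that the straddling gates in the outermost layers can be absorbed. The key algebraic move is that conjugating $\Pi$ by the two innermost $\mathcal M$-layers and then demanding the result commute with the outer layers forces the projection, after a change of basis implemented by the gates internal to each tile, to factor as a product over tiles of a \emph{separated} projection $P_{\Lambda_x}$ of the form~\eqref{eq:4} — i.e. one that is itself a graded tensor product across the two halves of the tile up to the internal gate $\mathcal M$. This is precisely the content of the diagrammatic rewriting: the left-hand side of~\eqref{eq:RenIndex} is $\tT_w^2$ with the tile-internal gates stripped off and conjugated onto $\Pi$, and the right-hand side asserts that what remains is a brick-layer acting on the new ``$\mathrm L$/$\mathrm R$'' subsystems defined by $P_{\Lambda_x}$, i.e. that the evolution closes on the coarse-grained algebra. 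I would run this equivalence in both directions: renormalisability $\Rightarrow$ the picture, by the commutation argument above; and the picture $\Rightarrow$ renormalisability, by reading the RHS of~\eqref{eq:RenIndex} as an explicit verification that $\tT_w^2$ maps $\Pi$ to $\Pi$ (each straddling gate on the RHS preserves the relevant $P$-subspaces by construction).

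The main obstacle I anticipate is the careful bookkeeping of the graded tensor-product structure and the Margolus shift: because the algebra is $\Z_2$-graded, one must track parities when commuting $\mathcal M$-gates past tensor factors of $\Pi$ (cf.\ Remark~\ref{rmk:Kraus}), and the odd/even alternation of the brick layers means the straddling gates of the two layers live on \emph{different} pairs of tiles, so the reduction to a single separated projection $P_{\Lambda_x}$ per tile requires showing that the two layers' internal gates can be consistently gauged away. Concretely, the subtle point is verifying that the change of basis absorbing the tile-internal gates does not spoil the product form $\Pi=\bigboxtimes_x\Pi_{\Lambda_x}$ — this is where one uses that $\Pi_{\Lambda_x}$ is the same for every $x$ and that the gates in each layer are translates of one another, so the induced basis change is uniform. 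Once that is in place, the diagrammatic identity~\eqref{eq:RenIndex} is just the pictorial statement of $[\,U_{\tT^2},\Pi\,]=0$ after this uniform gauge fixing, and the definition~\eqref{eq:4} of $P_{\Lambda_x}$ records exactly the gauge-fixed form of the local projection.
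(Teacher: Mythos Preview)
Your starting point --- specialise Theorem~\ref{prp:thecoarsegrainingisfinite} and unfold $\tT_w^2(\Pi)=\Pi$ through the four Margolus layers --- is exactly the paper's. But you are overcomplicating the argument and misreading what the lemma asserts. The lemma does not \emph{derive} any structural constraint on $\Pi$; it is a pure rewriting of $\tT_w^2(\Pi)=\Pi$ after a change of variables, with $P_{\Lambda_x}$ simply \emph{defined} as $\Pi_{\Lambda_x}$ conjugated by a Margolus gate (Eq.~\eqref{eq:4}). The L/R split drawn in the diagram for $P_{\Lambda_x}$ is notational --- it labels the two cells of the tile --- and does \emph{not} mean $P_{\Lambda_x}$ itself factors as a graded tensor product; that factorisation analysis only begins later, in Proposition~\ref{prop:FDFC_ren}, via the Schmidt decomposition of $P_{\Lambda_x}$.

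The one manoeuvre you are missing, which collapses all your anticipated bookkeeping, is this: since each step of $\tT$ is translation-invariant, the second copy of $\tT$ in the diagram for $\tT^2$ can be redrawn shifted by one cell without changing anything. After the shift, the Margolus layers of the second step exchange roles (tile-internal $\leftrightarrow$ straddling); the gates now align so that one layer sits tile-internally and is absorbed into $\Pi_{\Lambda_x}$ to define $P_{\Lambda_x}$, and a final conjugation by $\mathcal M_1$ on both sides of the resulting equation yields Eq.~\eqref{eq:RenIndex} directly. That is the entire proof. No parity signs need to be tracked (the identity is between projections under unitary conjugation, and global signs are irrelevant per Remark~\ref{rmk:Kraus}), and no consistency issue between layers on different tile pairs arises, because translation-invariance makes the gauge change uniform across all tiles automatically.
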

\begin{proof}
We can write
\begin{equation*}
\tT^{2}\left(\bigboxtimes_{x\in\mathbb{L}'}\Pi_{\Lambda_x}\right)=\tikzfig{T2}
\end{equation*}
Since each step is translationally invariant we can shift the second evolution step in order to have
\begin{equation*}
\tT^{2}\left(\bigboxtimes_{x\in\mathbb{L}'}\Pi_{\Lambda_x}\right)=\tikzfig{T2shift}.
\end{equation*}
By introducing the projection $P_{\Lambda_x}$, we have
\begin{equation*}
\resizebox{0.8\hsize}{!}{\tikzfig{Sepev}=\tikzfig{Pichain}
}\label{eq:evmod}
\end{equation*}
and by applying $\tikzfig{M1}$ on both sides we obtain the thesis. 
\end{proof}

It is convenient to define the gate
\begin{equation}
  \label{eq:definitionofG}
\mathcal G\coloneqq \tikzfig{Ggen}\eqqcolon\tikzfig{Gfig},
\end{equation}
so that we have
\begin{equation}
\resizebox{0.8\hsize}{!}{
\tikzfig{SepevConj}=\tikzfig{Gev}},
\label{eq:Gev}
\end{equation}
and to consider the
Schmidt decomposition of the projection $P_{\Lambda_x}$
\begin{equation}
P_{\Lambda_x}= \tikzfig{Piseparated}
=\sum_{\mu_x}\lambda_{\mu_x}\gt\rho_{\mu_x}.
\label{eq:SchmidtLR}
\end{equation}
We can recast the necessary and sufficient condition for $\tT$ to be $(2,\Pi_{\Lambda_0})$-renormalisable in terms of $\mathcal G$ and of the operators $\{\lambda_{\mu_x}, \rho_{\mu_x}\}$.
\begin{proposition}\label{prop:FDFC_ren}
  Let $\tT$ be an FDFC on $\Z$.  $\tT$ is  $(2,\Pi_{\Lambda_0})$-renormalisable iff 
  for every Schmidt decomposition $\sum_{\mu}{\lambda}_\mu\gt{\rho}_\mu$ of $P_{\Lambda_x}$, there exists a Schmidt decomposition $\sum_\mu \widetilde\rho_{\mu}\gt\widetilde\lambda_{\mu}$  of $\mathcal G^{\dagger}(P_{\Lambda_x})$ 
such that
\begin{align}
 \mathcal G (\rho_{\mu}\gt\lambda_{\nu}) =\widetilde\rho_{\mu}\gt\widetilde\lambda_{\nu}\quad \forall \mu,\nu\,.
\label{eq:cond}
\end{align}
\end{proposition}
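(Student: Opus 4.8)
The plan is to take the characterisation of $(2,\Pi_{\Lambda_0})$-renormalisability established in the preceding Lemma---the chain identity \eqref{eq:RenIndex}---rewrite its left-hand side through the gate $\mathcal G$ of \eqref{eq:definitionofG} as in \eqref{eq:Gev}, and then convert the resulting identity into the two-tile condition \eqref{eq:cond} by expanding both sides in the Schmidt decomposition \eqref{eq:SchmidtLR} of $P_{\Lambda_x}$. By translation invariance of the wrapped circuit and by Theorem \ref{prp:thecoarsegrainingisfinite} every tile carries the same projection and every tile interface is equivalent, so the argument reduces to a single interface. The key preliminary observation is that the copies of $\mathcal G$ occurring in \eqref{eq:Gev} have mutually disjoint supports, each overlapping two consecutive tiles; hence, after inserting \eqref{eq:SchmidtLR} into $\Pi=\bigboxtimes_x P_{\Lambda_x}$ and regrouping the tensor factors according to those supports, the left-hand side of \eqref{eq:Gev} becomes a sum over the Schmidt labels $\{\mu_x\}$ of a product of terms each of which is $\mathcal G$ applied to a single block of the form $\rho_{\mu_x}\gt\lambda_{\mu_{x+1}}$.

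For the implication ``\eqref{eq:cond} $\Rightarrow$ renormalisable'' I would substitute the hypothesis block by block: each such term becomes $\widetilde\rho_{\mu_x}\gt\widetilde\lambda_{\mu_{x+1}}$, and, regrouping the pieces that now sit on the cells of a fixed tile, the sum over the associated label $\mu_x$ closes---exactly because $\sum_\mu\widetilde\rho_\mu\gt\widetilde\lambda_\mu$ is, by hypothesis, a Schmidt decomposition of $\mathcal G^\dagger(P_{\Lambda_x})$---and one recovers the right-hand side of \eqref{eq:RenIndex}. The only point requiring attention is orthonormality: choosing the Schmidt operators $\{\lambda_\mu\}$ and $\{\rho_\mu\}$ orthonormal in the Hilbert--Schmidt norm on their respective cells ensures that the multiple sum over interface labels reorganises into this tile-wise form without cross terms.

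For the converse I would localise \eqref{eq:Gev}. Because the different copies of $\mathcal G$ have disjoint supports and the cell algebras are tensor factors, pairing both sides of \eqref{eq:Gev} in Hilbert--Schmidt product with the dual Schmidt operators on all cells outside one chosen interface decouples the chain and leaves exactly $\mathcal G(\rho_\mu\gt\lambda_\nu)$ on one side and a factorised operator, to be called $\widetilde\rho_\mu\gt\widetilde\lambda_\nu$, on the other, the labels $\mu,\nu$ being those retained at the two ends of the interface. Letting $\mu,\nu$ range over all values and using linearity and unitarity of $\mathcal G$, one checks that $\sum_\mu\widetilde\rho_\mu\gt\widetilde\lambda_\mu$ is a Schmidt decomposition of $\mathcal G^\dagger(P_{\Lambda_x})$, which is \eqref{eq:cond}. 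The residual gauge freedom of Schmidt decompositions---unitary mixing within blocks of equal Schmidt coefficient---propagates consistently through the contraction, which is what produces the ``for every $\ldots$ there exists'' form of the statement.

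I expect the main obstacle to be the converse direction: extracting a genuinely two-tile identity from the global chain equation even though the gates $\mathcal G$ link all tiles, and keeping the geometry straight---which cell enters which interface, and across which bipartition the Schmidt decomposition of $\mathcal G^\dagger(P_{\Lambda_x})$ is read off---so that \eqref{eq:cond} emerges with the cells and the operators $\widetilde\rho_\mu,\widetilde\lambda_\nu$ in precisely the stated order. What makes the contraction work is the regularity of the wrapping, which keeps all interfaces distinct, together with the disjointness of the supports of the various copies of $\mathcal G$; granted these, the remainder is careful bookkeeping with orthonormal Schmidt systems.
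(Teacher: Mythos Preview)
Your forward direction is sound in substance: substituting \eqref{eq:cond} at every interface and regrouping by tiles closes the sums to $\bigboxtimes_y\mathcal G^\dag(P_{\Lambda_y})$, which is the one-layer form equivalent to \eqref{eq:RenIndex} (the paper first passes to this form---its Eq.~\eqref{eq:Gconv}---by moving one layer of $\mathcal G$'s to the right as $\mathcal G^\dag$'s). So that direction is fine, modulo the bookkeeping of which equation you are landing on.

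The converse, however, has a genuine gap. Your claim that pairing in Hilbert--Schmidt against dual Schmidt operators on the cells outside one chosen interface ``decouples the chain and leaves exactly $\mathcal G(\rho_\mu\gt\lambda_\nu)$'' is not correct. Disjointness of the supports of the various $\mathcal G$'s is not the relevant feature: the \emph{inputs} to $\mathcal G$'s at adjacent interfaces share the Schmidt label $\mu_x$ carried by tile $x$, so the cell immediately outside your interface is the \emph{output} leg of the neighbouring $\mathcal G$. Contracting that cell against $\lambda_\alpha^\ast$ (or $\rho_\alpha^\ast$) does not select a single $\mu$; it produces a full matrix of coefficients in the adjacent pair of labels, and these coefficients thread through the entire ring. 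No choice of dual system on the outside cells turns the left-hand side into a bare $\mathcal G(\rho_\mu\gt\lambda_\nu)$.

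The paper resolves this not by contraction but by an algebraic rearrangement: starting from the one-layer identity, it inverts the remaining $\mathcal G$'s on \emph{alternating} interfaces only. After this staggered inversion both sides factorise over disjoint four-cell blocks, each block carrying exactly two independent Schmidt labels, and on a single block one reads
\[
\sum_{\mu,\nu}\lambda_\mu\gt\mathcal G(\rho_\mu\gt\lambda_\nu)\gt\rho_\nu
=\sum_{\mu,\nu}\lambda_\mu\gt\widetilde\rho_\mu\gt\widetilde\lambda_\nu\gt\rho_\nu.
\]
Linear independence of the outer factors $\{\lambda_\mu\}$ and $\{\rho_\nu\}$ then yields \eqref{eq:cond}. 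This staggered inversion is exactly the missing idea that converts the coupled chain into a local statement; your contraction strategy does not achieve it.
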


\begin{proof}
Inverting the second layer of $\mathcal G$'s on both sides of~\eqref{eq:RenIndex} we have the equivalent condition
\begin{equation*}
\resizebox{0.85\hsize}{!}{
\tikzfig{LeftG}=\tikzfig{Newchain}}\ ,
\end{equation*}
where
\begin{align*}
\tikzfig{NewDiff}\coloneqq \mathcal G^\dag(P_{\Lambda_x})=G\left(\sum_{\mu}{\lambda}_\mu\gt{\rho}_\mu\right)G^\dag.
\end{align*} 
In formulas,
\begin{align}
  \begin{aligned}
&\sum_{\boldsymbol\mu}\bigboxtimes_{x\in\mathbb{L}'} \mathcal G(\rho_{\mu_{x-1}}\gt\lambda_{\mu_x})= \\
 & =\bigboxtimes_{y\in\mathbb{L}'}\sum_{\mu_y}\mathcal G^\dag(\lambda_{\mu_y}\gt\rho_{\mu_y}) \\
 &=\bigboxtimes_{y\in\mathbb{L}'}\sum_{\mu_y}\bar{\lambda}_{\mu_y}\gt\bar{\rho}_{\mu_y},
  \end{aligned}
\label{eq:Gconv}
\end{align}
where the sum over $\boldsymbol\mu$ is shorthand for the sum over all the indices $\mu_x$
for $x\in\mathbb{L}'$, and $\sum_\mu\bar{\lambda}_{\mu}\gt\bar{\rho}_{\mu}$ is 
some Schmidt decomposition of $\mathcal G^{\dag}(P_{\Lambda_x})$.
Inverting half of the $\mathcal G$'s we obtain another equivalent condition
\begin{equation*}
\resizebox{0.85\hsize}{!}{
\tikzfig{LeftGconj}=\tikzfig{Newchainconj}},\ 
\end{equation*}
which in formulas reads
\begin{align*}
\begin{aligned}
\sum_{\boldsymbol{\mu}} \bigboxtimes_{x \in \mathbb{L}'} \lambda_{\mu_{2x}} \gt \mathcal G (\rho_{\mu_{2x}} \gt \lambda_{\mu_{2x+1}})  \gt \rho_{\mu_{2x+1}} =\\
= \sum_{\boldsymbol{\nu}} \bigboxtimes_{x \in \mathbb{L}'}\bar{\lambda}_{\nu_{2x-1}} \gt \mathcal G^\dag \left( \bar{\rho}_{\nu_{2x-1}} \gt \bar{\lambda}_{\nu_{2x}} \right) \gt \bar{\rho}_{\nu_{2x}}.
\end{aligned}
\end{align*}
Comparing the factorisations on the left and right hand side, and considering the tiles
$\Lambda_{x},\Lambda_{x+1}$, we should have
\begin{align*}
\sum_{\mu,\nu} \lambda_\mu \gt \mathcal G(\rho_\mu\gt \lambda_\nu)  \gt \rho_\nu = \sum_{\mu,\nu}{\lambda}_\mu\gt {\widetilde{\rho}}_\mu\gt {\widetilde{\lambda}}_\nu\gt {\rho}_\nu,
\end{align*}
for suitable $\widetilde{\lambda}_\mu, \widetilde{\rho}_\mu$, 
or diagrammatically
\begin{align}
\resizebox{0.85\hsize}{!}{
\tikzfig{LeftGbis}=\tikzfig{Newchainbis}}\ .
\end{align}
Using the  linear independence of all factors in the terms of the Schmidt decomposition $\lambda_\nu, \rho_\mu$ we get to
\begin{equation}
\mathcal G (\rho_{\nu}\gt\lambda_{\mu})={\widetilde{\rho}}_{\nu}\gt{\widetilde{\lambda}}_{\mu},
\label{eq:tilde}
\end{equation}
for all possible combinations of $\mu,\nu$.
By inserting this result in Eq.~\eqref{eq:Gconv} we get 
\begin{align}
\begin{aligned}
\sum_{\boldsymbol{\mu}}\bigboxtimes_{x\in\mathbb{L}'}{\bar{\lambda}}_\mu\gt {\bar{\rho}}_\mu 
=\sum_{\boldsymbol{\mu}}\bigboxtimes_{x\in\mathbb{L}'}\widetilde{\lambda}_\mu\gt \widetilde{\rho}_\mu\,,
\end{aligned}
\end{align}
and therefore
$\sum_{\mu}{\widetilde{\lambda}}_\mu\gt {\widetilde{\rho}}_\mu$ is another Schmidt 
decomposition of $\mathcal G^{\dag}(P_{\Lambda_x})$. The converse statement is now straightforward.
\end{proof}

\begin{Corollary}\label{cor:proj_decomp}
    Let the projection $P_{\Lambda_x}$ admit a  Schmidt decomposition of  the following form \[P_{\Lambda_x}=\frac{\rank(P_{\Lambda_x})}{d^\abs{\Lambda_x}}I\gt I+ \sum_{\mu \neq 0} \lambda_\mu \gt \rho_\mu \,,\]
    where $\Tr(\lambda_\mu)=\Tr(\rho_\mu)=0$. Then, up to swaps, 
    \begin{align*}
      \mathcal{G}(\rho_\mu\gt I)=\widetilde{\rho}_\mu \gt I\,,\quad  \mathcal{G}(I\gt \lambda_\mu)=I \gt \widetilde{\lambda}_\mu\,.
    \end{align*}
\end{Corollary}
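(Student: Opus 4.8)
The plan is to derive Corollary~\ref{cor:proj_decomp} as a direct specialisation of Proposition~\ref{prop:FDFC_ren} to a projection whose Schmidt decomposition contains the identity as one of its terms. First I would fix the Schmidt decomposition $P_{\Lambda_x}=\tfrac{\rank(P_{\Lambda_x})}{d^{\abs{\Lambda_x}}}\,I\gt I+\sum_{\mu\neq 0}\lambda_\mu\gt\rho_\mu$ with $\Tr(\lambda_\mu)=\Tr(\rho_\mu)=0$, and relabel the $\mu=0$ term so that $\lambda_0\propto I$ and $\rho_0\propto I$; the traceless condition on the other terms guarantees orthogonality (in Hilbert--Schmidt inner product) of the left factors $\{\lambda_\mu\}$ and of the right factors $\{\rho_\mu\}$ among themselves and to the identity, so this is genuinely a Schmidt decomposition. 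By Proposition~\ref{prop:FDFC_ren}, renormalisability gives a Schmidt decomposition $\sum_\mu\widetilde\rho_\mu\gt\widetilde\lambda_\mu$ of $\mathcal G^\dagger(P_{\Lambda_x})$ such that $\mathcal G(\rho_\mu\gt\lambda_\nu)=\widetilde\rho_\mu\gt\widetilde\lambda_\nu$ for all $\mu,\nu$.

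Next I would specialise the index choices. Setting $\nu=0$ gives $\mathcal G(\rho_\mu\gt I)\propto\mathcal G(\rho_\mu\gt\lambda_0)=\widetilde\rho_\mu\gt\widetilde\lambda_0$, which is of the form $\widetilde\rho_\mu\gt I$ after absorbing scalars (indeed $\widetilde\lambda_0$ must be proportional to $I$: since $\mathcal G$ is an automorphism it preserves the trace up to the normalisation of the graded tensor product, and the right marginal of $\mathcal G(\rho_\mu\gt I)$ must still be maximally mixed because $\mathcal G$ acting on the last cell alone — recall $\mathcal G=\tikzfig{Gfig}$ has its rightmost gate acting only between the two cells in question — cannot change the reduced state on a cell outside its support; a cleaner way is to note $\widetilde\lambda_0$ is the $\mu=0$ right Schmidt factor and hence proportional to the identity by the same trace argument applied to $\mathcal G^\dagger(P_{\Lambda_x})$). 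Symmetrically, setting $\mu=0$ gives $\mathcal G(I\gt\lambda_\nu)\propto\widetilde\rho_0\gt\widetilde\lambda_\nu$, and $\widetilde\rho_0\propto I$ for the same reason. After fixing normalisations one reads off $\mathcal G(\rho_\mu\gt I)=\widetilde\rho_\mu\gt I$ and $\mathcal G(I\gt\lambda_\mu)=I\gt\widetilde\lambda_\mu$, which is the claim; the ``up to swaps'' caveat accounts for the fact that the relabelling of tiles in Proposition~\ref{prop:FDFC_ren} (which pairs the left factor of tile $\Lambda_x$ with the right factor of tile $\Lambda_{x+1}$) may be implemented only after a swap, exactly as in the diagrammatic manipulations preceding the corollary.

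The one subtlety worth spelling out — and the step I expect to be the main obstacle — is justifying that the Schmidt factors indexed by $0$ on the target side, $\widetilde\lambda_0$ and $\widetilde\rho_0$, are again proportional to the identity, i.e. that the ``identity part'' of $P_{\Lambda_x}$ is mapped by $\mathcal G$ (and by $\mathcal G^\dagger$) to the identity part of $\mathcal G^\dagger(P_{\Lambda_x})$. This is where the structure of $\mathcal G$ matters: because $\mathcal G$ is an inner automorphism of the two-tile algebra, it sends $I\gt I$ to $I\gt I$, so the coefficient of $I\gt I$ in $\mathcal G^\dagger(P_{\Lambda_x})$ equals that in $P_{\Lambda_x}$; combined with the orthogonality of a Schmidt decomposition this forces the $\mu=0$ term of \emph{any} Schmidt decomposition of $\mathcal G^\dagger(P_{\Lambda_x})$ to be (a scalar multiple of) $I\gt I$, and hence $\widetilde\lambda_0,\widetilde\rho_0\propto I$. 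With that in hand the rest is bookkeeping of normalisation constants and an application of Proposition~\ref{prop:FDFC_ren} with the two special index values, plus the observation that the pairing of Schmidt factors across neighbouring tiles is the one already used diagrammatically above, which is why the conclusion holds only up to swaps.
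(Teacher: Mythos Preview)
Your proposal is correct and follows essentially the same route as the paper: specialise Proposition~\ref{prop:FDFC_ren} to the indices $\mu=0$ and $\nu=0$, and use that $\mathcal G(I\gt I)=I\gt I$ forces $\widetilde\rho_0,\widetilde\lambda_0\propto I$. The paper's version is slightly more direct---it reads off $\widetilde\rho_0\gt\widetilde\lambda_0=I\gt I$ immediately from $\mathcal G(\rho_0\gt\lambda_0)=\widetilde\rho_0\gt\widetilde\lambda_0$ without your detour through the Schmidt structure of $\mathcal G^\dagger(P_{\Lambda_x})$ (and your first marginal/partial-trace attempt is not quite right, since $\mathcal G$ acts on exactly the two cells in question), but your ``cleaner way'' is precisely the paper's argument.
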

\begin{proof}
    Let $\lambda_0,\rho_0=I$. As $\mathcal G(\rho_0 \gt \lambda_0  )=I \gt I$, then from \Eq\eqref{eq:cond} we have  $\widetilde{\lambda}_0,\widetilde{\rho}_0=I$.  Hence $\mathcal G(\rho_\mu \gt \lambda_0  )=\widetilde{\rho}_\mu \gt I$ and $\mathcal G(\rho_0 \gt \lambda_\mu  )=I \gt \widetilde{\lambda}_\mu $, where $\Tr(\widetilde{\rho}_\mu)=\Tr(\widetilde{\lambda}_\mu)=0$.
\end{proof}

It is easy to see that the renormalisability conditions~\eqref{eq:cond} are equivalent to
\begin{align}
  \mathcal G
  (  {\rho}^{(n)}\gt  {\lambda}^{(n)})
  =
\widetilde{\rho}^{(n)} \gt   \widetilde{\lambda}^{(n)}\quad \forall\,\rho^{(n)},\lambda^{(n)}\,,
\label{eq:mon}
\end{align}
where ${\rho}^{(n)},  {\lambda}^{(n)},
\widetilde{\rho}^{(n)}$ and  $ \widetilde{\lambda}^{(n)}$ denote monomials of order $n$ in
$\rho_\mu$, $ \lambda_\mu$, $ \widetilde{\rho}_\mu$, and 
$\widetilde{\lambda}_\mu $ respectively, i.e.
\begin{align*}
  &\lambda^{(n)}=\lambda_{\nu_1}...\lambda_{\nu_n}\,,
    \quad
   \rho^{(n)}= \rho_{\mu_1}...\rho_{\mu_n}\,, \\
     &\widetilde{\lambda}^{(n)}=\lambda_{\nu_1}...\lambda_{\nu_n}\,,\quad
\widetilde{\rho}^{(n)}=\rho_{\mu_1}...\rho_{\mu_n} \,.
\end{align*}
In particular let $\rho^{(n)}$ and $r^{(n)}$ be two different monomials, then $\mathcal{G}((\rho^{(n)}-r^{(n)}) \gt \lambda^{(n)})=(\widetilde{\rho}^{(n)}-\widetilde{r}^{(n)}) \gt \widetilde{\lambda}^{(n)}$, meaning that the action of $\mathcal G$ on one factor is independent of the other.
Then we have the following powerful result.
\begin{Corollary}
\label{lmm:supportalegbrasofP}
  Let us denote with
  $\mathsf{M}$, $\mathsf{N}$, $ \widetilde{\mathsf{M}}$,
  and $\widetilde{\mathsf{N}} $ the algebras generated
  by $\rho_\mu$, $ \lambda_\mu$, $ \widetilde{\rho}_\mu$, and 
$\widetilde{\lambda}_\mu $, respectively. Then, for any
  $ A \in \mathsf{M}$, $B \in \mathsf{N}$ there exist
$\widetilde{A} \in \widetilde{\mathsf{M}}$ and
$\widetilde{B}\in   \widetilde{\mathsf{N}}$
such that
\begin{equation}
\begin{split}
\mathcal G (A\gt B)=\widetilde{A}\gt\widetilde{B}\,.
\end{split}
\label{eq:GAlg}
\end{equation}
Moreover, let $I_{\m M}\in\m M$ and $I_{\m N}\in\m N$ be maximal projections\footnote{Note that $I_{\m M}$ and $I_{\m N}$ \emph{might} coincide with $I \in \m A_x$.}, then 
\begin{equation}\label{eq:maxproj}
    \mathcal G(I_{\m M} \gt B)=I_{\widetilde{\mathsf{M}}}\gt \widetilde{B}\,, \quad \mathcal G(A \gt I_{\m N} )=\widetilde{A}\gt I_{\widetilde{\mathsf{N}}}\,,
\end{equation}
where $I_{\widetilde{\m M}}\in\widetilde{\m M} $ and   $I_{\widetilde{\m N}}\in\widetilde{\m N} $ are maximal projections.
\end{Corollary}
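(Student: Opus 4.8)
The plan is to bootstrap from the single-generator identity \eqref{eq:cond} (equivalently \eqref{eq:mon}) to an algebra-level statement by exploiting that $\mathcal G$ is a $*$-automorphism, hence linear, multiplicative, and $*$-preserving, and that the algebras $\mathsf M,\mathsf N$ are, by definition, the linear spans of all monomials in the $\rho_\mu$ and $\lambda_\mu$ respectively. First I would fix $A\in\mathsf M$ and $B\in\mathsf N$ and write them as finite linear combinations $A=\sum_n c_n\,\rho^{(n)}$, $B=\sum_m d_m\,\lambda^{(m)}$ of monomials (allowing $n=0$, i.e. scalars, since $I\in\mathsf M\cap\mathsf N$ need not hold but $I_{\m M}$ is the unit of $\mathsf M$; I will treat the unit carefully below). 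Then by bilinearity of $\gt$ and linearity of $\mathcal G$,
\[
\mathcal G(A\gt B)=\sum_{n,m} c_n d_m\,\mathcal G(\rho^{(n)}\gt\lambda^{(m)}).
\]
The obstacle is that \eqref{eq:mon} is stated only for \emph{equal} orders $n=m$; so I would first upgrade it to arbitrary orders. This is handled by the remark immediately preceding Corollary~\ref{lmm:supportalegbrasofP}: since $\mathcal G$ acts factor-wise (its action on one tensor factor is independent of the other, as witnessed by $\mathcal G((\rho^{(n)}-r^{(n)})\gt\lambda^{(n)})=(\widetilde\rho^{(n)}-\widetilde r^{(n)})\gt\widetilde\lambda^{(n)}$), padding a lower-order monomial with extra copies of a fixed generator on the other side shows $\mathcal G(\rho^{(n)}\gt\lambda^{(m)})=\widetilde\rho^{(n)}\gt\widetilde\lambda^{(m)}$ for all $n,m$ independently. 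Hence each term lies in $\widetilde{\mathsf M}\gt\widetilde{\mathsf N}$, and by linearity $\mathcal G(A\gt B)=\bigl(\sum_n c_n\widetilde\rho^{(n)}\bigr)\gt\bigl(\sum_m d_m\widetilde\lambda^{(m)}\bigr)=\widetilde A\gt\widetilde B$ with $\widetilde A\in\widetilde{\mathsf M}$, $\widetilde B\in\widetilde{\mathsf N}$. One must also check this assignment is well-defined (independent of the chosen decomposition of $A$): this follows because the map $A\gt B\mapsto\widetilde A\gt\widetilde B$ is just the restriction of the bijective linear map $\mathcal G$ to the subspace $\mathsf M\gt\mathsf N$, so $\widetilde A$ is uniquely determined once $B\neq0$, and by symmetry $\widetilde B$ once $A\neq0$.

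For the maximal-projection statement \eqref{eq:maxproj}, I would argue as follows. Let $I_{\m M}$ be the unit of $\mathsf M$; it is the maximal projection of $\mathsf M$. Since $\mathcal G$ is a unital $*$-automorphism, $\mathcal G(I_{\m M}\gt I_{\m N})=I\gt I$ (the unit of $\mathsf A_x\gt\mathsf A_x$ restricted appropriately); more to the point, applying the just-proven \eqref{eq:GAlg} with $A=I_{\m M}$ gives $\mathcal G(I_{\m M}\gt B)=\widetilde{I_{\m M}}\gt\widetilde B$ for some element $\widetilde{I_{\m M}}\in\widetilde{\mathsf M}$. It remains to identify $\widetilde{I_{\m M}}$ with the maximal projection $I_{\widetilde{\mathsf M}}$ of $\widetilde{\mathsf M}$. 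Because $\mathcal G$ preserves products and adjoints, $\widetilde{I_{\m M}}$ is an idempotent self-adjoint element, i.e. a projection in $\widetilde{\mathsf M}$; moreover for every generator $\lambda_\nu$, taking $B=\lambda_\nu$ and using $I_{\m M}\rho_\mu=\rho_\mu I_{\m M}=\rho_\mu$ together with factor-wise action shows $\widetilde{I_{\m M}}\,\widetilde\rho_\mu=\widetilde\rho_\mu=\widetilde\rho_\mu\,\widetilde{I_{\m M}}$ for all $\mu$, hence $\widetilde{I_{\m M}}$ acts as the identity on all generators of $\widetilde{\mathsf M}$ and therefore on $\widetilde{\mathsf M}$; so $\widetilde{I_{\m M}}=I_{\widetilde{\mathsf M}}$. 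The symmetric argument gives $\mathcal G(A\gt I_{\m N})=\widetilde A\gt I_{\widetilde{\mathsf N}}$. The ``up to swaps'' phrasing inherited from Corollary~\ref{cor:proj_decomp} accounts for the freedom in which tensor factor of $\mathcal G$'s codomain carries $\widetilde{\mathsf M}$ versus $\widetilde{\mathsf N}$, which I would simply note at the start.

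I expect the main obstacle to be the bookkeeping around the unit and the ``independent factor-wise action'' claim: strictly, \eqref{eq:mon} only gives matched-order monomials, and one must be slightly careful that the algebras $\mathsf M$, $\widetilde{\mathsf M}$ need not contain $I\in\mathsf A_x$, so ``maximal projection'' means the unit of the (possibly non-unital-in-$\mathsf A_x$) subalgebra, and the decomposition of $A\in\mathsf M$ into monomials may need the empty monomial interpreted as $I_{\m M}$ rather than $I$. Once the factor-independence of $\mathcal G$'s action (already essentially established in the text) is invoked to decouple the orders on the two sides, the rest is a routine linearity-and-multiplicativity argument; the genuinely new content is only the identification of the images of the maximal projections, which the commutation/idempotency argument above settles cleanly.
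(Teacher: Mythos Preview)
Your argument has a genuine gap at the step where you claim to extend \eqref{eq:mon} from equal-order monomials to unequal orders. The ``factor-wise independence'' remark preceding the Corollary is established \emph{only} for monomials of the same degree: from $\mathcal G((\rho^{(n)}-r^{(n)})\gt\lambda^{(n)})=(\widetilde\rho^{(n)}-\widetilde r^{(n)})\gt\widetilde\lambda^{(n)}$ you learn that, within a fixed common degree $n$, the left factor of the image depends only on the left input. It does \emph{not} tell you that $\mathcal G(\rho^{(n)}\gt\lambda^{(m)})$ is factorised at all when $n\neq m$. Your ``padding'' trick does not close this: multiplying $\rho^{(n)}$ by extra $\rho$-generators changes the element, and factoring $\rho^{(n)}\gt\lambda^{(m)}$ as a product of known pieces would require a factor of the form $\rho^{(n-m)}\gt I$ (or $I\gt\lambda^{(m-n)}$), whose image under $\mathcal G$ is not known to lie in $\widetilde{\mathsf M}\gt\widetilde{\mathsf N}$ because $I$ need not belong to $\mathsf N$ (resp.\ $\mathsf M$). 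In short, the algebra generated by $\{\rho_\mu\gt\lambda_\nu\}$ is only the ``diagonal'' span of $\rho^{(k)}\gt\lambda^{(k)}$, not all of $\mathsf M\gt\mathsf N$, so multiplicativity alone does not get you there.

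The paper supplies exactly the missing ingredient: since $P_{\Lambda_x}^2=P_{\Lambda_x}$, each generator satisfies $\lambda_\mu=\sum_{j,k}c_\mu^{j,k}\lambda_j\lambda_k$ and $\rho_\nu=\sum_{j,k}d_\nu^{j,k}\rho_j\rho_k$. Iterating, every element of $\mathsf M$ (resp.\ $\mathsf N$) can be rewritten as a homogeneous polynomial of any sufficiently high degree. Thus, given arbitrary $A\in\mathsf M$ and $B\in\mathsf N$, one can choose a common degree $n$ for both and then apply \eqref{eq:mon} directly---no unequal-degree statement is needed. Your treatment of the maximal projections in the second half is essentially fine once the first part is secured; but the first part needs the idempotency argument, which you did not invoke.
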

\begin{proof}
  Since $P_{\Lambda_x}^2=P_{\Lambda_x}$, we have
  \begin{align}
    \label{eq:6}
    \lambda_\mu=\sum_{j,k}
  c_\mu^{j,k}\lambda_j\lambda_k \quad 
  \rho_\nu=\sum_{j,k}d_\nu^{j,k}\rho_j\rho_k\,,  
  \end{align}
  for suitable coefficients $c_\mu^{j,k}$ and
  $d_\nu^{j,k}$. By iteratively using Equation~\eqref{eq:6} we can express any homogeneous polynomial of degree $n$
  in the variables $\lambda_\mu$ as a homogeneous polynomial of degree $n+k$ for any $k$. The same holds for homogeneous polynomial in the variables $\rho_\mu$.
  Then, any arbitrary $A \in \mathsf{M}$ and $B \in \mathsf{N}$ can be written as homogeneous polynomials of the same degree $n$ for sufficiently large $n$.
  The thesis follows from Eq.~(\ref{eq:mon}).

\  Now let $I_{\m N}$ and $I_{\m M}$ be the maximal projections of $\m N, \m M$, respectively. Let $I_{\m N}=\sum_\mu c_\mu \lambda_\mu^{(m)}$ and $I_{\m M}=\sum_\mu d_\mu\rho_\mu^{(m)}$ for suitable $m=n+k$ as above. By unitarity $\mathcal G(I_{\m M} \gt I_{\m N})=I_{\widetilde{\m M}} \gt I_{\widetilde{\m N}}$, where  $I_{\widetilde{\m N}}=\sum_\mu c_\mu\widetilde{\lambda}_\mu^{(m)}$, $I_{\widetilde{\m M}}=\sum_\mu d_\mu \widetilde{\rho}_\mu^{(m)}$ are the maximal projections of $\widetilde{\m N},\,\widetilde{\m M}$. From \eqref{eq:tilde}  $\mathcal{G}(\rho_\mu \gt \lambda_\nu)= \widetilde{\rho}_\mu \gt  \widetilde{\lambda}_\nu$ for any $\mu$ and $\nu$ with no mutual dependence of the indices. It follows that the same equation holds for any monomial in $\rho_\mu,\lambda_\nu$ of the same degree, and by linearity for any polynomial. 
Hence, for any $B\in \m N$ we have
$\mathcal{G}(I_{\m M} \gt B)=I_{\widetilde{\m M}} \gt \widetilde{B}$ for some $\widetilde{B}\in \widetilde{\m N}$, and similarly for $A, \widetilde{A}$ as in the statement of the Corollary. 
Notice that Proposition \ref{prop:diffind} shows that $\mathcal G$ acts as a swap (up to cell-wise isomorphisms) iff $P_{\Lambda_x}=E^a \gt E^{a \oplus 1}$ with $a=0,1$, $E^0$ a maximal projector and $E^1$ an idempotent. Thus, no ambiguity arises in Eq.\eqref{eq:maxproj}.
\end{proof}

Corollary \ref{lmm:supportalegbrasofP} provides a practical criterion for assessing the $(2,\Pi_{\Lambda_0})$-renormalisability of an FDFC. Specifically, the preservation of the factorization of some algebra $\m M \gt \m N$ by $\mathcal G$ is a necessary condition for the $(2,\Pi_{\Lambda_0})$-renormalisability of the associated FDFC. However, this condition is not sufficient, since it does not ensure the validity of Eq.~\eqref{eq:cond}.
On the other hand, Corollary \ref{cor:proj_decomp} states a necessary and sufficient condition for $(2,\Pi_{\Lambda_0})$-renormalisability of an FDFC when the projection has a specific form.

In the case where $\mathcal G$ is symmetric under fermionic swap, Corollary \ref{lmm:supportalegbrasofP} can be further refined as follows.
The fermionic swap gate \[\tS(O_x O_{y}) \coloneqq (-)^{g(O_x)g(O_{y})}O_{y}O_{x}\] exchanges two operators. On $\M{1}{1} \gt \M{1}{1}$ the SWAP can be represented via conjugation by
\begin{align}
    S_{\leftrightarrow}= e^{\frac{\pi}{4} {X}_ x\gt {X}_ y}e^{\frac{\pi}{4} {Y}_ x\gt {Y}_ y}\,.
\end{align}
\begin{Corollary}
\label{cor:Gcommuteswithswap}
If $\mathcal G \circ \tS=\tS\circ \mathcal G$  
then 
\begin{gather*}
    \mathcal G^2(P_{\Lambda_x})=P_{\Lambda_x}\,,
\end{gather*}
and for all $A \gt B \in \mathsf{K}$ there exists $\widetilde{A} \gt \widetilde{B} \in \widetilde{\mathsf{K}}$ such that
\begin{align*}
  \begin{aligned}
\mathcal G (A\gt B)=\widetilde{A}\gt\widetilde{B}\,,     
  \end{aligned}
\end{align*}
with 
\begin{align*}
 \begin{aligned}
    \mathsf{K}:=
    \mathsf{M}\gt\mathsf{N} \lor
    \mathsf{N}\gt\mathsf{M} \,,
    \\
    \widetilde{\mathsf{K}}:=
        \widetilde{\mathsf{M}}\gt \widetilde{\mathsf{N}} \lor
  \widetilde{  \mathsf{N}} \gt  \widetilde{\mathsf{M}} \,,
  \end{aligned}
\end{align*}
where $\lor$ denotes the smallest algebraic closure of the union of two algebras.
\end{Corollary}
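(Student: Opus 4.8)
The plan is to feed the hypothesis $\mathcal G=\tS\circ\mathcal G\circ\tS$ (which relies only on $\tS^2=\tI$) into the renormalisability relations already available for $\mathcal G$: Eq.~\eqref{eq:tilde}, $\mathcal G(\rho_\nu\gt\lambda_\mu)=\widetilde\rho_\nu\gt\widetilde\lambda_\mu$; Corollary~\ref{lmm:supportalegbrasofP}; and the identity, established inside the proof of Proposition~\ref{prop:FDFC_ren}, that $\sum_\mu\widetilde\lambda_\mu\gt\widetilde\rho_\mu$ is a Schmidt decomposition of $\mathcal G^\dagger(P_{\Lambda_x})$. The mechanism I would exploit is that $\tS$ exchanges the two tensor factors up to a grading sign $(-1)^{g(\cdot)g(\cdot)}$ while $\mathcal G$ preserves the $\Z_2$-grading, so conjugating any of these factorised identities by $\tS$ reproduces it with the two factors interchanged; and since $P_{\Lambda_x}$ is an even operator it admits a homogeneous Schmidt decomposition with $g(\lambda_\mu)=g(\rho_\mu)$, which makes the two grading signs so produced collapse into a single, $\mu$-independent sign.

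For the first claim, I would evaluate $\mathcal G(\lambda_\mu\gt\rho_\mu)=\tS\mathcal G\tS(\lambda_\mu\gt\rho_\mu)$ by applying Eq.~\eqref{eq:tilde} to the swapped argument $\tS(\lambda_\mu\gt\rho_\mu)\propto\rho_\mu\gt\lambda_\mu$, obtaining $\mathcal G(\lambda_\mu\gt\rho_\mu)=\epsilon\,\widetilde\lambda_\mu\gt\widetilde\rho_\mu$ for a single sign $\epsilon=\pm1$; summing over $\mu$ then gives $\mathcal G(P_{\Lambda_x})=\epsilon\,\mathcal G^\dagger(P_{\Lambda_x})$. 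Applying $\mathcal G$ to both sides and using $\mathcal G\circ\mathcal G^\dagger=\tI$ yields $\mathcal G^2(P_{\Lambda_x})=\epsilon\,P_{\Lambda_x}$. Since $\mathcal G^2$ is a $*$-automorphism, $\mathcal G^2(P_{\Lambda_x})$ is again a non-zero projection, so $\epsilon=+1$ and $\mathcal G^2(P_{\Lambda_x})=P_{\Lambda_x}$; this also fixes $\epsilon=1$ in the relations above.

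For the second claim, Corollary~\ref{lmm:supportalegbrasofP} already gives $\mathcal G(A\gt B)=\widetilde A\gt\widetilde B$ with $\widetilde A\in\widetilde{\mathsf M}$ and $\widetilde B\in\widetilde{\mathsf N}$ for every $A\in\mathsf M$, $B\in\mathsf N$; conjugating by $\tS$ (with $\epsilon=1$) gives the same statement with $\mathsf M\leftrightarrow\mathsf N$ and $\widetilde{\mathsf M}\leftrightarrow\widetilde{\mathsf N}$. Thus $\mathcal G$ carries the two generating subalgebras $\mathsf M\gt\mathsf N$ and $\mathsf N\gt\mathsf M$ of $\mathsf K$ into $\widetilde{\mathsf M}\gt\widetilde{\mathsf N}$ and $\widetilde{\mathsf N}\gt\widetilde{\mathsf M}$ respectively, respecting the tensor factorisation in each case; being an algebra homomorphism it then maps $\mathsf K$ into $\widetilde{\mathsf K}$ and sends every product of generators --- which is automatically factorised --- to a factorised element of $\widetilde{\mathsf K}$. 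To upgrade this to an \emph{arbitrary} factorised element $A\gt B\in\mathsf K$ I would repeat the degree-raising argument of the proof of Corollary~\ref{lmm:supportalegbrasofP}: idempotency of $P_{\Lambda_x}$ and of $\mathcal G^\dagger(P_{\Lambda_x})$, that is Eq.~\eqref{eq:6}, lets $A$ and $B$ be written as polynomials of a common, sufficiently large degree in the Schmidt operators, after which Eq.~\eqref{eq:maxproj} and its $\tS$-conjugate show that $\mathcal G(A\gt B)$ factorises with both factors lying in $\widetilde{\mathsf K}$.

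The single genuinely non-routine step is the observation at the heart of the first claim: that swap symmetry forces the two a priori distinct Schmidt decompositions --- of $\mathcal G(P_{\Lambda_x})$ and of $\mathcal G^\dagger(P_{\Lambda_x})$ --- to agree. Everything else is bookkeeping: following the grading signs through the $\tS\mathcal G\tS$ sandwich (with the projection property pinning the leftover sign to $+1$), and handling the non-unital case $I_{\mathsf M}\neq I_{\mathsf N}$ when passing from the generators of $\mathsf K$ to all its factorised elements --- both dealt with precisely as in the proof of Corollary~\ref{lmm:supportalegbrasofP}.
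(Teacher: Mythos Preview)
Your proposal is correct and follows essentially the same route as the paper. The one substantive variation is in how you pin down the grading sign in the first claim: you argue that the sign $\epsilon$ produced by the two swaps is $\mu$-independent (using evenness of $P_{\Lambda_x}$ and parity preservation of $\mathcal G$) and then fix $\epsilon=+1$ via the projection property of $\mathcal G^2(P_{\Lambda_x})$, whereas the paper appeals directly to Eq.~\eqref{eq:maxproj} to get $g(\lambda_\mu)=g(\widetilde\lambda_\mu)$ and $g(\rho_\mu)=g(\widetilde\rho_\mu)$, which makes every term-by-term sign equal to $+1$ from the outset. Both arguments are valid; yours trades one invocation of Eq.~\eqref{eq:maxproj} for the positivity observation, while the paper's route is slightly more direct and also gives the unsigned relation $\mathcal G(\lambda_\mu\gt\rho_\nu)=\widetilde\lambda_\mu\gt\widetilde\rho_\nu$ for all pairs $\mu,\nu$ (not just $\mu=\nu$), which is what feeds the second claim. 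For the second claim your treatment is actually more explicit than the paper's one-line proof, and the degree-raising step you outline is the right way to pass from products of generators to arbitrary factorised elements of $\mathsf K$.
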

\begin{proof}
If $\mathcal G$ commutes with the fermionic swap $\tS$, Eq.\eqref{eq:cond} implies that for all $\mu,\nu$
  \begin{align}
    \label{eq:9}    (-1)^{g(\lambda_\mu)g(\rho_\nu)}\lambda_{\mu}\gt\rho_{\nu}=(-1)^{g(\widetilde{\lambda}_\mu)g(\widetilde{\rho}_\nu)}
    \mathcal G^\dag (
\widetilde{\lambda}_{\mu}\gt\widetilde{\rho}_{\nu}).
  \end{align}
  Evaluating Eq.\eqref{eq:maxproj} with $B=\lambda_\mu$ and $\widetilde{B}=\widetilde{\lambda}_\mu$ it follows that $g(\lambda_\mu)=g(\widetilde{\lambda}_\mu)$ for all $\mu$. Similarly, $g(\rho_\mu)=g(\widetilde{\rho}_\mu)$ for all $\mu$.
  Then we have
  $\mathcal G^{\dag2}(\lambda_{\mu}\gt\rho_{\mu}) = \mathcal G^\dag(\mathcal G^\dag (\lambda_{\mu}\gt\rho_{\mu})   = \mathcal G^\dag(
    \widetilde{\lambda}_{\mu}\gt\widetilde{\rho}_{\mu})    =(-1)^{g(\widetilde{\lambda}_\mu)g(\widetilde{\rho}_\mu)}
   (-1)^{g(\lambda_\mu)g(\rho_\mu)}\lambda_{\mu}\gt \rho_{\mu}=\lambda_{\mu}\gt \rho_{\mu}  $. Therefore, $\mathcal{G}^2(P_{\Lambda_x})=\mathcal{G}^2(\sum_\mu \lambda_\mu \gt \rho_\mu)=\sum_\mu \lambda_\mu \gt \rho_\mu=P_{\Lambda_x}$.
  From Eq. \eqref{eq:9} we have that
Eq. \eqref{eq:GAlg} 
holds for every operator generated by $\{\lambda_\mu\gt\rho_\nu\}
\cup \{\rho_\nu\gt\lambda_\mu\}$.
\end{proof}

We stress that, since in the case of $C^*$-algebras Wedderburn-Artin's Theorem implies that the group of gates that map factorised elements to factorised elements is generated by all factorised gates and the swap, Corollary \ref{cor:Gcommuteswithswap} establishes a necessary and sufficient condition for $(2,\Pi_{\Lambda_0})$-renormalisability in the case of quantum cellular automata~\cite{trezzini2024renormalisationquantumcellularautomata}. However, in the $\Z_2$ graded case, the group is larger. As the next section of the manuscript addresses the renormalisation of FCA with $\m A_x \cong \M{1}{1}$, we analyse such group in this setting. 
A more geneal result holding for $\m A_x \cong\M{2^{d-1}}{2^{d-1}}$ is postponed to future work.

\begin{lemma}\label{cor:fact}
Let $\m A_1, \m A_2\cong \M{1}{1}$  and let $\mathcal{F}$ be a fermionic gate on $\m A_1\gt \m A_2$. Then there exist two algebras $\widetilde{\m A}_1, \widetilde{\m A}_2\cong \M{1}{1}$ such that 
\begin{align}
\begin{aligned}
\mathcal{F}(A_1\gt A_2)= \widetilde{A}_1\gt \widetilde{A}_2 \quad \forall A_i\in\m A_i 
\end{aligned}
\end{align}
 with $ \widetilde{A}_i\in\widetilde{\m A}_i$,
iff, up to swaps, 
\begin{align}
\mathcal F=(\mathcal U_1 \gt \mathcal U_2 )\, \mathcal{C}_{\nu,n} 
\end{align}
with $\mathcal U_i$ fermionic gates, and $\mathcal{C}_{\nu,n}$ acts by conjugation via
\[C_{\nu,n}= \dyad{a} \gt I + \dyad{b} \gt e^{i\nu} Z^n\,, \]
$a,n =0,1$ and $b=a \oplus 1$.
\end{lemma}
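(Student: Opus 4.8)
The plan is to prove the two implications separately.

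For the ``if'' direction I would check that each of the three admissible building blocks — a cell-wise fermionic gate $\mathcal U_1\gt\mathcal U_2$, the fermionic swap $\tS$, and $\mathcal C_{\nu,n}$ — sends every product $A_1\gt A_2$ to a product; this suffices because the property is closed under composition and $\tS$ merely interchanges the two tensor factors. Cell-wise gates are immediate, and for $n=0$ the operator $C_{\nu,0}=(\dyad{a}+e^{i\nu}\dyad{b})\gt I$ is itself cell-wise. For $n=1$ (with $a=0$) I would compute the conjugation action of $C_{\nu,1}$ on the Majorana generators, finding $\mathcal C_{\nu,1}(X_1\gt I)=(\cos\nu\,X_1-\sin\nu\,Y_1)\gt Z_2$, $\mathcal C_{\nu,1}(Y_1\gt I)=(\sin\nu\,X_1+\cos\nu\,Y_1)\gt Z_2$, $\mathcal C_{\nu,1}(Z_1\gt I)=Z_1\gt I$, $\mathcal C_{\nu,1}(I\gt X_2)=Z_1\gt X_2$, $\mathcal C_{\nu,1}(I\gt Y_2)=Z_1\gt Y_2$, $\mathcal C_{\nu,1}(I\gt Z_2)=I\gt Z_2$, and then verify that the subalgebras $\widetilde{\m A}_1$, $\widetilde{\m A}_2$ generated by the images of $\m A_1\gt I$ and of $I\gt\m A_2$ are two graded-commuting copies of $\M{1}{1}$ whose graded tensor product is $\m A_1\gt\m A_2$. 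Multiplicativity of $\mathcal C_{\nu,1}$ then gives $\mathcal C_{\nu,1}(A_1\gt A_2)=\mathcal C_{\nu,1}(A_1\gt I)\,\mathcal C_{\nu,1}(I\gt A_2)\in\widetilde{\m A}_1\gt\widetilde{\m A}_2$, a product over the new cells.

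For the ``only if'' direction, let $\mathcal F$ be a factorisation-preserving fermionic gate. By Proposition~\ref{thm:locglo} and the graded Wedderburn--Artin theorem (Appendix~\ref{app:z2algebras}), $\mathcal F$ is a graded $*$-automorphism of $\m A_1\gt\m A_2\cong\Cl_4(\C)$, and the hypothesis says that $\widetilde{\m A}_1:=\mathcal F(\m A_1\gt I)$ and $\widetilde{\m A}_2:=\mathcal F(I\gt\m A_2)$ are graded-commuting $\M{1}{1}$-factors with $\widetilde{\m A}_1\gt\widetilde{\m A}_2=\m A_1\gt\m A_2$. I would then normalise $\mathcal F$ using the freedom we are entitled to quotient out. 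First, $\widetilde X_1=\mathcal F(X_1\gt I)$ is an odd self-adjoint unitary of $\Cl_4(\C)$ that generates, together with $\widetilde Y_1=\mathcal F(Y_1\gt I)$, a $\M{1}{1}$-subalgebra admitting a graded $\M{1}{1}$-complement; enumerating such elements (equivalently, reading off the classification of index-$1$ nearest-neighbour spinless FCA recalled in Sec.~\ref{NNFCA}) shows that after composing $\mathcal F$ on the left with a cell-wise fermionic gate on each cell, and possibly with $\tS$, one may take $\widetilde X_1=X_1\gt Z_2^{\,n}$ with $n\in\{0,1\}$ uniquely determined — the parity string $Z_2^{\,n}$ records how far $\widetilde{\m A}_1$ leaks into the second cell and is precisely the feature absent in the ungraded case. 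The remaining images are then forced: $\widetilde Y_1$ is an odd self-adjoint unitary anticommuting with the normalised $\widetilde X_1$, $\widetilde Z_1=\mathcal F(Z_1\gt I)=-i\widetilde X_1\widetilde Y_1$, and $\widetilde{\m A}_2$ is the $\M{1}{1}$ graded-commutant of $\widetilde{\m A}_1$; these constraints leave a single residual phase $\nu\in\R$, which cannot be reabsorbed into a cell-wise gate because through $Z_2^{\,n}$ it is correlated with the action on cell $2$. Matching the resulting action against the formulas of the first part identifies $\mathcal F$, up to the cell-wise gates and swap already removed, with $\mathcal C_{\nu,n}$, and a parity flip on cell $1$ interchanges the two admissible values of $a$.

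The step I expect to be the main obstacle is exactly this normalisation: showing that, modulo cell-wise fermionic gates and $\tS$, the factorisation constraint leaves only the two pieces of data $(\nu,n)$ — that is, that the $\Z_2$-grading enlarges the ungraded group $\langle\text{factorised gates},\tS\rangle$ (discussed after Corollary~\ref{cor:Gcommuteswithswap}) by \emph{exactly} the parity-dressed controlled phases $\mathcal C_{\nu,n}$. Concretely this amounts to classifying which $\M{1}{1}$-subalgebras of $\Cl_4(\C)$ can occur as $\mathcal F(\m A_1\gt I)$ with a compatible graded complement, a finite but delicate bookkeeping in which the sign conventions of the graded tensor product must be tracked carefully; the explicit conjugation formulas of the ``if'' part and the one-dimensional classification of Sec.~\ref{NNFCA} are the inputs that make this enumeration manageable.
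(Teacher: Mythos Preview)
Your ``if'' direction is fine and matches the paper's implicit verification.

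The ``only if'' direction has a genuine gap: you never actually use the hypothesis. You restate it as ``$\widetilde{\m A}_1:=\mathcal F(\m A_1\gt I)$ and $\widetilde{\m A}_2:=\mathcal F(I\gt\m A_2)$ are graded-commuting $\M{1}{1}$-factors with $\widetilde{\m A}_1\gt\widetilde{\m A}_2=\m A_1\gt\m A_2$'', but this holds for \emph{every} graded $*$-automorphism of $\Cl_4(\C)$, since automorphisms preserve graded commutation and generate the whole algebra. Likewise, classifying $\M{1}{1}$-subalgebras of $\Cl_4(\C)$ admitting a graded $\M{1}{1}$-complement imposes no restriction: any odd self-adjoint unitary can be completed to such a pair. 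So your normalisation step has nothing to bite on, and the appeal to the FCA classification of Sec.~\ref{NNFCA} (which concerns automata on $\Z$, not two-cell gates) does not supply the missing constraint.

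What the hypothesis really says is that every simple tensor in the \emph{original} factorisation $\m A_1\gt\m A_2$ is mapped to a simple tensor in that same factorisation; your own ``if'' computations confirm this is the relevant reading, since $\mathcal C_{\nu,1}(X_1\gt I)=(\cos\nu\,X_1-\sin\nu\,Y_1)\gt Z_2$ is a product in $\m A_1\gt\m A_2$. The paper exploits this directly: writing $\mathcal F(M_j\gt I)=F_j\gt E_j$ and applying the hypothesis to the \emph{sum} $(M_1+M_2)\gt I$ forces $F_1\gt E_1+F_2\gt E_2$ to be a simple tensor, hence $E_1=E_2$ (or $F_1=F_2$, handled by the swap). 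This elementary linear-combination trick is the key idea you are missing; once $\mathcal F(M_j\gt I)=F_j\gt E$ with a common $E$, the anticommutation relations pin down $E\in\{\pm I,\pm Z\}$ and a dimension count excludes the bad branch for $I\gt M_i$, after which only the controlled-$Z$ (and a residual phase) survives. Your outline would become correct if you inserted this step before attempting any normalisation.
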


\begin{proof}
We begin by considering the image under $\mathcal F$ of 
an odd Majorana generator $M_j\boxtimes I$, with $j=1,2$. We have 
\begin{align*}
F^\dag(M_j\gt  I)F =  F_j\gt E_j,
\end{align*}
where $F_j\boxtimes E_j \in \M{1}{1}$ must be odd. 
Consider now the gate action on a linear combination of Majorana generators
\begin{align*}
F^\dag[(M_1+M_2)\gt  I] F= F_1\gt E_1 + F_2 \gt E_2.
\end{align*}
The requirement of factorisation of the image imposes that either $F_1=F_2$ or $E_1=E_2$, and one fo the following holds:
\begin{align*}
F^\dag[(M_1+M_2)\gt  I] F &= (F_1+F_2) \gt E,\\
F^\dag[(M_1+M_2)\gt  I] F &= F\gt (E_1+E_2),
\end{align*} 
This implies that, up to swap, we have
\begin{align}\label{eq:condone}
F^\dag(M_j\gt  I)F =  F_j\gt E,
\end{align}
Then, since $\acomm{F^\dag(M_i\gt I)F}{F^\dag(M_j\gt I)F}=\delta_{i,j} I$ and $g(F_i)g(E)=0$, it follows
\begin{align*}
\acomm{F_i}{F_j}=\delta_{ij}I,\quad E^2=I.
\end{align*}
From this, one can easily conclude that $F_i, F_j$ must be odd graded-commuting generators, while $E$ is even.
Exactly the same argument can be used for the action on $ I\gt M_j$, to conclude that
\begin{align*}
F^\dag( I \gt M_i)F&= E'\gt F'_i,\; \text{or}\\
F^\dag( I \gt M_i)F&= E'_i\gt F',
\end{align*}
However, the second case cannot hold, because otherwise $F_1,F_2,E'_1,E'_2$ would be \emph{four} graded-commuting odd generators $\gamma_l$ abiding by $\gc{\gamma_i}{\gamma_j}=\delta_{ij}I$ in $\M{1}{1}$ (and $EF'$). 
Thus, we conclude that the only possibility is 
\begin{align}\label{eq:condtwo}
F^\dag(I\gt M_i)F= E'\gt F'_i.
\end{align}
 We can then compute the action over a generic polynomial of a pair of generators by
\begin{align*}
F^\dag(M_i\gt M_j)F&= F^\dag(M_i\gt  I)F F^\dag( I\gt M_j)F=\\ &=(F_i\gt E)(E' \gt F'_j)=\\
&=(F_iE'\gt EF'_j).
\end{align*}
One can easily realise that, up to a possible swap, conditions~\eqref{eq:condone} and~\eqref{eq:condtwo} along with $\{F_i,F_j\}=\{F_i,F_j\}=\delta_{ij}I$ and $E^2={E'}^2=I$ are necessary and sufficient for $\tF$ to be in the group of gates that map factorised elements to factorised elements.

As $E^2={E'}^2=I$, we find that either $E=\pm I$ or $E=\pm Z$, and similarly for $E'$. Now, since $F_i$ and $F'_i$ are, up to a unitary, exactly $M_i$, our problem boils down to determine those unitaries $H$ such that
\begin{align*}
    &H^\dag(M_i\gt I)H=M_i\gt Z^{n},\\
    &H^\dag(I\gt M_i)H=Z^{m}\gt M_i,
\end{align*}
with $m,n\in\{0,1\}$.
Hence,  
\begin{align*}
    H= \dyad{0}\gt I+ \dyad{1}\gt Z^n.
\end{align*}
Indeed,  the most general gate $\mathcal F$ which preserves factorisation along with the above conditions on the images of generators is given by $\mathcal F= \tS^n(\mathcal U_x \gt\, \mathcal U_y)\mathcal C(\mathcal V_x\gt\mathcal V_y)$, where $\tS$ is the swap, $n\in\{0,1\}$, $\mathcal C$ is a controlled-$Z$ map whose associated matrix is
$H= \dyad{0}\gt I+ \dyad{1}\gt Z$.
\end{proof}

\subsection{Renormalisation of a fermionic circuit to an FCA with a different index}

It is natural to ask whether our renormalisation prescription affects topological properties such as the index of the automaton, and under which conditions this may occur.
In particular, in this section we consider the case in which a one-dimensional nearest neighbour FCA with index equal to
$1$, i.e. an FDFC, admits a $(2,\Pi_{\Lambda_0})$-renormalisation to an FCA
which is given by a left or right shift followed by local unitaries.

Thus, 
we
need to be able to identify a $\Z_2$-graded matrix subalgebra
$\M{2^{d-1}}{2^{d-1}}$ inside the algebra
$\A(\Lambda_x)$, that after two steps is totally supported on the tile $\Lambda_{x\pm1}$. The following proposition
specialises the renormalisability condition to this
case.

\begin{proposition}\label{prop:diffind}
 Let $\tT$ be an FDFC on $\Z$.  $\tT$ is  $(2,\Pi_{\Lambda_0})$-renormalisable to a shift FCA $\mathcal S$  iff
\begin{equation}\label{prp:1tod}
\sum_{\boldsymbol\mu}\bigboxtimes_{x\in\mathbb{L}}\widetilde{\lambda}_{\mu_{x+1}}\gt\widetilde{\rho}_{\mu_{x-1}} =\sum_{\boldsymbol\mu}\bigboxtimes_{x\in\mathbb{L}}\lambda_{\mu_{x}}\gt\rho_{\mu_{x}} ,
\end{equation}
where $\widetilde{\lambda}_{\mu_{x+1}}=\mathcal U( \lambda_{\mu_{x+1}})$ and $\widetilde{\rho}_{\mu_{x-1}}=\mathcal V( \rho_{\mu_{x-1}})$ for some gates $\mathcal U,\mathcal V$, and the only rank-$d$ projections $P_{\Lambda_x}$ are
\begin{equation}
P_{\Lambda_x}=\tikzfig{Piseparated}=\begin{cases}
\dyad{u}\gt I\\
I\gt\dyad{v}\end{cases}\,,
\end{equation}
where $\dyad{u}$ ($\dyad{v}$) is an idempotent eigenoperator of $\mathcal U$ ($\mathcal V$).
\end{proposition}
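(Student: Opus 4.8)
The plan is to use the characterisation of $(2,\Pi_{\Lambda_0})$-renormalisability given in Proposition~\ref{prop:FDFC_ren} and Corollary~\ref{lmm:supportalegbrasofP}, and to specialise it to the case in which the renormalised FCA is a shift (composed with cell-wise unitaries). First I would observe that if $\tT$ renormalises to a shift $\mathcal S$, then by construction $\tV\circ\mathcal S_w = \tT_w^2\circ\tV$, so the image under $\tT^2_w$ of the coarse-grained algebra supported on the tile $\Lambda_x$ must be entirely supported on the tile $\Lambda_{x\pm1}$. Translating this through the gate $\mathcal G$ of Eq.~\eqref{eq:definitionofG} and the Schmidt decomposition \eqref{eq:SchmidtLR} of $P_{\Lambda_x}$, the statement $\mathcal G(\rho_\mu\gt\lambda_\nu)=\widetilde\rho_\mu\gt\widetilde\lambda_\nu$ of Eq.~\eqref{eq:cond} combined with the shift requirement forces the tilded factors to be redistributed so that the $\widetilde\lambda$'s coming from tile $x$ end up paired with the $\widetilde\rho$'s of tile $x+1$; writing this out for all tiles gives exactly \eqref{prp:1tod}, with $\widetilde\lambda_\mu=\mathcal U(\lambda_\mu)$ and $\widetilde\rho_\mu=\mathcal V(\rho_\mu)$ for the cell-wise unitaries $\mathcal U,\mathcal V$ appearing in the shift-plus-local-unitary decomposition of $\mathcal S$.

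Next I would pin down the admissible projections of rank $d$. By Corollary~\ref{lmm:supportalegbrasofP}, $\mathcal G$ maps $\mathsf M\gt\mathsf N$ into $\widetilde{\mathsf M}\gt\widetilde{\mathsf N}$, where $\mathsf M,\mathsf N$ are the algebras generated by the $\rho_\mu$ and $\lambda_\mu$. The shift condition demands that $\widetilde{\mathsf M}$ (the image of the "$\rho$-sector") be carried to the tile $\Lambda_{x-1}$ while $\widetilde{\mathsf N}$ goes to $\Lambda_{x+1}$; since after the shift the whole tile content must sit on a \emph{single} neighbouring tile, one of the two sectors must be trivial — i.e.\ either $\mathsf N$ or $\mathsf M$ consists of multiples of the identity. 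In the first case $P_{\Lambda_x}=\dyad{u}\gt I$, in the second $P_{\Lambda_x}=I\gt\dyad{v}$. The rank constraint $\operatorname{rank}(P_{\Lambda_x})=2^{d|\Lambda_x|/2}\cdot(\text{rank of the nontrivial factor})$ together with $\operatorname{rank}(P_{\Lambda_x})=d=\dim$-type bound on the coarse cell forces the nontrivial factor to be a rank-$(\text{appropriate})$ idempotent; I would check this is consistent only when it is a single dyad $\dyad{u}$ on one cell of the tile. Finally, plugging $P_{\Lambda_x}=\dyad{u}\gt I$ into \eqref{eq:cond} with $\mathcal G$ the relevant gate, and using that after renormalisation $\widetilde P_{\Lambda_x}$ has the same form shifted by one tile, gives $\mathcal G^\dagger(\dyad{u}\gt I)=I\gt\mathcal U(\dyad{u})$ (up to the swap implicit in the forking structure), which says precisely that $\dyad{u}$ is an eigenoperator of $\mathcal U$ — an idempotent one, since $\dyad{u}^2=\dyad{u}$ and $\mathcal U$ is an automorphism.

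The converse direction is the easier half: given $P_{\Lambda_x}$ of the stated form with $\dyad{u}$ an idempotent eigenoperator of $\mathcal U$, one verifies Eq.~\eqref{eq:cond} directly and reads off that $\mathcal S$ is the shift composed with $\mathcal U\gt\mathcal V$, hence \eqref{prp:1tod} holds and $\tT$ is renormalisable to $\mathcal S$ by Proposition~\ref{prop:FDFC_ren}.

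\textbf{Main obstacle.} The delicate point is the rank counting that rules out "mixed" rank-$d$ projections whose Schmidt decomposition has \emph{both} $\mathsf M$ and $\mathsf N$ nontrivial: one must argue that the shift requirement is genuinely incompatible with $P_{\Lambda_x}$ being, say, a non-factorised projection of the right rank, rather than merely that the factorised ones work. I expect this to require a careful use of the index/information-flow bookkeeping (an FDFC has index $1$, a shift has index $2^{\pm1}$) to show that \emph{all} the preserved degrees of freedom of a tile must flow coherently to one side, leaving no room for a two-sided Schmidt structure; equivalently, that $I_{\mathsf M}\gt I_{\mathsf N}$ must degenerate to $I_{\mathsf M}\gt I$ or $I\gt I_{\mathsf N}$ with the nontrivial maximal projection being a rank-one dyad on a single physical cell.
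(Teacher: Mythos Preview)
Your proposal has a genuine gap in the forward direction: you never establish that $\mathcal G$ must itself be a swap composed with cell-wise unitaries (the paper's Eq.~\eqref{eq:Uswap}). The paper obtains this via an index/support-algebra argument: for the $\Pi$-compression of $\tT^2[\A(\Lambda_x)]$ to have full dimension $d^2$ inside $\A(\Lambda_{x\pm1})$, one needs the two-step support algebras to saturate, i.e.\ $\mathsf L'=\tT^{-1}(\mathsf L_{2x-1})=\mathsf L_{2x}$ (equivalently for $\mathsf R$), which forces $\mathcal M_2$ to be a swap up to $\mathcal M_1^{-1}$ and local isomorphisms. Only once $\mathcal G=\text{swap}\circ(\mathcal U\gt\mathcal V)$ is in hand do the gates $\mathcal U,\mathcal V$ of the statement exist at all, and Eq.~\eqref{prp:1tod} is then simply Eq.~\eqref{eq:RenIndex} rewritten with this $\mathcal G$. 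Your appeal to Proposition~\ref{prop:FDFC_ren} and Corollary~\ref{lmm:supportalegbrasofP} does not deliver this: those results say $\mathcal G$ preserves a factorised algebra $\mathsf M\gt\mathsf N$, not that it acts as a swap. Relatedly, the $\mathcal U,\mathcal V$ in the statement are single-cell gates on the \emph{fine} lattice coming from the decomposition of $\mathcal G$, not the cell-wise unitaries dressing the coarse shift $\mathcal S$ as you write.

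Your route to the projection form is also off. The claim ``one of $\mathsf M,\mathsf N$ must be trivial because all content flows to one side'' is not how the paper argues, and is not obviously valid: a non-factorised $P_{\Lambda_x}$ could in principle have both support algebras nontrivial while still being rigidly shifted. The paper instead works directly with Eq.~\eqref{prp:1tod} (available once $\mathcal G$ is known to be swap-plus-unitaries): it multiplies both sides by fixed $(\widetilde\rho^{(x+2)}_\nu)^\dag$ and $(\widetilde\lambda^{(x-2)}_\gamma)^\dag$, takes partial traces over $\Lambda_{x\pm2}$, and observes that the surviving operator on $\Lambda_x$ is factorised on both sides; comparing with the Schmidt decomposition forces a single term, i.e.\ $P_{\Lambda_x}=\dyad{u}\gt I$ or $I\gt\dyad{v}$, and the eigenoperator condition on $\dyad{u}$ (resp.\ $\dyad{v}$) under $\mathcal U$ (resp.\ $\mathcal V$) is what remains of Eq.~\eqref{eq:not1}. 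The ``main obstacle'' you flagged is precisely where the index/support-algebra input is required --- but its role is to constrain $\mathcal G$, after which the projection form follows from the trace argument, rather than to constrain $P_{\Lambda_x}$ directly through $\mathsf M,\mathsf N$.
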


\begin{proof}
Assume $\tT$ is  $(2,\Pi_{\Lambda_0})$-renormalisable to a shift FCA $\mathcal S$. We recall the definitions of support algebra and index of an FCA \cite{trezzini2025fermioniccellularautomatadimension, fidkowski2019interacting}, which streamline the proof.
\begin{definition} [Support algebra]
  Let $\mathsf B_1$ and $\mathsf B_2$ be $\Z_2$-graded
  $C^*$-subalgebras of some supermatrix algebra $\M{p}{q}$, and consider a $\Z_2$-graded $*$-subalgebra
  $\mathsf{A}\subset \mathsf B_1\gt \mathsf
  B_2$. Each element $A\in\mathsf{A}$ can be
  expanded uniquely in the form
  $A=\sum_{\mu}A_\mu\gt E_{\mu}$, where
  $\{E_{\mu}\}$ is a fixed basis of $\mathsf
  B_2$. The \emph{support algebra}
  $\mathsf S(\mathsf A,\mathsf B_1)$ of
  $\mathsf A$ in $\mathsf B_1$ is the smallest $\Z_2$-graded
  $C^*$-algebra generated by all $A_\mu$ in such
  an expansion.
\end{definition}

It is easy to prove that the support algebra does not depend on the choice of the basis $\{E_\mu\}$ \cite{Index}.
 Let us introduce the left and right
support algebras around cells $2x$ and $2x+1$ for a nearest neighbour FCA $\tT$ on $\mathbb Z$  as
\begin{equation*}
\begin{split}
&\mathsf{L}_{2x}= \mathsf{S}(\tT(\mathsf{A}_{2x} \gt \mathsf{A}_{2x+1}),(\mathsf{A}_{2x-1}\gt \mathsf{A}_{2x})),\\
&\mathsf{R}_{2x}= \mathsf{S}(\tT(\mathsf{A}_{2x} \gt \mathsf{A}_{2x+1}),(\mathsf{A}_{2x+1}\gt\mathsf{A}_{2x+2})).
\end{split}
\end{equation*}
For any FCA $\tT$ on $\mathbb Z$, the index is defined as 
\begin{equation*}
\ind(\tT) \coloneqq\sqrt{\frac{\operatorname{dim}[\mathsf{L}_{2x}]}{\operatorname{dim}[\mathsf{A}_{2x}]}}
\end{equation*}
and is independent of $x$ \cite{trezzini2025fermioniccellularautomatadimension, fidkowski2019interacting}.

 The support algebra of $\tT^2[\A(\Lambda_x)]$ in $\A(\Lambda_{x-1})$ is a subalgebra $\mathsf L'$ of $\mathsf L_{2x-1}\cap\tT(\mathsf L_{2x})\cong\mathsf \tT^{-1}(\mathsf L_{2x-1})\cap\mathsf L_{2x}$, and that of $\tT^2[\A(\Lambda_x)]$ in $\A(\Lambda_{x+1})$  is a subalgebra $\mathsf R'$ of $\mathsf R_{2x+1}\cap\tT(\mathsf R_{2x})\cong\mathsf \tT^{-1}(R_{2x+1})\cap\mathsf R_{2x}$. Starting from an FCA $\tT$ of index $1$, we have $\dim\mathsf L=\dim\mathsf R=d^2$. If we  consider a projection $\Pi$ with rank $d$, and a general subalgebra $\mathsf S$ of $\mathsf A(\Lambda_{x\pm1})$ clearly we have
$\dim(\Pi\mathsf S\Pi)\leq\dim(\mathsf S)$ and thus, if we want 
$\dim(\Pi\mathsf L'\Pi)=d^2$ or $\dim(\Pi\mathsf R'\Pi)=d^2$,
we need 
\begin{align*}
&\mathsf L'=\tT^{-1}(\mathsf L_{2x-1})\cap\mathsf L_{2x}\equiv\tT^{-1}(\mathsf L_{2x-1})\equiv\mathsf L_{2x}\,,\\
&\, \textrm{or }\quad\mathsf R'=\tT^{-1}(\mathsf R_{2x+1})\cap\mathsf R_{2x}\equiv\tT^{-1}(\mathsf R_{2x+1})\equiv\mathsf R_{2x}\,.
\end{align*}
This implies that, to have a renormalisation to a left shift or right shift,  $\tT(\mathsf L)\equiv\mathsf L$ or $\tT(\mathsf R)\equiv\mathsf R$, respectively. However, $\tT(\mathsf L)\equiv\mathsf L$ if and only if 
$\tT(\mathsf R)\equiv\mathsf R$.

It follows that $\mathcal M_2$ must map the algebra in its right input cell to 
$\tT^{-1}(\mathsf L)$ and that in its left input in $\tT^{-1}(\mathsf R)$, i.e., it must be a swap followed by local isomorphisms followed by $\mathcal M^{-1}_1$. Therefore
the composition of $\tikzfig{M2}$ and $\tikzfig{M1}$ results in a swap up to local unitaries---diagrammatically
\begin{equation}
\tikzfig{Gfig}=\tikzfig{Ggen}=\tikzfig{Uswap}\,.\\
\label{eq:Uswap}
\end{equation}
In this way Eq.$(\ref{eq:RenIndex})$ becomes
\begin{equation}
\resizebox{0.8\hsize}{!}{\tikzfig{RenInd1}=\tikzfig{RLChain}},
\label{eq:not1}
\end{equation}
Using the decomposition of projection $P_{\Lambda_x}$ in Eq.$(\ref{eq:SchmidtLR})$ and computing the left side of the previous equation we get
\begin{equation*}
\tikzfig{RenInd1}=\sum_{\boldsymbol\mu}\bigboxtimes_{x\in\mathbb{L}}\widetilde{\lambda}^{(x)}_{\mu_{x+1}}\gt\widetilde{\rho}^{(x)}_{\mu_{x-1}} ,
\end{equation*}
where $\widetilde{\lambda}^{(x)}_{\mu_{x+1}}= \mathcal U^2\left({\lambda}^{(x)}_{\mu_{x+1}}\right)$ and $\widetilde{\rho}^{(x)}_{\mu_{x-1}}=\mathcal V^2\left({\rho}^{(x)}_{\mu_{x-1}}\right)$.
Here we introduced a superscript to keep track of the the tile $\Lambda_x$ on which an operator acts $\textit{after}$ the evolution, while the subscript is a summation index that is reminiscent of the tile on which the operator acted $\textit{before}$ the evolution. In this way Eq.$(\ref{eq:not1})$ becomes
\begin{equation}
\sum_{\boldsymbol\mu}\bigboxtimes_{x\in\mathbb{L}}\widetilde{\lambda}^{(x)}_{\mu_{x+1}}\gt\widetilde{\rho}^{(x)}_{\mu_{x-1}} =\sum_{\boldsymbol\mu}\bigboxtimes_{x\in\mathbb{L}}\lambda^{(x)}_{\mu_{x}}\gt\rho^{(x)}_{\mu_{x}} \,.
\label{eq:diffind}
\end{equation}

The other way, if  Eq.\eqref{eq:diffind} holds, then no $(2,\Pi_{\Lambda_0})$-renormalisation 
other than the one mapping an FDFC to a shift is allowed.
On the left of Eq.\eqref{eq:diffind} we have that the summation index $\mu$ is the same for $\widetilde{\lambda}^{(x)}_\mu$ and $\widetilde{\rho}^{({x+2})}_\mu$. So if we multiply both sides by a fixed 
$\left(\widetilde{\rho}^{({x+2})}_\nu\right)^\dag \in \A(\Lambda_{x+2})$ and 
$\left(\widetilde{\lambda}^{({x-2})}_\gamma\right)^\dag \in \A(\Lambda_{x-2})$, and take the partial trace on the corresponding spaces, the expression on both sides is left such that the operator in $\A(\Lambda_{x})$ is factorised from the rest, and 
we have
\begin{equation*}
\widetilde{\lambda}^{({x})}_{\nu}\gt\widetilde{\rho}^{({x})}_{\gamma}=\sum_{\mu_x}\lambda^{(x)}_{\mu_x}\gt\rho^{(x)}_{\mu_x}.
\end{equation*}
The equality holds if only one term in the expansion \eqref{eq:SchmidtLR} of $P_{\Lambda_x}$ survives, i.e.
\begin{equation*}
\tikzfig{Pisep}=\tikzfig{Piseparated}=\begin{cases}
\ket{u}\bra{u}\gt I\\
I\gt\ket{v}\bra{v}\end{cases}\,.
\end{equation*}
Moreover, in order for Eq.$(\ref{eq:not1})$ to be
satisfied, $\dyad{u}$  ($\dyad{v}$)
should be eigenoperator of the cell-wise unitary map $\mathcal U$ ($\mathcal V)$ in
Eq.$(\ref{eq:Uswap})$.  It follows that our
projection will select only the right (left)
algebra and the renormalised evolution will
consist in a right (left) shift.
\end{proof}

One may also wonder whether an FDFC on $\Z$ can be renormalised to a Majorana shift. In the following section, we show that in the case of $(2,\Pi_{\Lambda_0})$-renormalisations of FDFCs on $\Z$ with a single fermionic mode per cell this cannot occur (Table~\ref{tab:res}). We do not address the general case here, but we conjecture that for any $(N,\Pi_{\Lambda_0})$-renormalisations with even $N$ such a possibility is excluded. On the other hand, the $(2,\Pi_{\Lambda_0})$-renormalisations of Majorana shifts on $\Z$ with a single fermionic mode per cell always change the index (Table~\ref{tab:res}).

\section{Renormalisation of one-dimensional spinless FCA}\label{sec:Qubits}
In this section, leveraging the tools developed so far, we classify the $(2,\Pi_{\Lambda_0})$-renormalisations of nearest-neighbour FCA on $\Z$ with a single (spinless) fermionic mode per cell  $\m A_x \cong \M{1}{1}$. 

We refer to renormalisations shown in Fig.~\ref{fig:CG} as \emph{trivial renormalisations} and we do not discuss them further. 

In Table~\ref{tab:res} we list all the non-trivial $(2,\Pi_{\Lambda_0})$-renormalisable FCA, along with their $(2,\Pi_{\Lambda_0})$-renormalised FCA.
The detailed proofs are presented in Appendix \ref{sec:appendice_ren_spinless}. Despite there are more renormalisable FCA than QCA,
similarly to results found for qubit chains \cite{trezzini2024renormalisationquantumcellularautomata}, all $(2,\Pi_{\Lambda_0})$-renormalisable FCA fail to propagate information except through trivial shifts. Renormalisable FCA are either factorised or possess commuting Margolus layers, and this allows to reduce multiple time steps by merging all odd (even) layers into a new Margolus scheme, to recast the dynamics as that of an effective single-step evolution with unchanged neighbourhood size. Furthermore, as noted in the remark \ref{rmk:bosoniz} below, the projections $P_e,P_o$ map the fermionic algebra to a bosonic one, whereas the remaining projections in Table~\eqref{tab:res} preserve the fermionic nature of cells. Direct inspection of Table~\eqref{tab:res} shows that the only nontrivial fixed point corresponds to case $\phi=\theta=(2n\pi)/3$ via projection $P_o$, thus ultimately describing a quantum cellular automaton of qubits. Genuine fermionic cellular automata admit only shifts as fixed points of the $(2,\Pi_{\Lambda_0})$-renormalisation flow.

\begin{table*}[ht]
\centering
\renewcommand{\arraystretch}{2.2}
\setlength{\tabcolsep}{8pt}

\begin{tabular}{@{}c c c c@{}}
\toprule[1.5pt]
\multicolumn{4}{c}{\large\textbf{Renormalised FCA}} \\
\midrule[1pt]
\textbf{FCA} & \multicolumn{3}{c}{\textbf{projection}} \\
\cmidrule(lr){2-4}

\textbf{Schumacher–Werner} 
& $P_e$ & $P_o$ & $P_L$, $P_R$ \\

\midrule

\multirow{2}{*}{
$\begin{cases}
\phi\neq n\pi,\\[2pt]
U=e^{i\theta Z}
\end{cases}$
}
& 
$\begin{cases}
\phi'=2\phi,\\[2pt]
\theta'=\phi-4\theta
\end{cases}$ 
& 
\multirow{2}{*}{
$\begin{cases}
\phi'=-2\phi,\\[2pt]
\theta'=\phi 
\end{cases}$
}
& 
\multirow{2}{*}{
$\begin{cases}
\phi'=0,\\[2pt]
\theta'=\pm 2(\theta+\delta_{c,0}\phi)
\end{cases}$
} \\

& 

$\begin{cases}
\phi'=2\phi,\\[2pt]
\theta'=4\theta-3\phi
\end{cases}$ & & \\

\midrule

$\begin{cases}
\phi\neq n\pi,\\[2pt]
U=\cos(\theta) X+\sin (\theta) Y
\end{cases}$
& 
$\begin{cases}
\phi'=2\phi,\\[2pt]
\theta'=-\phi
\end{cases}$
&
$\begin{cases}
\phi'=-2\phi,\\[2pt]
\theta'=\phi
\end{cases}$
&
$\begin{cases}
\phi'=0,\\[2pt]
\theta'=\mp(2c-1)\phi
\end{cases}$ \\

\midrule[1pt]
\textbf{Forking} & $P_L$ & $P_R$ & \\

\midrule

$U=e^{i(2n+1)\frac{\pi}{4}Z}$ 
& $\tau_-$ & $\tau_+$ & \\

$U=\tfrac{1}{\sqrt{2}}(X \pm Y)$
& $\tau_-$ & $\tau_+$ & \\

\midrule[1pt]
\textbf{Majorana Shift} & $\Pi^Y$ & $\Pi^X$ & \\

\midrule

$\begin{cases}
U=e^{-i(2n+1)\frac{\pi}{2}Z},\\[2pt]
\sigma_{\pm}
\end{cases}$
& $U'$ & $\tau_{\pm}$ & \\

\bottomrule[1.5pt]
\end{tabular}

\caption{
Summary of the renormalisation flow of nearest-neighbour spinless FCA on $\Z$.  
For each FCA and its parameters, the non-trivial $(2,\Pi_{\Lambda_0})$-renormalised FCA associated with specific projections $\Pi_{\Lambda_0}$ (or equivalently $P_{\Lambda_0}$) are listed.  
Definitions:  
$P_e=\ketbra{0}{0}\gt\ketbra{0}{0}+\ketbra{1}{1}\gt\ketbra{1}{1}$,  
$P_o=\ketbra{0}{0}\gt\ketbra{1}{1}+\ketbra{1}{1}\gt\ketbra{0}{0}$,  
$P_L=I\gt\ketbra{c}{c}$,  
$P_R=\dyad{c}\gt I$,  with $c=0,1$,
$\Pi^X=\tfrac{1}{2}(I\gt I \pm iX\gt X)$,  
$\Pi^Y=\tfrac{1}{2}(I\gt I \pm iY\gt Y)$, and $n\in \Z$. Angles are defined modulo $2 \pi n$.  
Primed quantities denote renormalised counterparts. For $\phi=n \pi$ or shifts $\tau_\pm$, $(2,\Pi_{\Lambda_0})$-renormalisation is trivial, yielding to cell-wise unitaries and shifts, respectively, independent of $\Pi_{\Lambda_0}$.
}
\label{tab:res}
\end{table*}

\begin{remark}\label{rmk:bosoniz}
    The projections $P_{e,o}=\frac{1}{2}(I\gt I \pm Z\gt Z)$ select the even ($+$) and odd ($-$) sectors of the algebra. 
    
    The coisometries $\tE^\dag$ associated to $P_{e}=\ketbra{e}{e}$ and $P_{o}=\ketbra{o}{o}$ map a $\Z_2$-graded algebra onto an ungraded one---to be precise, onto  a $\Z_2$-graded algebra whose odd sector is empty. 
Indeed, as a generic coisometry associated to a rank-2 projection on $\m A^\Pi(\Lambda_x)$ acts as
\begin{align*}
\tE^\dag(O)= (\ket{\psi_0}\bra{a_0}+\ket{\psi_1}\bra{a_1})\,O\,(\ketbra{a_0}{\psi_0}+\ketbra{a_1}{\psi_1}),
\end{align*}
where $a_i\in\{e,o\}$ 
and $\ketbra{\psi_0}{e}, \ketbra{\psi_1}{o}$ are even and 
$\ketbra{\psi_1}{e}, \ketbra{\psi_0}{o}$ are odd maps,

the coarse-grained parity operator $Q_x^r = \tE^\dag( Q_{\Lambda_x})$ can be one of the following: 
\begin{align*}
Q^r_x= \begin{cases}
\text{sgn}(a)\,I \quad\: \text{ if }\; a_0=a_1=a\,,\\
\text{sgn}(a)\, Z \quad \text{ if }\; a_1\neq a_0=a\,,
\end{cases}
\end{align*}
where $a \in \{e,o\}$ and $\text{sgn}(e)\coloneqq +$, $\text{sgn}(o)\coloneqq -$.
Different choices of $\tV^\dag$ map the algebra $\m A(\Lambda_x)$, graded by $Q_{\Lambda_x}$, onto an algebra $\m A'_x$ graded by  $Q^r_x$. 
When $Q_x^r$ is proportional to the identity it commutes with all operators in the algebra $\m A'_x$, so every element is even. This recovers the well-known result that the even sector of two fermionic modes is isomorphic to the algebra of a single bosonic mode. 
\end{remark}

\section{Conclusions}
This work introduces a new framework for analysing and manipulating the large-scale behavior of fermionic cellular automata, providing both conceptual insight into the structure of discrete fermionic dynamics and practical tools for the simulation and control of those systems. Our procedure does not aim to produce a continuum field theory, but instead preserves the discrete structure. This allows us to stay entirely within a class of physically implementable, digital dynamical systems, while still probing their large-scale behavior. 

In the simple case study of one-dimensional spinless FCA, we find that extending to fermionic cellular automata enlarges the zoology of renormalisable evolutions with respect to the qubit case \cite{trezzini2024renormalisationquantumcellularautomata}, yet the same conceptual limitation remains: no renormalisable automaton propagates information in a nontrivial way. This may stem from the intrinsic simplicity of the chosen model, and one may conjecture that local algebras of larger dimension admit richer renormalisable dynamics. 

Interestingly, FCA that after $N$ time steps differ by composition with finite–depth fermionic circuits preserving the projection \(\Pi\), are $(N,\Pi_{\Lambda_0})$-renormalised to the same FCA. In this sense, those FCA belong to the same symmetry–protected phase, where the protecting symmetry is enforced by $\Pi$.

Finally, our procedure is designed as an \emph{exact} renormalisation: no error is allowed in the comparison between the original and the renormalised evolutions. As such, it provides a rigorous baseline for developing \emph{approximate} renormalisation schemes, where some constraints are relaxed, e.g., to obtain the best local-unitary approximation after tracing out selected degrees of freedom \cite{Rotundo}.
Another promising avenue is to connect this work with the operator-algebraic renormalisation of lattice field theories, investigated in \cite{PhysRevLett.127.230601, Osborne:2023aa, Luijk:2024aa}. Such generalisations could reveal richer renormalisation flows, universality classes, and continuum limits. Nevertheless, as an exact renormalisation, our approach should correctly identify the fixed points of those approximate renormalisation schemes.

\acknowledgments
PP acknowledges financial support from European Union—Next Generation EU through the
MUR project Progetti di Ricerca d’Interesse
Nazionale (PRIN) QCAPP No. 2022LCEA9Y.
AB acknowledges financial support from
European Union—Next Generation EU
through the MUR project Progetti di Ricerca
d’Interesse Nazionale (PRIN) DISTRUCT No.
P2022T2JZ9. LT acknowledges financial support
from European Union—Next Generation EU
through the National Research Centre for HPC,
Big Data and Quantum Computing, PNRR
MUR Project CN0000013-ICSC.

\bibliographystyle{unsrt}
\bibliography{CGFQCA.bib}

\begin{thebibliography}{10}

\bibitem{bisio2015free}
Alessandro Bisio, Giacomo~Mauro D’Ariano, Paolo Perinotti, and Alessandro
  Tosini.
\newblock Free quantum field theory from quantum cellular automata: Derivation
  of weyl, dirac and maxwell quantum cellular automata.
\newblock {\em Foundations of Physics}, 45(10):1137--1152, 2015.

\bibitem{bisio2021scattering}
Alessandro Bisio, Nicola Mosco, and Paolo Perinotti.
\newblock Scattering and perturbation theory for discrete-time dynamics.
\newblock {\em Physical Review Letters}, 126(25):250503, 2021.

\bibitem{bisio2025perturbative}
Alessandro Bisio, Paolo Perinotti, Andrea Pizzamiglio, and Saverio Rota.
\newblock A perturbative approach to the solution of the thirring quantum
  cellular automaton.
\newblock {\em Entropy}, 27(2):198, 2025.

\bibitem{bakircioglu2025fermion}
Dogukan Bakircioglu, Pablo Arnault, and Pablo Arrighi.
\newblock Fermion doubling in quantum cellular automata.
\newblock {\em arXiv preprint arXiv:2505.07900}, 2025.

\bibitem{arrighi2014dirac}
Pablo Arrighi, Vincent Nesme, and Marcelo Forets.
\newblock The dirac equation as a quantum walk: higher dimensions,
  observational convergence.
\newblock {\em Journal of Physics A: Mathematical and Theoretical},
  47(46):465302, 2014.

\bibitem{brun2025}
Todd~A. Brun and Leonard Mlodinow.
\newblock Quantum electrodynamics from quantum cellular automata, and the
  tension between symmetry, locality, and positive energy.
\newblock {\em Entropy}, 27(5), 2025.

\bibitem{gupta2025dirac}
Chaitanya Gupta and Anthony~J Short.
\newblock The dirac vacuum in discrete spacetime.
\newblock {\em Quantum}, 9:1845, 2025.

\bibitem{fidkowski2019interacting}
Lukasz Fidkowski, Hoi~Chun Po, Andrew~C Potter, and Ashvin Vishwanath.
\newblock Interacting invariants for floquet phases of fermions in two
  dimensions.
\newblock {\em Physical Review B}, 99(8):085115, 2019.

\bibitem{stephen2019subsystem}
David~T Stephen, Hendrik~Poulsen Nautrup, Juani Bermejo-Vega, Jens Eisert, and
  Robert Raussendorf.
\newblock Subsystem symmetries, quantum cellular automata, and computational
  phases of quantum matter.
\newblock {\em Quantum}, 3:142, 2019.

\bibitem{ballarin2024digital}
Marco Ballarin, Giovanni Cataldi, Giuseppe Magnifico, Daniel Jaschke, Marco
  Di~Liberto, Ilaria Siloi, Simone Montangero, and Pietro Silvi.
\newblock Digital quantum simulation of lattice fermion theories with local
  encoding.
\newblock {\em Quantum}, 8:1460, 2024.

\bibitem{jotzu2014experimental}
Gregor Jotzu, Michael Messer, R{\'e}mi Desbuquois, Martin Lebrat, Thomas
  Uehlinger, Daniel Greif, and Tilman Esslinger.
\newblock Experimental realization of the topological haldane model with
  ultracold fermions.
\newblock {\em Nature}, 515(7526):237--240, 2014.

\bibitem{farrelly2020review}
Terry Farrelly.
\newblock A review of quantum cellular automata.
\newblock {\em Quantum}, 4:368, 2020.

\bibitem{10.1093/acprof:oso/9780199227198.001.0001}
Jean Zinn-Justin.
\newblock {\em {Phase Transitions and Renormalization Group}}.
\newblock Oxford University Press, 07 2007.

\bibitem{PhysRevB.4.3174}
Kenneth~G. Wilson.
\newblock Renormalization group and critical phenomena. i. renormalization
  group and the kadanoff scaling picture.
\newblock {\em Phys. Rev. B}, 4:3174--3183, Nov 1971.

\bibitem{PhysRevB.4.3184}
Kenneth~G. Wilson.
\newblock Renormalization group and critical phenomena. ii. phase-space cell
  analysis of critical behavior.
\newblock {\em Phys. Rev. B}, 4:3184--3205, Nov 1971.

\bibitem{PhysicsPhysiqueFizika.2.263}
Leo~P. Kadanoff.
\newblock Scaling laws for ising models near ${T}_{c}$.
\newblock {\em Physics Physique Fizika}, 2:263--272, Jun 1966.

\bibitem{PhysRevLett.99.220405}
G.~Vidal.
\newblock Entanglement renormalization.
\newblock {\em Phys. Rev. Lett.}, 99:220405, Nov 2007.

\bibitem{trezzini2024renormalisationquantumcellularautomata}
Lorenzo~Siro Trezzini, Alessandro Bisio, and Paolo Perinotti.
\newblock Renormalisation of quantum cellular automata, 2024.

\bibitem{alicea2012new}
Jason Alicea.
\newblock New directions in the pursuit of majorana fermions in solid state
  systems.
\newblock {\em Reports on progress in physics}, 75(7):076501, 2012.

\bibitem{rudner2020band}
Mark~S Rudner and Netanel~H Lindner.
\newblock Band structure engineering and non-equilibrium dynamics in floquet
  topological insulators.
\newblock {\em Nature reviews physics}, 2(5):229--244, 2020.

\bibitem{Jordan:1928aa}
P.~Jordan and E.~Wigner.
\newblock {\"U}ber das paulische {\"a}quivalenzverbot.
\newblock {\em Zeitschrift f{\"u}r Physik}, 47(9):631--651, 1928.

\bibitem{BRAVYI2002210}
Sergey~B. Bravyi and Alexei~Yu. Kitaev.
\newblock Fermionic quantum computation.
\newblock {\em Annals of Physics}, 298(1):210--226, 2002.

\bibitem{PhysRevX.10.041040}
Timo Felser, Pietro Silvi, Mario Collura, and Simone Montangero.
\newblock Two-dimensional quantum-link lattice quantum electrodynamics at
  finite density.
\newblock {\em Phys. Rev. X}, 10:041040, Nov 2020.

\bibitem{PhysRevB.82.235114}
Takuya Kitagawa, Erez Berg, Mark Rudner, and Eugene Demler.
\newblock Topological characterization of periodically driven quantum systems.
\newblock {\em Phys. Rev. B}, 82:235114, Dec 2010.

\bibitem{haah2013latticequantumcodesexotic}
Jeongwan Haah.
\newblock Lattice quantum codes and exotic topological phases of matter, 2013.

\bibitem{RevModPhys.93.045003}
J.~Ignacio Cirac, David P\'erez-Garc\'{\i}a, Norbert Schuch, and Frank
  Verstraete.
\newblock Matrix product states and projected entangled pair states: Concepts,
  symmetries, theorems.
\newblock {\em Rev. Mod. Phys.}, 93:045003, Dec 2021.

\bibitem{PhysRevB.82.155138}
Xie Chen, Zheng-Cheng Gu, and Xiao-Gang Wen.
\newblock Local unitary transformation, long-range quantum entanglement, wave
  function renormalization, and topological order.
\newblock {\em Phys. Rev. B}, 82:155138, Oct 2010.

\bibitem{Note1}
Shallow circuits realising invertible topological order---e.g., by applying an
  FCA and its inverse in parallel to two stacked copies of a system---can yield
  nontrivial topological indices once selected stacked degrees of freedom are
  projected out. Despite this common goal, the two index-distillation schemes
  constitute distinct protocols.

\bibitem{PhysRevB.109.075116}
Xie Chen, Arpit Dua, Michael Hermele, David~T. Stephen, Nathanan
  Tantivasadakarn, Robijn Vanhove, and Jing-Yu Zhao.
\newblock Sequential quantum circuits as maps between gapped phases.
\newblock {\em Phys. Rev. B}, 109:075116, Feb 2024.

\bibitem{PhysRevLett.127.220503}
Lorenzo Piroli, Georgios Styliaris, and J.~Ignacio Cirac.
\newblock Quantum circuits assisted by local operations and classical
  communication: Transformations and phases of matter.
\newblock {\em Phys. Rev. Lett.}, 127:220503, Nov 2021.

\bibitem{PhysRevLett.97.050401}
S.~Bravyi, M.~B. Hastings, and F.~Verstraete.
\newblock Lieb-robinson bounds and the generation of correlations and
  topological quantum order.
\newblock {\em Phys. Rev. Lett.}, 97:050401, Jul 2006.

\bibitem{schumacher2004reversiblequantumcellularautomata}
B.~Schumacher and R.~F. Werner.
\newblock Reversible quantum cellular automata.
\newblock {\em quant-ph/0405174}, 2004.

\bibitem{trezzini2025fermioniccellularautomatadimension}
Lorenzo~S. Trezzini, Matteo Lugli, Paolo Meda, Alessandro Bisio, Paolo
  Perinotti, and Alessandro Tosini.
\newblock Fermionic cellular automata in one dimension, 2025.

\bibitem{Note2}
FCA can be defined without the translation-invariance property (see
  e.g.~Ref.~\cite {Index}).

\bibitem{Note3}
Typically, one considers the lattice as a graph, the edges corresponding to
  neighbourhood relations. One can prove that the translation-invariance
  requirement on an FCA makes the graph of causal connections the \protect
  \emph {Cayley graph} of some finitely presented group $G$~\cite
  {arrighi:hal-01785458,hadeler2017cellular,PhysRevA.90.062106,DAriano:2016aa,Perinotti2020cellularautomatain},
  that in the present case is $\protect \mathbb G$ imagined as an abelian
  group.

\bibitem{bratteli1987operator}
O.~Bratteli and D.W. Robinson.
\newblock {\em Operator Algebras and Quantum Statistical Mechanics 1: C*- and
  W*-Algebras. Symmetry Groups. Decomposition of States}.
\newblock Operator Algebras and Quantum Statistical Mechanics. Springer, 1987.

\bibitem{haag1992local}
R.~Haag.
\newblock {\em Local Quantum Physics: Fields, Particles, Algebras}.
\newblock R.Balian, W.Beiglbock, H.Grosse. Springer-Verlag, 1992.

\bibitem{Note4}
The local algebra may be completed with respect to different topologies. Norm
  topology is a natural and representation-independent choice, as it does not
  rely on specifying an underlying Hilbert space. Physically, the operator norm
  is related to the maximum success probability in the discrimination of the
  two processes represented by the operators. In any case, because the
  coarse-graining map is continuous and the renormalisability condition
  Eq.~\protect \eqref {prp:thecoarsegrainingisfinite} can be verified directly
  on the local algebra, our renormalisation scheme remains independent of the
  chosen completion.

\bibitem{Note5}
Although any finite-dimensional $C^*$-algebra can be represented (uniquely up
  to isomorphisms) as bounded operators on a finite-dimensional Hilbert space,
  it is often preferable to treat the algebra abstractly. The abstract
  description makes all constructions intrinsic to the algebra itself, without
  committing to a particular Hilbert space. In particular, for $\protect
  \mathbb Z_2$-graded algebras, the grading is naturally encoded as an internal
  automorphism of the algebra rather than as a decomposition of a specific
  Hilbert space. Introducing abstract elements such as $| f \rangle \protect
  \!\langle g |$ that satisfy the usual algebraic relations captures the full
  operator structure without requiring the individual kets or bras to have
  meaning as vectors in a Hilbert space. This viewpoint defines the algebraic
  content while avoiding arbitrary representational choices, extending
  seamlessly to more general (e.g. infinite-dimensional or categorical)
  settings.

\bibitem{Brunshidle01112012}
Lindsay Brunshidle, Alice Fialowski, Josh Frinak, Michael Penkava, and Dan
  Wackwitz.
\newblock Fundamental theorem of finite dimensional {$\mathbb Z_2$}-graded
  associative algebras.
\newblock {\em Communications in Algebra}, 40(11):4068--4088, 2012.

\bibitem{Perinotti2020cellularautomatain}
Paolo Perinotti.
\newblock Cellular automata in operational probabilistic theories.
\newblock {\em {Quantum}}, 4:294, July 2020.

\bibitem{PhysRevE.73.026203}
Navot Israeli and Nigel Goldenfeld.
\newblock Coarse-graining of cellular automata, emergence, and the
  predictability of complex systems.
\newblock {\em Phys. Rev. E}, 73:026203, Feb 2006.

\bibitem{Note6}
Note that $I_{\protect \ensuremath {\protect \mathsf {M}}}$ and $I_{\protect
  \ensuremath {\protect \mathsf {N}}}$ \protect \emph {might} coincide with $I
  \in \protect \ensuremath {\protect \mathsf {A}}_x$.

\bibitem{Index}
D.~Gross, V.~Nesme, H.~Vogts, and R.~F. Werner.
\newblock Index theory of one dimensional quantum walks and cellular automata.
\newblock {\em Communications in Mathematical Physics}, 310(2):419--454, jan
  2012.

\bibitem{Rotundo}
Antonio~F. Rotundo, Paolo Perinotti, and Alessandro Bisio.
\newblock Effective dynamics from minimising dissipation, 2024.

\bibitem{PhysRevLett.127.230601}
Alexander Stottmeister, Vincenzo Morinelli, Gerardo Morsella, and Yoh Tanimoto.
\newblock Operator-algebraic renormalization and wavelets.
\newblock {\em Phys. Rev. Lett.}, 127:230601, Dec 2021.

\bibitem{Osborne:2023aa}
Tobias~J. Osborne and Alexander Stottmeister.
\newblock Conformal field theory from lattice fermions.
\newblock {\em Communications in Mathematical Physics}, 398(1):219--289, 2023.

\bibitem{Luijk:2024aa}
Lauritz van Luijk, Alexander Stottmeister, and Reinhard~F. Werner.
\newblock Convergence of dynamics on inductive systems of banach spaces.
\newblock {\em Annales Henri Poincar{\'e}}, 25(11):4931--4986, 2024.

\bibitem{arrighi:hal-01785458}
Pablo Arrighi, S.~Martiel, and V.~Nesme.
\newblock {Cellular automata over generalized Cayley graphs}.
\newblock {\em {Mathematical Structures in Computer Science}}, 18:340--383,
  2018.

\bibitem{hadeler2017cellular}
K.P. Hadeler and J.~M{\"u}ller.
\newblock {\em Cellular Automata: Analysis and Applications}.
\newblock Springer Monographs in Mathematics. Springer International
  Publishing, 2017.

\bibitem{PhysRevA.90.062106}
Giacomo~Mauro D'Ariano and Paolo Perinotti.
\newblock Derivation of the dirac equation from principles of information
  processing.
\newblock {\em Phys. Rev. A}, 90:062106, Dec 2014.

\bibitem{DAriano:2016aa}
Giacomo~Mauro D'Ariano and Paolo Perinotti.
\newblock Quantum cellular automata and free quantum field theory.
\newblock {\em Frontiers of Physics}, 12(1):120301, 2016.

\bibitem{HWANG199973}
Y.-S Hwang and A.R Wadsworth.
\newblock Correspondences between valued division algebras and graded division
  algebras.
\newblock {\em Journal of Algebra}, 220(1):73--114, 1999.

\bibitem{blackadar2017operator}
Bruce Blackadar.
\newblock {\em Operator algebras: Theory of C*-algebras and von Neumann
  algebras}.
\newblock Springer, 2017.

\bibitem{Note7}
Let $\xi =a X+b Y$ and $\eta = c X+d Y $ with $a,b,c,d \in \protect \mathbb C$.
  By requiring $\{\xi , \eta \}=0$ and $\xi ^2=\eta ^2=I$ one gets that the
  matrix of coefficients $\begin {pmatrix} a & b\\ c& d \end {pmatrix}$ must be
  orthogonal. Moreover, as $\xi ,\eta $ are hermitian, the coefficients must be
  real.

\end{thebibliography}
\appendix

\section{$\Z_2$-graded algebras}\label{app:z2algebras}

 \begin{definition}[$\mathbb{Z}_2$\textit{-graded vector space}]
     A $\mathbb{Z}_2$-graded $\K$-vector space $\mathsf{A}$ is a  pair
\begin{equation*}
\mathsf{A}=(\mathsf{A}^0,\mathsf{A}^1) \coloneqq \m A^0 \sqcup \m A^1 \,,
\end{equation*} 
where $\mathsf{A}^{p}$, $p \in \Z_2$, are two vector spaces over a field $\mathbb K$. The sum is defined only between elements of $\mathsf{A}$ belonging to the same vector space 
\begin{align*}
	O_1,O_2\in \mathsf{A}^p, \qquad O_1+O_2\in \mathsf{A}^p.
\end{align*}
 \end{definition}
\begin{definition}[$\mathbb{Z}_2$\textit{-graded algebra}]\label{def:g_algebra}
A unital $\mathbb{Z}_2$-graded  $\mathbb K$-algebra $\mathsf{A}$ is a  $\Z_2$ graded $\K$-vector space
\begin{equation*}
\mathsf{A}=(\mathsf{A}^0,\mathsf{A}^1) \coloneqq \m A^0 \sqcup \m A^1 \,,
\end{equation*} 
endowed with an associative distributive product: \begin{align*}
	O_1\in\mathsf{A}^p,O_2\in\mathsf{A}^q, \qquad O_1O_2\in\mathsf{A}^{p\oplus q},
\end{align*}
where $\oplus$ denotes sum modulo 2, and $I_{\m A}\in \m A^0$ is the unit element.
If $\m A^1=\{0\}$, then $\m A=\A^0$ as associative $\K$-algebra. 
\end{definition}
We stress that according to Definition \ref{def:g_algebra}, referring to an element $O$ of $\m A$ means $O \in \m A^p$. We refer to the dimension of the algebra (vector space) as $\dim(\m A)=\dim(\m A^0)+\dim(\m A^1)$. When no ambiguity can arise, we write $I$ for the unit element of the algebra. 
Note that a $\Z_2$-graded algebra is an instance of $G$-graded algebra where $(G,\circ)$ is any monoid, $\m A=\bigsqcup_{p\in G} \m A^p$ with $A_pA_q\in \m A^{p\circ q}$ for $A_r\in \m A^r$ \cite{HWANG199973}.
\begin{definition}[Parity.]
    Let $\m A$ be $\Z_2$-graded algebra. Defining the map $g:\m A \to \Z_2$ such that $g(O)=p$ for any $O \in \m A^p$, we call $g(O)$ the parity of the element $O$.  We refer to $O$ as an even operator if $g(O)=0$ and as an odd operator if $g(O)=1$.
\end{definition}
The notion of parity naturally extends to maps. Given a map between two $\Z_2$-graded algebras $\eta:\m A \to \m B$, we call $\eta$ an even map if $g(\eta(O))=g(O)$ for any $O \in \m A$ and we call $\eta$ an odd map if $g(\eta(O))=g(O)\oplus 1$ for any $O \in \m A$.
\begin{definition}[$\Z_2$-graded subalgebra]
    Let $\m A$ and $\m B$ two $\Z_2$-graded algebras. We say that $\m B$ is a $\Z_2$-graded subalgebra of $\m A$ if
    \[
    \m B^p=\m B\cap \m A^p.
    \]
    In this case we write $\m B \subseteq \m A$.
\end{definition}

\begin{definition}[Graded morphism]
    Let $\m A$ and $\m B$ be $\Z_2$-graded algebras,  $O_i\in\m A$, and $c_i \in \K$. A graded morphism $\mathcal M:\m A \to \m B$ acts as $\mathcal M(c_1O_1+c_2O_2O_3)=c_1\mathcal M(O_1)+c_2\mathcal M(O_2)\mathcal M (O_3)$ with $\mathcal M(\m A^p)\subseteq\m B^p$. Hence  $g(\mathcal M(O))=g(O)$ for any $O \in \m A$.
\end{definition}
\begin{definition}[Graded commutator]
  Let $\m A$ a $\Z_2$-graded algebra and let $O_1,O_2\in\mathsf{A}$.
  The graded commutator is
\begin{equation*}
\gc{O_1}{O_2} \coloneqq O_1O_2-(-1)^{g(O_1)g(O_2)}O_2O_1.
\end{equation*}
\end{definition}
The graded commutator reduces to the anticommutator if $O_1,O_2$ are both odd elements, 
and to the commutator otherwise.

\begin{definition}[Graded center]
    The graded center $\m Z (\m A)$ of a $\Z_2$-graded $\mathbb K$-algebra $\m A$ is the $\Z_2$-graded subalgebra of $\m A$ given by all the elements of $\m A$ that graded commute with any element of $\m A$. The algebra $\m A$ is called central if $\m Z(\m A)\cap\m A^0=\mathbb K$.
\end{definition}

\begin{definition}[Simple]
    A two-sided ideal $\m I(\m A)$ of a $\Z_2$-graded algebra $\m A$ is a $\Z_2$-graded subalgebra of $\m A$ such that $AO,OA \in \m I(\m A)$ for any $A\in \m A$ and $O \in \m I$.
    A $\Z_2$-graded algebra $\m A$ is called simple if the only two-sided ideals of $\m A$ are $\{0\}$ and $\m A$.
\end{definition}

\begin{definition}[Graded tensor product]
     Let $\m A$ and $\m B$ two $\Z_2$-graded algebras. The graded tensor product algebra $\m A \gt \m B$ is the $\Z_2$-graded algebra given by
     \begin{equation*}
\begin{split}
&\mathsf{A}\boxtimes\mathsf{B}=((\mathsf{A}\boxtimes\mathsf{B})^0,(\mathsf{A}\boxtimes\mathsf{B})^1), \\
&(\mathsf{A}\boxtimes\mathsf{B})^r\coloneqq\bigoplus_{\substack{p,q=0,1 \\ p\oplus q=r}} \mathsf{A}^p\otimes\mathsf{B}^q,
\end{split}
\end{equation*}
where $\otimes$ is the standard tensor product of $\mathbb K$-vector spaces. For every $A \in \m A$ and $B \in \m B$ we denote $A \gt B$ the corresponding element of $\m A \gt \m B$
and
 for any $A_i \in \m A$, $B_i \in \m B$
\begin{align}\label{eq:grtp}
(A_1\boxtimes B_1)(A_2\boxtimes B_2)&=(-1)^{g(B_1)g(A_2)}(A_1A_2\boxtimes B_1B_2)\,.
\end{align}
\end{definition}
\begin{remark}
As a consequence of~\eqref{eq:grtp} the graded tensor product has the following braiding relation 
\begin{align*}
\gc{(A\boxtimes I_{\m B})}{(I_{\m A}\boxtimes B)}=0\,.
\end{align*}
\end{remark}
\begin{remark}
   By definition, $g(I)=0$, and consequently for any $A_i \in \m A$ and $A \gt B \in \m A \gt \m B $ we have $g(A_1A_2)=g(A_1)\oplus g(A_2)$, $g(A\boxtimes B)=g(A)\oplus g(B)$. 
\end{remark}
As in the ungraded case, equipping a $\Z_2$-graded algebra with a suitable norm and an involution, we get a $\Z_2$-graded $C^*$-algebra. 
\begin{definition}[$C^*$-algebra]
    A $C^*$-algebra $\mathsf A$ is an associative $\C$-algebra decorated with an involution $\dagger$ and a norm $\norm{\cdot}$, such that it is complete in the metric induced by the norm, $\norm{AB}\leq \norm{A}\norm{B}$, and $\norm{A^\dagger A}=\norm{A}^2$ for all $A,B\in\mathsf A$.
\end{definition}

\begin{definition}[$*$-morphism]
    Let $\m A$ and $\m B$ be $C^*$-algebras,  $O_i\in\m A$, and $c_i \in \K$. A $*$-morphism $\mathcal M:\m A \to \m B$ acts as $\mathcal M(c_1O_1+c_2O_2^\dag O_3)=c_1\mathcal M(O_1)+c_2\mathcal M(O_2)^\dag \mathcal M (O_3)$.
\end{definition}

\begin{remark}
    If $\m A$ and $\m B$ are $\Z_2$-graded $C^*$-algebras with involution $\dag$, then $\m A \gt \m B$ is a $\Z_2$-graded $C^*$-algebra with the natural inherited $C^*$ structure. To be precise, in general one must specify a $C^*$-norm (minimal, maximal, etc.) on the algebraic tensor product $\m A \gt \m B$. In the finite-dimensional case all such norms coincide, so the distinction is immaterial \cite{blackadar2017operator}. In particular, we choose a norm on the tensor product algebra such that $\norm{A\gt I}=\norm{A}_{\m A}$ and $\norm{I\gt B}=\norm{B}_{\m B}$ for any $A\in \m A$ and $B \in \m B$.  Let $A \gt B \in \m A \gt \m B$, we highlight that
\begin{align*}
    (A\gt B)^\dag&=(I_{\m A}\gt B)^\dag(A \gt I_{\m B})^\dag=\\
    &=(-1)^{g(B)g(A)}(A^\dag\gt B^\dag)\,.
\end{align*}
\end{remark}

\begin{definition}[Generators]
    Let $\m A$ be a finite-dimensional $\Z_2$-graded algebra. We refer to elements $\gamma_1,\ldots,\gamma_n \in \m A$ as generators of $\m A$ if they form a minimal set such that every element of $\m A$ can be expressed as a linear combination of products of the $\gamma_j$. In this case, we write $\m A = \expval{\gamma_1,\ldots,\gamma_n}$. 
\end{definition}

\begin{definition}[Notable algebras]
For, $\mathbb N \ni p,q<\infty$, the  matrix algebra $\M{p}{q}$ is the $\Z_2$-graded $\C$-algebra defined as
\[\M{p}{q}\coloneqq \C^{p^2+q^2}\sqcup\C^{2pq}\,,\]
with  $\C^0\coloneqq\{0\}$. The  Clifford algebra $\Cl_n(\C)$ is the $\Z_2$-graded $\C$-algebra defined as
\[\Cl_n(\C)\coloneqq\C^{2^{n-1}} \sqcup\C^{2^{n-1}}\,,\]
with $ \N^+\ni n <\infty$.
\end{definition}
Notice that $\M{p}{0}=\M{0}{p}=\text{Mat}_{p\times p}(\C)$ is the $\C$-algebra of complex $p\times p$ matrices. Moreover, the Clifford algebra $\Cl_n(\C)$ is the $n$-fold graded tensor product of $\Cl_1(\C)$, and when $n=2k$ is even then $\Cl_{2k}(\C)=\M{2^{k-1}}{2^{k-1}}$. By $\Cl_1^\gamma(\C)=\expval{\gamma}$ we denote the algebra $\Cl_1(\C)$ specifying an odd generator $\gamma$ (note that necessarily $\gamma^2=I$).

\begin{definition}[Superselected state]
 Let $\m A$ be a $\Z_2$-graded $C^*$-algebra.
A state is a linear map $\omega:\m A \to \mathbb C$ with $\omega(A^\dag A)\geq 0$ for all $A \in \m A$, and $\omega(I_{\m A})=1$. A superselected state is a state such that $\omega(\A^p)\subseteq \C^{p\oplus 1}$.
\end{definition}

\begin{definition}[Tracial state]
    Let $\m A$ be $\Z_2$-graded algebra. 
    The tracial state of $\m A$ is a superselected state $\Tr:\m A \to \mathbb C$ such that $\Tr(AB)=\Tr(BA)$ for any $A,B \in \m A$.
\end{definition}

\begin{definition}[Basis]\label{def:basis}
      A basis of $\M{p}{q}$ is a set of elements $B_{ij} \in \M{p}{q}$ with $B_{ij}B_{kl}=\delta_{jk}B_{il}$ and $B_{ij}^\dag=B_{ji}$, which form a basis of $ \M{p}{q}$ as a $\Z_2$-graded vector space.
As a consequence $i,j = 1, \ldots, p+q$, $g(B_{ij})=g(B_{ji})$, $B_{ii}$ are even,  and the number of odd basis elements is $2pq$. 
\end{definition}
In the body of the manuscript we use the notation $\ketbra{i}{j} \coloneqq B_{ij}$.

\begin{remark}
    $\m A= \M{p}{q},\Cl_n(\C)$ are both $C^*$-algebras endowed with the operator norm $\norm{O} \coloneqq \sup_{\omega}\sqrt{\omega(O^\dag O)}$, where $O \in \m A$ and $\omega$ superselected states.\\
    For the algebra $ \M{p}{q}$, a state $\omega$ is superselected if and only if $\omega(O)=\omega(Q\,O\,Q^\dag)$ for all $O \in \M{p}{q}$, where $\M{p}{q}\ni Q \coloneqq\sum_{i=1}^p \dyad{i}-\sum_{j=1}^q \dyad{p+j}$ is called the parity element.
\end{remark}

\begin{remark}
      A common way to introduce $\Z_2$-graded algebras in the literature is through direct sums of vector spaces. Although this definition is algebraically equivalent, we adopt a formulation based on disjoint unions because it makes the superselection structure manifest. In physical terms, the $\mathbb{Z}_2$–grading distinguishes superselection sectors corresponding to even (bosonic) and odd (fermionic) elements, which cannot be coherently superposed.
    In discussing matrix algebras $\M{p}{q}$, the vector spaces $\M{p}{q}^0, \M{p}{q}^1$  can be represented as 
    \begin{align*}
        &\M{p}{q}^0\cong\\
        &\cong\left\{\begin{pmatrix}
            A& 0\\
            0& B
        \end{pmatrix}\;:\; A\in\text{Mat}_{p\times p}(\C)\,, B \in \text{Mat}_{q\times q}(\C)\right\}\,,\\
        &\M{p}{q}^1\cong\\
        &\cong\left\{\begin{pmatrix}
            0& C\\
            D & 0
        \end{pmatrix}\;:\; C\in\text{Mat}_{p\times q}(\C)\,, D \in \text{Mat}_{q\times p}(\C)\right\}\,.
    \end{align*}
    In this case, the parity element of the algebra is represented by
    \begin{align*}
        Q=\begin{pmatrix}
            I_{p\times p} &\\
            &-I_{q \times q}
        \end{pmatrix}\,.
    \end{align*}
    We stress that in this representation the basis elements $\ketbra{i}{j}$ are those per Definition \ref{def:basis}, and should not be confused with ungraded matrix entries as they obey graded commutation relations.
\end{remark}

\begin{definition}[Idempotent]
    Let $\m A$ be $\Z_2$-graded algebra. We refer to $P \in \m A$ as idempotent if $P^2=P$. A primitive idempotent $E \in \m A$ is an idempotent such that $E=Q+R$ implies either $Q=0$ or $R=0$.
\end{definition}
Any idempotent is an even element of the algebra, and can be decomposed as the sum of primitive idempotents. The basis elements $B_{ii}$ of $\M{p}{q}$ are primitive idempotents. Viceversa, given $p+q$ orthogonal primitive idempotents $E_iE_j=\delta_{ij}E_i$, there exists a basis $B_{ij}$ such that $E_i=B_{ii}$.

\begin{definition}[Projection]
    Let $\m A$ be $\Z_2$-graded $C^*$-algebra. A projection $\Pi \in \m A$ is an idempotent of $\m A$ such that $\Pi^\dag=\Pi$. As such, $\Pi=\sum_{i=1}^r E_i$ with $E_i \in \m A$ primitive idempotents. The rank of the projection $\Pi$ is given by $\text{rank}(\Pi) \coloneqq r$.
\end{definition}

Since projections can always be decomposed in terms of primitive idempotents as $\Pi=\sum_{i=1}^rB_{ii}$, we can write projections as $\Pi=\sum_{i=1}^r\dyad{i}$. The element 
$I-\Pi$ is also a projection, and if $\Pi\Pi'=\Pi'\Pi$ and $\rank(\Pi)=\rank(\Pi')$ then $\Pi=\Pi'$.

\begin{remark}
    Let $\m A \cong \M{p}{q}$ and $\m A' \cong \M{p'}{q'}$, $\Pi \in \m A \gt \m A'$ a projection, and $ E_i\in \m A \gt \m A'$, $D_i \in \m A$, $ D'_i \in \m A'$ primitive idempotents. As in general $E_i\neq \sum _{kl} d_{kl} D_k \gt D'_l $, instead of using primitive idempotents, we often conveniently write $\Pi$ in a basis of $\m A \gt \m A'$ $\Pi=\sum_{ijkl}c_{ijkl} B_{ij} \gt B'_{kl}$ where $B_{ij}$ and $B_{kl}$ are a basis of $\m A$ and $ \m A'$ respectively, $d_{kl},c_{ijkl}\in\C$.
\end{remark}

\begin{definition}
    [Unitary elements] Let $\m A$ be a unital $\mathbb Z_2$-graded $C^*$-algebra. An element $U\in\m A$ is \emph{unitary} if $U^\dag U=UU^\dag=I_\m A$
\end{definition}
\begin{lemma}
    Let $\m A$ be a unital $\mathbb Z_2$-graded $C^*$-algebra, $\Pi\in\m A$ be a projection and $U\in\m A$ a unitary. Then $U\Pi U^\dag$ is a projection, and $\rank(U\Pi U^\dag)=\rank(\Pi)$.
\end{lemma}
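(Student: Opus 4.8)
The plan is to split the statement into its two assertions and dispatch each by a short, direct argument.

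For the first assertion, that $U\Pi U^\dag$ is a projection, I would simply verify the two defining properties. Self-adjointness is immediate from the fact that $(\,\cdot\,)^\dag$ is an involutive antihomomorphism together with $\Pi^\dag=\Pi$, giving $(U\Pi U^\dag)^\dag = U\Pi^\dag U^\dag = U\Pi U^\dag$. Idempotency follows from $U^\dag U = I_{\m A}$ and $\Pi^2=\Pi$, since $(U\Pi U^\dag)(U\Pi U^\dag) = U\Pi(U^\dag U)\Pi U^\dag = U\Pi^2 U^\dag = U\Pi U^\dag$. Hence $U\Pi U^\dag$ is a projection of $\m A$.

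For the rank equality, the key observation is that conjugation by $U$, i.e. the map $\alpha\colon O\mapsto UOU^\dag$, is a $*$-automorphism of $\m A$ with inverse $O\mapsto U^\dag O U$; this uses only the unitarity relations $U^\dag U = UU^\dag = I_{\m A}$ and is insensitive to whether $U$ is homogeneous. An algebra automorphism carries idempotents to idempotents, preserves sums and the orthogonality relations $E_iE_j=\delta_{ij}E_i$, and preserves the irreducibility condition defining primitivity, so it maps primitive idempotents to primitive idempotents. Writing $\Pi=\sum_{i=1}^r E_i$ as a sum of $r=\rank(\Pi)$ orthogonal primitive idempotents, we then obtain $U\Pi U^\dag = \sum_{i=1}^r \alpha(E_i)$, again a sum of $r$ orthogonal primitive idempotents; invoking the well-definedness of the rank (independence of the chosen decomposition, which holds in the finite-dimensional setting relevant here), we conclude $\rank(U\Pi U^\dag)=r=\rank(\Pi)$.

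I do not anticipate any real obstacle. The only point worth a line of care is that $\alpha=\operatorname{Ad}_U$ is a genuine $*$-automorphism even when $U$ lacks definite parity, and that being a primitive idempotent is a property stable under algebra automorphisms; both are clear. An alternative, decomposition-free route I would mention as a remark is that $\Pi\mapsto U\Pi U^\dag$ is an order-isomorphism of the projection lattice of $\m A$, with inverse $\Pi\mapsto U^\dag\Pi U$, so it preserves the length of maximal chains of projections dominated by a given one, which is precisely its rank.
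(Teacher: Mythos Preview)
The paper does not supply a proof of this lemma; it is stated in the appendix without argument, presumably as a routine fact. Your verification is correct and complete. The only comment is that your caveat about $U$ possibly lacking definite parity is superfluous in this setting: by the paper's convention, writing $U\in\m A$ already means $U\in\m A^p$ for some $p\in\{0,1\}$, so every unitary is homogeneous and $\operatorname{Ad}_U$ is automatically a graded $*$-automorphism; this does not affect the validity of your argument.
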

\begin{definition}
    Let $\m A$ be a $\mathbb Z_2$-graded $C^*$-algebra, and let $\Pi,\Pi'\in\m A$ be projections. We define the ordering $\Pi\geq\Pi'$ if $\Pi\Pi'=\Pi'$.
\end{definition}
\begin{definition}
    Let $\m A$ be a finite-dimensional $\mathbb Z_2$-graded $C^*$-algebra, and $\m A'\subseteq\m A$ a $\mathbb Z_2$-graded $C^*$-subalgebra. We say that a projection $\Pi\in\m A'$ is \emph{maximal} in $\m A'\subseteq\m A$ if, for every projection $\Pi'\in\m A$ such that $\Pi'\geq\Pi$, one has $\Pi'\not\in\m A'$.
\end{definition}

\begin{definition}[Isometries]\label{def:isom}
    Let $\m A \cong \M{p}{q}$ and $\m A'\cong \M{p'}{q'}$ with $p'+q'\leq p+q$, and choose any basis $B_{ij}$ and $B'_{ij}$, respectively. Let $\tF_{i'i,j'j}:\m A' \to \m A$ with $i,j=1,\ldots,p+q$ and $i',j'=1,\ldots,p'+q'$ be maps such that $\tF_{i'i,j'j}(B'_{k'l'})=\delta_{i'k'}\delta_{j'l'}B_{ij}$. Let $H=\{1,2,\ldots,p'+q'\}$, and $f:H \to \{1,2,\ldots,p+q\}$ be injective. The map \[\tE=\sum_{k,l\in H} \tF_{lf(l),kf(k)}\] is an \emph{isometry}. $\tE$ is a $*$-homomorphism, and it is a $*$-isomorphism if $\dim(\m A)=\dim(\m A')$.
\end{definition}
\begin{remark}
    For convenience, we use the formal notation 
    \begin{align*}
        &\tF_{lf(l),kf(k)}(X)=\ketbra{f(l)}{l}X\ketbra{k}{f(k)},
   \end{align*}
    even though $\ketbra{f(l)}{l}$, $\ketbra{l}{f(l)}$, 
    are not rigorously defined as standalone objects. Their use 
    in this respect is completely analogous to the use of bras and kets.
\end{remark}
\begin{definition}
    Let $\m A$ and $\m A'$ be finite-dimensional $\mathbb Z_2$-graded algebras, and let $\Tr_{\m A}$ and $\Tr_{\m A'}$ denote tracial states for $\m A$ and $\m A'$, respectively. Let $\tF:\m A' \to \m A$ be an even linear map. Then its dual $\tF^\dag:\m A \to \m A'$ is the linear map implicitly defined by
\begin{align}
    \Tr_{\m A'}[\tF^\dag(Y)X]\coloneqq \Tr_{\m A}[Y\tF(X)].
\end{align}
\end{definition}
One can straightforwardly prove that the dual of an even linear map is well defined and unique, and that $(\tF^\dag)^\dag=\tF$. The following lemma holds.
\begin{lemma}
    Let $\m A$ and $\m A'$ be as in Definition~\ref{def:isom}. Then $\tF_{i'i,j'j}^\dag=\tF_{jj',ii'}$.
\end{lemma}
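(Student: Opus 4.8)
The plan is to invoke the uniqueness of the dual (noted just above the statement) together with the non-degeneracy of the trace pairing $(X,Y)\mapsto\Tr(XY)$: it suffices to check that the map $\tF_{jj',ii'}:\m A\to\m A'$ obeys the defining relation of $\tF_{i'i,j'j}^{\,\dag}$, namely $\Tr_{\m A'}\!\big[\tF_{jj',ii'}(Y)\,X\big]=\Tr_{\m A}\!\big[Y\,\tF_{i'i,j'j}(X)\big]$ for all $Y\in\m A$ and $X\in\m A'$. By bilinearity this reduces to the case of basis elements $Y=B_{kl}$, $X=B'_{k'l'}$; and since the $\tF_{i'i,j'j}$ are themselves a basis of the space of linear maps $\m A'\to\m A$, verifying this pins down the dual of every linear map simultaneously. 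The first step is to record the action of the candidate: $\tF_{jj',ii'}$ is simply $\tF_{i'i,j'j}$ with the roles of $\m A$ and $\m A'$ exchanged, so from Definition~\ref{def:isom} one reads off $\tF_{jj',ii'}(B_{kl})=\delta_{jk}\delta_{il}\,B'_{j'i'}$, or equivalently, in the formal bra--ket notation, $\tF_{jj',ii'}(Y)=\ketbra{j'}{j}\,Y\,\ketbra{i}{i'}$ --- the ``transpose'' of $\tF_{i'i,j'j}(X)=\ketbra{i}{i'}\,X\,\ketbra{j'}{j}$.

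Next I would substitute these into the two sides and simplify using only three ingredients: the basis product rule $B_{ab}B_{cd}=\delta_{bc}B_{ad}$ from Definition~\ref{def:basis}; the cyclicity $\Tr(AB)=\Tr(BA)$ of the tracial state; and the vanishing of the tracial state off the diagonal, $\Tr(B_{ab})\propto\delta_{ab}$. The left-hand side becomes $\delta_{jk}\delta_{il}\,\Tr_{\m A'}[B'_{j'i'}B'_{k'l'}]=\delta_{jk}\delta_{il}\,\delta_{i'k'}\delta_{j'l'}\,\Tr_{\m A'}(B'_{j'j'})$, while the right-hand side becomes $\delta_{i'k'}\delta_{j'l'}\,\Tr_{\m A}[B_{kl}B_{ij}]=\delta_{i'k'}\delta_{j'l'}\,\delta_{li}\delta_{kj}\,\Tr_{\m A}(B_{jj})$. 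These are the same product $\delta_{jk}\delta_{il}\delta_{i'k'}\delta_{j'l'}$ of Kronecker deltas, up to the value of the tracial state on a diagonal matrix unit (common to the two algebras for the normalisation in force --- concretely the canonical unnormalised trace, which is also the one making $\tJ\circ\tV=\tI$ hold on the nose). Hence $\tF_{jj',ii'}$ satisfies the defining relation and equals $\tF_{i'i,j'j}^{\,\dag}$. As a built-in consistency check, the resulting expression is symmetric under $(i,j)\leftrightarrow(i',j')$, which immediately reproduces $(\tF^{\dag})^{\dag}=\tF$.

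I do not expect a genuine obstacle here: the computation is elementary, and the only thing that needs care is the index bookkeeping --- keeping straight which of the four labels lives in $\m A$ versus $\m A'$, and tracking where each one is sent by the exchange $\m A\leftrightarrow\m A'$ that produces the dual. The $\Z_2$-grading does not intervene: the basis products $B_{ab}B_{cd}=\delta_{bc}B_{ad}$ carry no signs and the tracial state is superselected (it annihilates odd elements), so both sides of the defining identity are automatically graded-consistent, and the ``even map'' clause in the definition of the dual is not actually needed --- the argument uses only non-degeneracy of the trace pairing, which applies equally to the possibly odd maps $\tF_{i'i,j'j}$. If one prefers to sidestep the remark about the trace normalisation, it suffices to treat the case $\dim\m A=\dim\m A'$ first (where $\tV$ is an isomorphism and the two tracial states agree on diagonal matrix units) and then observe that the general-rank calculation is the very same identity of delta symbols.
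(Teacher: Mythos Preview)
The paper states this lemma without proof, deeming it straightforward, so there is no argument to compare against; your verification via the defining relation of the dual on matrix units is the natural one and is correct. Your remark about the normalisation is apt and worth keeping: with $\Tr_{\m A}$, $\Tr_{\m A'}$ literally the \emph{tracial states} (so $\Tr(B_{jj})=1/(p+q)$ and $\Tr(B'_{j'j'})=1/(p'+q')$), the identity would pick up a factor $(p+q)/(p'+q')$; but every subsequent use in the paper---in particular $\tE^\dag\circ\tE=\tI$ and $\tE^\dag(I_{\m A})=I_{\m A'}$---is consistent with the unnormalised trace, so your reading is the intended one and the discrepancy is a harmless inconsistency in the paper rather than a gap in your proof.
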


\begin{remark}
    Notice that, by definition, we have that 
\begin{align}
        &\tF_{lf(l),kf(k)}(X)\tF_{l'f(l'),k'f(k')}(Y)=\\
        &\quad=\delta_{kl'}\tF_{lf(l),k'f(k')}(X\ketbra{k}{k}Y),\label{eq:FF}\\
        &\tF^\dag_{lf(l),kf(k)}(X)\tF^\dag_{l'f(l'),k'f(k')}(Y)=\\
        &\quad=\delta_{k'l}\tF^\dag_{l'f(l'),kf(k)}(X\ketbra{k'}{k'}Y).\label{eq:FDFD}
\end{align}
\end{remark}

\begin{definition}
    Let $\tE:\m A'\to\m A$ be an isometry. Then $\tE^\dag$ is a \emph{coisometry}.
\end{definition}
From the above definition one can check that if $\tE$ is an isometry, then $\tE^\dag(I_{\m A})=I_{\m A'}$ and $\tE(I_{\m A'})=\Pi$, where $\Pi \in \m A$ is a projection. 

\begin{definition}(Graded direct sum)
    Let $\m A$ and $\m B$ be $\Z_2$-graded algebras, then
    \[\m A \boxplus \m B \coloneqq(\m A^0\oplus\m B^0, \m A^1 \oplus \m B^1)\,,\]
    where $\oplus$ is the standard direct sum of vector spaces.
\end{definition}
The fundamental Theorem of finite-dimensional associative algebras, also known as Wedderburn-Artin Theorem, can be generalised to $\Z_2$-graded algebras as follows.
\begin{theorem}[\cite{Brunshidle01112012}]
  Let $\m A$ be a finite-dimensional $\Z_2$-graded semisimple $\C$-algebra, then
  \begin{align*}
      \m A \cong \bigboxplus_i \M{p_i}{q_i} \gt D^{m_i}
  \end{align*}
  where $D^{m_i}=(\C, \C^{m_i})$, and $m_i\in \{0,1\}$. If the index set of $i$ is larger than one, then $\m A$ is called semi-simple.
\end{theorem}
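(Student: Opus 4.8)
The plan is to prove this statement as the $\Z_2$-graded (super) analogue of the Wedderburn--Artin theorem, in two stages: first decompose $\m A$ into a graded direct sum of graded-simple blocks, then classify the finite-dimensional graded-simple $\C$-algebras and show each is of type $\M{p}{q}$ (for $m_i=0$) or $\M{p}{q}\gt\Cl_1(\C)$ (for $m_i=1$). The organising principle I would use is that a $\Z_2$-grading on an associative algebra is equivalent to the datum of an algebra automorphism $\alpha$ with $\alpha^2=\mathrm{id}$, namely $\alpha|_{\m A^0}=\mathrm{id}$ and $\alpha|_{\m A^1}=-\mathrm{id}$; the homogeneous subspaces are the $\pm 1$ eigenspaces of $\alpha$. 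This reframing lets me import the classical Wedderburn--Artin theorem over $\C$ and then read off the grading from the action of $\alpha$.

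For the first stage I would argue that, since $\mathrm{char}\,\C=0$, graded semisimplicity coincides with ordinary semisimplicity of the underlying algebra, so the classical Wedderburn--Artin theorem over the algebraically closed field $\C$ yields $\m A\cong\bigoplus_k M_{n_k}(\C)$ as an ungraded algebra, with a complete family of orthogonal minimal central idempotents $\{e_k\}$. Because $\alpha$ is an algebra automorphism it permutes the $e_k$, and since $\alpha^2=\mathrm{id}$ this permutation is an involution whose orbits have size one or two. Grouping the simple summands by these orbits produces $\alpha$-invariant two-sided ideals that are precisely the graded-simple blocks, giving $\m A\cong\bigboxplus_i\m B_i$ with each $\m B_i$ graded-simple and reducing the problem to a single block.

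In the second stage I would classify $\m B_i$ by orbit type. For an orbit of size one the block is a single $M_n(\C)$ equipped with an order-two automorphism; as every automorphism of $M_n(\C)$ is inner, $\alpha=\mathrm{Ad}_u$ with $u^2\in\C^\times$, and rescaling achieves $u^2=I$. Diagonalising the involution as $u\sim\mathrm{diag}(I_p,-I_q)$, the $+1$ eigenspace is the block-diagonal subalgebra and the $-1$ eigenspace its off-diagonal complement, so $\m B_i\cong\M{p}{q}$, i.e.\ $m_i=0$. For an orbit of size two the two summands are interchanged by $\alpha$, hence isomorphic; identifying them gives $\m B_i\cong M_n(\C)\oplus M_n(\C)$ with $\alpha(a,b)=(b,a)$, so that the even part is the diagonal $\{(a,a)\}\cong M_n(\C)$ and the element $\gamma\coloneqq(I,-I)$ is odd, central, and satisfies $\gamma^2=I$. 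Hence $\m B_i\cong\M{n}{0}\gt\Cl_1(\C)$, i.e.\ $m_i=1$ with $D^1=\Cl_1(\C)$; reassembling the blocks produces the stated form, the freedom in $\M{p_i}{q_i}$ for the $m_i=1$ blocks being accounted for by the isomorphism $\M{p}{q}\gt\Cl_1(\C)\cong\M{p+q}{0}\gt\Cl_1(\C)$.

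The two points needing genuine care, rather than bookkeeping, are the following. First, the reduction in the first stage rests on the equivalence between graded and ungraded semisimplicity, which I would establish by showing the graded Jacobson radical coincides with the ordinary one when $2$ is invertible in $\C$; this is the conceptual crux that legitimises the use of the classical theorem, and I expect it to be the main obstacle. Second, in the size-one case one must justify that every $\Z_2$-grading on $M_n(\C)$ is equivalent to the standard block grading (this is exactly where innerness of automorphisms and the normalisation $u^2=I$ enter), while in the size-two case one must check that $\M{n}{0}\gt\Cl_1(\C)$ genuinely reproduces the swap block; here the simplifying observation is that the sign factors of the graded tensor product vanish because the factor $\M{n}{0}$ is purely even, so $\gamma$ is truly central. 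The remaining steps are direct applications of Skolem--Noether-type reasoning and elementary linear algebra.
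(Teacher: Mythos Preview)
The paper does not prove this theorem; it is stated in the appendix with a citation to \cite{Brunshidle01112012} and used as a black box, so there is no ``paper's own proof'' to compare against.

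Your sketch is the standard route to the super Wedderburn--Artin theorem and is essentially correct. The organising device of viewing the grading as an involution $\alpha$ and tracking its action on the minimal central idempotents of the underlying ungraded algebra is exactly how one usually proceeds; the dichotomy between fixed idempotents (giving $\M{p}{q}$) and swapped pairs (giving $\M{n}{0}\gt\Cl_1(\C)$) is the correct case split. One small comment: your worry about the equivalence of graded and ungraded semisimplicity is well placed but easily discharged here, since the ordinary Jacobson radical of a finite-dimensional algebra is automatically a graded ideal (being invariant under any algebra automorphism, in particular $\alpha$), so ungraded semisimplicity already forces graded semisimplicity without needing the full radical-coincidence argument. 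The identification $\M{p}{q}\gt\Cl_1(\C)\cong\M{p+q}{0}\gt\Cl_1(\C)$ you invoke at the end is also correct and is what makes the $m_i=1$ case insensitive to the $(p,q)$ split.
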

As every finite-dimensional $\mathbb Z_2$-graded $C^*$-algebra is semisimple, the theorem applies in this case, which is of  interest to us.
Thanks to the graded generalization of the Wedderburn-Artin theorem, it is possible to show that another fundamental result, the Skolem–Noether Theorem, also extends naturally to graded algebras.
\begin{theorem}[\cite{HWANG199973}]\label{thm:skolem}
     Let $\m A \subseteq\m B$ be $\Z_2$-graded central simple $C^*$-algebras. Let $\varphi:\m A \to \m B$ an injective graded $*$-morphism. Then there exists a unitary element $U\in \m B$ such that $\varphi(O)=U^\dag O U$ for any $O \in \m A$. 
\end{theorem}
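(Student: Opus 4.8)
The plan is to adapt the classical module-theoretic proof of the Skolem--Noether theorem to the $\Z_2$-graded setting, taking the graded Wedderburn--Artin theorem quoted above as the only structural input. First I would assemble the two graded analogues of the facts that underpin the ordinary Brauer group: (i) the graded opposite algebra $\m B^{\mathrm{op}}$ of a graded central simple algebra --- with multiplication twisted by the Koszul sign $(-1)^{g(x)g(y)}$ --- is again graded central simple, sharing the same graded centre; and (ii) if $\m A$ and $\m B$ are finite-dimensional graded central simple $\C$-algebras then so is $\m A\gt\m B$, so it is in particular graded simple and, by the structure theorem, isomorphic to $\M{p}{q}$ or $\M{p}{q}\gt\Cl_1(\C)$. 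I would also record the graded Azumaya identity $\m B\gt\m B^{\mathrm{op}}\cong\operatorname{End}_{\C}(\m B)$ as graded algebras --- a graded algebra homomorphism, injective because the source is graded simple, hence bijective since source and target have the same $\C$-dimension --- which identifies graded $\m B\gt\m B^{\mathrm{op}}$-modules with graded vector spaces. From the isotypic decomposition over a graded simple algebra this yields the key lemma: \emph{a graded module over a finite-dimensional graded simple $\C$-algebra is determined up to grade-preserving isomorphism by its graded dimension} $(\dim V^0\mid\dim V^1)$.

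Next, given the injective graded $*$-morphism $\varphi:\m A\to\m B$ and the inclusion $\iota:\m A\hookrightarrow\m B$, I would make $\m B$ into a graded left module over $\m A\gt\m B^{\mathrm{op}}$ in two ways: via $\varphi\gt\mathrm{id}$ followed by the Azumaya isomorphism, and via $\iota\gt\mathrm{id}$. Both module structures live on the \emph{same} underlying graded vector space $\m B=\m B^0\sqcup\m B^1$, hence have equal graded dimension; since $\m A\gt\m B^{\mathrm{op}}$ is graded simple, the key lemma supplies a grade-preserving module isomorphism $\theta:\m B\to\m B$ intertwining the two actions. Restricting the intertwining identity to the subalgebra $I_{\m A}\gt\m B^{\mathrm{op}}$ shows $\theta$ is right $\m B$-linear, hence $\theta=L_U$ is left multiplication by $U:=\theta(I_{\m B})$, an even (since $\theta$ and $I_{\m B}$ are) and invertible (since $\theta$ is) element. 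Restricting instead to $\m A\gt I_{\m B}$ gives $U\varphi(O)=OU$ for every $O\in\m A$, i.e.\ $\varphi(O)=U^{-1}OU$.

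It then remains to upgrade $U$ to a unitary. Since $\varphi$ is a $*$-morphism, $\varphi(O^\dag)=\varphi(O)^\dag$ together with $\varphi(O)=U^{-1}OU$ forces $OUU^\dag=UU^\dag O$ for all $O\in\m A$. As $\m B$ is a $C^*$-algebra and $UU^\dag>0$, the element $U':=(UU^\dag)^{-1/2}U$ is well defined, satisfies $U'^{\dag}U'=U'U'^{\dag}=I_{\m B}$, and --- because $(UU^\dag)^{-1/2}$ is a limit of polynomials in $UU^\dag$ and therefore commutes with $\m A$ --- still implements the same conjugation, so that $\varphi(O)=U'^{\dag}OU'$ for every $O\in\m A$, which is the assertion.

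The main obstacle is bookkeeping rather than conceptual: one must fix a coherent set of Koszul sign conventions for $\m B^{\mathrm{op}}$, for the graded Azumaya isomorphism and for the two bimodule actions on $\m B$, and check their mutual compatibility --- precisely the delicate point in the classical theory of graded Brauer groups. The one genuinely subtle structural point is the grade-preservation of $\theta$: it rests on the key lemma holding even for $\M{p}{q}$ with $p\neq q$, which carries two inequivalent simple modules interchanged by grade shift, so that $\theta$ need not have odd parity. One should also note that in the \emph{odd type} case $\m A\cong\M{p}{q}\gt\Cl_1(\C)$ the \emph{ungraded} commutant of $\m A$ in $\m B$ is strictly larger than $\C$; the computation above still places $UU^\dag$ in that commutant, so the unitarisation step goes through unchanged.
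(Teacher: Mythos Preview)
The paper does not supply its own proof of this statement: Theorem~\ref{thm:skolem} is stated in Appendix~\ref{app:z2algebras} with a citation to~\cite{HWANG199973} and no accompanying argument, so there is no in-paper proof to compare your proposal against.

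Your module-theoretic route is the standard one for Skolem--Noether and is correctly adapted to the graded setting in outline; the unitarisation step at the end is also fine. The point you flag as ``genuinely subtle'' is, however, a real gap, and your key lemma is false as stated. Over $\M{p}{p}$ the simple graded module $S=\C^{p|p}$ and its grade shift $S[1]$ have the \emph{same} graded dimension $(p\mid p)$ yet are not grade-preservingly isomorphic (the even primitive idempotent $e_{11}$ annihilates the even part of $S[1]$ but not of $S$), so two graded $\M{p}{p}$-modules of equal graded dimension need not be isomorphic. Since $\M{a}{b}\gt\M{c}{d}^{\mathrm{op}}\cong\M{ac+bd}{ad+bc}$, the bad case $ac+bd=ad+bc$ occurs exactly when $(a-b)(c-d)=0$, i.e.\ whenever $\m A$ or $\m B$ has equal even and odd dimensions --- in particular for $\m A_x\cong\M{1}{1}$, the paper's principal case. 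Your parenthetical remark (``so that $\theta$ need not have odd parity'') does not resolve this.

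A clean repair that stays close to your argument: observe that because $\varphi$ and $\iota$ are both grade-preserving, the homogeneous components $V_0,V_1$ of any intertwiner $V$ (an element with $V\varphi(a)=aV$ for all $a$) are again intertwiners. For $\m A,\m B$ of even type the underlying ungraded algebras are simple matrix algebras, so ungraded Skolem--Noether already produces an invertible intertwiner $U$; your unitarisation step then yields a unitary intertwiner $U'$, and one finishes by showing that the graded intertwiner space --- a graded Morita bimodule between the graded commutants $\iota(\m A)'$ and $\varphi(\m A)'$ --- contains a \emph{homogeneous} unitary. This last extraction is where the actual work specific to the graded theorem lives, and it is what the reference~\cite{HWANG199973} supplies.
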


\section{Renormalisation of one-dimensional spinless FCA}\label{sec:appendice_ren_spinless}
In this appendix, we apply the propositions presented in the body of the manuscipt to the $(2,\Pi_{\Lambda_0})$-renormalisation of FCA on $\Z$ with a single (spinless) fermionic mode per cell. 
Firstly, we examine FCA with index $1$, i.e., those FCA that can be realised as FDFCs. Given the FCA $\tT$---equivalently the associated gate $\mathcal G$ (Eq.~\eqref{eq:definitionofG})---we identify for which gate parameters and projections $\Pi_{\Lambda_0}$---equivalently the associated projections $P$ (Eq.~\eqref{eq:4})---$\tT$ is $(2,\Pi_{\Lambda_0})$-renormalisable to an FCA $\tS$, which then will be explicitly computed. Secondly, we study the necessary and sufficient conditions for $(2,\Pi_{\Lambda_0})$-renormalisation of Majorana shifts, and compute the renormalised FCA for those cases where they exist.

In view of their role in Eq.~\eqref{eq:GAlg} for the renormalisability of FDFCs (Prop.~\ref{prop:FDFC_ren}), we list the subalgebras of $\M{1}{1}$ introducing a lighter notation for the following analysis.
\begin{lemma}
The $\Z_2$-graded $C^*$-subalgebras of $\M{1}{1}\cong\mathcal{C}\ell_2(\C)$ are
\begin{enumerate}
\item $\M{1}{1}$: the matrix algebra itself,\\
\item $\mathcal{C}\ell^\gamma_1(\C)$: a Clifford algebra generated by an odd element $\gamma$,\\
\item $\mathcal{A} \coloneqq \M{1}{1}^0$: the abelian algebra generated by the Majorana element $Z$,
\item $\C\mathcal{I} \coloneqq \M{1}{0}$: the trivial algebra of complex scalars.
\end{enumerate}
\end{lemma}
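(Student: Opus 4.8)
The statement is a finite-dimensional classification, and the plan is simply to enumerate all (unital) $\Z_2$-graded $C^*$-subalgebras of $\M{1}{1}\cong\Cl_2(\C)$ by brute force, organised according to the dimension of the odd part. First I would fix a homogeneous basis: the even part of $\M{1}{1}$ is $\spn\{I,Z\}$ and the odd part is $\spn\{X,Y\}$, with $X,Y,Z$ self-adjoint and satisfying $X^2=Y^2=Z^2=I$, $XY=iZ=-YX$, $ZX=iY$ and $ZY=-iX$. Any $\Z_2$-graded $C^*$-subalgebra $\m B$ splits as $\m B=\m B^0\sqcup\m B^1$ with $\m B^p=\m B\cap\M{1}{1}^p$; since $\m B$ is unital, $I\in\m B^0\subseteq\spn\{I,Z\}$, while $\m B^1\subseteq\spn\{X,Y\}$. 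Hence $\dim\m B^1\in\{0,1,2\}$, and these three cases are exhaustive.

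For $\dim\m B^1=2$ one has $X,Y\in\m B$, so $iZ=XY\in\m B$ and therefore $\m B=\M{1}{1}$. For $\dim\m B^1=0$ the subalgebra $\m B=\m B^0$ lies inside the two-dimensional abelian algebra $\spn\{I,Z\}\cong\C\oplus\C$, whose only unital $C^*$-subalgebras are $\C I$, giving $\m B=\C\mathcal{I}\cong\M{1}{0}$, and the whole algebra, giving $\m B=\M{1}{1}^0=\mathcal{A}$. The only case needing a little work is $\dim\m B^1=1$: here $\m B^1=\C\gamma_0$ for some nonzero odd $\gamma_0$, and I would first normalise the generator. Because $\m B$ is $*$-closed, $\gamma_0^\dagger=\mu\gamma_0$ for some $\mu\in\C$, and applying the adjoint twice forces $|\mu|=1$; rescaling $\gamma_0$ by a suitable phase yields a self-adjoint generator $\gamma=aX+bY$ with $a,b\in\R$ not both zero, and then $\gamma^2=(a^2+b^2)I$, so after a further real rescaling $\gamma^2=I$. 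Thus $\spn\{I,\gamma\}$ is a $\Z_2$-graded $C^*$-subalgebra isomorphic to $\Cl^\gamma_1(\C)$, and it remains to check that $\m B^0$ cannot be two-dimensional. If it were, then $Z\in\m B$ and hence $Z\gamma\in\m B^1=\C\gamma$; but $Z\gamma=-ibX+iaY$ is a scalar multiple of $aX+bY$ only when $a^2=-b^2$, which is impossible for real $a,b$ not both zero. Hence $\m B^0=\C I$ and $\m B=\Cl^\gamma_1(\C)$. To finish, I would observe that each of the four families listed is manifestly closed under products, adjoints and the grading, which supplies the converse inclusion.

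I do not expect a genuine obstacle: the entire argument is a bounded computation inside a four-dimensional algebra. The only places that deserve care are the phase-and-scalar normalisation of the odd generator in the one-dimensional case, and the (paper-wide) convention that "$\Z_2$-graded $C^*$-subalgebra" means \emph{unital} one; without that convention one would also have to include the one-dimensional corner algebras $\C\,\tfrac12(I\pm Z)$, which are excluded exactly because they do not contain the ambient unit $I$.
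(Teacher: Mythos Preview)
The paper states this lemma without proof, so there is nothing to compare against. Your argument is correct and complete: the case split by $\dim\m B^1\in\{0,1,2\}$ is exhaustive, the normalisation of the odd generator to a self-adjoint $\gamma$ with $\gamma^2=I$ is handled cleanly, and the obstruction to $Z\in\m B$ when $\dim\m B^1=1$ (via $Z\gamma\propto\gamma$ forcing $a^2+b^2=0$) is verified correctly. Your closing remark about unitality is also apt: the paper's blanket convention that subalgebras ``maintain $C^*$ and $\Z_2$-grading properties'' does not spell out unitality, but the list in the lemma (which omits the corner algebras $\C\tfrac12(I\pm Z)$) makes clear that sharing the unit of $\M{1}{1}$ is intended.
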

In the following we denote the cosets sets of unitary elements of $\m A^p_x$ as $SU^p(2)$ with $p=0,1$. 

\subsection{Schumacher-Werner FCA}

In the case of Schumacher-Werner FCA (Eq. \eqref{eq:sw}), the fermionic gate $\mathcal G$ (Eq.~\eqref{eq:definitionofG}) retains the same formal structure as in the qubit setting \cite{trezzini2024renormalisationquantumcellularautomata}, but expressed with Majorana modes:
\begin{align*}
\tikzfig{M2}=\tikzfig{M2qbit}\ ,\
\tikzfig{M1}=\tikzfig{def2}\ ,
\end{align*}

\begin{align*}
\mathcal G= \mathcal M_1 \mathcal M_2= \mathcal C_{\phi}\,(\mathcal U\gt \mathcal U)\,\mathcal C_\phi,
\end{align*}
with two admissible forms taking into account $\Z_2$-grading
\begin{align}
\begin{aligned}\label{eq:Werner's}
&G=C_{\phi}( e^{i\theta  Z}\gt e^{i\theta  Z}) C_\phi\,, \quad  \text{ or} \\
&G= C_{\phi}\,[\mathbf{n}\cdot \mathbf M \gt \mathbf{n}\cdot \mathbf M ]\, C_\phi\,,
\end{aligned}
\end{align}
with $\mathbf{n}\in\mathbb{C}^2$, $n_2/n_1=\tan \theta \in \R$, and $\mathbf M=(X,Y)$. Recall that $\mathcal G(\cdot)=G^\dag(\cdot)G$.
Interestingly, the forms of $G$ in the superselected qubit case are exactly those that are known to support $(2,\Pi_{\Lambda_0})$-renormalisation in the qubit framework~\cite{trezzini2024renormalisationquantumcellularautomata}. This leads to the following result.

\begin{lemma}
\label{lem:WernerFCA}
All Schumacher-Werner FCA satisfy Eq.~\eqref{eq:GAlg} with $\mathsf{M},\mathsf{N}=\mathcal A$. Additionally, if $G = C_{\frac{n\pi}{2}} \, ({U} \gt {U}) \, C_{\frac{n\pi}{2}}$ with ${U} \in SU^{0,1}(2)$ and $n \in \Z$,
then Eq.~\eqref{eq:GAlg} holds with $\mathsf{M}, \mathsf{N} = \M{1}{1}$.

\end{lemma}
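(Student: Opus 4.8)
The plan is to evaluate $\mathcal G=\mathcal C_\phi\,(\mathcal U\gt\mathcal U)\,\mathcal C_\phi$ on a generating set --- $\{Z_x\gt I,\,I\gt Z_{x+1}\}$ for $\mathcal A\gt\mathcal A$, and $\{X_x\gt I,\,Y_x\gt I,\,I\gt X_{x+1},\,I\gt Y_{x+1}\}$ for $\M{1}{1}\gt\M{1}{1}$ --- and then extend to the whole factorised algebra using that $\mathcal G$ is a $*$-automorphism: the image of a monomial is the ordered product of the images of its factors, so if every generator is sent to a factorised element whose ``residual'' factor lies in the even (central with respect to the tensor split) subalgebra, then every element goes to a factorised element. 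The only structural input needed is that $C_\phi=e^{i\phi\,n_x\gt n_{x+1}}$ lies in the abelian subalgebra $\mathcal A\gt\mathcal A$, hence commutes with $Z_x\gt I$ and $I\gt Z_{x+1}$, while on the odd generators it produces the controlled operator $\mathcal C_\phi(X_x\gt I)=\ketbra{0}{1}_x\gt e^{i\phi n_{x+1}}+\ketbra{1}{0}_x\gt e^{-i\phi n_{x+1}}$, and analogously for the other three.

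For the first statement, $\mathcal A=\langle Z\rangle$, and in both admissible forms of Eq.~\eqref{eq:Werner's} one has $\mathcal U(Z)=Z$ (when $U=e^{i\theta Z}$) or $\mathcal U(Z)=-Z$ (when $U=\mathbf{n}\cdot\mathbf{M}$, since $XZX=YZY=-Z$ and the cross term $XZY+YZX$ vanishes). Combined with $\mathcal C_\phi$ fixing $Z_x\gt I$, $I\gt Z_{x+1}$ and hence $Z_x\gt Z_{x+1}$, this shows $\mathcal G$ restricted to $\mathcal A\gt\mathcal A$ is either the identity or acts by $Z\mapsto -Z$ on each tensor factor; in either case it maps factorised elements of $\mathcal A\gt\mathcal A$ to factorised elements, which is Eq.~\eqref{eq:GAlg} with $\mathsf M=\mathsf N=\widetilde{\mathsf M}=\widetilde{\mathsf N}=\mathcal A$.

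For the second statement, I would propagate $\mathcal G$ through on the odd generators with $\phi=n\pi/2$. Two graded facts must be tracked: when $U$ is odd one has $\mathcal U(n_{x+1})=I-n_{x+1}$, hence $\mathcal U(e^{i\phi n_{x+1}})=e^{i\phi}\,e^{-i\phi n_{x+1}}$; and $(\mathcal U\gt\mathcal U)(A\gt B)=-\,\mathcal U(A)\gt\mathcal U(B)$ when $A$ is odd and $U$ is odd. Carrying these through the sandwich one obtains $\mathcal G(X_x\gt I)=\pm(\cos\psi\,X_x+\sin\psi\,Y_x)\gt e^{2i\phi n_{x+1}}$ with $\psi$ real ($\psi=2\theta$ for $U=e^{i\theta Z}$, $\psi=2\theta+\phi$ for $U=\mathbf{n}\cdot\mathbf{M}$). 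The key observation is that $\phi=n\pi/2$ forces $e^{2i\phi n_{x+1}}=(e^{i\pi n_{x+1}})^{n}=Z_{x+1}^{\,n}$, using $e^{i\pi n_{x+1}}=Z_{x+1}$, which equals $I$ or $Z_{x+1}$ according to the parity of $n$; in particular it lies in $\mathcal A_{x+1}\subset\M{1}{1}_{x+1}$, so this image is factorised. The symmetric computation gives $\mathcal G(I\gt X_{x+1})=Z_x^{\,n}\gt(\cos\psi'\,X_{x+1}+\sin\psi'\,Y_{x+1})$ up to sign, and likewise for the $Y$ generators. Because the residual factors $Z^{\,n}$ are even, products of such images stay factorised, and the automorphism/product argument above yields Eq.~\eqref{eq:GAlg} on all of $\M{1}{1}\gt\M{1}{1}$ with $\mathsf M=\mathsf N=\widetilde{\mathsf M}=\widetilde{\mathsf N}=\M{1}{1}$.

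I expect the main obstacle to be purely computational: keeping the graded-tensor signs and the phases generated by $\mathcal U$ acting on $n_{x+1}$ under control, so that the final images come out genuinely Hermitian and factorised. A convenient safeguard is to impose preservation of Hermiticity at each intermediate step, which pins down the otherwise ambiguous phases (it is this check that forces, e.g., the pairing of $e^{\pm 2i\theta}$ with $e^{\mp i\phi}$ in the odd-$U$ case, so that the overall prefactor stays real). One should also record at the outset that the two forms of $G$ in Eq.~\eqref{eq:Werner's} exhaust $SU^{0,1}(2)$ up to a physically irrelevant global phase --- the even unitaries of $\M{1}{1}$ are $\propto e^{i\theta Z}$ and the odd ones $\propto\cos\theta X+\sin\theta Y$ --- so the two cases treated above cover the full hypothesis of the lemma.
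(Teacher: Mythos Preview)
Your approach is correct and takes a genuinely different route from the paper. You proceed by direct computation: evaluate $\mathcal G$ on the generators $\{Z\gt I,\,I\gt Z\}$ (for $\mathcal A\gt\mathcal A$) or $\{X\gt I,\,Y\gt I,\,I\gt X,\,I\gt Y\}$ (for $\M{1}{1}\gt\M{1}{1}$), check that each image is factorised with an \emph{even} residual factor, and then extend to arbitrary homogeneous $A\gt B$ via multiplicativity of the automorphism. The paper instead first simplifies $G=(U\gt U)\,C_{2\phi}$ (even $U$) or $(U\gt U)\,\widetilde C_{2\phi}$ (odd $U$), observes that both layers fix $Z^a\gt Z^b$ (first statement), and for the second statement invokes the classification Lemma~\ref{cor:fact}: a gate on $\M{1}{1}\gt\M{1}{1}$ preserves factorisation iff it is, up to swap and cell-wise unitaries, a controlled-$Z$; matching $C_{2\phi}$ to that form forces $\phi=k\pi/2$. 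Your route is more self-contained (no appeal to Lemma~\ref{cor:fact}); the paper's is more structural and simultaneously yields the converse --- only $\phi=k\pi/2$ allows $\mathsf M=\mathsf N=\M{1}{1}$ --- which is what is actually exploited in the subsequent Lemma~\ref{lem:SWFCA}. One caveat on bookkeeping: for odd $U$ the graded-tensor sign cancels the single-site $\mathcal U(Z)=-Z$, so in fact $(\mathcal U\gt\mathcal U)(Z\gt I)=+Z\gt I$, and the sign in your rule ``$(\mathcal U\gt\mathcal U)(A\gt B)=-\,\mathcal U(A)\gt\mathcal U(B)$ when $A$ is odd and $U$ is odd'' is inverted (the correct exponent is $g(U)[g(A)+g(B)+g(U)]$). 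Since your conclusion depends only on factorisation and not on the overall sign, this does not affect the validity of the argument, but it is worth getting right if you want the explicit renormalised parameters.
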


\begin{proof}
Parity superselection restricts the form of $G$ to that given in Eq.~\eqref{eq:Werner's}, i.e., $U$ has a fixed parity. According to the analysis in~\cite{trezzini2024renormalisationquantumcellularautomata}, such evolutions are precisely those that preserve factorised elements in a suitable subalgebra. However, within the fermionic setting, we need to differentiate between preserving  factorised elements of an abelian even algebra and those in a factorised Clifford algebra.

Recall that $\dyad{0} U = U \dyad{1}$ for every odd $U$. We can rewrite $G$ as follows:
\begin{align}
\begin{aligned}\label{eq:Geasy}
&G=C_\phi \, (U \gt U) \, C_\phi = (U \gt U) \, C_{2\phi}, \quad \text{if } U \in SU^0(2),\\
&G=C_\phi \, (U \gt U) \, C_\phi = (U \gt U) \, \widetilde{C}_{2\phi}, \quad \text{if } U \in SU^1(2),
\end{aligned}
\end{align}
where we recall that $C_\phi$ can be represented by
\begin{align*}
C_\phi = \ket{0}\bra{0} \gt I +\ket{1}\bra{1} \gt e^{i\frac{\phi}{2} (I-Z)},
\end{align*}
and the modified elements are defined as
\begin{align*}
\widetilde{C}_\beta = \ket{0}\bra{0} \gt e^{i\frac{\beta}{2}(I+ Z)}+\ket{1}\bra{1} \gt  e^{i\frac{\beta}{2}(I- Z)}.
\end{align*}
In both scenarios \eqref{eq:Geasy}, we observe that any factorised element $Z^a\gt Z^b$ remains factorised, as both $C_\phi$ and $U \gt U$ preserve it.
If $G$ also maps factorised elements in the Clifford algebras $\mathcal{C}\ell^\xi_1(\mathbb{C}) \gt \mathcal{C}\ell^\eta_1(\mathbb{C})$ to factorised elements, for any odd elements $\xi, \eta$, then it necessarily maps factorised elements to factorised elements. Indeed, if of both $Z \gt  I $ and $\xi \gt  I $ are mapped to factorised elements, also $Z\xi \gt  I $, is mapped to a factorised element. Similarly for elements of the form $I\gt Z$, $I\gt \eta$ and $I\gt Z\eta$.
By Lemma~\eqref{cor:fact}, in this case $G$ must take the form
\begin{align*}\label{eq:GZ}
\begin{aligned}
&G = (U \gt V) \, H^b\,,\quad b=0,1\\
&H= \dyad{0}\gt I + \dyad{1}\gt Z\,.
\end{aligned}
\end{align*}
Therefore
$\phi = \frac{k\pi}{2}$, $ k \in \mathbb{Z}$.
\end{proof}

From the previous lemma, we recover the same results as in the qubit case \cite{trezzini2024renormalisationquantumcellularautomata} when considering projections in the
$Z$-basis, i.e., when $\m{M}, \m{N} \cong \mathcal{A}$. This case was extensively discussed in \cite{trezzini2024renormalisationquantumcellularautomata}. In addition, in the fermionic setting there exists an FCA that preserves a factorised  matrix algebra. In the following we focus on this novel case, recalling that Eq.~\eqref{eq:GAlg} is a necessary but not sufficient condition for $(2,\Pi_{\Lambda_0})$-renormalisation. We have the following:

\begin{lemma}\label{lem:SWFCA}
For Schumacher-Werner FCA where
\begin{align*}
G = C_{\frac{n\pi}{2}} \, (U \gt U) \, C_{\frac{n\pi}{2}},
\end{align*}
 $(2,\Pi_{\Lambda_0})$-renormalisation is possible with projections in the $Z$-basis.
\end{lemma}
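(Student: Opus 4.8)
The plan is to verify the renormalisability criterion of Proposition~\ref{prop:FDFC_ren} for one explicit projection in the $Z$-basis, the natural candidate being the rank-$2$ projection $P_{\Lambda_0}=P_e=\tfrac12(I\gt I+Z\gt Z)$ that selects the even sector; the odd-sector choice $P_o=\tfrac12(I\gt I-Z\gt Z)$ is handled verbatim.

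First I would record the action of $\mathcal G$ on the abelian subalgebra $\mathcal{A}\gt\mathcal{A}=\langle Z\gt I,\,I\gt Z\rangle$. Starting from~\eqref{eq:Geasy} and using $2\phi=n\pi$, the controlled-phase layers of $G=C_{n\pi/2}(U\gt U)C_{n\pi/2}$ collapse: after a case split on the parities of $n$ and of $U$, and absorbing a physically irrelevant overall sign in the odd-$U$ case, $\mathcal G$ is a cell-wise unitary gate possibly followed by the $Z$-diagonal controlled-$Z$ map $\mathcal H$, $H=\dyad0\gt I+\dyad1\gt Z$. Since both the $C_\phi$-layers and $\mathcal H$ are diagonal in the $Z$-basis, $\mathcal G$ maps $\mathcal{A}\gt\mathcal{A}$ onto itself and restricts there to the product $\mathcal G_L\gt\mathcal G_R$ with $\mathcal G_L=\mathcal G_R=\mathcal U|_{\mathcal{A}}$, which fixes $I$ and sends $Z\mapsto\varepsilon Z$ with $\varepsilon=+1$ if $U\in SU^0(2)$ and $\varepsilon=-1$ if $U\in SU^1(2)$; in particular $\mathcal U|_{\mathcal{A}}$ is an involution. (The first two facts are essentially Lemma~\ref{lem:WernerFCA}; the new input is only that the $\mathcal H$ factor acts trivially on $\mathcal{A}\gt\mathcal{A}$.)

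With the Schmidt decomposition $P_e=\tfrac12 I\gt I+\tfrac12 Z\gt Z$, i.e.\ $\rho_0=\lambda_0\propto I$ and $\rho_1=\lambda_1\propto Z$, I would set $\widetilde\rho_\mu:=\mathcal G_L(\rho_\mu)$ and $\widetilde\lambda_\nu:=\mathcal G_R(\lambda_\nu)$. The product structure on $\mathcal{A}\gt\mathcal{A}$ gives $\mathcal G(\rho_\mu\gt\lambda_\nu)=\mathcal G_L(\rho_\mu)\gt\mathcal G_R(\lambda_\nu)=\widetilde\rho_\mu\gt\widetilde\lambda_\nu$ for all $\mu,\nu$, which is precisely~\eqref{eq:cond}. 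Because $\mathcal G_L,\mathcal G_R$ are involutions, $\mathcal G^\dagger(P_e)=(\mathcal G_L\gt\mathcal G_R)(P_e)=\tfrac12 I\gt I+\tfrac{\varepsilon^2}{2}Z\gt Z=P_e$, and likewise $\sum_\mu\widetilde\rho_\mu\gt\widetilde\lambda_\mu=\tfrac12 I\gt I+\tfrac{\varepsilon^2}{2}Z\gt Z=P_e$, so the latter is indeed a Schmidt decomposition of $\mathcal G^\dagger(P_e)$. Proposition~\ref{prop:FDFC_ren} then yields that $\tT$ is $(2,\Pi_{\Lambda_0})$-renormalisable.

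The only laborious step is the structural reduction of $G$ at $\phi=n\pi/2$ in the first paragraph (handling the modified gates $\widetilde C_\beta$ and the sign bookkeeping); I do not expect any genuine obstacle there, since Proposition~\ref{prop:FDFC_ren} and Lemma~\ref{lem:WernerFCA} carry the weight and the verification itself is a one-line substitution. It is worth flagging, to explain why condition~\eqref{eq:GAlg} with $\mathsf M=\mathsf N=\M{1}{1}$ is not sufficient in this case, that $\mathcal H$ acts non-trivially on the full $\M{1}{1}\gt\M{1}{1}$ even though $\mathcal G$ preserves that algebra; the $Z$-basis projection $P_e$ sidesteps exactly this, and one can make the non-sufficiency explicit by checking (e.g.\ via Corollary~\ref{cor:proj_decomp}) that a rank-$2$ projection with full matrix support in each cell fails $\tT^2(\Pi)=\Pi$.
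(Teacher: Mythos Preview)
You have read the lemma as an existence claim and verified renormalisability for $P_e$ (and, by the same token, $P_o$). That direction is correct and is precisely what the paper simply cites from the qubit analysis of Ref.~\cite{trezzini2024renormalisationquantumcellularautomata}. However, the surrounding text and the paper's own proof make clear that the lemma is a \emph{restriction}: for these FCA, $(2,\Pi_{\Lambda_0})$-renormalisation is possible \emph{only} with $Z$-basis projections, even though Lemma~\ref{lem:WernerFCA} guarantees that $\mathcal G$ preserves factorisation on all of $\M{1}{1}\gt\M{1}{1}$. Your proposal does not establish this ``only'' part; the closing remark that one particular non-$Z$-basis projection fails $\tT^2(\Pi)=\Pi$ is not a proof that \emph{every} such projection fails.

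The paper's argument for the necessity direction is short. Any even rank-$2$ projection on $\M{1}{1}\gt\M{1}{1}$ not in the $Z$-basis must contain a Schmidt term $\xi\gt\eta$ with $\xi,\eta$ odd (the even$\gt$even part is exhausted by $I,Z$). Since $\tfrac12 I\gt I$ always appears in the decomposition, the algebras $\m M,\m N$ are unital, and Corollary~\ref{cor:proj_decomp} (equivalently Eq.~\eqref{eq:maxproj} in Corollary~\ref{lmm:supportalegbrasofP}) forces $\mathcal G(I\gt\xi)=I\gt\widetilde\xi$ for renormalisability. A direct check using~\eqref{eq:Geasy} then shows $\mathcal G(I\gt\xi)=E\gt\widetilde\xi$ with an even $E\neq I$ (e.g.\ for even $U$ and odd $n$ the controlled-$Z$ layer $H$ sends $I\gt X\mapsto Z\gt X$), giving the contradiction. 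Your structural reduction of $G$ already contains the ingredients for this computation; what is missing is to frame it as an obstruction valid for \emph{all} non-$Z$-basis projections rather than as a single illustrative example.
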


\begin{proof}
We know from Ref.\cite{trezzini2024renormalisationquantumcellularautomata} that these FCA are $(2,\Pi_{\Lambda_0})$-renormalisable with 
\begin{equation}\label{eq:pari}
    \Pi_{\Lambda_0}= \sum_{k,j}c_{kj}Z^k\gt Z^j\,,
\end{equation} 
where $k,j=0,1$ and $Z^0=I$. Therefore it is only left to show that $(2,\Pi_{\Lambda_0})$-renormalisability is not possible for other projections. Assume $ \Pi_{\Lambda_0}$ to be different from \Eq\eqref{eq:pari}. Then $ \Pi_{\Lambda_0}$ must contain an element $\xi\gt \eta$ in its Schmidt decomposition with $\xi,\eta \in \M{1}{1}^1$ odd elements. In the present case, the algebras $\m M$ and $\m N$ defined in Corollary \ref{lmm:supportalegbrasofP} are unital. Then by Corollary \ref{lmm:supportalegbrasofP} it should hold $\mathcal G(I\gt \xi)=I\gt \widetilde{\xi}$ for some odd $\widetilde{\xi}$. However, for the present FCA $ \mathcal G(I\gt \xi)= E \gt \widetilde{\xi}$ with a suitable even element $E\neq I$, as can be easily checked using Eqs.\eqref{eq:Geasy}.
\end{proof}

\subsection{Forking FCA}
For the \textit{Forking Automaton} (Eq.\eqref{eq:frk}) we have
\begin{align*}
\tikzfig{M2}=\tikzfig{M2fork}\ ,\ 
\tikzfig{M1}=\tikzfig{Mswap}\ .
\end{align*}
The main result of this section is the following.
\begin{proposition}\label{prp:fork_main}
The Forking automata that admit a non-trivial $(2,\Pi_{\Lambda_0})$-renormalisation are those with unitaries given by
\begin{align*}
U=\begin{cases}
\exp{\left[i\left(\frac{(1+2n)\pi}{4} Z\right)\right]},\; \text{or}\\
\frac{1}{\sqrt{2}}(X \pm Y).
\end{cases}
\end{align*}
Those FCA are $(2,\Pi_{\Lambda_0})$-renormalised to right and left shifts by choosing the projections
\begin{align*}
 P_R= \frac{1}{2}\{ ( I\pm Z)\gt   I \}\,,\;  P_L = \frac{1}{2}\{  I \gt  ( I \pm Z)\},
\end{align*}
respectively.
\end{proposition}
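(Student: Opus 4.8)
The plan is to reduce everything to the two-cell gate $\mathcal G=\mathcal M_1\mathcal M_2$ of \Eq\eqref{eq:definitionofG}, which for the Forking automaton is conjugation by $G=S_{X,Y}(U\gt U)S_{X,Y}$, where $\mathcal M_1$ is the Majorana swap, $\mathcal M_2$ combines the Majorana swap with the cell-wise unitary $\mathcal U$, and $S_{X,Y}=\tfrac1{\sqrt2}(I+X_x\gt Y_{x+1})$. Parity superselection forces $U$ to have definite grade, so up to an irrelevant phase $U=e^{i\theta Z}\in SU^0(2)$ or $U=\cos\theta\,X+\sin\theta\,Y\in SU^1(2)$. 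Since conjugation by $S_{X,Y}$ is the partial swap $X_x\mapsto Y_{x+1}$, $Y_{x+1}\mapsto -X_x$, $Y_x\mapsto Y_x$, $X_{x+1}\mapsto X_{x+1}$, I would first compute $\mathcal G$ on the four Majorana generators $X_1\gt I,\ Y_1\gt I,\ I\gt X_2,\ I\gt Y_2$ of the tile as explicit functions of $\theta$; tracking the graded signs one finds, for instance, $\mathcal G(X_1\gt I)=\pm(\cos 2\theta\, X_1\gt I+\sin 2\theta\, I\gt X_2)$, and similarly for the others.

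The second step is the necessary condition of Corollary~\ref{lmm:supportalegbrasofP} together with Proposition~\ref{prop:FDFC_ren}: renormalisability requires $\mathcal G$ to carry some factorised pair of subalgebras $\mathsf M\gt\mathsf N$---with $\mathsf M,\mathsf N$ among the four $\Z_2$-graded $C^*$-subalgebras of $\M{1}{1}$ (namely $\M{1}{1}$, $\Cl^\gamma_1(\C)$, the abelian $\mathcal A=\M{1}{1}^0$, and $\C I$)---to a factorised pair $\widetilde{\mathsf M}\gt\widetilde{\mathsf N}$. Inserting the explicit $\mathcal G$ and checking the handful of cases (in particular evaluating $\mathcal G$ on $Z_1\gt I$ and $I\gt Z_2$ to cover the abelian subalgebra $\mathcal A$, for which Lemma~\ref{cor:fact}---which governs factorisation on the \emph{full} $\M{1}{1}\gt\M{1}{1}$---does not suffice), I expect a genuinely nontrivial factorisation to survive exactly when $\cos 2\theta=0$, i.e.\ $U=e^{i(1+2n)\pi Z/4}$ in the even case and $U=\tfrac1{\sqrt2}(X\pm Y)$ in the odd case. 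For all other $\theta$, $\mathcal G$ is either cell-wise (when $\theta\in\tfrac{\pi}{2}\Z$, giving a trivial renormalisation) or spreads a cell-$1$ generator over both cells, so that only rank-$2$ projections proportional to the identity---hence no coarse-graining---are preserved.

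For the two surviving families I would then verify that $\mathcal G$ is a fermionic swap dressed with cell-wise unitaries, as in \Eq\eqref{eq:Uswap}: in both cases one obtains $\mathcal G(X_1\gt I)=\pm I\gt X_2$, $\mathcal G(Y_1\gt I)=\pm I\gt Y_2$ and likewise in the reverse direction, so that $\mathcal G=(\mathcal U'\gt\mathcal V')\circ\tS$ with $\mathcal U'$ and $\mathcal V'$ both conjugation by $Z$. Proposition~\ref{prop:diffind} then applies verbatim: the only admissible rank-$1$ projections are $P_{\Lambda_x}=\dyad u\gt I$ and $P_{\Lambda_x}=I\gt\dyad v$ with $\dyad u$ (resp.\ $\dyad v$) an idempotent eigenoperator of the corresponding cell-wise unitary. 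The primitive idempotents fixed by conjugation with $Z$ are $\tfrac12(I+Z)$ and $\tfrac12(I-Z)$, so these projections are precisely $P_R=\tfrac12\{(I\pm Z)\gt I\}$ and $P_L=\tfrac12\{I\gt(I\pm Z)\}$, and by Proposition~\ref{prop:diffind} the renormalised automaton is the right shift $\tau_+$ in the $P_R$ case and the left shift $\tau_-$ in the $P_L$ case, matching Table~\ref{tab:res}. These renormalisations are non-trivial since the Forking automaton is not a shift---its index is $1$, whereas $\ind(\tau_\pm)=2^{\pm1}$.

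I expect the main obstacle to be the case analysis of the second step: the factorisation-preservation condition must be tested against \emph{proper} subalgebra pairs $\mathsf M\gt\mathsf N$ of $\M{1}{1}\gt\M{1}{1}$---so Lemma~\ref{cor:fact} does not apply directly---and one must additionally check that in the two surviving cases this necessary condition is in fact sufficient, i.e.\ that the full requirement $\mathcal G(\rho_\mu\gt\lambda_\nu)=\widetilde\rho_\mu\gt\widetilde\lambda_\nu$ of \Eq\eqref{eq:cond} holds, which here follows from the eigenoperator structure supplied by Proposition~\ref{prop:diffind}. A secondary nuisance is the careful bookkeeping of the graded $(-1)^{g\cdot g}$ signs when composing $S_{X,Y}$ with $\mathcal U\gt\mathcal U$; by Remark~\ref{rmk:Kraus} these are physically irrelevant phases that do not affect the final projections or renormalised automata.
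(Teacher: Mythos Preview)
The proposal is correct and follows essentially the same approach as the paper. Both compute $\mathcal G$ on Majorana generators (the paper first packages this into the closed forms $G_\pm$ of Lemma~5 and then evaluates $G_\pm^\dag(\Sigma\gt I)G_\pm$ in Lemmas~\ref{lem:even_U}--\ref{lem:odd_U}; you work directly from the partial-swap action of $S_{X,Y}$), use Corollary~\ref{lmm:supportalegbrasofP} to isolate the angles $\theta=n\pi/4$ (even) and $|a|=|b|$ or $ab=0$ (odd) at which factorisation survives, separate the factorised-$G$ cases from the swap-like ones, and then invoke Proposition~\ref{prop:diffind} for the latter. The paper is more explicit in verifying that no \emph{proper} subalgebra pair $\mathsf M\gt\mathsf N$ can be preserved for other $\theta$---it checks the $Z$-generator separately and argues that the conditions obtained from $X,Y$ and from $Z$ coincide---while you correctly flag this as the delicate step without carrying it out.

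One minor inaccuracy: you assert that in the surviving swap-like cases $\mathcal G=(\mathcal U'\gt\mathcal V')\circ\tS$ with both $\mathcal U',\mathcal V'$ conjugation by $Z$. The paper's closed forms are $G_{\mathrm{even}}=iS_\leftrightarrow^\dag(Z\gt I)$ and $G_{\mathrm{odd}}=S_\leftrightarrow$, so the cell-wise dressing is by $Z\gt I$ or trivial rather than $Z\gt Z$. This does not affect the conclusion, since---as the paper notes in the paragraph closing the Forking subsection---parity superselection already forces the idempotent $\dyad\psi$ to be a $Z$-eigenprojector regardless of whether the dressing is by $I$ or $Z$, so the admissible $P_{L,R}$ come out the same. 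Both your outline and the paper leave implicit the verification that the factorised-$\mathcal G$ cases ($\theta\in\tfrac{\pi}{2}\Z$ even; $U=X,Y$ odd) yield only trivial renormalisations.
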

The proof of Proposition \ref{prp:fork_main} is given in the following split in a few lemmas.

The $G$ element associated to the gate $\mathcal G$ associated to the Forking automaton is 
\begin{align*}
\begin{aligned}
&G=S_{ X  ,   Y  }  (U\gt U) S_{  X  , Y  }= e^{\frac{\pi}{4}  X  \gt  Y  } (U\gt U) e^{\frac{\pi}{4}  X  \gt  Y  }=\\
&=\frac{1}{2}( I \gt  I +  X   \gt  Y  ) ( U\gt U)( I \gt  I +  X   \gt  Y  )=\\
&=\frac{1}{2}[U\gt U + ( X  \gt  Y  )(U\gt U) + \\
&+(U\gt U)(  X  \gt  Y  ) + ( X  \gt  Y  )(U\gt U) ( X  \gt  Y  )] .
\end{aligned}
\end{align*}
\begin{lemma}
For Forking automata 
\begin{align*}\label{eq:Gdec}
G=\begin{cases} G_{even}\coloneqq i G_+\, ( Z \gt I) \quad \text{if}\quad U=e^{i\theta   Z}\in SU^0(2)\\
G_{odd} \coloneqq G_-\quad \text{if}\quad U=\sin(\theta)X+\cos(\theta)Y \in SU^1(2)\end{cases}
\end{align*}
with
\begin{align*}
\begin{aligned}
G_{\pm}=& \pm  \sin\theta\cos\theta ( I \gt  I + Z \gt Z) + \\
&   \mp \sin^2\theta  (X  \gt  X) \mp\cos^2\theta  (Y  \gt  Y) \,.\\
\end{aligned}
\end{align*}
\end{lemma}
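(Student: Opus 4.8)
The plan is to start from the expansion already derived above,
\begin{align*}
G=\tfrac12\big[(U\gt U)+(X\gt Y)(U\gt U)+(U\gt U)(X\gt Y)+(X\gt Y)(U\gt U)(X\gt Y)\big],
\end{align*}
and to evaluate every term explicitly. First I would simplify the three non‑trivial summands using only the graded multiplication rule $(A_1\gt B_1)(A_2\gt B_2)=(-1)^{g(B_1)g(A_2)}(A_1A_2\gt B_1B_2)$ and the parity $p\coloneqq g(U)$. Since $g(X)=g(Y)=1$ this gives $(X\gt Y)(U\gt U)=(-1)^p(XU\gt YU)$, $(U\gt U)(X\gt Y)=(-1)^p(UX\gt UY)$, and $(X\gt Y)(U\gt U)(X\gt Y)=-(XUX\gt YUY)$, where the last sign is $-1$ irrespective of $p$ because $(-1)^p(-1)^{1+p}=-1$. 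I would also note that a cell‑wise unitary of $\M{1}{1}$ whose conjugation preserves the grading must be homogeneous, so (up to an irrelevant global phase, cf.\ Remark~\ref{rmk:Kraus}) the only cases are $U=e^{i\theta Z}\in SU^0(2)$ and $U=\sin\theta\,X+\cos\theta\,Y\in SU^1(2)$.

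Next I would push the odd generators through $U$ using the single‑mode Clifford relations $M^JM^K=\delta_{JK}I+i\varepsilon_{JKL}M^L$ (i.e.\ $X^2=Y^2=Z^2=I$, $XZ=-iY$, $ZX=iY$, $YZ=iX$, $ZY=-iX$, $XY=iZ$, $YX=-iZ$). In the even case $U=\cos\theta\,I+i\sin\theta\,Z$ one gets $XU=\cos\theta\,X+\sin\theta\,Y$, $UX=\cos\theta\,X-\sin\theta\,Y$, $XUX=\cos\theta\,I-i\sin\theta\,Z$, together with the analogous identities obtained by $X\leftrightarrow Y$; in the odd case $U=\sin\theta\,X+\cos\theta\,Y$ one gets $XU=\sin\theta\,I+i\cos\theta\,Z$, $YU=\cos\theta\,I-i\sin\theta\,Z$, $XUX=\sin\theta\,X-\cos\theta\,Y$, $YUY=\cos\theta\,Y-\sin\theta\,X$, and so on. Substituting and expanding bilinearly in the basis $\{I\gt I,\,I\gt Z,\,Z\gt I,\,Z\gt Z,\,X\gt X,\,X\gt Y,\,Y\gt X,\,Y\gt Y\}$, I expect large cancellations: in the even case the $I\gt I$, $Z\gt Z$, $X\gt X$, $Y\gt Y$ coefficients vanish, leaving $G=i\sin\theta\cos\theta\,(Z\gt I+I\gt Z)+\cos^2\theta\,(X\gt Y)-\sin^2\theta\,(Y\gt X)$; in the odd case the $I\gt Z$, $Z\gt I$, $X\gt Y$, $Y\gt X$ coefficients vanish, leaving $G=-\sin\theta\cos\theta\,(I\gt I+Z\gt Z)+\sin^2\theta\,(X\gt X)+\cos^2\theta\,(Y\gt Y)$, which is exactly $G_-=G_{odd}$.

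To close the even case I would check that the expression obtained equals $iG_+(Z\gt I)$: right‑multiplying $G_+=\sin\theta\cos\theta\,(I\gt I+Z\gt Z)-\sin^2\theta\,(X\gt X)-\cos^2\theta\,(Y\gt Y)$ by $i(Z\gt I)$, using $(A\gt B)(Z\gt I)=AZ\gt B$ together with $Z^2=I$, $XZ=-iY$, $YZ=iX$, reproduces the previous expression term by term. This identifies $G=G_{even}=iG_+(Z\gt I)$ and completes the argument.

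The computation is entirely mechanical, so the expected pitfall is not conceptual but clerical: the main place an error could creep in is the sign bookkeeping of the graded tensor product — the $(-1)^{g(B_1)g(A_2)}$ factors, and the minus sign in $(X\gt Y)^2=-I\gt I$ which underlies the expansion — together with not conflating the one‑mode products $M^JM^K=\delta_{JK}I+i\varepsilon_{JKL}M^L$ with ordinary commutators or anticommutators. Once these are handled consistently, the rest is just collecting coefficients.
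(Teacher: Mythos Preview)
Your proposal is correct and follows essentially the same approach as the paper's own proof: both start from the expansion of $G$ in terms of $(X\gt Y)$ and $(U\gt U)$, use the graded product rule and the parity of $U$ to reduce to the form $\tfrac12[U\gt U+(-1)^{g(U)}(XU\gt YU)+(-1)^{g(U)}(UX\gt UY)-XUX\gt YUY]$, specialise to $U=e^{i\theta Z}$ and $U=\sin\theta\,X+\cos\theta\,Y$, expand in the Majorana basis, and then identify the even result with $iG_+(Z\gt I)$ by right multiplication. Your bookkeeping of the $(-1)^{g(B_1)g(A_2)}$ signs, including $(X\gt Y)^2=-I\gt I$, matches the paper's.
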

\begin{proof}
Since $U\in SU^{0,1}(2)$ has definite parity, we have $(U\gt  I)(  I \gt  X  )= (-1)^{g(U)g( X  )}(  I \gt  X  )(U\gt   I)=(-1)^{g(U)}(  I \gt  X  )(U\gt   I)$. Then we can write $G$ as
\begin{align}
\begin{aligned}
G=&\frac{1}{2}[U \gt U +\\
&+ ( -1) ^{g(U)}(  X   U \gt  Y   U)+\\
&+ ( -1) ^{g(U)}( U  X   \gt U  Y  )+ \\
&-  X  U  X  \gt  Y   U  Y  ].
\end{aligned}
\end{align}
Inserting
\(
U=e^{i\theta Z}\in SU^0(2)
\)
in the above expression of $G$  we get 
\begin{align*}
\begin{aligned}
&G=\frac{1}{2}[e^{i\theta Z}\gt e^{i \theta Z} -  X   e^{i\theta Z}  X   \gt  Y  e^{i\theta Z}  Y   +\\
 &+\acomm{ X  \gt  Y  }{ e^{i\theta Z}\gt e^{i\theta Z}}]=\\
&=\frac{1}{2}[e^{i\theta Z}\gt e^{i\theta Z}- e^{-i\theta Z}\gt e^{-i\theta Z} + \\
 &+2 \cos^2(\theta)  X  \gt  Y   - 2 \sin^2(\theta)  Y  \gt  X  ]=\\
&= i \sin\theta\cos\theta (Z\gt   I + I \gt Z)+\\
&+ \cos^2(\theta)  X  \gt  Y   - \sin^2(\theta)  Y  \gt  X  .
\end{aligned}
\end{align*}
We can further simplify $G$ by rewriting it as
\begin{align*}
\begin{aligned}\label{eq:Geven}
&G=i [  \sin\theta\cos\theta (  I \gt   I + Z \gt Z) + \\
&-\cos^2\theta  Y  \gt  Y   - \sin^2\theta  X  \gt  X  ] (Z\gt I) \eqqcolon\\
&= i G_+\, ( Z \gt I) .
\end{aligned}
\end{align*}
On the other hand, if we insert
\(
U=\gamma\in SU^1(2)
\),
we have
\begin{align*}
\begin{aligned}
G=&\frac{1}{2}[\gamma \gt \gamma+\\
&+ ( -1) (  X   \gamma \gt  Y   \gamma)+\\
&+ ( -1)( \gamma X   \gt \gamma  Y  )+\\
&-  X  \gamma  X  \gt  Y   \gamma  Y  ].
\end{aligned}
\end{align*}
Writing $\gamma=a  X+b   Y$ with suitable $a,b\in \C$, we obtain
\begin{equation}
    \label{eq:Gfork}
\begin{aligned}
G&=a^2   X\gt   X+ b^2   Y\gt   Y - ab(  I\gt   I+   Z\gt   Z)\\
&= G_-\,.
\end{aligned}
\end{equation}
By unitarity of $\gamma$, we have $a=e^{i\phi}\sin\theta$, $b=e^{i\phi}\cos\theta$, with $\phi,\theta\in \R$.
\end{proof}
The following Lemma holds.
\begin{lemma}\label{lem:even_U}
Consider $U =e^{i\theta Z} \in SU^0(2)$, then \Eq\eqref{eq:GAlg} is satisfied \emph{iff} $\theta=\frac{n\pi}{4}$. If this is the case, we have $\m M,\, \m N,\, \widetilde{\m M},\, \widetilde{\m N}=\M{1}{1}$.
\end{lemma}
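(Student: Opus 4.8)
The previous lemma gives $G=G_{\mathrm{even}}=iG_+\,(Z\gt I)$, and inspecting each summand of $G$ shows $G^\dagger=-G$, so $\mathcal G(O)=-GOG$. The plan is to evaluate $\mathcal G$ on the local generators $Z_x\gt I$ and $\gamma_x\gt I$ (with $\gamma$ an arbitrary odd Majorana generator of $\m A_x$), to decide exactly when the resulting images are factorised, and to read off the full action of $\mathcal G$ from them.

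Using only the canonical anticommutation relations (equivalently $ZX=iY$, $ZY=-iX$, $XY=iZ$, etc.)\ and the braiding rule $(A_1\gt B_1)(A_2\gt B_2)=(-1)^{g(B_1)g(A_2)}(A_1A_2\gt B_1B_2)$, I would first check $G(Z_x\gt I)=iG_+$, so that $\mathcal G(Z_x\gt I)=G_+^{\,2}(Z_x\gt I)$ --- keeping in mind that $G_+$ is \emph{not} self-adjoint, hence $G_+^{\,2}\neq G_+^\dagger G_+=I$. The double-angle identities then give
\begin{equation*}
G_+^{\,2}=-\cos^2(2\theta)\,I+\sin^2(2\theta)\,(Z\gt Z)+\tfrac12\sin(4\theta)\,(X\gt X-Y\gt Y),
\end{equation*}
so that
\begin{equation*}
\begin{split}
\mathcal G(Z_x\gt I)={}&-\cos^2(2\theta)\,(Z_x\gt I)+\sin^2(2\theta)\,(I\gt Z_{x+1})\\
&-\tfrac{i}{2}\sin(4\theta)\,(X_x\gt Y_{x+1}+Y_x\gt X_{x+1}),
\end{split}
\end{equation*}
and an analogous computation yields $\mathcal G(\gamma_x\gt I)=\cos(2\theta)\,(\gamma'_x\gt I)-\sin(2\theta)\,(I\gt\gamma_{x+1})$ for a suitable odd $\gamma'$, with the mirror statements for $I\gt Z_{x+1}$ and $I\gt\gamma_{x+1}$.

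Now I would invoke the elementary fact that in $\M{1}{1}\gt\M{1}{1}$ a nonzero factorised element has definite partial supports: neither a sum $A_x\gt I+I\gt B_{x+1}$ with $A,B$ nonzero, nor the four-term even element above when $\sin(4\theta)\neq 0$, is of the form $\widetilde A\gt\widetilde B$. Hence $\mathcal G(\gamma_x\gt I)$ is factorised iff $\sin(2\theta)\cos(2\theta)=0$, and $\mathcal G(Z_x\gt I)$ iff $\sin(4\theta)=0$ --- in both cases iff $\theta=n\pi/4$. For necessity I would combine this with the classification of $\Z_2$-graded $C^*$-subalgebras of $\M{1}{1}$ recalled above (all of which are unital) and Corollary~\ref{lmm:supportalegbrasofP}: if \eqref{eq:GAlg} holds with one of $\m M,\m N$ nontrivial, then $\mathcal G$ sends some $\m M\gt I$ or $I\gt\m N$ onto a factorised subalgebra, and since such an $\m M$ contains $Z$ or a nonzero odd element, the displays above force $\theta=n\pi/4$.

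For sufficiency and the identification of the algebras I would substitute $\theta=n\pi/4$ back in: if $\theta=k\pi/2$ then $\sin(2\theta)=0$ and $\mathcal G$ acts cell-wise (conjugation by a Majorana generator in each cell, up to an irrelevant phase); if $\theta=(2k+1)\pi/4$ then $\cos(2\theta)=0$ and $\sin(4\theta)=0$, so $\mathcal G$ maps $\m A_x\gt I$ onto $I\gt\m A_{x+1}$ and conversely, i.e.\ it is a fermionic swap up to cell-wise unitaries. Either way $\mathcal G$ belongs to the group of factorisation-preserving gates of Lemma~\ref{cor:fact} and preserves the factorisation of the whole $\M{1}{1}\gt\M{1}{1}$, so \eqref{eq:GAlg} holds with $\m M=\m N=\widetilde{\m M}=\widetilde{\m N}=\M{1}{1}$, the maximal available choice. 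The main obstacle is the sign bookkeeping in these computations --- keeping the middle factor of $-GOG$ in place rather than collapsing it to $\pm G^2O$, applying the braiding sign $(-1)^{g(B_1)g(A_2)}$ consistently together with the Majorana relations, and not treating $G_+$ as self-adjoint; conceptually, the point is that, unlike the Schumacher--Werner case where the abelian subalgebra $\mathcal A\gt\mathcal A$ is always preserved, no intermediate subalgebra survives here, which is exactly what the non-factorisability of $\mathcal G(Z_x\gt I)$ for $\sin(4\theta)\neq 0$ encodes.
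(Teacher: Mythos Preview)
Your proof is correct and follows essentially the same strategy as the paper: compute $\mathcal G$ on single-cell generators $Z\gt I$, $\gamma\gt I$ (and their mirrors) and determine exactly when the images are factorised, then observe that the resulting condition $\sin 4\theta=0$ is the same for every generator, forcing $\m M=\m N=\M{1}{1}$. Your organisation is somewhat cleaner than the paper's --- you exploit the skew-adjointness $G^\dag=-G$ to get $\mathcal G(Z\gt I)=G_+^{\,2}(Z\gt I)$ and a single uniform formula $\mathcal G(\gamma\gt I)=\cos 2\theta\,(\gamma'\gt I)-\sin 2\theta\,(I\gt\gamma)$ for all odd $\gamma$, whereas the paper conjugates by $G_+^\dag(\cdot)G_+$ and treats the $X$ and $Y$ cases separately (obtaining superficially different conditions that then intersect to $\theta=n\pi/4$) --- but the underlying argument is the same.
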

\begin{proof}
Let us compute the action of $\mathcal G$ on the generators of any possible subalgebra of $\M{1}{1} \gt \M{1}{1} $. We have 
\begin{align*}
\begin{aligned}
&G_+^\dag ( \Sigma \gt I)G_+= \\
&[\cos^4\theta  (Y  \Sigma  Y)  +\sin^4\theta  (X  \Sigma  X)  ]\gt   I +\\
&-\cos\theta\sin^3\theta(\acomm{\Sigma}{  X  } \gt   X +i \acomm{\Sigma Z}{X} \gt Y)  +\\
&-\cos\theta^3\sin\theta(\acomm{\Sigma}{  Y  } \gt   Y+i \acomm{\Sigma Z}{Y} \gt X)\,,  \\
\\
&G_+^\dag(I \gt \Sigma) G_+=- \text{SWAP}(\mathcal G( \Sigma \gt I) )=\\
& - I \gt[\cos^4\theta  (Y  \Sigma  Y)   + \sin^4\theta  (X  \Sigma  X)  ]+\\
&+\cos\theta \sin^3\theta  (X \gt \acomm{\Sigma}{  X  }     +i Y \gt \acomm{\Sigma Z}{X} ) +\\
&+\cos^3\theta \sin\theta  (Y \gt \acomm{\Sigma}{  Y  }     +i X \gt \acomm{\Sigma Z}{Y} )\,, \\
\end{aligned}
\end{align*}
with $\Sigma= X  , Y $.
Evaluating the results for the different values of $\Sigma$ we get
\begin{align*}
\begin{aligned}
G_+^\dag (  X   \gt I)G_+ =&-\{\cos^4\theta - \sin^4\theta \} X  \gt   I+\\ &-2(\cos^3 \theta \sin \theta + \cos\theta\sin^3\theta )   I \gt  X\,,   \\
G_+^\dag (  Y   \gt I)G_+ =&-\{\sin^4\theta  - \cos^4\theta \}  Y   \gt   I+\\ &-2(\cos^3 \theta \sin \theta - \cos\theta\sin^3\theta )   I \gt  Y\,,   \\
G_+^\dag ( I \gt  X  )G_+ =&-\{\sin^4\theta -\cos^4\theta  \}  I \gt  X +  \\ &+2(\cos^3 \theta \sin \theta + \cos\theta\sin^3\theta )   X \gt  I\,,   \\
G_+^\dag ( I \gt  Y  )G_+ =&-\{\cos^4\theta -\sin^4\theta   \}   I \gt  Y +\\ &+2(\cos^3 \theta \sin \theta - \cos\theta\sin^3\theta )   Y \gt  I\,. 
\end{aligned}
\end{align*}
Note that $\cos^4\theta-\sin^4\theta=\cos 2\theta$, $\cos^3 \theta \sin \theta + \cos\theta\sin^3\theta=(1/2)\sin 2\theta$, and $\cos^3 \theta \sin \theta - \cos\theta\sin^3\theta=(1/4)\sin 4\theta$.
For the images of $  X  \gt   I,   I\gt  X  $ to be factorised, it is necessary that
\begin{align*}
\theta = \frac{(2n+1)\pi}{4} \lor \theta=\frac{n\pi}{2},
\end{align*}
while for the images of $ Y   \gt   I,   I \gt  Y   $ we need 
\begin{align*}
\theta = \frac{(2n+1)\pi}{4} \lor \theta=\frac{n\pi}{4}.
\end{align*}
The two sets of conditions have the common set of solutions for 
\begin{align}\label{eq:tsol}
\theta=\frac{n \pi}{4}\,.
\end{align}
There exists no value of $\theta$ for which only one generator, but not the others, is kept factorised. Hence, if the factorisation condition is met, then it is met for a matrix algebra $\M{1}{1}$. The argument holds for any choice of generators. Indeed for a generic odd element $\eta=a X  +b Y  $ with $a,b \in \C$, we have
\begin{align*}
\begin{aligned}
&G_+^\dag(\eta\gt I)G_+= aG_+^\dag (X  \gt   I)G_++ bG_+^\dag (Y  \gt   I)G_+=\\
&=-\cos2 \theta(a X  -b Y  )\gt   I +\\
&-   I \gt (a\sin 2\theta  X  +\frac{b}{2}\sin 4\theta  Y  ),\\
&\\
&G_+^\dag(  I\gt \eta)G_+=\\
&=-\cos2 \theta   I\gt (-a X  +b Y  ) +\\
&+ (a\sin 2\theta  X  +\frac{b}{2}\sin 4\theta  Y  )\gt   I,
\end{aligned}
\end{align*}
which lead the same factorisation conditions found in \Eq\eqref{eq:tsol}, as one can expect due to the linear independence of $\mathcal G(X \gt I)$ and $\mathcal{G}(Y \gt I)$ ($\mathcal G(I \gt X)$ and $\mathcal{G}(I \gt Y)$).
Finally, we check if \Eq\eqref{eq:tsol} is the only solution that keeps factorised the $Z$-generators. Let us consider
\begin{align*}
\begin{aligned}
G_+^\dag(Z\gt   I) G_+ =&i\,G_+^\dag (Y  \gt   I) G_+G_+^\dag (X  \gt   I) G_+=\\
&\cos^2 2\theta  (Z\gt   I) +\\
&+i\frac{1}{2}\sin 4\theta [   ( Y   \gt  X)  +\\
&+\cos 2\theta   (X   \gt  Y)  +\\
&+i \sin 2\theta (I\gt Z) ]. 
\end{aligned}
\end{align*}
Given the independence of each subspace, the only way to have three coefficients out of four equal to zero is again to have
\(\theta=\frac{n\pi}{4}\).

This concludes the proof as \Eq\eqref{eq:GAlg} requires $\mathcal G$ to map a factorised algebra $\m M \gt \m N$ into a factorised algebra $\widetilde{\m M} \gt \widetilde{\m N}$. Thus, if for instance $\mathcal G$ would keep factorised the algebra $\langle X\gt X \rangle$ but not $\langle X \gt I\rangle$, the statement of Corollary \ref{lmm:supportalegbrasofP} would not be met since $\langle X \gt X \rangle \neq\langle X\rangle \gt \langle X \rangle$. We used the symbol $\langle O \rangle$ to denote the algebra generated by the element $O$ (see Appendix \ref{app:z2algebras}).
\end{proof}
\begin{lemma}\label{lem:odd_U}
Consider $U  \in SU^1(2)$, then  \Eq\eqref{eq:GAlg} is satisfied \emph{iff}
\begin{align}
U=  X,   Y, \frac{1}{\sqrt{2}}\{  X\pm   Y\}.
\end{align}
If this is the case,  $\m M,\, \m N,\, \widetilde{\m M},\, \widetilde{\m N}=\M{1}{1}$.
\end{lemma}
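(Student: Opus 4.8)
The plan is to reduce the odd-$U$ case to the even-$U$ case already settled in Lemma~\ref{lem:even_U}. The starting point is the explicit form of $G$ for a Forking automaton with $U\in SU^1(2)$. By Eq.~\eqref{eq:Gfork}, writing $U=e^{i\phi}(\sin\theta\,X+\cos\theta\,Y)$ (forced by parity and unitarity of $U$), one has $a^2=e^{2i\phi}\sin^2\theta$, $b^2=e^{2i\phi}\cos^2\theta$, $ab=e^{2i\phi}\sin\theta\cos\theta$, hence $G=G_-=e^{2i\phi}\bigl[\sin^2\theta\,X\gt X+\cos^2\theta\,Y\gt Y-\sin\theta\cos\theta(I\gt I+Z\gt Z)\bigr]$. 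Comparing with the definition of $G_\pm$, this is exactly $G_-=-e^{2i\phi}G_+$ with the \emph{same} value of $\theta$. Therefore the conjugation map $\mathcal G(\,\cdot\,)=G_-^{\dag}(\,\cdot\,)G_-$ coincides on the nose with the map $O\mapsto G_+^{\dag}\,O\,G_+$: the global phase $e^{2i\phi}$ cancels, so $\phi$ plays no role, and $\mathcal G$ is precisely the map whose action on the generators of every $\Z_2$-graded $C^*$-subalgebra of $\M{1}{1}\gt\M{1}{1}$ was computed in the proof of Lemma~\ref{lem:even_U}.

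Given this identification, the first step is simply to quote those computations: the images $\mathcal G(X\gt I)$, $\mathcal G(Y\gt I)$, $\mathcal G(I\gt X)$, $\mathcal G(I\gt Y)$ and $\mathcal G(Z\gt I)$, $\mathcal G(I\gt Z)$ are factorised simultaneously if and only if $\theta=\tfrac{n\pi}{4}$, $n\in\Z$, and moreover whenever one generator has factorised image so do all the others, so no proper subalgebra $\m M\gt\m N$ with $\m M$ or $\m N$ strictly smaller than $\M{1}{1}$ can be preserved by $\mathcal G$ in the sense of Eq.~\eqref{eq:GAlg} unless the full $\M{1}{1}$ on each factor is preserved. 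Consequently Eq.~\eqref{eq:GAlg} holds (for some nontrivial $\m M,\m N$) iff $\theta=\tfrac{n\pi}{4}$, and in that case $\m M=\m N=\M{1}{1}$; since $\mathcal G$ is a $*$-automorphism it preserves dimensions, so $\widetilde{\m M}=\widetilde{\m N}=\M{1}{1}$ as well.

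The second step is to translate the condition $\theta=\tfrac{n\pi}{4}$ into the stated list of unitaries. Substituting $\theta=0,\tfrac{\pi}{4},\tfrac{\pi}{2},\tfrac{3\pi}{4}$ into $U=\sin\theta\,X+\cos\theta\,Y$ gives $U=Y,\ \tfrac{1}{\sqrt2}(X+Y),\ X,\ \tfrac{1}{\sqrt2}(X-Y)$, and any other integer $n$ reproduces one of these four operators up to an overall phase, which is physically irrelevant by Remark~\ref{rmk:Kraus}. This is exactly the claimed set $U=X,\,Y,\,\tfrac{1}{\sqrt2}\{X\pm Y\}$, completing the proof.

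The only genuine obstacle is verifying the reduction $G_-=-e^{2i\phi}G_+$ accurately; this is a direct matching of the definitions in Eq.~\eqref{eq:Gfork} and of $G_\pm$, after which nothing further must be computed and the lemma follows from Lemma~\ref{lem:even_U}. If one prefers not to invoke this shortcut, the alternative is to carry out the conjugation $G_-^{\dag}(\,\cdot\,)G_-$ directly on the six Majorana generators listed above, which is the very same finite Clifford-algebra calculation already performed in the proof of Lemma~\ref{lem:even_U}.
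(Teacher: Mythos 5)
Your proof is correct, but it takes a genuinely different route from the paper. The paper treats the odd-$U$ case by a fresh, direct computation: it evaluates $\mathcal G$ on the generators $X\gt I$, $Y\gt I$, $I\gt X$, $I\gt Y$ by pushing them through the two Majorana-swap layers and the $\gamma\gt\gamma$ layer, obtains the factorisation condition $\lvert a\rvert=\lvert b\rvert \lor a=0 \lor b=0$, and then identifies the surviving $G$'s explicitly as $X\gt X$, $Y\gt Y$, or the fermionic swap $S_{\leftrightarrow}$ (Eq.~\eqref{eq:gs}). You instead observe the algebraic identity that, for $U=e^{i\phi}(\sin\theta\,X+\cos\theta\,Y)$, Eq.~\eqref{eq:Gfork} gives exactly $G=-e^{2i\phi}G_+(\theta)$, so that conjugation by the odd-case $G$ coincides with the map $G_+^\dag(\cdot)G_+$ whose action on generators is precisely what the proof of Lemma~\ref{lem:even_U} computes (note the even-case proof works with the bare $G_+$, the extra factor $Z\gt I$ in $G_{\mathrm{even}}$ being irrelevant for factorisation, which makes your identification exact rather than ``up to an inessential factor''). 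The reduction is sound: the phase cancels under conjugation, the condition $\theta=n\pi/4$ translates correctly into $U=X,\,Y,\,\tfrac{1}{\sqrt2}(X\pm Y)$ up to an irrelevant phase, and the ``all generators or none'' observation from the even-case proof gives $\m M=\m N=\widetilde{\m M}=\widetilde{\m N}=\M{1}{1}$. What your shortcut buys is the elimination of a second long Clifford computation and a transparent explanation of why the even and odd families share the same critical angles; what the paper's direct route buys is the explicit operator form of $G$ in the renormalisable cases, which is then needed downstream (Eq.~\eqref{eq:Gevfork} and Proposition~\ref{prp:fork_main}) to determine the renormalised automata and admissible projections -- information your argument does not produce, though it is not required by the lemma as stated. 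One small caution: the coefficient of $X\gt I$ in the paper's even-case formula for $G_+^\dag(X\gt I)G_+$ differs by a sign from what the odd-case computation (and a direct check) yields for the same conjugation; since your argument only uses the vanishing pattern of the coefficients, this does not affect your conclusion, but you should quote those formulas only through the conditions under which the cross terms vanish rather than through their exact signs.
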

\begin{proof}
The proof is analogous to that of Lemma \ref{lem:even_U}. We provide the computations for two odd anticommuting generators of $\M{1}{1} \gt \M{1}{1}$, and one can easily retrace the steps of the former proof to obtain that the set of solutions is independent of the choice of generators.

The most general odd unitary $\gamma\in \M{1}{1}$ is $\gamma=e^{i\alpha}(\cos \beta X+\sin \beta Y)$, thus if we pick $aX+bY=\gamma \in SU^1(2)$ then $a,b \in \R$.
Note that
\begin{align*}
    \mathcal S(X,Y)(\Sigma\gt I)&=\frac{1}{2}[\acomm{\Sigma}{X}\gt Y+(\Sigma-X\Sigma X)\gt I]\,,\\
    \mathcal S(X,Y)(I \gt \Sigma)&=\frac{1}{2}[- X \gt \acomm{\Sigma}{Y}+I \gt (\Sigma-Y\Sigma Y) ]\,,
\end{align*}
for $\Sigma=X,Y$. Hence $\tS$ maps $X\gt I \mapsto I \gt Y$, $Y \gt I \mapsto Y \gt I$, $I \gt X \mapsto I \gt X$, and $I \gt Y \mapsto - X\gt I$.
Computing the action over the generators of $\M{1}{1} \gt \M{1}{1}$ we have 
\begin{align*}
\begin{aligned}
&\mathcal G (\Sigma\gt I)=-\tS( X,   Y)\{ (\gamma^\dag \gt \gamma^\dag) \mathcal S(  X,   Y) [\Sigma \gt I]     (\gamma   \gt \gamma)\}\,,\\
&\\
&\mathcal G(  X \gt   I)= -\tS  (X,   Y)[   I \gt \gamma   Y\gamma]=\\
&=-\tS  (X,   Y)[   I \gt (iaZ+bI)(aX+bY)]=\\
&=-\tS  (X,   Y)\{   I \gt [(b^2-a^2)Y+2abX]\}=\\
&=(b^2-a^2)  X \gt   I - 2ab   I \gt   X\,,\\
&\\
&\mathcal G(  Y\gt   I)= -\tS  (X,   Y)[   \gamma   Y\gamma \gt I]=\\
&=- (b^2-a^2)   Y \gt   I-2ab  I \gt   Y \,,\\
&\\
&\mathcal G(  I \gt   X)= -\tS  (X,   Y)[   I \gt \gamma   X\gamma]=\\
&=  -\tS  (X,   Y)\{   I \gt [(a^2-b^2)X+2abY]\}=\\
&=- (a^2-b^2)   I \gt   X + 2ab X \gt I\,,\\
&\\
&\mathcal G(   I\gt   Y)=\tS  (X,   Y)[   \gamma   X\gamma \gt I]=\\
&=(a^2-b^2)  I \gt   Y + 2ab  Y \gt   I\,.
\end{aligned}
\end{align*}
The generators remain factorised iff
\begin{align} \label{eq:cond_sol}
\lvert a\rvert = \lvert b\rvert \lor a=0 \lor b=0.
\end{align}
One can prove the same for a generic linear combination of generators. Following  the same argument discussed in the proof of Lemma \ref{lem:even_U}, the only solutions satisfying the conditions of Corollary~\ref{lmm:supportalegbrasofP} give $\m M\simeq\m N\simeq \M{1}{1}$. Plugging the solutions \eqref{eq:cond_sol} in the expression of $G$ as in \Eq\eqref{eq:Gfork} we get
\begin{align}\label{eq:gs}
\begin{cases}
G=\frac{1}{2}\{  X \gt   X +   Y\gt   Y \pm   I \gt   I \pm   Z\gt   Z)\,,\\
G=   Y \gt   Y\,,\\
G=  X \gt   X\,,
\end{cases}
\end{align}
respectively.
The first expression for $G$ in \eqref{eq:gs} corresponds to the operator $\exp{(\pi/4)X\gt X}\exp{(\pi/4)Y\gt Y}= S_{\leftrightarrow}$ (or minus its adjoint ), whose conjugation implements the fermionic swap, i.e., in this case $\mathcal G= \tS$ (or $-\tS^\dag$). Thus, in any case $\mathcal G$ keeps $\M{1}{1} \gt \M{1}{1} $ factorised.
\end{proof}
The previous two lemmas \ref{lem:even_U}, \ref{lem:odd_U} are summarised in the following proposition.
\begin{proposition}
The Forking Automata that satisfy \Eq\eqref{eq:GAlg} are given by:
\begin{align}
\begin{aligned}\label{eq:Gevfork}
G_{\text{even}}=\begin{cases}
   X\gt    Y \quad \text{if}\quad\theta=n\pi,\\
\\
-  Y \gt      X \quad \text{if}\quad\theta = \frac{\pi}{2}+n\pi,\\
\\
i S_{\leftrightarrow}^\dag (  Z \gt   I) \quad \text{if}\quad\theta = \frac{\pi}{4}+\frac{n\pi}{2},
\end{cases}\\
\\
G_{\text{odd}}= \begin{cases}
  X \gt   X \quad \text{if}\quad U=  X,\\
\\
  Y \gt   Y\quad\text{if}\quad U=  Y,\\
\\
S_{\leftrightarrow} \quad\text{if}\quad U= \frac{1}{\sqrt{2}}\{  X\pm   Y\}.
\end{cases}
\end{aligned}
\end{align}
\end{proposition}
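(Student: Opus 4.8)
The plan is to read the statement off Lemmas~\ref{lem:even_U} and~\ref{lem:odd_U} together with the closed-form expressions for $G$ obtained just before them: those lemmas already determine exactly which Forking automata satisfy Eq.~\eqref{eq:GAlg}, so all that remains is to substitute the admissible parameters into $G_{\text{even}}=iG_+(Z\gt I)$ (for $U\in SU^0(2)$) or $G_{\text{odd}}=G_-$ (for $U\in SU^1(2)$) and simplify. Parity superselection forces the cell-wise unitary $U$ to be homogeneous, so the even and odd branches are exhaustive and disjoint, and I would treat them in turn.

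\emph{Even branch.} By Lemma~\ref{lem:even_U}, condition~\eqref{eq:GAlg} holds iff $\theta=n\pi/4$. I would split this into residue classes modulo $\pi/2$ and substitute into $G_+=\sin\theta\cos\theta\,(I\gt I+Z\gt Z)-\cos^2\theta\,(Y\gt Y)-\sin^2\theta\,(X\gt X)$. For $\theta=n\pi$ one gets $G_+=-Y\gt Y$, and using $YZ=iX$ in $\Cl_2(\C)$ together with the graded product~\eqref{eq:grtp} one finds $G_{\text{even}}=X\gt Y$; for $\theta=\pi/2+n\pi$ one gets $G_+=-X\gt X$, and using $XZ=-iY$ one finds $G_{\text{even}}=-Y\gt X$; for $\theta=\pi/4+n\pi/2$ all three coefficients of $G_+$ equal $\pm1/2$, and a short computation shows that the conjugation induced by the resulting gate coincides with that of $iS_{\leftrightarrow}^{\dagger}(Z\gt I)$, where $S_{\leftrightarrow}=e^{\frac{\pi}{4}X\gt X}e^{\frac{\pi}{4}Y\gt Y}=\tfrac{1}{2}(I\gt I+X\gt X+Y\gt Y+Z\gt Z)$ implements the fermionic swap.

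\emph{Odd branch.} By Lemma~\ref{lem:odd_U}, condition~\eqref{eq:GAlg} holds iff $U\in\{X,\,Y,\,\tfrac{1}{\sqrt{2}}(X\pm Y)\}$. Writing $U=aX+bY$ with $a,b\in\R$ and inserting into $G_-=a^2\,X\gt X+b^2\,Y\gt Y-ab\,(I\gt I+Z\gt Z)$ (Eq.~\eqref{eq:Gfork}), the three admissible choices $(a,b)\in\{(1,0),(0,1),(\tfrac{1}{\sqrt{2}},\pm\tfrac{1}{\sqrt{2}})\}$ give $X\gt X$, $Y\gt Y$, and $\tfrac{1}{2}(X\gt X+Y\gt Y\mp I\gt I\mp Z\gt Z)$ respectively; the last coincides with $S_{\leftrightarrow}$ up to an overall sign, as noted after Eq.~\eqref{eq:gs}. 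Collecting the two branches is precisely Eq.~\eqref{eq:Gevfork}.

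The argument is essentially bookkeeping; the only delicate point is keeping track of the signs and factors of $i$ produced by the graded tensor product and the Clifford relations, and --- in the self-dual cases $\theta=\pi/4+n\pi/2$ and $U=\tfrac{1}{\sqrt{2}}(X\pm Y)$ --- recognising the resulting operator as the fermionic swap $S_{\leftrightarrow}$. This causes no ambiguity, since $G$ is defined only up to an overall phase (Remark~\ref{rmk:Kraus}) and conjugation by $S_{\leftrightarrow}$ or by $S_{\leftrightarrow}^{\dagger}$ implements the same involutive fermionic-swap automorphism.
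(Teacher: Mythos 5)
Your proposal is correct and follows essentially the same route as the paper, whose proof is precisely the instruction to insert the outcomes of Lemmas~\ref{lem:even_U} and~\ref{lem:odd_U} into $G_{\text{even}}=iG_+(Z\gt I)$ and $G_{\text{odd}}=G_-$ and simplify; your case-by-case substitutions check out. The only caveat is your closing justification: conjugation by $S_{\leftrightarrow}$ and by $S_{\leftrightarrow}^{\dagger}$ are not literally the same involution (indeed $S_{\leftrightarrow}^2=Z\gt Z$, so for half of the admissible parameters one actually obtains $-S_{\leftrightarrow}^{\dagger}$ or $-iS_{\leftrightarrow}(Z\gt I)$ rather than the expressions listed), but the two conjugations differ only by the parity automorphism implemented by $Z\gt Z$, a discrepancy on odd operators that the paper itself treats as immaterial (cf.\ Remark~\ref{rmk:Kraus} and the ``or minus its adjoint'' remark after Eq.~\eqref{eq:gs}), so this is the same looseness present in the paper's own statement and not a gap.
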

\begin{proof}
Insert the results of the previous two lemmas in the expression of $G$ in \Eq\eqref{eq:Gdec}.
\end{proof}
Finally, to get Proposition \ref{prp:fork_main} it is sufficient to apply Proposition \ref{prop:diffind} to the non-factorised $G$ in \Eq\eqref{eq:Gevfork}. Proposition \ref{prop:diffind} states that if $\mathcal G$ is a fermionic swap up to on-site gates $\mathcal V$, then its $(2,\Pi_{\Lambda_0})$-renormalisations can only be shifts with $\Pi_{\Lambda_0}=I \gt \dyad{\psi}(\dyad{\psi}\gt I)$ where $\dyad{\psi}$ is an eigenoperator of $\mathcal V$. In the present case $\mathcal V$ acts by conjugation via $V=I,Z$. Although when $V=I$ the operator $\dyad{\psi}$ can be any, the only states with definite parity---i.e. well-defined projections in $\m A_{x}$---are eigenstates of $\mathcal Z$, which leads to the projections $P_{L,R}$ in Proposition \ref{prp:fork_main}.

\subsection{Majorana Shift}

The last class of FCA which is left to analyse are the \textit{Majorana Shifts}, with $\operatorname{ind}(\sigma_\pm)=2^{\pm\frac{1}{2}}$.

We recall that the action of a Majorana shift is given by
\begin{align*}
\begin{aligned}\label{eq:Majorana}
&\sigma_\pm(   X_{x})=   Y_{x},\\
&\sigma_\pm(  Y_{x})=   X_{x \pm 1},
\end{aligned}
\end{align*}
and consequently
\begin{align*}
\sigma_\pm(  Z_{x})=   Y_{x}\gt   X_{x \pm 1},
\end{align*}
possibly followed by a local unitary acting on each cell as $\mathcal U(\cdot)= U(\cdot) U^\dag$ with $U=\exp\left(-\frac{\theta}{2} iZ\right)$, i.e.
\begin{align*}
\begin{aligned}
\mathcal{U}\circ\sigma_\pm(   X_{x})= \xi _{x},\\
\mathcal{U}\circ\sigma_\pm(  Y_{x})= \eta_{x \pm 1},\\
\mathcal{U}\circ\sigma_\pm(  Z_{x})= \xi_{x}\gt \eta_{x \pm 1},
\end{aligned}
\end{align*}
for two arbitrary odd anticommuting unitary operators $\xi,\eta$.
\begin{proposition}
    The Majorana shift is $(2,\Pi_{\Lambda_0})$-renormalisable \emph{iff} \[U=\exp\left(-i\frac{2n+1}{2}\pi Z\right)\] through the projections:
    \[
    \Pi^X=\frac{I\gt I \pm iX\gt X}{2}\,, \quad \Pi^Y=\frac{I\gt I \pm iY\gt Y}{2}.
    \]
    Those renormalisable Majorana Shifts act as:
    \[
    \mathcal{U}\circ\sigma_\pm(   X_{x})= \mp X_x\,, \quad \mathcal{U}\circ\sigma_\pm(   Y_{x})= \pm Y_{x\pm1}.
    \]
\end{proposition}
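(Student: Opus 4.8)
Set $\tT=\mathcal U\circ\sigma_\pm$. By Theorem~\ref{prp:thecoarsegrainingisfinite}, working on a regular wrapping $\mathbb L=\mathbb Z_{2m}$ for $\tT^2$, a \emph{non-trivial} $(2,\Pi_{\Lambda_0})$-renormalisation (i.e.\ $\operatorname{rank}\Pi_{\Lambda_0}<4$) exists iff $\tT_w^2(\Pi)=\Pi$ with $\Pi=\bigboxtimes_{x\in\mathbb L'}\Pi_{\Lambda_x}$ and $\Lambda_x=\{2x,2x+1\}$. The plan is to (i) compute $\tT^2$ on Majorana generators; (ii) show that the cross-tile spreading of $\tT^2$ forbids $\tT_w^2(\Pi)=\Pi$ for every non-trivial tile-product projection unless $\mathcal U$ is the distinguished on-site unitary of the statement; (iii) for that $\mathcal U$, classify the admissible projections, obtaining exactly $\Pi^X$ and $\Pi^Y$ (two signs each); (iv) read off $\tS_w=\tE^\dagger\circ\tT_w^2\circ\tE$.

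For (i), parametrising $\mathcal U$ as conjugation by $e^{-i\theta Z/2}$ on each cell gives $\tT(X_x)=\xi_x$ in cell $x$ and $\tT(Y_x)=\eta_{x\pm1}$ in cell $x\pm1$, with $\xi=\cos\theta\,Y-\sin\theta\,X$, $\eta=\cos\theta\,X+\sin\theta\,Y$; iterating,
\begin{align*}
\tT^2(X_x)&=-\sin\theta\,\xi_x+\cos\theta\,\eta_{x\pm1}, & \tT^2(Y_x)&=\cos\theta\,\xi_{x\pm1}+\sin\theta\,\eta_{x\pm2}.
\end{align*}
Thus $\tT^2$ has range $2$ on $\mathbb Z$, hence is nearest-neighbour on $\mathbb L'$, and the images of the ``interface'' generators ($X_{2x+1}$ and $Y_{2x}$ for $\pm=+$) genuinely straddle two adjacent tiles whenever $\sin\theta\cos\theta\neq0$. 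If $\sin\theta=0$ then $\tT^2=\tau_\pm$, a one-cell shift, which carries each tile to a half-shifted, non-tile region; one checks directly that no non-trivial tile-product projection is $\tau_\pm$-invariant on a wrapping (e.g.\ $\tau_\pm(\Pi)$ and $\Pi$ would project onto the eigenspaces of two \emph{anticommuting} bond families, which meet only in $0$). If $\sin\theta\cos\theta\neq0$, the support algebra of $\tT^2(\A(\Lambda_x))$ in the neighbouring tile is too large to be confined to a product of tile corners of rank $<4$ while matching the index $2^{\pm1}$, by an argument on tile support algebras analogous to the proof of Proposition~\ref{prop:diffind}. This leaves $\cos\theta=0$, i.e.\ $\mathcal U\circ\sigma_\pm(X_x)=\mp X_x$, $\mathcal U\circ\sigma_\pm(Y_x)=\pm Y_{x\pm1}$ — the unitary of the statement.

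For (iii)--(iv), at $\cos\theta=0$ one has $\tT^2(X_x)=X_x$ and $\tT^2(Y_x)=Y_{x\pm2}$; hence $\tT^2$ fixes every $X_j$ and sends $iY_{2x}Y_{2x+1}\mapsto iY_{2x+2}Y_{2x+3}$, so both $\Pi^X=\bigboxtimes_x\tfrac12(I\pm iX_{2x}X_{2x+1})$ and $\Pi^Y=\bigboxtimes_x\tfrac12(I\pm iY_{2x}Y_{2x+1})$ satisfy $\tT_w^2(\Pi)=\Pi$ (for $\Pi^Y$ using $\bigboxtimes_x\Pi^Y_{\Lambda_{x+1}}=\bigboxtimes_x\Pi^Y_{\Lambda_x}$ on the wrapping), which proves renormalisability. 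Conversely, writing an arbitrary rank-$2$ tile projection in a Majorana basis of $\A(\Lambda_0)=\A_0\gt\A_1$ and applying $\tT^2$ (which carries $Z_j=-iX_jY_j$ to the cross-tile $-iX_jY_{j\pm2}$), the demand that $\tT_w^2(\Pi)$ refactorise over tiles and equal $\Pi$ forces $\Pi_{\Lambda_0}$ to be built from $\{X_0,X_1\}$ alone or $\{Y_0,Y_1\}$ alone, i.e.\ $\Pi_{\Lambda_0}\in\{\Pi^X,\Pi^Y\}$. Finally, in the corner $\Pi^X_{\Lambda_x}\A(\Lambda_x)\Pi^X_{\Lambda_x}$ the generators $X_{2x},X_{2x+1}$ collapse to scalars while $Y_{2x}\Pi^X_{\Lambda_x},\,Y_{2x+1}\Pi^X_{\Lambda_x}$ furnish the coarse Majorana pair, on which $\tT^2$ acts as $Y_{2x}\Pi^X\mapsto Y_{2x+2}\Pi^X$, i.e.\ as the shift $\tau_\pm$ of $\mathbb L'$; dually, in the $\Pi^Y$ corner the $Y$'s collapse, the surviving $X$-generators are fixed, and $\tS$ is an on-site unitary $U'$ — reproducing the entries of Table~\ref{tab:res}.

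The main obstacle is step (ii): eliminating, in one stroke, every non-distinguished on-site unitary by controlling how the support algebra of $\tT^2(\A(\Lambda_x))$ bleeds into the neighbouring tile. The delicate points there are the bookkeeping of graded signs in the wrapped bond operators and the clean separation of the genuinely cross-tile part of $\tT^2$ from its tile-local part; once $\mathcal U$ is pinned down, step (iii) reduces to a finite linear-algebra classification.
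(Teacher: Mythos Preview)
Your approach is genuinely different from the paper's, and the gap you yourself flag in step~(ii) is real.

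The paper avoids your case split entirely. It introduces an intertwining gate $\mathcal V$ defined by $\mathcal V\circ\sigma_+=\sigma_+\circ\mathcal U$, shows that $\mathcal V$ is implemented by a \emph{two-cell} unitary $V=\exp(-\tfrac{\theta}{2}\,Y\gt X)$, and rewrites the condition $\tT_w^2(\Pi)=\Pi$ as $\mathcal U\circ\mathcal V_{\text{loc}}\circ\tau_+(\Pi)=\Pi$. Because $\tau_+$ shifts by one cell while $\Pi$ is a product over two-cell tiles, comparing the two factorisations (after peeling off alternating layers of $\mathcal V_{\text{loc}}$) forces
\[
\mathcal V_{\text{loc}}(\Pi_{\Lambda_x})=P_1\gt P_2,\qquad \mathcal V_{\text{loc}}^{-1}\circ\mathcal U^{-1}(\Pi_{\Lambda_x})=P_2\gt P_1,
\]
with $P_1\gt P_2$ necessarily one of $\tfrac12(I\pm Z)\gt I$ or $I\gt\tfrac12(I\pm Z)$ (rank count plus parity). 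One then explicitly computes $\Pi_{\Lambda_x}$ both as $\mathcal V_{\text{loc}}^{-1}(P_1\gt P_2)$ and as $\mathcal U\circ\mathcal V_{\text{loc}}(P_2\gt P_1)$; equating these two expressions in the Majorana basis forces $a=\cos\theta=0$ and simultaneously delivers $\Pi^X,\Pi^Y$. No support-algebra estimate is needed and the generic, $\sin\theta=0$, and $\cos\theta=0$ cases are handled uniformly.

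By contrast, your step~(ii) is not a proof as written. For $\sin\theta\cos\theta\neq0$ you invoke an argument ``analogous to Proposition~\ref{prop:diffind}'', but that proposition concerns FDFCs renormalising to shifts and hinges on the Margolus structure via the gate $\mathcal G$; there is no $\mathcal G$ here, and the index of $\tT^2$ being $2^{\pm1}$ does not by itself bound how $\Pi$ can absorb the spill-over into the neighbouring tile. For $\sin\theta=0$ your ``anticommuting bond families'' argument applies only to projections built from odd--odd pairs (e.g.\ $iX\gt X$); for diagonal rank-$2$ projections like $\tfrac12(I+Z\gt Z)$ the shifted bonds $Z_{2x}Z_{2x+1}$ and $Z_{2x+1}Z_{2x+2}$ \emph{commute}, so a separate argument is required. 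Both sub-cases can likely be repaired by direct calculation, but the $\mathcal V_{\text{loc}}$ trick is what makes the classification short and uniform.
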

\begin{proof}
Notice that the only renormalisation condition we can impose on such evolution is the one in Theorem \ref{prp:thecoarsegrainingisfinite}, since no circuital implementation can be invoked to simplify the condition. 
We focus, without loss of generality, on the action of the right Majorana shift $\sigma_{+}$.
Two steps of such evolution are given by:
\[
\mathcal{U}\circ\sigma_+\circ\mathcal{U}\circ\sigma_+.
\]  
Let $\mathcal{V}$ be the map such that:
\[
\mathcal{V}\circ\sigma_+=\sigma_+\circ\mathcal{U}.
\]
Since
\[
\sigma_+\circ\mathcal{U}(X_x)=aY_x+bX_{x+1}\quad\sigma_+\circ\mathcal{U}(Y_x)=a X_{x+1}-bY_x,
\]
as, without loss of generality, one can always write $\mathcal U(X)=a  X+b   Y$ and $\mathcal U(Y)= a   Y-b X $ for $a=\cos\theta,b=\sin\theta $ with $\theta \in [0,\pi] $\footnote{Let $\xi=a  X+b   Y$ and $\eta= c   X+d Y $ with $a,b,c,d \in \C$. By requiring $\{\xi, \eta\}=0$ and $\xi^2=\eta^2=I$ one gets that the matrix of coefficients $\begin{pmatrix}
    a & b\\ c& d
\end{pmatrix}$ must be orthogonal. Moreover, as $\xi,\eta$ are hermitian, the coefficients must be real.}, 
we get:
\[
\mathcal{V}(X_x)=aX_x-bY_{x-1}\quad \mathcal{V}(Y_x)=aY_x+bX_{x+1}.
\]
Consider now the renormalisation condition as per  Theorem~\ref{prp:thecoarsegrainingisfinite}
\[
\mathcal{U}\circ\sigma_+\circ\mathcal{U}\circ\sigma_+(\Pi)=\Pi.
\]
We can recast it as:
\[
\mathcal{U}\circ\mathcal{V}\circ\sigma_+^2(\Pi)=\Pi,
\]
i.e.
\begin{equation}\label{eq:majo_shift}
 \mathcal{U}\circ\mathcal{V}\circ\tau_+(\Pi)=\Pi.   
\end{equation}
By composing with $\mathcal{U}^{-1}$ on the left of both members of Eq.\eqref{eq:majo_shift} and defining $\Pi'\coloneqq\mathcal{U}^{-1}(\Pi)$, we can represent Eq.\eqref{eq:majo_shift} in a diagrammatic way as:
\[
\mathcal{V}\left(\resizebox{0.35\hsize}{!}{\tikzfig{Pichainshift}}\right)=\resizebox{0.35\hsize}{!}{\tikzfig{Pichainprime}}.
\]
Let $V=\exp\left(-\frac{\theta}{2}Y\gt X\right)$, $a=\cos \theta, b=\sin \theta$ so that 
\[
\begin{aligned}
&V(X\gt I)V^\dag =X\gt I\,, \\
&V(I\gt X)V^\dag =-bY\gt I+ aI\gt X\,,\\
&V(Y\gt I)V^\dag= aY\gt I+ bI\gt X\,,\\
&V(I\gt Y)V^\dag=I\gt Y.
\end{aligned}
\]
We have the diagrammatic equation:
\begin{equation*}\label{eq:V_action}
    \resizebox{0.8\hsize}{!}{\tikzfig{Vloc}=\tikzfig{Pichainprime}},
\end{equation*}
with $\mathcal{V}_{loc}(O)=V(O)V^\dag$ for all $O\in\m A_{x,x+1}$. Inverting the second layer of $\mathcal{V}_\text{loc}$ we get to
\begin{equation}\label{eq:V_action_fact}
    \resizebox{0.8\hsize}{!}{\tikzfig{Vlocinverted}=\tikzfig{Vlocinvertedprime}}.
\end{equation}
Comparing the factorisations on the left and right of Eq.\eqref{eq:V_action_fact}, we obtain the following necessary and sufficient conditions for our evolution to be renormalised:
\begin{equation}
\begin{split}\label{eq:avoid_nightmare}
    \mathcal{V}_{\text{loc}}(\Pi_{\Lambda_x})=P_1\gt P_2,\\
    \mathcal{V}_{\text{loc}}^{-1}\circ\mathcal{U}^{-1}(\Pi_{\Lambda_x})=P_2\gt P_1,
\end{split}
\end{equation}
i.e. the image of $\Pi_{\Lambda_x}$ under $\mathcal{V}_{\text{loc}}$ and $\mathcal{V}_{\text{loc}}^{-1}\circ\mathcal{U}^{-1}$ should be factorised, and equal up to a fermionic swap
\[
SWAP\circ\mathcal{V}_{\text{loc}}(\Pi_{\Lambda_x})=\mathcal{V}_{\text{loc}}^{-1}\circ\mathcal{U}^{-1}(\Pi_{\Lambda_x}).
\]
In general we can write any rank-2 projection as $\Pi_{\Lambda_x}=\frac1{2}\{I\gt I+T_{\text{less}}\}$, where $T_{\text{less}}$ is a traceless operator. Then the only possible solutions of \Eq\eqref{eq:avoid_nightmare} are:
\[
P_1\gt P_2 =  \frac{(I\pm Z)\gt I}{2}, \frac{I\gt (I\pm Z)}{2},
\]
where we considered $(P_1\gt P_2)^2=P_1\gt P_2$ and the parity superselection ($\Z_2$-grading of the algebra). We explicitly verify the claim below.
Since $\text{rank}(\Pi_{\Lambda_x})=2$ then in order for $P_2\gt P_2$ to have the same rank as $\Pi_{\Lambda_x}$ it must be $P_1\gt P_2= I \gt P_{\pm}$ (or $=P_{\pm} \gt I$), where $P_{\pm}=\frac{1}{2}(I\pm Z)$. Hence, the only possible solutions for Eqs.~\eqref{eq:avoid_nightmare} are the ones listed above. 

We compute $\Pi_{\Lambda_x}=\mathcal V^{-1}_{loc}(P_1 \gt P_2)$:
\begin{align}\label{eq:p1p2}
    \mathcal V^{-1}_{loc}(\frac{1}{2}(I\pm Z) \gt I)=\frac{1}{2}(I \gt I \pm (a Z\gt I +i b X\gt X))
\end{align}
\[\mathcal V^{-1}_{loc}(\frac{1}{2}I \gt (I\pm Z))=\frac{1}{2}(I \gt I \pm (a I\gt Z -i b Y\gt Y))\]
and $\Pi_{\Lambda_x}=\mathcal U \circ  \mathcal V_{loc}(P_2 \gt P_1)$:
\begin{align}\label{eq:p2p1}
\begin{aligned}
    &\mathcal U \circ  \mathcal V_{loc}(\frac{1}{2}I \gt (I\pm Z))=\\
    &=\frac{1}{2}(I \gt I \pm (a I\gt Z -i b \mathcal U(Y)\gt \mathcal U(Y)))=\\
    &=\frac{1}{2}\{I \gt I \pm [a I\gt Z +\\
    &-i b[a^2Y\gt Y-ab(Y\gt X+X \gt Y)+b^2X\gt X ]]\}
    \end{aligned}
\end{align}
As Eq.s\eqref{eq:p1p2} and \eqref{eq:p2p1} must be equal, it follows $a=0$ and $b^3=b$, i.e. $\theta=\pm\pi/2+ k\pi$.\\
Therefore the only  $(2, \Pi_{\Lambda_0})$-renormalisable right Majorana shifts are  $\mathcal U \circ\sigma_+(X_x)=\mp X_x, \,\mathcal U \circ\sigma_+(Y_x)=\pm Y_{x+1}$. Analogously, the only  $(2, \Pi_{\Lambda_0})$-renormalisable left Majorana shifts are  $\mathcal U \circ\sigma_-(X_x)=\mp X_x, \,\mathcal U \circ\sigma_-(Y_x)=\pm Y_{x-1}$. 
\end{proof}
The $(2, \Pi_{\Lambda_0})$-renormalised FCA of Majorana shifts are summarised in the following Proposition.
\begin{proposition}
    The $(2,\Pi_{\Lambda_0})$-renormalisable Majorana Shifts $\tU\circ\sigma_+$ are renormalised in a shift through $\Pi^X$ and into a trivial FCA through $\Pi^Y$.
\end{proposition}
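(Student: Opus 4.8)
Since $(\tU\circ\sigma_+)^2(\Pi)=\Pi$ has already been established in the preceding Proposition for $\Pi=\bigboxtimes_{x}\Pi^X_{\Lambda_x}$ and for $\Pi=\bigboxtimes_{x}\Pi^Y_{\Lambda_x}$, Theorem~\ref{prp:thecoarsegrainingisfinite} tells us that $\tT^2\coloneqq(\tU\circ\sigma_+)^2$ maps $\A^\Pi(\L)$ onto itself and that the renormalised FCA is the restriction $\tS_w=\tE^\dag\circ\tT^2_w\circ\tE$. So the plan is: (i) compute $\tT^2$ on the Majorana generators; (ii) exhibit, tile by tile, a set of Majorana generators of $\A^\Pi(\L)$; (iii) read off how $\tT^2$ permutes the latter. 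For (i), from $\tU\circ\sigma_+(X_x)=\mp X_x$ and $\tU\circ\sigma_+(Y_x)=\pm Y_{x+1}$ one gets at once $\tT^2(X_x)=X_x$ and $\tT^2(Y_x)=Y_{x+2}$ (the two correlated signs square to $+1$), so $\tT^2$ fixes every $X$-type Majorana and rigidly translates every $Y$-type Majorana by two sites, i.e.\ from tile $\Lambda_x$ to tile $\Lambda_{x+1}$.

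For (ii), fix $\Lambda_x=\{2x,2x+1\}$. The element $X_{2x}\gt X_{2x+1}$ is even, unitary and anti-self-adjoint, so $\Pi^X_{\Lambda_x}=\tfrac{1}{2}(I\gt I+iX_{2x}\gt X_{2x+1})$ is a rank-$2$ projection; it commutes with the tile parity $Q_{\Lambda_x}$, so the coarse-grained cell remains $\Z_2$-graded, $\A'_x\cong\M{1}{1}$. A computation with the braiding signs of $\gt$ shows that the graded commutant of $X_{2x}\gt X_{2x+1}$ in $\A(\Lambda_x)$ is the $8$-dimensional algebra generated by $\{X_{2x}\gt X_{2x+1},\,Y_{2x}\gt I,\,I\gt Y_{2x+1}\}$, whereas $Z_{2x}\gt I$ and the remaining odd monomials graded-\emph{anti}commute with $X_{2x}\gt X_{2x+1}$ and therefore compress to $0$ under $\mathcal P$. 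Since $X_{2x}\gt X_{2x+1}\mapsto\mp iI$ becomes a scalar on compression, $\A^{\Pi^X}(\L)$ over $\Lambda_x$ is the four-dimensional Clifford algebra $\Cl_2(\C)$ generated by the two odd, graded-anticommuting elements $Y_{2x}\Pi$ and $Y_{2x+1}\Pi$ (which lie in $\A^\Pi(\L)$ since $Y_{2x},Y_{2x+1}$ commute with $\Pi=\bigboxtimes_y\Pi^X_{\Lambda_y}$). The $\Pi^Y$ case is the mirror image under $X\leftrightarrow Y$: the surviving Majorana generators are $X_{2x}\Pi$ and $X_{2x+1}\Pi$, while $Y_{2x}\gt Y_{2x+1}$ collapses to a scalar.

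Step (iii) then follows by combining (i) and (ii). For $\Pi=\bigboxtimes_y\Pi^X_{\Lambda_y}$ one has $\tT^2(Y_{2x}\Pi)=Y_{2x+2}\Pi$ and $\tT^2(Y_{2x+1}\Pi)=Y_{2x+3}\Pi$, so $\tT^2$ carries the two Majorana generators of $\A'_x$ exactly onto the corresponding Majorana generators of $\A'_{x+1}$; since an automorphism is determined by its action on generators, $\tS$ is the bare cell shift $\tau_+$ of $\L'$, the direction being inherited from the right Majorana shift $\sigma_+$ ($\sigma_-$ would analogously give $\tau_-$). For $\Pi=\bigboxtimes_y\Pi^Y_{\Lambda_y}$ instead, $\tT^2(X_{2x}\Pi)=X_{2x}\Pi$ and $\tT^2(X_{2x+1}\Pi)=X_{2x+1}\Pi$, so $\tS$ fixes a generating set of every cell and is the identity FCA---a trivial renormalisation.

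The step I expect to require care is (ii): one must track the $\gt$-braiding signs when deciding which operators graded-commute with $X_{2x}\gt X_{2x+1}$ (respectively $Y_{2x}\gt Y_{2x+1}$), and, most importantly, verify by a dimension count against $\rank(\Pi_{\Lambda_x})^2=4$ that the claimed survivors generate the whole compressed algebra while the complementary odd operators really compress to $0$ because they graded-anticommute with the non-trivial summand of $\Pi_{\Lambda_x}$. Everything else is bookkeeping: once $\tT^2(\Pi)=\Pi$ and the per-tile Majorana generators are seen to map to per-tile generators as above, the identification $\A^\Pi(\L)\cong\bigboxtimes_x\A'_x$ is automatically intertwined by $\tT^2$.
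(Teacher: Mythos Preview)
Your proof is correct and follows essentially the same route as the paper's: compute $(\tU\circ\sigma_+)^2$ on the bare Majorana generators to find $X_x\mapsto X_x$ and $Y_x\mapsto Y_{x+2}$, identify the per-tile generators of the compressed algebra as the $Y$'s for $\Pi^X$ and the $X$'s for $\Pi^Y$, and then read off the action. The paper packages step~(ii) via the maps $\tE,\tE^\dag,\tP$ and simply states that $\xi_{1,\Lambda_x}=\tP^X(I_{2x}\gt Y_{2x+1})$, $\xi_{2,\Lambda_x}=\tP^X(Y_{2x}\gt I_{2x+1})$ generate $\Pi^X\A(\Lambda_x)\Pi^X\cong\M{1}{1}$, whereas you spell out the graded-commutant argument and the dimension count; your extra care there is justified and matches the paper's implicit claim.
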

\begin{proof}
    We first show that the Majorana shift is renormalised into a shift through $\Pi_{\Lambda_x}=\Pi^X$.
    The algebra
    \begin{align}\label{eq:isom_majo}
        \tP^X(\m A({\Lambda_x}))=\Pi^X\mathsf A({\Lambda_x})\Pi^X \gt \bigboxtimes_{x\neq y\in \mathbb{L'}}\Pi_{\Lambda_y} \cong \A'_{x} \gt \C\tI
    \end{align}
     is isomorphic to $\M{1}{1}$ with two odd graded-commuting generators given by
     \begin{align*}
         \xi_{1,\Lambda_x}=\tP^X(I_{2x}\gt Y_{2x+1})\,,\\
         \xi_{2,\Lambda_x}=\tP^X(Y_{2x}\gt I_{2x+1})\,,
     \end{align*}
     where $\xi_{j,\Lambda_x} \equiv\tE(\xi_{j,x}')$ with $\xi_{j,x}'$ two odd graded-commuting generators of $\m A'_x$. The isomorphism in Eq.\eqref{eq:isom_majo} is realised by $\tE, \tE^\dag$ as per Eq.\eqref{eq:Jiso3}. We recall that the renormalised FCA $\tS$ is given by $\tS=\tE^\dag \circ (\tU \circ \sigma_+)^2 \circ \tE $ and for a renormalisable FCA we have $(\tU \circ \sigma_+)^2 \circ \tP^X=\tP^X \circ (\tU \circ \sigma_+)^2$, with $\P^X=\tE \circ \tE^\dag$ and $\tE^\dag (\Pi_{\Lambda_x})=I_x \in \m A'(\L')$. Now, since $(\tU\circ\sigma_+)^2(Y_x)=Y_{x+2}$, we have 
     \begin{align*}
        (\tU\circ\sigma_+)^2\circ \tE(\xi'_{1,x}) =(\tU\circ\sigma_+)^2\circ \tP(I_{2x}\gt Y_{2x+1})=\\
        =\tP \circ(\tU\circ\sigma_+)^2(I_{2x}\gt Y_{2x+1})=\\ 
        =(\Pi_{\Lambda_x} \gt \xi_{1,\Lambda_{x+1}}) \gt \bigboxtimes_{y \in \mathbb{L'}\setminus\{x,x+1\}}\Pi_{\Lambda_y}. 
     \end{align*}
      Therefore 
      \begin{align*}
          \tS=&\tE^\dag(\Pi_{\Lambda_x} \gt \xi_{1,\Lambda_{x+1}} \gt \bigboxtimes_{y \in \mathbb{L'}\setminus\{x,x+1\}}\Pi_{\Lambda_y})=\\
          =&I \gt \xi'_{1,x+1}\,.
      \end{align*}
     Repeating the same steps for $\xi_{2,x}'$, we get $\tS(\xi_{2,x}')=I \gt \xi'_{2,x+1}$.\\
     By noticing that $(\tU\circ\sigma_+)^2(X_x)=X_{x}$, one can easily show that the Majorana shift is renormalised into one layer of cell-wise gates through $\Pi_{\Lambda_x}=\Pi^Y$ by applying the same above argument using as generators of $\tP^Y(\m A({\Lambda_x}))$
     \begin{align*}
         &\xi_{1,\Lambda_x}=\tP^Y(I_{2x}\gt X_{2x+1})\,,\\
         &\xi_{2,\Lambda_x}=\tP^Y(X_{2x}\gt I_{2x+1})\,.
     \end{align*}
\end{proof}

\section{Proof of Theorem \ref{prp:thecoarsegrainingisfinite}}\label{sec:appendiceThm1}

In this appendix we go through essentially the same proofs presented in \cite{trezzini2024renormalisationquantumcellularautomata} for  quantum cellular automata. We show that those results can be extended to the case of $\Z_2$-graded $C^*$-algebras.

\vspace{0.4cm}
We  define the map $\tT_r$ induced by $\tT_w$ on $\Aredloc$ as follows.
\begin{definition}[Induced map]
  Let $\tT$ be a QCA on $\mathbb{Z}^s$, $\mathbb{L} = \mathbb{Z}_{Nm}^s$
   be a regular wrapping, $\mathbb{L}' = \mathbb{Z}_{m}^s$ a coarse-grained lattice,
   and let $\tJ$ and $\tV$ be defined as in Eq.~\eqref{eq:Jiso3}.
   The \emph{induced map} $\tT_r:\Aredloc \to\Aredloc$ is defined as 
\begin{align}
  \begin{aligned}
  & 
\tT_r\coloneqq \tJ\circ\tT_{w}\circ\tV.    
  \end{aligned}
\label{eq:induced}
\end{align}
\end{definition}
Notice that  we have a
\textit{family} of induced maps
$\{\tT_r\}_{\mathbb L,\mathbb{L'}}$  depending on the choice of the
lattices $\mathbb{L}$ and $\mathbb L'$.  To simplify notation, we omit this dependence, since the transition rule is ultimately independent of their choice, as we will prove later.
We remark that, in general, we need to differentiate between the map induced by $N$ 
steps of the FCA $\tT$ and $N$ subsequent applications of the map induced by $\tT$. 
These two transformations are given by
\begin{align}
&(\tT^N)_r= \tJ\circ\tT_w^N\circ\tV && (\tT_r)^N= (\tJ\circ\tT_w\circ\tV)^N,
\end{align}
respectively.
In the following we will see how the induced map pins down the idea of coarse-graining.

\begin{lemma}\label{lem:CgT}
  If a FCA $\tT$ of $\m A(\mathbb{Z}^s)$ is 
$(N,\Pi_{\Lambda_0})-$renormalisable to some FCA $\tS$,
then every induced map $(\tT^N)_r$ is an FCA. In particular, $(\tT^N)_r=\tS_w$. 
\end{lemma}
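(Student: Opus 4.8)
The plan is to show that the statement follows almost immediately from the renormalisation equation~\eqref{eq:cg}, once it is combined with the coisometry identity $\tJ\circ\tV=\tI$ recorded below~\eqref{eq:Jiso3}; the only genuine content is a small amount of bookkeeping about which lattices appear.

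First I would fix a regular wrapping $\mathbb L=\mathbb Z_{Nm}^s$ for $\tT^N$ together with its coarse-grained lattice $\mathbb L'=\mathbb Z_m^s$. Since by hypothesis $\tT$ is $(N,\Pi_{\Lambda_0})$-renormalisable to the FCA $\tS$, the identity $\tT^N_w\circ\tV=\tV\circ\tS_w$ holds on $\Aredloc$, where $\tS_w$ is the wrapping of $\tS$ on $\mathbb L'$. Composing both sides on the left with $\tJ$, using $\tJ\circ\tV=\tI$, and recalling the definition~\eqref{eq:induced} of the induced map, I would obtain
\[
(\tT^N)_r=\tJ\circ\tT^N_w\circ\tV=\tJ\circ\tV\circ\tS_w=\tS_w .
\]
This is the asserted identity, and since the renormalisation equation is required to hold for \emph{every} regular wrapping it applies to the whole family $\{(\tT^N)_r\}_{\mathbb L,\mathbb L'}$. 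To conclude that $(\tT^N)_r$ is an FCA, I would invoke the Wrapping Lemma~\ref{lem:wrap}: $\tS$ is an FCA on $\mathbb Z^s$ and $\mathbb L'$ is a regular wrapping for it, so $\tS_w$ is an FCA on $\A'(\mathbb L')$. As a consistency remark, one may note that~\eqref{eq:cg} forces $\tT^N_w$ to carry $\Aplocc=\tV(\Aredloc)$ onto itself (using that $\tS_w$ is surjective), which is precisely the property that makes $\tJ\circ\tT^N_w\circ\tV$ a graded $*$-automorphism rather than merely a completely positive map, in agreement with the conclusion.

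The computation itself is trivial, so the only step needing care is verifying that the coarse-grained lattice $\mathbb L'=\mathbb Z_m^s$ is genuinely a regular wrapping for $\tS$, so that $\tS_w$ is a legitimate FCA and the appeal to the Wrapping Lemma is justified. This is the \emph{hard part}, such as it is, and it is handled by the uniform choice of coarse-graining: because it maps a nearest-neighbour scheme to a nearest-neighbour scheme (more generally, a bounded neighbourhood to one of controlled diameter), for all $m$ large enough $\mathbb L$ is regular for $\tT^N$ and $\mathbb L'$ is simultaneously regular for $\tS$ --- and the definition of renormalisability already requires~\eqref{eq:cg} to hold on precisely these wrappings.
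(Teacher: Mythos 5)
Your proposal is correct and follows essentially the same route as the paper: compose the renormalisation equation~\eqref{eq:cg} on the left with $\tJ$, use $\tJ\circ\tV=\tI$, and conclude $(\tT^N)_r=\tS_w$ for every regular wrapping. The extra remarks on the regularity of $\mathbb L'$ for $\tS$ and on surjectivity are sensible bookkeeping but do not change the argument, which the paper carries out in the same one-line computation.
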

\begin{proof}
The induced map of  $N$ steps of $\tT$ is
\begin{equation}
(\tT^N)_r= \tJ\circ\tT^N_{w}\circ\tV.
\label{eq:inducedN}
\end{equation}
Composing Eq.~\eqref{eq:cg} on the left with $\tJ$ we have
\begin{equation*}
\tJ\circ\tT^N_{w}\circ\tV=\tJ\circ\tV\circ\tS_w=\tS_w,
\end{equation*}
whence
\begin{equation*}
\tS_w=(\tT^N)_r,
\end{equation*}
for every possible choice of regular wrapping $\mathbb{L}$.
\end{proof}

Notice that, if $(\tT^N)_r$ is an FCA for every
choice of the lattice $\mathbb{L}$, the Wrapping
Lemma allows us to extend it to an FCA $\tS$ on the
quasi-local algebra $\A'(\Z^ s)$ over the infinite
lattice built from the local algebra $\m A'(\L')$. However, the transition rule of $\tS$ and the algebra $\m A'(\L')$ may, in principle, depend on the specific wrapping. 
In the following, we prove that this is not the case.

Let $\tT:\A(\Z^s) \to \A(\Z^s)$ and $\tS:\A'(\Z^s) \to \A'(\Z^s)$ be two FCA, and let us study when 
condition \eqref{eq:cg}  holds for a given
wrapping $\mathbb{L}$ of $\Z^s$. Since $\tT_w$
is a graded $*$-automorphism of a central simple finite-dimensional $\Z_2$-graded $C^*$-algebra, by the $\Z_2$-graded counterpart of the Skolem-Noether Theorem (Appendix \ref{app:z2algebras}),
we can
always find a unitary $U_\tT\in \A(\mathbb L)$ such that
\begin{equation}\label{eq:evunit}
\tT_w(A)=\Tmat^\dag A \Tmat \hspace{10pt} \forall A\in \mathsf{A}(\mathbb{L}),
\end{equation}
and the same holds for $\tS_{w}$
\begin{equation*}
\tS_w(A)=(U_\tS)^\dag A (U_\tS)\hspace{10pt} \forall A\in \mathsf{A}'(\mathbb{L}').
\end{equation*}
Moreover, Eq.\eqref{eq:inducedN} allows us to write
\begin{align}
&\Tred (A)=\Tredmat^\dag A \Tredmat\label{eq:tred}\,,\\
&\Tredmat \coloneqq \tE^\dag( \Tnmat)\,, \label{eq:tredmat}
\end{align}
where in principle $\Tredmat$ might not be a unitary operator. However, condition~\eqref{eq:cg} implies that $\Tredmat=U_\tS e^{i\phi}$ is a unitary operator for every regular wrapping $\mathbb{L}$. If this is the case, Eq.\eqref{eq:tred} guarantees that $\Tred$ is a graded $*$-automorphism.

The following Lemma identifies a necessary and sufficient condition on $U_\tT$ for Eq.\eqref{eq:cg} to hold.
\begin{lemma} \label{lem:comm}
Eq.\eqref{eq:cg} holds with $\tS_w=\mathcal V $ and $\mathcal V \coloneqq (\tT^N)_r$ being a graded $*$-automorphism \textit{iff} for all regular wrappings
\begin{equation}\label{eq:commuwrap}
  \begin{aligned}
    & \tT^N_w(\Pi)=\Pi,
  \end{aligned}
\end{equation}
or equivalently
\begin{equation}\label{eq:commuwrap2}
    \left[\Tnmat,\, \Pi\right]=0\,.
\end{equation}
\end{lemma}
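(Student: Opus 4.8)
The plan is to prove Lemma~\ref{lem:comm} by establishing the chain of equivalences: Eq.~\eqref{eq:cg} (with $\tS_w$ a graded $*$-automorphism) $\iff$ Eq.~\eqref{eq:commuwrap} $\iff$ Eq.~\eqref{eq:commuwrap2}. The last equivalence is the easy one: since $\Tnmat$ is unitary, $\tT_w^N(\Pi) = \Tnmat^\dag \Pi \Tnmat = \Pi$ holds iff $\Pi \Tnmat = \Tnmat \Pi$, i.e. $[\Tnmat,\Pi]=0$. So the substance is the first equivalence, relating the intertwining relation $\tT^N_w\circ\tV = \tV\circ\tS_w$ to invariance of the projection $\Pi$.

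For the direction ``Eq.~\eqref{eq:cg} $\Rightarrow$ Eq.~\eqref{eq:commuwrap}'': I would apply both sides of Eq.~\eqref{eq:cg} to the identity $I_{\m A'(\mathbb L')}$. Using $\tV(I_{\m A'(\mathbb L')}) = \Pi$ (recorded just after Eq.~\eqref{eq:Jiso3}), the right-hand side gives $\tV\circ\tS_w(I_{\m A'(\mathbb L')}) = \tV(I_{\m A'(\mathbb L')}) = \Pi$ since $\tS_w$, being a graded $*$-automorphism, preserves the identity. The left-hand side is $\tT^N_w(\tV(I_{\m A'(\mathbb L')})) = \tT^N_w(\Pi)$. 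Equating the two yields $\tT^N_w(\Pi)=\Pi$.

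For the converse ``Eq.~\eqref{eq:commuwrap} $\Rightarrow$ Eq.~\eqref{eq:cg}'': assuming $[\Tnmat,\Pi]=0$, I would first show that the induced map $\mathcal V \coloneqq (\tT^N)_r = \tJ\circ\tT^N_w\circ\tV$ is indeed a graded $*$-automorphism. From Eqs.~\eqref{eq:tred}--\eqref{eq:tredmat} we have $\mathcal V(A) = \Tredmat^\dag A \Tredmat$ with $\Tredmat = \tE^\dag(\Tnmat)$, so it suffices to check $\Tredmat$ is unitary; this follows because $\Tnmat$ commutes with $\Pi$, so conjugation by $\Tnmat$ restricts to a $*$-automorphism of $\Pi\m A(\mathbb L)\Pi$, which under the isomorphism $\tV$/$\tJ$ becomes conjugation by a unitary of $\m A'(\mathbb L')$ (invoking Skolem--Noether for the graded $C^*$-algebra, as already cited). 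Then, to recover Eq.~\eqref{eq:cg}, I would compute $\tV\circ\mathcal V = \tV\circ\tJ\circ\tT^N_w\circ\tV = \tP\circ\tT^N_w\circ\tV$, using $\tV\circ\tJ=\tP$; and since $\tP(O) = \Pi O \Pi$ and $\tT^N_w(\Pi)=\Pi$, for any $O'\in\m A'(\mathbb L')$ one has $\tP\circ\tT^N_w(\tV(O')) = \Pi\,\tT^N_w(\tV(O'))\,\Pi = \tT^N_w(\Pi)\,\tT^N_w(\tV(O'))\,\tT^N_w(\Pi) = \tT^N_w(\Pi\,\tV(O')\,\Pi) = \tT^N_w(\tP(\tV(O'))) = \tT^N_w(\tV(O'))$, where the last step uses $\tP\circ\tV = \tV$ (equivalently $\tJ\circ\tV = \tI$). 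This gives $\tV\circ\mathcal V = \tT^N_w\circ\tV$, which is exactly Eq.~\eqref{eq:cg} with $\tS_w = \mathcal V$.

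The main obstacle I anticipate is not any single calculation but making sure the claim ``$\Tredmat = \tE^\dag(\Tnmat)$ is unitary'' is argued cleanly: one must verify that the restriction of the inner automorphism $\mathrm{Ad}(\Tnmat)$ to the corner algebra $\Pi\m A(\mathbb L)\Pi \cong \m A'(\mathbb L')$ is genuinely a (unital, grade-preserving) $*$-automorphism of that corner---this needs $\Tnmat\Pi = \Pi\Tnmat$ precisely so that the corner is mapped into itself---and only then invoke the graded Skolem--Noether theorem to realise it by conjugation by a unitary in the smaller algebra, so that the phase ambiguity in Eq.~\eqref{eq:tredmat} ($\Tredmat = U_\tS e^{i\phi}$) is harmless. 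The rest is bookkeeping with the identities $\tJ\circ\tV=\tI$, $\tV\circ\tJ=\tP$, $\tV(I)=\Pi$, $\tP(O)=\Pi O\Pi$ already collected before Eq.~\eqref{eq:Jiso3}, together with the observation that a graded $*$-automorphism preserves the identity and adjoints. I would then note that the equivalence holds for \emph{every} regular wrapping simultaneously, as required, since the argument above is wrapping-independent and Theorem~\ref{prp:thecoarsegrainingisfinite}'s reduction to the minimal wrapping $\mathbb L_0$ follows from the Wrapping Lemma applied afterwards.
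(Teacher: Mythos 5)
Your proposal is correct, and it reaches the same two pillars as the paper (invariance of $\Pi$ under $\tT^N_w$, and recovery of Eq.~\eqref{eq:cg} from the isometry relations), but by partly different means. In the forward direction the paper composes Eq.~\eqref{eq:cg} with $\tJ$ and $\tP$ to get $\tP\circ\tT^N_w\circ\tP=\tT^N_w\circ\tP$, evaluates on the identity, and then needs a rank argument ($\Pi\Pi'\Pi=\Pi'$ with $\rank(\Pi')=\rank(\Pi)$ forces $\Pi'=\Pi$); you instead evaluate Eq.~\eqref{eq:cg} directly on $I_{\m A'(\mathbb L')}$ and use unitality of the automorphism $\tS_w$, which is shorter but leans on the automorphism hypothesis, whereas the paper's route would work even for a non-unital intertwiner. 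In the converse, your chain $\tV\circ\mathcal V=\tP\circ\tT^N_w\circ\tV=\tT^N_w\circ\tV$ is essentially the same $\Pi$-commutation manipulation as the paper's computation $\tE\circ\tS_w(X)=\dots=\tT^N_w\circ\tE(X)$, and your corner-algebra argument ($\mathrm{Ad}(\Tnmat)$ preserves $\Pi\m A(\mathbb L)\Pi\cong\m A'(\mathbb L')$, so $\mathcal V=\tJ\circ\tT^N_w\circ\tV$ is a composition of graded $*$-isomorphisms) cleanly gives the automorphism property without any explicit unitary. The one point to state more carefully is your unitarity claim for $\Tredmat=\tE^\dag(\Tnmat)$: graded Skolem--Noether only guarantees that \emph{some} unitary of $\m A'(\mathbb L')$ implements $\mathcal V$, not that the specific element $\tE^\dag(\Tnmat)$ is unitary; the paper gets that by the explicit computation $\Tredmat\Tredmat^\dag=\tE^\dag(\Tnmat\Pi\Tnmat^\dag)=\tE^\dag(\Pi)=I_{\m A'(\mathbb L')}$ via Eq.~\eqref{eq:FF}, which is what licenses the identification $U_\tS=\Tredmat$ used later. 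Since your derivation of Eq.~\eqref{eq:cg} never invokes $\Tredmat$, this does not affect the validity of your proof of the lemma; it only means your argument delivers slightly less (no concrete implementing unitary) than the paper's.
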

\begin{proof}
If \eqref{eq:cg} holds, we can compose it on the right with $\tJ$ and get
\begin{equation}
\tT^N_w\circ \tP=\tV\circ\tS_w\circ\tJ.
\end{equation}
Now, composing the r.h.s.~of the latter expression with $\tP=\tV\circ\tJ$ on the left gives
$
\tP\circ\tV\circ\tS_w\circ\tJ=\tV\circ\tS_w\circ\tJ=\tT^N_w\circ\tP,
$
while on the other hand, considering l.h.s., we obtain
\begin{align}\label{eq:pro}
    \tP\circ\tT^N_w\circ\tP=\tT^N_w\circ\tP.
\end{align}

Evaluating Eq.\eqref{eq:pro} on the identity element we get $
\Pi U_{\tT^N}^\dag\Pi U_{\tT^N}\Pi= U_{\tT^N}^\dag\Pi U_{\tT^N}$,
where $U_{\tT^N}$ is the unitary associated with $\tT^N_w$.
Since $\Pi'\coloneqq U_{\tT^N}^\dag\Pi U_{\tT^N}$ is a projection such that $\Pi\Pi'\Pi=\Pi'$, then $\Pi\Pi'=\Pi'\Pi=\Pi'$ and $\rank(\Pi')=\rank(\Pi)$. Thus we have $\Pi'=\Pi$, i.e.   
\begin{align*}
   U^\dag_{\tT^N}\Pi U_{\tT^N}=\Pi\,,
\end{align*}
which is Eq.\eqref{eq:commuwrap}.
This amounts to the commutation condition~\eqref{eq:commuwrap2}. Moreover, this condition is equivalent to 
\begin{align}\label{eq:eqv}
    \tP \circ \tT^N=\tT^N \circ \tP\,.
\end{align}
Indeed, on the one hand Eq.~\eqref{eq:eqv} clearly implies $\tP\circ\tT^N_w\circ\tP=\tT^N_w\circ\tP$; on the other hand from Eq.~\eqref{eq:commuwrap} for any $O \in \A(\L)$ we have $\Pi O\Pi=\P(O)=U_{\tT^N}^\dag\Pi U_{\tT^N} OU_{\tT^N}^\dag\Pi U_{\tT^N}=\tT^N\circ\P\circ (\tT^{N})^{-1}(O)$.

To prove the converse, 
we remind that $\Tredmat =\tE^\dag(\Tnmat)$, 
Then, exploiting~\eqref{eq:FF}, we have
\begin{align*}
&\Tredmat \Tredmat^\dag= \sum_{f,g,l,k\in S}\tF_{f\psi_f,g\psi_g}(\Tnmat)\tF_{l\psi_l,k\psi_k}(\Tnmat^\dag)\\
&= \tE^\dag( \Tnmat \Pi \Tnmat^\dag )  = \tE^\dag( \Pi \Tnmat\Tnmat^\dag)\\
&=\tE^\dag(\Pi)= \tE^\dag\circ\tE(I_{\m A'(\mathbb L')})=I_{\m A'(\mathbb{L}')},
\end{align*}
where we exploited the commutation of $\Tnmat$
with $\Pi$. The same holds for
$\Tredmat^\dag\Tredmat$ and thus $\Tredmat$ is
unitary over $\Aredloc $.  We can thus make the
identification $U_\tS=\Tredmat$, and use the fact that an isometry is a $*$-homomorphism to obtain 
the identity
\begin{align*}
    &\tE\circ\tS_w(X)=\tE(\Tredmat^\dag)\tE(X)\tE(\Tredmat)\\
    &=\tE\circ\tE^\dag(\Tnmat^\dag)\tE(X)\tE\circ\tE^\dag(\Tnmat)\\
    &=\tP(\Tnmat^\dag)\tE(X)\tP(\Tnmat)\\
    &=\Pi\Tnmat^\dag\Pi\tE(X)\Pi\Tnmat\Pi\\
    &=\Tnmat^\dag\Pi\tE(X)\Pi\Tnmat\\
    &=\tT^N_w\circ\tP\circ\tE(X)\\
    &=\tT^N_w\circ\tE\circ\tE^\dag\circ\tE(X),
\end{align*}
which finally yields
$
\tV\circ\tS_w=\tT^N_w\circ\tV,
$
with  $\tS_w=\mathcal V$ an automorphism of the  algebra 
$\mathsf{A}'(\mathbb{L}')$.
\end{proof}
Notice that Eq.~\eqref{eq:commuwrap} is necessary and sufficient for condition~\eqref{eq:cg} for~\emph{some} automorphism $\mathcal V$, which by now is not necessarily the wrapped version of a given FCA $\tS$ on $\mathbb Z^s$. Thus, we do not have yet a necessary and sufficient condition for renormalisability.

 We now show that all
maps $\mathcal V$ share the same transition rule. This has
two important consequences. The first one is that Eqs.~\eqref{eq:inducedN} 
and~\eqref{eq:commuwrap} become necessary and sufficient for $(N,\Pi_{\Lambda_0})$-renormalisability. The second and most powerful one is that we can check the
condition in Eq.\eqref{eq:commuwrap} only on
the smallest regular wrapping for
$\tT^N$.  We first provide two preliminary lemmas.
\begin{lemma}
$(\tT^N)_r$ is invariant under shifts on $\mathbb{L}'$.
\end{lemma}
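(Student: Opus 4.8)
The plan is to show directly that the induced map $(\tT^N)_r = \tJ \circ \tT_w^N \circ \tV$ commutes with the shift $\tau_x$ on the coarse-grained lattice $\mathbb{L}'$. The key observation is that translation invariance of $\tT$ on $\mathbb{L}$, combined with the fact that the tessellation $\{\Lambda_x\}$ is itself translation-invariant (each tile being related to $\Lambda_0$ by a translation $Nx$), implies that the shift on $\mathbb{L}'$ lifts to a shift by $N$ on $\mathbb{L}$. Denoting by $\tau^{\mathbb{L}'}_x$ the shift by $x \in \mathbb{L}'$ and by $\tau^{\mathbb{L}}_{Nx}$ the shift by $Nx$ on the fine lattice, I would first establish the intertwining relations
\begin{align*}
\tV \circ \tau^{\mathbb{L}'}_x = \tau^{\mathbb{L}}_{Nx} \circ \tV\,, \qquad \tau^{\mathbb{L}'}_x \circ \tJ = \tJ \circ \tau^{\mathbb{L}}_{Nx}\,,
\end{align*}
which follow because $\Pi = \bigboxtimes_{y \in \mathbb{L}'} \Pi_{\Lambda_y}$ uses the \emph{same} tile projection $\Pi_{\Lambda_0}$ at every site, so shifting a tile by $Nx$ permutes the factors of $\Pi$ without changing it, and the isometry $\tV$ is built uniformly from the local data $\tF_{\psi_f f, \psi_g g}$ on each tile.

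Given these two relations, the proof is a short computation:
\begin{align*}
(\tT^N)_r \circ \tau^{\mathbb{L}'}_x &= \tJ \circ \tT_w^N \circ \tV \circ \tau^{\mathbb{L}'}_x = \tJ \circ \tT_w^N \circ \tau^{\mathbb{L}}_{Nx} \circ \tV \\
&= \tJ \circ \tau^{\mathbb{L}}_{Nx} \circ \tT_w^N \circ \tV = \tau^{\mathbb{L}'}_x \circ \tJ \circ \tT_w^N \circ \tV = \tau^{\mathbb{L}'}_x \circ (\tT^N)_r\,,
\end{align*}
where the middle equality uses the translation invariance of $\tT_w$ (hence of $\tT_w^N$) on $\mathbb{L}$, which is inherited from the translation invariance of $\tT$ on $\mathbb{Z}^s$ together with the regularity of the wrapping. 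This is exactly the statement that $(\tT^N)_r$ is invariant under shifts on $\mathbb{L}'$.

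The main obstacle I anticipate is making the first step—the compatibility $\tV \circ \tau^{\mathbb{L}'}_x = \tau^{\mathbb{L}}_{Nx} \circ \tV$—fully rigorous, since it requires carefully unwinding the definition of $\tV$ in Eq.~\eqref{eq:Jiso3} and checking that the auxiliary choices of basis vectors $\{\psi_f\}$ can be (or are taken to be) translated consistently across tiles; a priori, a shift could conjugate $\tV$ by a collection of cell-wise unitaries $\bigboxtimes_x \tU_x$, and one must invoke the earlier remark that such a conjugation realises the same coarse-graining to absorb this ambiguity. Once that bookkeeping is settled, the rest is the three-line diagram chase above, which relies only on translation invariance of $\tT$ and on $\Pi$ being tile-periodic.
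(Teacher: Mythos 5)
Your proposal is correct and follows essentially the same route as the paper: the paper likewise lifts the coarse shift $\tau_x$ on $\mathbb{L}'$ to the shift $\tau_{Nx}$ on $\mathbb{L}$, states the intertwining relations $\tau_{Nx}^w\tV=\tV\tau_x^w$ and $\tau_x^w\tJ=\tJ\tau_{Nx}^w$ (valid because the same tile projection is used at every site), and concludes by composition using translation invariance of $\tT^N_w$. Your additional remark about the basis bookkeeping in $\tV$ is a sensible caution but does not change the argument.
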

\begin{proof}
A shift $\tau_x^w$ along the vector $x\in\mathbb{L}'$ 
corresponds to a shift $\tau_{Nx}^w$
along the vector $Nx\in\mathbb{L}$.
Since  
\begin{align*}
\begin{aligned}
\tau_{Nx}^w\tV=\tV\tau_{x}^w,\\
\tau_{x}^w\tJ=\tJ\tau_{Nx}^w,\\
\end{aligned}
\end{align*}
 and the FCA
$\tT^N$ commutes with $\tau_{Nx}^w$,
by their composition for the induced map $(\tT^N)_r$ it follows
\begin{align*}
[(\tT^N)_r,\tau_x]=0 \quad \forall x\in \mathbb{L}'.
\end{align*}
\end{proof}

\begin{lemma}
$\Tnmat$ can be written in the following form:
\begin{align*}
&\Tnmat=[(I_{\Lambda_{0}}\gt V)(U\gt I_{\Lambda_{\mathcal W}})],\\
&\mathcal W\coloneqq\mathbb{L}\setminus(\Lambda_0\cup\mathcal{N}^N_{\Lambda_0}),
\end{align*}
where $U$ is independent of the wrapping, and $\mathcal{N}^N_{\Lambda_0}$ denotes the neighbourhood of $\Lambda_0$ for $\tT^N$.
\end{lemma}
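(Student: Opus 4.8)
The plan is to exploit locality of the $N$-fold FCA $\tT^N$ together with the structure theorem (Proposition~\ref{thm:locglo}) and the wrapping construction. First I would recall that $\tT^N$ is itself an FCA, with neighbourhood scheme $\mathcal N^N = \sum_{t=1}^N \mathcal N$, and that on a regular wrapping $\mathbb L = \mathbb Z_{Nm}^s$ for $\tT^N$ its wrapped version $\tT^N_w$ is inner: $\tT^N_w(\cdot) = U_{\tT^N}^\dagger (\cdot) U_{\tT^N}$ for a unitary $U_{\tT^N} \in \mathsf A(\mathbb L)$, by Theorem~\ref{thm:skolem}. The key point is that $\tT^N_w$ maps $\mathsf A(\Lambda_0)$ into $\mathsf A(\Lambda_0 + \mathcal N^N) = \mathsf A(\mathcal N^N_{\Lambda_0})$, so the only part of the lattice on which $U_{\tT^N}$ must act nontrivially, as far as the tile $\Lambda_0$ is concerned, is $\mathcal N^N_{\Lambda_0}$; everything in $\mathcal W = \mathbb L \setminus (\Lambda_0 \cup \mathcal N^N_{\Lambda_0})$ lies outside the future of $\Lambda_0$.

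The main step is to argue that $U_{\tT^N}$ can be factored as claimed. I would proceed as follows: since $\tT^N_w(\mathsf A(\Lambda_0)) \subseteq \mathsf A(\mathcal N^N_{\Lambda_0})$ and $\tT^N_w$ restricted to the complementary tiles must graded-commute with $\tT^N_w(\mathsf A(\Lambda_0))$, one can split $U_{\tT^N}$ into a factor $U$ supported on $\Lambda_0 \cup \mathcal N^N_{\Lambda_0}$ — encoding how $\Lambda_0$ evolves — and a factor supported on the rest. More precisely, because the wrapping is regular, the neighbourhood overlaps are the same as on $\mathbb Z^s$, so the decomposition of $\tT^N$ into local transition rules $\tT^N_x = \tau_x \tT^N_0 \tau_x^{-1}$ from Proposition~\ref{thm:locglo} carries over verbatim to $\mathbb L$. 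Grouping the transition rules whose support meets $\Lambda_0$ into a single unitary $U$ (acting on $\Lambda_0 \cup \mathcal N^N_{\Lambda_0}$, hence independent of the wrapping size $m$ once $m$ is large enough), and the remaining ones into $V$ (acting on $\mathcal W$, up to the pieces already counted), one obtains $U_{\tT^N} = (I_{\Lambda_0} \gt V)(U \gt I_{\Lambda_{\mathcal W}})$ up to an irrelevant phase. I would need to check the ordering/grading signs here: the graded tensor product introduces factors $(-)^{g(\cdot)g(\cdot)}$, but since $U$ and $V$ act on disjoint regions and the braiding relation $\gc{(A\gt I)}{(I\gt B)}=0$ holds, the two factors graded-commute up to the usual sign, which is a physically irrelevant global phase (Remark~\ref{rmk:Kraus}); hence the factorisation is well-defined as a statement about the induced inner automorphism.

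The hardest part will be making precise the claim that $U$ is genuinely \emph{independent of the wrapping}. The subtlety is that $U_{\tT^N}$ itself is only defined up to a phase and, a priori, could depend on $m$; what is wrapping-independent is the transition rule $\tT^N_0$, equivalently $\tT^N_*$ via the isomorphisms $\tK_{\mathbb L}, \tH_{\mathbb L}$ of Remark~\ref{rem:abstran}. I would therefore define $U$ intrinsically from $\tT^N_*$ — as the unitary (in the finite-dimensional matrix algebra $\tK_{\mathbb Z^s}$-image) implementing $\tT^N_0$ on $\mathsf A(\mathcal N^N_{\Lambda_0})$ via Eq.~\eqref{eq:ev} — and then invoke the Wrapping Lemma (Lemma~\ref{lem:wrap}) to transport it to any regular wrapping, noting that the regularity condition (Definition~\ref{def:regnei}) is exactly what guarantees no new neighbourhood overlaps appear, so the local structure, and hence $U$, is unchanged. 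Everything on $\mathcal W$ that is not fixed by this is absorbed into $V$, which \emph{is} allowed to depend on the wrapping. I expect the bookkeeping of which tiles belong to $\Lambda_0 \cup \mathcal N^N_{\Lambda_0}$ versus $\mathcal W$, and verifying that $U \gt I_{\Lambda_{\mathcal W}}$ and $I_{\Lambda_0}\gt V$ reconstruct $U_{\tT^N}$ exactly (not just up to something supported where it shouldn't matter), to be the fussy but routine core of the argument; the conceptual content is simply locality plus the structure theorem plus wrapping regularity.
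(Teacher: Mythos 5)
There is a genuine gap at the central step. You propose to obtain the factorisation by writing $\Tnmat$ as a product of the unitaries implementing the individual transition rules $\tT^N_x=\tau_x\tT^N_0\tau_x^{-1}$ and then ``grouping'' those whose support meets $\Lambda_0$ into $U$ and the rest into $V$. Proposition~\ref{thm:locglo} does not give you such a product decomposition: it decomposes the \emph{action of the automorphism} on product operators, not the implementing unitary. The unitary of Eq.~\eqref{eq:ev} implements $\tT^N_x$ only when conjugating elements of $\m A_x$, and the product of these local unitaries over all cells does not, in general, equal $\Tnmat$ (an FCA is not a circuit of its transition-rule unitaries; shifts are the standard counterexample). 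Even granting some product decomposition, the grouping you describe would not yield factors with the claimed supports, and the ordering problems you wave away as ``phases'' are not phases, since the transition-rule unitaries of overlapping neighbourhoods do not commute. Relatedly, your description of $V$ as ``acting on $\mathcal W$, up to the pieces already counted'' is off: the lemma requires $V\in\m A(\mathbb L\setminus\Lambda_0)$, whose support overlaps that of $U$ on $\mathcal N^N_{\Lambda_0}\setminus\Lambda_0$; if $V$ lived on $\mathcal W$ the automaton would factorise into two disjoint blocks, which is false in general.

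The missing idea is a commutant argument, which is how the paper closes the proof. Since $\tT^N(\m A(\Lambda_0))\subseteq\m A(\mathcal N^N_{\Lambda_0})$, the restriction of $\tT^N$ to the tile algebra is a graded $*$-homomorphism into the finite-dimensional algebra $\m A(\Lambda_0\cup\mathcal N^N_{\Lambda_0})$, hence implemented there by a single unitary $U$ (Eq.~\eqref{eq:ev} applied to the tile, not a grouping of cell-wise rules); because the wrapped and unwrapped automata share the same transition rule, the same $U$ works on every regular wrapping, which is exactly the wrapping-independence you correctly identified in your last paragraph. Then for all $B\in\m A(\Lambda_0)$ one has $\Tnmat^\dag B\,\Tnmat=(U\gt I_{\mathcal W})^\dag B\,(U\gt I_{\mathcal W})$, so $\Tnmat(U\gt I_{\mathcal W})^\dag$ commutes with every element of $\m A(\Lambda_0)$ and therefore equals $I_{\Lambda_0}\gt V$ for some unitary $V\in\m A(\mathbb L\setminus\Lambda_0)$, giving $\Tnmat=(I_{\Lambda_0}\gt V)(U\gt I_{\mathcal W})$. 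Your proposal never makes this commutant step, and the circuit-style grouping you substitute for it does not hold.
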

\begin{proof}

By definition of FCA $\tT(\mathsf{A}(\Lambda_0))\subseteq\mathsf{A}(\mathcal{N}_{\Lambda_0})$, then $\tT^N(\mathsf{A}(\Lambda_0))\subseteq\mathsf{A}(\mathcal{N}^N_{\Lambda_0})$. 
In other words, the algebra $\tT^N(\mathsf{A}(\Lambda_0))$ is an homomorphic representation of the algebra $\mathsf{A}(\Lambda_0)$ in 
$\mathsf{A}(\mathcal N^N_{\Lambda_0})$, which means that there exists a unitary 
$U\in\mathsf{A}(\Lambda_0\cup\mathcal N^N_{\Lambda_0})$ such that, for every 
$O\in\mathsf A(\Lambda_0)$
\begin{align*}
&\tT^N(O)=U^\dag(O\gt I_{\mathcal{N}^N_{\Lambda_0}\setminus\Lambda_0})U\in\mathsf A(\Lambda_0\cup\mathcal{N}_{\Lambda_0}).
\end{align*}
For any $B\in \A(\mathbb L)$ the wrapped evolution $\tT^N_w$ is implemented over the wrapping by a unitary operator (Eq.~\eqref{eq:evunit})
\begin{align}
\tT^N_w(B)=\Tnmat^\dag(B\gt I_{\mathbb{L}\setminus\mathcal{N}_B})\Tnmat.
\end{align}
Since $\tT$ and $\tT_w$ share the same transition rule, we then have
\begin{align*}
\tT_w^N(B)&=\Tnmat^\dag(B\gt I_{\mathbb{L}\setminus\Lambda_0})\Tnmat\\
&=
U^\dag(B\gt I_{\mathcal{N}^N_{\Lambda_0}\setminus\Lambda_0})U\gt I_{\mathbb{L}\setminus(\Lambda_0\cup\mathcal{N}^N_{\Lambda_0})}.
\end{align*}
One can easily prove that this implies
\begin{equation*}
\Tnmat(U\gt I_{\mathbb{L}\setminus(\Lambda_0\cup\mathcal{N}^N_{\Lambda_0})})^\dag=(I_{\Lambda_0}\gt V),
\end{equation*}
for some unitary $U\in\mathsf A(\mathbb{L}\setminus \Lambda_0)$, and finally
\begin{equation}\label{eq:decompun}
\Tnmat=(I_{\Lambda_0}\gt V)(U\gt I_{\mathbb{L}\setminus(\Lambda_0\cup\mathcal{N}^N_{\Lambda_0})}).
\end{equation}
\end{proof}
We now prove that the condition \eqref{eq:commuwrap} is independent of the wrapping.

\begin{lemma} If condition in \eqref{eq:commuwrap} is satisfied for a regular wrapping $\mathbb{L}$ for $\tT^N$, then it is true for every other regular wrapping.
\end{lemma}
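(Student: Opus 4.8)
The plan is to show that condition~\eqref{eq:commuwrap}, although written as a single equation on the whole wrapped algebra, is equivalent to a finite family of relations, each supported on only three consecutive tiles and each a translate of one fixed relation that lives in a fixed finite-dimensional algebra not depending on the wrapping. Granting this, the statement follows at once: if \eqref{eq:commuwrap} holds for one regular wrapping then that fixed relation holds, and since on any other regular wrapping \eqref{eq:commuwrap} is equivalent to the very same family of translates of the same relation, it holds there as well.

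First I would localise the condition. Because $\tT^N_w$ is a $\Z_2$-graded $*$-automorphism it maps the projection $\Pi$ to a projection of the same rank, and because the $\Pi_{\Lambda_y}$ are even and pairwise commute, $\Pi=\prod_{y}\Pi_{\Lambda_y}$ is their meet in the projection lattice. Hence $\tT^N_w(\Pi)=\Pi$ is equivalent to $\tT^N_w(\Pi)\le\Pi_{\Lambda_y}$ for every tile $y\in\mathbb L'$ (for the nontrivial direction one uses that an element below all the $\Pi_{\Lambda_y}$ is below their meet $\Pi$, and that a sub-projection of equal rank coincides with the projection). Applying the automorphism $\tT^{-N}_w$ this reads $\Pi\le R_y$ for all $y$, where $R_y\coloneqq\tT^{-N}_w(\Pi_{\Lambda_y})$. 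Now I would use that the $N$-step/$N$-cell coarse-graining leaves $\tT^{-N}$ nearest-neighbour at the level of tiles, so that $R_y$ is supported on the three tiles $\Lambda_{y-1},\Lambda_y,\Lambda_{y+1}$ and, by translation invariance, is the translate of $R_0=\tT^{-N}_0(\Pi_{\Lambda_0})\in\A(\Lambda_{-1}\cup\Lambda_0\cup\Lambda_1)$. Splitting $\Pi=A_y\,B_y$ with $B_y\coloneqq\Pi_{\Lambda_{y-1}}\Pi_{\Lambda_y}\Pi_{\Lambda_{y+1}}$ (supported on those three tiles) and $A_y$ the nonzero projection on all the remaining tiles, and noting that $R_y$ acts trivially outside those tiles while every operator involved is even, one gets $\Pi R_y=A_y\gt(B_yR_y)$; since a product of operators on disjoint finite regions vanishes only if one factor does, $\Pi\le R_y$ is equivalent to $B_y\le R_y$, i.e.
\[
\Pi_{\Lambda_{y-1}}\Pi_{\Lambda_y}\Pi_{\Lambda_{y+1}}\,\tT^{-N}_w(\Pi_{\Lambda_y})=\Pi_{\Lambda_{y-1}}\Pi_{\Lambda_y}\Pi_{\Lambda_{y+1}}.
\]

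Each of these relations is the translate by $Ny$ of the $y=0$ one, which is an identity inside the fixed finite-dimensional algebra $\A(\Lambda_{-1}\cup\Lambda_0\cup\Lambda_1)$ relating $\Pi_{\Lambda_0}$, its translates $\Pi_{\Lambda_{\pm1}}$, and $\tT^{-N}_0(\Pi_{\Lambda_0})$. On any regular wrapping the three tiles $\Lambda_{-1},\Lambda_0,\Lambda_1$ are genuinely distinct (this is part of regularity, since one step of the coarse-grained dynamics spreads by exactly one tile), so \eqref{eq:commuwrap} on that wrapping is \emph{equivalent} to this single $y=0$ identity. It therefore only remains to know that $\tT^{-N}_0(\Pi_{\Lambda_0})$ is one and the same element of $\A(\Lambda_{-1}\cup\Lambda_0\cup\Lambda_1)$ for every regular wrapping; this I would get from the local--global correspondence (Prop.~\ref{thm:locglo}) and the Wrapping Lemma (Lemma~\ref{lem:wrap}), using that a regular wrapping for $\tT^N$ is, by the reflection symmetry of the neighbourhood-overlap condition, also regular for $\tT^{-N}$, so that $\tT^{-N}$ has the same transition rule on $\mathbb L$ as on $\mathbb Z^s$.

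The step I expect to be the main obstacle is precisely this last identification: one must make sure that ``regular for $\tT^N$'' really does guarantee that the single local image $\tT^{-N}_0(\Pi_{\Lambda_0})$ is unchanged under rewrapping (equivalently, that the transition rule of $\tT^{-N}$ is wrapping-independent on such lattices), and that the three-tile window hosting the reduced relation never folds onto itself. Both hinge on the definition of regular wrapping together with the fact that, by the choice of a uniform $N$-step/$N$-cell coarse-graining, the neighbourhood of a tile for $\tT^{\pm N}$ consists of exactly its nearest tiles; once these two points are secured, the remainder is the routine projection algebra sketched above.
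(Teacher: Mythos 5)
Your localisation strategy (reducing $\tT^N_w(\Pi)=\Pi$, via the projection order and the inverse automorphism, to translates of a single relation on a fixed window of tiles) is genuinely different from the paper's argument, and the projection-lattice manipulations themselves are sound. However, the decisive step is exactly where the proposal is incomplete. Both the three-tile support of $R_y=\tT^{-N}_w(\Pi_{\Lambda_y})$ and the identification of $\tT^{-N}_w(\Pi_{\Lambda_0})$ with one fixed, wrapping-independent element of a local algebra are properties of the \emph{inverse} automaton, and you derive them from the claim that a wrapping regular for $\tT^N$ is automatically regular for $\tT^{-N}$ ``by reflection symmetry of the neighbourhood-overlap condition.'' That claim presupposes that the neighbourhood scheme of $\tT^{-N}$ is $-\mathcal N$, the reflection of that of $\tT^N$; but the inverse of an FCA, while again an FCA, in general has a neighbourhood that is not the reflected one and can be strictly larger, and regularity in Definition~4 is defined relative to the neighbourhood scheme of the automaton in question. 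Hence on a wrapping that is only assumed regular for $\tT^N$ (in particular the smallest one, which is the case the theorem's usefulness rests on) you can guarantee neither that $R_y$ lives on the adjacent tiles nor that the wrapped inverse agrees with the unwrapped inverse's transition rule, so the ``fixed relation'' you transfer between wrappings is not known to be the same object on both sides. Since wrapping-independence is the entire content of the lemma, this is not a deferrable technicality; as written the proof has a genuine gap. (A secondary, cosmetic point: the theorem is stated for general $\mathcal N$ on $\Z^s$, so the three-tile window should in any case be replaced by the tile-level neighbourhood.)

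By contrast, the paper never touches the inverse: it uses the preceding lemma's decomposition $U_{\tT^N}=(I_{\Lambda_0}\gt V)(U\gt I)$, in which the unitary $U$ encoding the \emph{forward} transition rule on $\Lambda_0\cup\mathcal N^N_{\Lambda_0}$ is wrapping-independent, conjugates the commutation relation by $(I_{\Lambda_0}\gt V)$, and compares the two differently factorised sides; this extracts a local identity involving only $U$, $\Pi_{\Lambda_0}$ and a unitary $Z\in\A(\mathcal M)$, manifestly the same on every regular wrapping, and translation invariance then restores the global condition. Your approach could likely be repaired in the same spirit, e.g.\ by pushing your localised relation forward through $\tT^N_w$ so that only images of tile projections under the forward transition rule appear (these \emph{are} wrapping-independent by the Wrapping Lemma, using only regularity for $\tT^N$), while checking that the resulting finite window does not fold onto itself on the given wrapping; but that extra argument, which you yourself flag as the main obstacle, is precisely what is missing.
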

\begin{proof}
Let us divide the projection $\Pi$ over $\mathbb{L}$ in
\begin{equation*}
\begin{split}
&\Pi=\Pi_{0}\gt\Pi_{{\mathcal{M}}}\gt \Pi_{\mathcal W},\\
&{\mathcal M}\coloneqq\mathcal{N}\setminus\{0\},\\
&\mathcal W\coloneqq\mathbb{L}\setminus(\mathcal{N}\cup \{0\}),
\end{split}
\end{equation*} 
with the convention:
\begin{align*}
\Pi_{{R}}=\left(\bigboxtimes_{x\in R}\Pi_{{x}}\right).
\end{align*}
The commutation relation~\eqref{eq:commuwrap} can then be expressed as
\begin{equation}
\begin{split}
&\Tnmat(\Pi_{0}\gt\Pi_{\mathcal M}\gt \Pi_{\mathcal W})\Tnmat^\dag=\Pi_{0}\gt\Pi_{\mathcal M}\gt \Pi_{\mathcal W}.
\end{split}
\label{eq:Tpi}
\end{equation}

We can represent Eq.~\eqref{eq:Tpi} diagrammatically, with the decomposition~\eqref{eq:decompun} of $\Tnmat$, as follows
\begin{equation}
\tikzfig{TnPi} = \tikzfig{PiLact}\ .
\end{equation}
Exploiting the decompositon in Eq.\eqref{eq:decompun} and conjugating both members of Eq.~\eqref{eq:Tpi} with $(I_{\Lambda_{0}}\gt V)^\dag$ we get to:
\begin{equation*}
\begin{split}
&(U \gt I_{\Lambda_{\mathcal W}})(\Pi_{0}\gt\Pi_{\mathcal M}\gt \Pi_{\mathcal{W}})(U\gt I_{\Lambda_{\mathcal W}})^\dag=\\
&=(I_{\Lambda_{0}}\gt V)^\dag(\Pi_{0}\gt\Pi_{\mathcal M}\gt \Pi_{\mathcal{W}})(I_{\Lambda_{0}}\gt V).
\end{split}
\end{equation*}
Diagrammatically:
\begin{equation}\label{eq:diagfact}
\tikzfig{TnPiLeft}=\tikzfig{TnPiRight}\ .
\end{equation}
The two members of Eq.~\eqref{eq:diagfact} are factorised in two different ways. This implies that both factorisations must hold, and then
the transformation $V$ on the right produces a factorised projection map, i.e.
\begin{equation*}
\begin{split}
&(I_0\gt V)^\dag(\Pi_{0}\gt\Pi_{\mathcal M}\gt \Pi_{{\mathcal W}})(I_0\gt V )=\\
&=\Pi_{0}\gt Z^\dag\Pi_{{\mathcal{M}}}Z\gt \Pi_{{\mathcal W}},
\end{split}
\end{equation*}
for some unitary $Z\in\mathsf A(\mathcal M)$. In particular, we have the following diagrammatic identity
\begin{equation*}
\tikzfig{TnPish}=\tikzfig{TnPiZen}\ .
\end{equation*}
The latter is independent of the lattice $\mathbb{L}$, and proves that the image of 
$\Pi$ under $\Tnmat \rvert_\Pi$ is a projection of the form 
$\widetilde\Pi_{{\mathbb{L}'\setminus\{0\}}}\gt\Pi_{0}$. Finally, by translation 
invariance we can conclude that for every regular wrapping $\mathbb{L}$ for $\tT^N$ one has
\begin{align*}
\Tnmat^\dag \Pi \Tnmat=\Pi\,.
\end{align*}

\end{proof}
This immediately implies the following corollary.
\begin{Corollary}
If condition in \eqref{eq:commuwrap} is satisfied for a regular wrapping $\mathbb{L}$ for $\tT^N$, then
\begin{align}
\begin{split}
&(U\gt I_\mathcal W)^\dag(\Pi_{0}\gt Z^\dag\Pi_{{\mathcal{M}}}Z\gt \Pi_{{\mathcal W}})(U\gt I_\mathcal W)\\
&=\Pi_{0}\gt \Pi_{{\mathcal{M}}}\gt \Pi_{{\mathcal W}}.
\end{split}
\end{align}
\end{Corollary}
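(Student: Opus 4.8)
The plan is to obtain the statement simply by rearranging the identities already assembled in the proof of the preceding lemma; no new input is required. I would start from the hypothesis $\Tnmat^\dag\Pi\Tnmat=\Pi$, equivalently $\Tnmat\Pi\Tnmat^\dag=\Pi$, and substitute both the factorised form $\Tnmat=(I_{\Lambda_0}\gt V)(U\gt I_{\mathcal W})$ and the tensor decomposition $\Pi=\Pi_0\gt\Pi_{\mathcal M}\gt\Pi_{\mathcal W}$ established there. Conjugating the commutation relation by $(I_{\Lambda_0}\gt V)^\dag$ — that is, multiplying on the left by $(I_{\Lambda_0}\gt V)^\dag$ and on the right by $(I_{\Lambda_0}\gt V)$ — separates the $U$ factor from the $V$ factor and yields
\[
(U\gt I_{\mathcal W})\,\Pi\,(U\gt I_{\mathcal W})^\dag=(I_{\Lambda_0}\gt V)^\dag\,\Pi\,(I_{\Lambda_0}\gt V).
\]

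Next I would invoke what was shown in that same proof, namely that the right-hand side above is a factorised projection of the form $\Pi_0\gt Z^\dag\Pi_{\mathcal M}Z\gt\Pi_{\mathcal W}$ for a suitable unitary $Z\in\mathsf A(\mathcal M)$ (this is forced because the left-hand side is manifestly factorised across the splitting $\Lambda_0\,|\,\mathcal M\,|\,\mathcal W$, while $I_{\Lambda_0}\gt V$ acts trivially on $\Lambda_0$). Combining the two facts gives
\[
(U\gt I_{\mathcal W})\,(\Pi_0\gt\Pi_{\mathcal M}\gt\Pi_{\mathcal W})\,(U\gt I_{\mathcal W})^\dag=\Pi_0\gt Z^\dag\Pi_{\mathcal M}Z\gt\Pi_{\mathcal W}.
\]
Since $U$ is unitary, so is $U\gt I_{\mathcal W}$; multiplying the last equality on the left by $(U\gt I_{\mathcal W})^\dag$ and on the right by $U\gt I_{\mathcal W}$, and using $(U\gt I_{\mathcal W})^\dag(U\gt I_{\mathcal W})=I_{\m A(\mathbb L)}$, produces
\[
(U\gt I_{\mathcal W})^\dag\,(\Pi_0\gt Z^\dag\Pi_{\mathcal M}Z\gt\Pi_{\mathcal W})\,(U\gt I_{\mathcal W})=\Pi_0\gt\Pi_{\mathcal M}\gt\Pi_{\mathcal W},
\]
which is exactly the assertion.

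There is no real obstacle: the corollary is a pure bookkeeping consequence of the chain of displays used to prove the wrapping-independence of condition~\eqref{eq:commuwrap}, so nothing beyond that proof is needed. The only points to watch are (i) applying each conjugation on the correct side so that the trivial tensor factors on $\Lambda_0$ and $\mathcal W$ stay in place, and (ii) recalling that the identification of the unitary $Z\in\mathsf A(\mathcal M)$ — together with the argument that extracts it from the factorisation — is already contained in the lemma and need not be re-derived.
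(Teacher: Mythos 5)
Your proposal is correct and follows essentially the same route the paper intends: the corollary is stated as an immediate consequence of the preceding lemma, and your derivation — substituting $\Tnmat=(I_{\Lambda_0}\gt V)(U\gt I_{\mathcal W})$ into $\Tnmat\Pi\Tnmat^\dag=\Pi$, conjugating by $(I_{\Lambda_0}\gt V)^\dag$, invoking the lemma's identification of the right-hand side with $\Pi_0\gt Z^\dag\Pi_{\mathcal M}Z\gt\Pi_{\mathcal W}$, and then conjugating by $(U\gt I_{\mathcal W})^\dag$ — is exactly the rearrangement of the lemma's displayed identities that the paper has in mind.
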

The above result allows us to show that condition~\eqref{eq:commuwrap} is necessary and sufficient for $(N,\Pi_{\Lambda_0})$-renormalisability.
\begin{lemma}
Let $\mathbb L$ and $\widetilde{\mathbb L}$  be two regular wrappings for $\tT^N$, and $\mathbb L'$ and $\widetilde{\mathbb L}'$ the associated coarse-grained lattices. Let condition~\eqref{eq:commuwrap} be satisfied for both. Then the transition rule of the induced map $(\tT^N)_r$ over 

$\A'(\mathbb{L}')$ and $\A'(\widetilde{\mathbb L}')$ is the same.
\end{lemma}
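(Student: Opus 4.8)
The plan is to write down an explicit closed form for the restriction of $(\tT^N)_r$ to the cell algebra $\A'_0$ at the origin of $\mathbb L'$, and then simply read off that every ingredient of this formula is insensitive to the chosen regular wrapping. Once \eqref{eq:commuwrap} holds, Lemma~\ref{lem:comm} already gives that $(\tT^N)_r$ is a graded $*$-automorphism of $\A'(\mathbb L')$; the preceding lemma shows it commutes with all shifts of $\mathbb L'$; and the computation below shows it maps $\A'_0$ into $\A'(\mathcal N')$, where $\mathcal N'=\{0\}\cup\mathcal M$ is the coarse-grained neighbourhood of the origin tile for $\tT^N$. Hence $(\tT^N)_r$ is an FCA on $\mathbb L'$ and, by Proposition~\ref{thm:locglo}, is determined by $(\tT^N)_r|_{\A'_0}$. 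Since the origin tile $\Lambda_0=[0,N-1]^s$, the projection $\Pi_{\Lambda_0}$ and the neighbouring tiles $\mathcal M$ (hence the cell set $\Lambda_0\cup\Lambda_{\mathcal M}$ carrying the transition rule, together with its internal $\Z_2$-graded structure) are the same for every regular wrapping by Definition~\ref{def:regnei}, it suffices to check that $(\tT^N)_r|_{\A'_0}$, as a map $\A'_0\to\A'(\mathcal N')$, coincides for $\mathbb L$ and for $\widetilde{\mathbb L}$.

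To compute $(\tT^N)_r(O')=\tJ\circ\tT^N_w\circ\tV(O')$ for $O'\in\A'_0$ (tensored with the identity on the other tiles), I would use that $\tV$ acts tile-wise and that $\tE(I_{\A'_x})=\Pi_{\Lambda_x}$, whence $\tV(O')=\tV_{\Lambda_0}(O')\boxtimes\Pi_{\Lambda_{\mathcal M}}\boxtimes\Pi_{\Lambda_{\mathcal W}}$ with $\tV_{\Lambda_0}$ the tile isometry, which depends only on $\Pi_{\Lambda_0}$. Inserting the decomposition $\Tnmat=(I_{\Lambda_0}\boxtimes V)(U\boxtimes I_{\Lambda_{\mathcal W}})$ of \eqref{eq:decompun}, the factorisation $V^\dag(\Pi_{\Lambda_{\mathcal M}}\boxtimes\Pi_{\Lambda_{\mathcal W}})V=\widehat\Pi_{\mathcal M}\boxtimes\Pi_{\Lambda_{\mathcal W}}$ established in the proof of the wrapping-independence lemma, and the fact that $U\in\A(\Lambda_0\cup\Lambda_{\mathcal M})$ graded-commutes with $\Pi_{\Lambda_{\mathcal W}}$, one obtains
\[
\tT^N_w\bigl(\tV(O')\bigr)=\Bigl[U^\dag\bigl(\tV_{\Lambda_0}(O')\boxtimes\widehat\Pi_{\mathcal M}\bigr)U\Bigr]\boxtimes\Pi_{\Lambda_{\mathcal W}}.
\]
Applying the coisometry $\tJ$, which also acts tile-wise and sends $\Pi_{\Lambda_x}\mapsto I_{\A'_x}$ (since $\tJ\circ\tV=\tI$), then yields the closed form
\[
(\tT^N)_r(O')=\bigl(\tE^\dag_{\Lambda_0}\boxtimes\tE^\dag_{\Lambda_{\mathcal M}}\bigr)\!\Bigl(U^\dag\bigl(\tV_{\Lambda_0}(O')\boxtimes\widehat\Pi_{\mathcal M}\bigr)U\Bigr),
\]
with the identity understood on the remaining tiles.

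It then remains to verify that each factor on the right is wrapping-independent. The maps $\tV_{\Lambda_0}$, $\tE^\dag_{\Lambda_0}$, $\tE^\dag_{\Lambda_{\mathcal M}}$ depend only on the tile projection $\Pi_{\Lambda_0}$ (the same on every tile) together with the once-and-for-all choice of complementary labels $\psi$, not on $\mathbb L$; $U$ is wrapping-independent by \eqref{eq:decompun}; the tiles $\mathcal M$ are common to all regular wrappings. For $\widehat\Pi_{\mathcal M}$ one combines the decomposition of $\Tnmat$ with the hypothesis $\Tnmat^\dag\Pi\Tnmat=\Pi$: this forces $U^\dag(\Pi_{\Lambda_0}\boxtimes\widehat\Pi_{\mathcal M})U=\Pi_{\Lambda_0}\boxtimes\Pi_{\Lambda_{\mathcal M}}$, i.e. $\Pi_{\Lambda_0}\boxtimes\widehat\Pi_{\mathcal M}=U(\Pi_{\Lambda_0}\boxtimes\Pi_{\Lambda_{\mathcal M}})U^\dag$, so $\widehat\Pi_{\mathcal M}$ is fixed by $U,\Pi_{\Lambda_0},\Pi_{\Lambda_{\mathcal M}}$ (uniqueness of the factorisation across the fixed $\Lambda_0$–$\Lambda_{\mathcal M}$ bipartition), hence wrapping-independent. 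Therefore the closed form above is literally the same map for $\mathbb L'$ and $\widetilde{\mathbb L}'$; equivalently, in the representation-free language of Remark~\ref{rem:abstran}, the abstract transition homomorphism $((\tT^N)_r)_*$ does not depend on the wrapping, which is the claim. As a byproduct $(\tT^N)_r$ is the wrapping of one and the same FCA $\tS$ on $\Z^s$, which upgrades \eqref{eq:commuwrap} to a necessary and sufficient condition for renormalisability.

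The step I expect to be the main obstacle is the second one: carrying out the tile-wise factorisations of $\tV$, $\tT^N_w$ and $\tJ$ inside the $\Z_2$-graded tensor product while correctly tracking the braiding signs, and in particular re-deriving cleanly that $V^\dag(\Pi_{\Lambda_{\mathcal M}}\boxtimes\Pi_{\Lambda_{\mathcal W}})V$ factorises as $\widehat\Pi_{\mathcal M}\boxtimes\Pi_{\Lambda_{\mathcal W}}$ — this is where the hypothesis \eqref{eq:commuwrap} and the double-factorisation argument of the preceding lemma are genuinely used. Once that factorisation and the decomposition $\Tnmat=(I_{\Lambda_0}\boxtimes V)(U\boxtimes I_{\Lambda_{\mathcal W}})$ are available, the wrapping-independence of all remaining data follows immediately.
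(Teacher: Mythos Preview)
Your proposal is correct and follows essentially the same route as the paper: both use the decomposition $\Tnmat=(I_{\Lambda_0}\gt V)(U\gt I_{\Lambda_{\mathcal W}})$ together with the factorisation from the preceding lemma (your $\widehat\Pi_{\mathcal M}$ is the paper's $Z^\dag\Pi_{\mathcal M}Z$) to exhibit the action of $(\tT^N)_r$ on a single coarse cell as a formula built solely from $U$, the tile projections, and the tile (co)isometries, and then observe that all of these data are wrapping-independent. Your write-up is more explicit about the tile-wise structure of $\tV$ and $\tJ$ and about why $\widehat\Pi_{\mathcal M}$ is determined by $U$ and the local projections, but the argument is the same.
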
 
\begin{proof}
It suffices to notice that a local operator $A_x$ in $\A'(\mathbb L')\cong\Aplocc$ is obtained as $\Pi A_{\Lambda_x}\Pi$ for some $A_{\Lambda_x}$ on the macro-cell 
$\Lambda_x$. As a consequence, one has
\begin{align*}
&\Tnmat^\dag A_x \Tnmat=\Tnmat^\dag \Pi A_{\Lambda_x}\Pi\Tnmat\\
&\quad=U^\dag(\Pi_{\Lambda_x}A_{\Lambda_x}\Pi_{\Lambda_x}\gt Z^\dag\Pi_{\mathcal N_{\Lambda_x}\setminus\Lambda_x} Z)U \gt\Pi_{\mathbb{L}\setminus(\Lambda_x\cup\mathcal N_{\Lambda_x})}),
\end{align*}
where $U$ is independent of $\mathbb{L}'$. Finally, by Eq.~\eqref{eq:tred} and~\eqref{eq:tredmat}, one has
\begin{align*}
(\tT^N)_r\rvert_{\widetilde{\mathbb L}'}(A_x)=(\tT^N)_r\rvert_{\mathbb{L}'}(A_x),
\end{align*}
for every $\widetilde{\mathbb L}'$, where we used $(\tT^N)_r\rvert_{\mathbb{G}}$ to denote the FCA $(\tT^N)_r$ restricted to the particular lattice $\mathbb{G}$.
\end{proof}

\end{document}